\newtheorem{theorem}{Theorem}[section]
\newtheorem{definition}[theorem]{Definition}
\newtheorem{proposition}[theorem]{Proposition}
\newtheorem{corollary}[theorem]{Corollary}
\newtheorem{lemma}[theorem]{Lemma}
\theoremstyle{definition}
\newtheorem{remark}[theorem]{Remark}
\newtheorem{notation}[theorem]{Notation}
\newtheorem{hypothesis}[theorem]{Hypothesis}
\newcommand{\norm}[1]{\left \Vert #1 \right\Vert_{\mathcal{K}_\B}}
\def\x{\mathbf{x}}
\def\e{{\cal{e}}}
\def\R{\mathbb{R}}
\def\Co{{\mathbb C}} 
\def\H{{\cal H}} 
\def\B{\mathfrak{B}}
\def\cc{\mathcal{c}}
\def\ham{\mathfrak{H}}
\def\K{\mathfrak{K}}
\def\Op{\mathfrak{Op}} 
\def\X{\mathcal X}
\def\L{\mathcal{L}}
\def\E{\mathcal{E}}
\def\Rg{{\rm Range}\,}
\def\s{\mathfrak{s}}
\def\p{\mathfrak{p}}
\def\F0{\mathlarger{\mathlarger{\mathbf{\Lambda}}}}
\def\Fb{\mathlarger{\mathlarger{\mathbf{\Lambda}}}_{\text{\tt bd}}}
\def\Fp{\mathlarger{\mathlarger{\mathbf{\Lambda}}}_{\text{\tt pol}}}
\def\z{\mathfrak{z}}
\def\Rd{\mathbb{R}^d}
\def\Zd{\mathbb{Z}^d}
\def\Z{\mathbb{Z}}
\def\T{\mathbb{T}}
\def\bb1{{\rm{1}\hspace{-3pt}\mathbf{l}}}
\def\Id{\text{I}\hspace*{-2pt}\text{I{\sf d}}}
\def\dist{{\rm dist}}
\def\Int{\mathfrak{I}\mathit{nt}}
\def\supp{\mathop{\rm supp} \nolimits} 
\def\repi{\rightY\hspace*{-4pt}\rightarrow}
\def\UL1{\mathcal{U}\hspace*{-3pt}L^1}
\def\bz{\boldsymbol{\z}}
\def\bPhi{\mathbf{\Phi}}
\def\lnu{\mathlarger{\nu}}
\def\MmN{\mathscr{M}_{n_\B}}
\def\zz{\mathcal{z}}
\def\tp{\tilde{\mathfrak{p}}}
\def\tpsi{\widetilde{\psi}}
\def\beq{\begin{equation}}
	\def\eeq{\end{equation}}
	\def\Nb{\mathbb{N}_\bullet}
\numberwithin{equation}{section}
\definecolor{RawSienna}{cmyk}{0,0.72,1,0.45}
\title{A rigorous Peierls-Onsager effective dynamics for semimetals in long-range magnetic fields.}
\author{Horia D. Cornean\footnote{Department of Mathematical Sciences, Aalborg University, Thomas Manns Vej 23, 9220 Aalborg, Denmark; cornean@math.aau.dk}, Bernard Helffer\footnote{Laboratoire de Math{\'e}matiques Jean Leray,  Nantes Universit{\'e}  and CNRS, Nantes, France;
Bernard.Helffer@univ-nantes.fr}, Radu Purice\footnote{\enquote{Simion Stoilow} Institute of Mathematics of the Romanian Academy, P.O. Box 1-764, 014700 Bucharest, Romania; Radu.Purice@imar.ro}}
\begin{document}
	\maketitle
\thispagestyle{empty}

\thispagestyle{empty}
\setcounter{page}{1}

\begin{abstract}
We consider periodic (pseudo)differential {elliptic operators of Schr\"odinger type} perturbed by weak magnetic fields not vanishing at infinity, and extend our previous analysis in \cite{CIP,CHP-2,CHP-4} to the case {of a semimetal having a finite family of Bloch eigenvalues whose range may overlap with the other Bloch bands but remains isolated at each fixed quasi-momentum.} We do not make any assumption of triviality for the associated Bloch bundle. In this setting, we  formulate a general form of the Peierls-Onsager substitution {via strongly localized tight-frames and magnetic matrices. We also} prove the existence of an approximate time evolution for initial states supported inside the range of the isolated Bloch family, with a precise error control.
\end{abstract}

\tableofcontents

\section{Introduction}

The phenomenology of solids in the presence of exterior electro-magnetic fields is an extremely rich field for both theoretical and applied research, and building efficient mathematical models for it is a rather difficult challenge. The basic system for these models is that of a quantum particle in a periodic potential (describing the crystalline structure of the solid) and under the action of some exterior electric and magnetic fields. A very interesting physical idea has been to use Peierls-Onsager effective Hamiltonians \cite{Pe} for modelling the behaviour of such a system in { certain} energy windows {containing the Fermi level}. The mathematics behind these models is rather involved and there is a very large amount of literature devoted to it (to cite just a few papers more closely related to our work \cite{Be1, BC, CIP, dNL, FT, GMSj, HS, HS1, Ne-LMP,Ne-RMP, Pa, PST,Sj}). 
In \cite{CIP} we have shown that the magnetic pseudo-differential calculus (see \cite{MP-1},\cite{IMP-1} - \cite{IMP-3}) offers a natural and efficient mathematical framework for using the Peierls-Onsager procedure and in a series of papers (\cite{CHP-1}-\cite{CHP-4}) we have used this technique for analyzing some spectral properties of two-dimensional systems having some particular Bloch structures. While the detailed spectral analysis developed in \cite{CHP-1}-\cite{CHP-4} was done in  a rather restrictive setting, we become aware of some abstract general form of the Peierls-Onsager effective Hamiltonians that can be rigorously formulated and this will be the subject of this paper.

\subsection{Informal brief overview.}
Let us explain in an informal way the results of  this paper. The important feature shared by quantum periodic Hamiltonians $H$ on a $d$-dimensional space is their Bloch structure. This means that they may be decomposed as a smooth family of self-adjoint operators (which we call fibre operators) acting on square integrable functions on the $d$-dimensional torus ($ L^2(\T^d)$ with $\T^d:=\Rd/\Zd$) and indexed by the dual torus $\T^d_*$ (see the paragraph \ref{SSS-d-torus}). Moreover, each fibre operator on $L^2(\T^d)$ has a compact resolvent and thus an unbounded discrete spectrum $\big\{\lambda_k(\bz^*)\,,\,k\in\mathbb{N}\setminus\{0\},\,\bz^*\in\T^d_*\big\}$. These $\lambda_k$, called the \textit{Bloch levels}, are continuous functions on the dual torus whose graphs might cross among themselves and these crossings introduce singularities that may obscure the role of each individual level in the dynamics generated by the Hamiltonian $H$. 

Nevertheless, in many situations there exists a group of Bloch levels which, although {possibly} crossing among themselves, stay well separated from the remaining infinite family of Bloch levels. We say in this case  that we have an \textit{ isolated Bloch family} and  we can associate with  it a closed invariant subspace. We call it  a \textit{strictly isolated Bloch family} when {its range} is separated from the rest of the spectrum by some spectral gaps  of the full operator $H$ (the so called \textit{gap condition}). 

{While the methods which we will develop in this paper are also capable of dealing with the gapped case, we are mostly interested in the more complicated situation in which the Bloch family stays isolated at each fixed quasi-momentum $\bz^*$, but its range overlaps with the other bands, preventing $H$ from developing a spectral gap, a case which is typical for metals and semimetals, see Figure \ref{picture1}. } 

Physicists have inferred that for such an isolated Bloch family, the influence of a constant magnetic field on their dynamics, in a narrow energy window, could be simply described  by applying the minimal coupling procedure to a corresponding Bloch level. Let us recall that the minimal coupling consists in considering that for a system described by a classical Hamiltonian $h(x,\xi)$ with $x$ the position variable and $\xi$ the momentum variable, the influence of a magnetic field $B=dA$, with $A$ its vector potential, may be described by replacing $\xi$ with $\xi-A(x)$  in the functional expression of the Hamiltonian $h(x,\xi)$ and considering the dynamics generated by $h(x,\xi-A(x))$.
For a \textit{simple isolated Bloch family}, i.e. having only one Bloch level, say $\lambda_{k_0}(\theta)$ with $k_0\in\mathbb{N}\setminus\{0\}$, the above assumption leads   to considering the dynamics associated with the Hamiltonian $\lambda_{k_0}(\xi-A(x))$, and this is the so-called Peierls-Onsager substitution for a non-degenerate, isolated Bloch level.  Transforming this heuristic construction into a rigorous mathematical statement is a rather challenging problem with a long history.  We will briefly comment on it in what follows, in order to better highlight the improvements added by our results.


The first main breakthrough in formulating a rigorous mathematical statement for the Peierls-Onsager substitution can be  traced  back at least to the papers by  Helffer and  Sj\"{o}strand (\cite{HS}) and  Nenciu (\cite{Ne-LMP}) that have treated the case of one strictly isolated Bloch level with an  associated trivial Bloch bundle, perturbed by a  weak constant external magnetic field. Several natural questions remained: 
\begin{itemize}
	\item How to consider isolated Bloch bands  whose range overlap with others, hence when the gap is only "local" \cite{PST, dNL, FT}; in this case the Hamiltonian decomposition induced by the presence of the isolated gap is not associated to a spectral decomposition and the familiar Riesz projection construction cannot be used.
	\item How to formulate a precise Peierls-Onsager substitution statement for an isolated multi-level Bloch band. Several possibilities may be imagined: to obtain a pure spectral result based on a kind of Grushin problem (see \cite{GMSj,IP}), or to define an effective perturbed band Hamiltonian and even more, to try to replace the case with one level function by a matrix-valued effective Hamiltonian, even in the absence of rapidly decaying composite  Wannier functions. 
    \item Another difficulty associated with the above remark comes from the possible non-triviality of the Bloch bundle associated to the isolated family, i.e. the non-existence of well localized composite Wannier functions.  Conditions for the triviality of the Bloch bundle associated to an isolated Bloch family have been given in \cite{FMP, CM, Pa, Hu}, putting into evidence that non-triviality is related with the presence of periodic magnetic fields. 
	\item We note that a constant magnetic field perturbation has the important feature that the perturbed Hamiltonian commutes with a 'projective representation' of the discrete translations (see \cite{Ne-LMP, Ne-RMP,CHN}). In the presence of strongly localized composite  Wannier functions for the unperturbed system, this projective representation allows one to introduce an algebra of infinite magnetic matrices (as in \cite{Ne-LMP,Ne-RMP}) and even more, to associate a pseudo-differential calculus as in \cite{HS,Sj} (see also \cite{Be1} for a $C^*$-algebraic point of view). When either the magnetic field perturbation is not constant, and/or strongly localized Wannier functions do not exist, these approaches do not work.  
    \item The case in which this magnetic {\it field} perturbation is generated by a smooth, bounded and slowly varying magnetic {\it potential} can be treated with space adiabatic perturbation theory as in \cite{PST, DGR, FT}, up to any order in the perturbation. 
\end{itemize}

In this paper we use  the magnetic pseudo-differential calculus and the theory of strongly localized Parseval frames in order to obtain a general variant of the Peierls-Onsager substitution (see Theorem \ref{T-III} and formula \eqref{F-PO}) giving an answer to all the above mentioned problems. More precisely:
\begin{itemize}
	\item  We allow the range of the isolated Bloch bands to overlap with the others, just like in \cite{PST, FT}. 
	\item No 'slowly varying' hypothesis is imposed to the perturbing magnetic field. Let us mention that under these general circumstances (no slow variation and no spectral gap), the $\epsilon^2$ error in our Theorem \ref{T-III} is a significant improvement of the existing results.
	\item  We do not ask the vector potential of the non-constant part of the magnetic field perturbation to be bounded.
	\item A  quite new aspect of our analysis is the use of  strongly spatially localized Parseval frames in order to treat the non-trivial Bloch bundle case, when no rapidly decaying composite Wannier functions are present. In this sense, our approach is closer in spirit to the original Peierls-Onsager substitution \cite{Pe}, providing a tight-binding Hofstadter-like matrix living in $\ell^2(\mathbb{Z}^d)\otimes\mathbb{C}^N$ for our effective magnetic band Hamiltonian. This result - to the best of our knowledge - has not been previously obtained for Bloch bundles with non-zero Chern numbers, where strongly localized Wannier bases do not exist. Moreover, the tight (Parseval) frame we construct is a result of independent interest.
	\item  Finally, let us  mention that we could have also allowed a background constant magnetic field with a certain rational flux condition, but we decided not to, in order to simplify the already complicated setting and notation. We stress though that we work with a large class of elliptic (pseudo)differential symbols that may have topologically non-trivial projections.
\end{itemize}

\subsection{The basic framework}\label{ss1.3} 
We denote by $\mathbb{N}_\bullet:=\mathbb{N}\setminus\{0\}$ and for $n\in\Nb$ let  $\underline{n}:=\{1,\ldots,n\}$. We use the convention $\R_+:=[0,+\infty)$. {Given a normed linear space $\mathscr{V}$ with the norm $|\cdot|$ we use the notation $<v>:=\sqrt{1+|v|^2}$ for $v\in\mathscr{V}$ and respectively $v(\epsilon)=\mathscr{O}(\epsilon)$ for a map $[0,\epsilon_0]\ni\epsilon\mapsto\,v(\epsilon)\in\mathscr{V}$, for some $\epsilon_0>0$, such that $v(0)=0$ and $\epsilon^{-1}|v(\epsilon)|\leq C<\infty$ for any $\epsilon\in(0,\epsilon_0]$.}
We denote by $\bb1$ the identity operator on any vector space, that we may sometimes indicate by an index; $\Id$ will denote the identity map on any set, that we may also sometimes indicate by an index. For a subset $M$ of a topological space we denote by $\mathring{M}$ its interior subset (i.e. the largest open subset contained in $M$). Given two topological vector spaces $\mathscr{V}_1$ and $\mathscr{V}_2$ we denote by $\mathcal{L}(\mathscr{V}_1;\mathscr{V}_2)$ the linear space of linear continuous maps $\mathscr{V}_1\rightarrow\mathscr{V}_2$ with the topology of uniform convergence on bounded sets. Given any complex Hilbert space $\big(\mathcal{H},(\cdot,\cdot)_\mathcal{H}\big)$, the scalar product will be considered anti-linear in the first factor; we shall use the notations $\mathbb{B}(\mathcal{H})$ for the bounded linear operators on $\mathcal{H}$, respectively $\mathbb{F}(\mathcal{H})$ for the finite-rank operators, $\mathbb{U}(\mathcal{H})$ for the unitary operators, $\mathbb{L}(\mathcal{H})$ for its orthogonal projections and $\mathbb{P}(\mathcal{H})$ for the family of 1-dimensional orthogonal projections. Given two vectors $v$ and $w$ in a Hilbert space $\H$, we denote by $v\bowtie w$ the rank one linear operator 
\beq\label{DF-vw-op}
\H\ni u\mapsto (v\bowtie w)(u):=(w,u)_{\H}\,v\in\H.
\eeq
Given a densely defined closable operator $T$ acting in $\mathcal{H}$ we denote by $\overline{T}$ its closure.

We recall that a Fr\'{e}chet space is a linear space endowed with a topology defined by a countable family of semi-norms $\{\mathfrak{q}_k\}_{k\in\Nb}$, for which it is complete. A bounded subset of a Fr\'{e}chet space $\mathscr{V}$ is a subset $M\subset\mathscr{V}$ such that for any $k\in\Nb$ there exists $C_k(M)>0$ for which:
$	
\mathfrak{q}_k(v)\leq\,C_k(M),\ \forall v\in M.
$

 Let $\X$ be the $d$-dimensional real affine space and suppose fixed an origin $O\in\X$ and a frame $\mathfrak{E}:=\big\{\e_j,\,j\in\underline{d}\big\}$ such that we have an induced identification $\X\cong\R^d$. We denote by $\X^*$ the dual of $\X$ identified with $\Rd$ via the frame $(O,\mathfrak{E})$ and by $<\cdot,\cdot>:\X^*\times\X\rightarrow\R$ the canonical bilinear duality. Let us strengthen here that we shall constantly distinguish between $\X$ and $\X^*$ as two copies of $\R^d$ that play very different roles in our analysis.
We shall also consider the unique real scalar product $\X\times\X\ni(x,y)\mapsto x\cdot y\in\R$ making $\mathfrak{E}$ an orthonormal basis. On the dual $\X^*$ we shall introduce the \textit{dual basis} $\mathfrak{E}_*:=\big\{\e^*_j,\,j\in\underline{d}\big\}$ defined by the relations $<\e^*_j\,,\,\e_k>=2\pi\delta_{jk}$ for any pair $(j,k)\in\underline{d}\times\underline{d}$. We denote by $dx$, resp. $d\xi$ the usual Lebesque measures on $\X$, resp. $\X^*$, associated with the above defined orthogonal basis. We recall the natural action by translations of $\R^d$ on functions or distributions defined on $\X$, denoting by $\tau_x$ the translation with $x\in\X$, resp. by translations on functions or distributions defined on $\X^*$, denoting by $\tau_{\xi}$ for translations with $\xi\in\X^*$. We will denote by $\Xi:=\X\times\X^*$ the phase space and we shall use systematically notations of the form $X:=(x,\xi)\in\Xi$, $Y:=(y,\eta)\in\Xi$, and so on. 
Finally let us consider the discrete, regular lattice $\Gamma:=\underset{1\leq j\leq d}{\bigoplus}\Z\e_j\subset\X$.

We constantly use the H\"{o}rmander multi-index notation $\partial^\alpha_x:=\partial_{x_1}^{\alpha_1}\cdot\ldots\cdot\partial_{x_d}^{\alpha_d}$ and $|\alpha|:=\alpha_1+\ldots+\alpha_d$ for any $\alpha\in\mathbb{N}^d$. For any $j\in\underline{d}$ we denote by $\varepsilon_j\in\mathbb{N}^d$ the multi-index with components $(\varepsilon_j)_k=\delta_{jk}$.

Given a finite dimensional real Euclidean space $\mathcal{V}$ with Euclidean norm denoted by $|\cdot|$ let us consider the function spaces $BC^\infty(\mathcal{V})$ of smooth complex functions on $\mathcal{V}$ bounded together with all their derivatives, $C^\infty_c(\mathcal{V})$ of smooth and compactly supported complex functions on $\mathcal{V}$ and $C^\infty_{\text{\tt pol}}(\mathcal{V})$ of smooth complex functions on $\mathcal{V}$, with polynomial growth at infinity together with all their derivatives.
We define the weight functions (having the properties of a norm but allowed to take also the value $+\infty$) on $C^\infty(\mathcal{V})$:
\beq
\forall(p,n)\in\R_+\times\mathbb{N},\quad\lnu_{p,n}(\phi):=\underset{y\in\mathcal{V}}{\sup}\,<y>^p\,\underset{|a|\leq n}{\max}\,\big|\big(\partial^a\phi\big)(y)\big|,\ \forall\phi\in C^\infty(\mathcal{V}).
\eeq
Then $BC^\infty(\mathcal{V})$ is a Fr\'{e}chet space for the familly of norms $\big\{\lnu_{0,n}\big\}_{n\in\mathbb{N}}$ and we define the space of Schwartz test functions:
\beq
\mathscr{S}(\mathcal{V})\,:=\,\big\{\phi\in BC^\infty(\mathcal{V})\,,\,\lnu_{p,n}(\phi)<\infty,\,\forall(p,n)\in\R_+\times\mathbb{N}\big\}.
\eeq
We denote by $\mathscr{S}^\prime(\mathcal{V})$ the space of tempered distributions, defined as the topological dual of $\mathscr{S}(\mathcal{V})$ on which we can consider either the weak topology with respect to $\mathscr{S}(\mathcal{V})$ or the strong dual topology defined as the topology of uniform convergence on bounded sets of $\mathscr{S}(\mathcal{V})$. We  denote by $\langle\cdot,\cdot\rangle_{\mathcal{V}}:\mathscr{S}^\prime(\mathcal{V})\times\mathscr{S}(\mathcal{V})\rightarrow\mathbb{C}$ the bilinear canonical duality map.
\begin{notation}\label{N-per-distr}
We shall denote by $\mathscr{S}^\prime(\Xi)_{\Gamma}$ the tempered distributions on $\Xi$ that are $\Gamma$-periodic with respect to the variable $x\in\X$.
\end{notation}
For a distribution $\mathfrak{K}\in\mathscr{S}^\prime(\X\times\X)$ we  denote by $\Int_{\X}\,\mathfrak{K}\in\mathcal{L}\big(\mathscr{S}(\X);\mathscr{S}^\prime(\X)\big)$ the linear operator defined as: \beq\label{DF-Int}
\langle\big(\Int_{\X}\,\mathfrak{K}\big)\phi,\psi\rangle_{\X}:=\langle\mathfrak{K},\phi\otimes\psi\rangle_{\X\times\X} \mbox{ for any }(\phi,\psi)\in\mathscr{S}(\X)\times\mathscr{S}(\X)\,.
\eeq 
We  use the notation $\big(\Int_{\X}\,\mathfrak{K}^\dagger\big):=\big(\Int_{\X}\,\mathfrak{K}\big)^*$, so that $\langle\mathfrak{K}^\dagger,\phi\otimes\psi\rangle_{\X\times\X}=\overline{\langle\mathfrak{K},\psi\otimes\phi\rangle}_{\X\times\X}$.

For the $d$-dimensional Fourier transform we use the definitions:
\beq\begin{split}\label{DF-FourierTrsf}
	&\big(\mathcal{F}_{\X}f\big)(\xi):=\int_{\X}\hspace*{-5pt}dx\,e^{-i<\xi,x>}\,f(x)=\int_{\X}\hspace*{-5pt}dx\,\exp\Big(-2\pi i\underset{j\in\underline{d}}{\sum}\xi_j\,x_j\Big)\,f(x),\ \forall f\in L^1(\X)\\
	&\big(\mathcal{F}_{\X^*}g\big)(x):=\int_{\X^*}\hspace*{-7pt}d\xi\,e^{i<\xi,x>}\,g(\xi)=\int_{\X^*}\hspace*{-5pt}d\xi\,\exp\Big(2\pi i\underset{j\in\underline{d}}{\sum}\xi_j\,x_j\Big)\,g(\xi),\ \forall g\in L^1(\X^*).
\end{split}\eeq

\subsection{The magnetic Weyl calculus.}

\paragraph{The magnetic fields.} Let us denote by $\F0^p(\X)$ the real space of smooth $p$-forms on $\X$, by $\Fb^p(\X)$ those having components of class $BC^\infty(\X)$ and by $\Fp^p(\X)$ those with polynomial growth together with all their derivatives; let $d:\F0^p(\X)\rightarrow\F0^{p+1}(\X)$ be the exterior derivation. A smooth magnetic field is an element $B\in\F0^2(\X)$ that is closed, i.e. satisfies $dB=0$. Due to the contractibility of our affine space $\X$ any magnetic field $B\in\Fp^2(\X)$ allows for a vector potential $A\in\Fp^1(\X)$ satisfying the equality $B=dA$ and we shall always use such a choice.
We shall work with fields $B\in\Fb^2(\X)$ verifying the closure condition: $dB=0$, that we call \textit{regular magnetic fields}.

We shall consider a large class of magnetic quantum Hamiltonians defined in the spirit of the minimal coupling physical procedure and having as classical counterpart a H\"{o}rmander type symbol.  We make use of a gauge covariant functional calculus developed in \cite{MP-1,IMP-1,IMP-2,IMP-3} as a twisted version of the Weyl calculus defined by Lars H\"{o}rmander (see \cite{H-3}). Let us strengthen that this is a pseudodifferential version of the integral kernel procedure of gauge  covariant perturbations elaborated by H. Cornean and G. Nenciu \cite{C-99,CN-98,CN-00,Ne-02, C-10} and that for the Laplace differential operator it produces the well known magnetic Laplacian. The basic technical ingredient (see \cite{MP-1}) is the replacement of the usual unitary representation of translations in $L^2(\X)$: 
\beq\nonumber 
\X\ni z\mapsto\,U(z)\in\mathbb{U}\big(L^2(\X)\big),\quad\big(U(z)f\big)(x):=f(x+z),\,\forall z\in\X, 
\eeq 
by the magnetic twisted one:
\beq\label{F-UAx}
\big(U^A(z)f\big)(x):=e^{-i\int_{[x,x+z]}A}\,v\equiv\Lambda^A(x,x+z)f(x+z), \, 
\eeq  
where $
\Lambda^A:\X\times\X\rightarrow\{\z\in\mathbb{C},\,|\z|=1\}$
is a smooth function with derivatives having polynomial growth. Working with this twisted unitary representation is in fact the consequence of replacing the usual  $-i\nabla$ by the magnetic one $-i\nabla-A(x)$ as stipulated by the minimal coupling procedure (see \cite{MP-1}).

We introduce here the magnetic pseudodifferential calculus, leaving a very brief reminder of some of its basic properties for Appendix \ref{A-m-PsiDO}; details may be found in \cite{MP-1} and \cite{IMP-1}-\cite{IMP-3}.

\begin{definition}\label{D-OpA}
	Given $B\in\Fb^2(\X)$ with $dB=0$ and some associated vector potential $A\in\Fp^1(\X)$ such that $dA=B$, we  define the magnetic Weyl quantization map by
	\beq\begin{split}\label{DF-OpA-Phi}
		\Op^A: &\ \mathscr{S}(\Xi)\rightarrow\mathcal{L}\big(\mathscr{S}(\X);\mathscr{S}(\X)\big),\\	
&\big(\Op^A(\Phi)\phi\big)(x)\,=\,\int_{\X^*}d\eta\int_{\X}dy\,e^{-i\int_{[x,y]}A}\,e^{i<\eta,x-y>}\,\Phi\big((x+y)/2,\eta\big)\,\phi(y),\\ 
		&\hspace*{10cm}\forall\phi\in\mathscr{S}(\X),\,\forall x\in\X.
	\end{split}\eeq
\end{definition}
	We notice that taking $A=0$ we recover the usual Weyl quantization. It is shown in \cite{MP-1} that for $A\in\Fp^1(\X)$ the map $\Op^A$ from $\mathscr{S}(\Xi)$ into $\mathcal{L}\big(\mathscr{S}(\X);\mathscr{S}(\X)\big)$ has a canonical extension to an isomorphism $\Op^A$ from $\mathscr{S}^\prime$ onto $\mathcal{L}\big(\mathscr{S}(\X);\mathscr{S}^\prime(\X)\big)$.	

 Moreover, if one considers symbols $\Phi$ that do not depend on the 'configuration variables' $x\in\X$, one may write \eqref{DF-OpA-Phi} using \eqref{F-UAx} as:
 \beq\label{F-OpA-trsl}
\Op^A(\Phi)\,=\,(2\pi)^{d/2}\int_{\X}dz\,\Big(\mathcal{F}_{\X^*}\Phi\Big)(z)\,U^A(z).
 \eeq

\begin{definition}\label{D-A-symbol}
	Given a linear operator $T$ in $\mathcal{L}\big(\mathscr{S}(\X);\mathscr{S}^\prime(\X)\big)$,  the distribution $F\in\mathscr{S}^\prime(\Xi)$ such that $T=\overline{\Op^A(F)}$ is called its $A$-symbol.
\end{definition}
\begin{definition}\label{D-MoyalPr}
We consider the bilinear map $\sharp^B:\mathscr{S}(\Xi)\times\mathscr{S}(\Xi)\rightarrow\mathscr{S}(\Xi)$ is defined by the equality: 
$$\Op^A(\Phi)\, \Op^A(\Psi):=\Op^A(\Phi\sharp^B\Psi)$$ and called the magnetic Moyal product.\\
 We denote by $F^-_B$ the inverse of $F\in\mathscr{S}^\prime(\Xi)$ for the magnetic Moyal product $\sharp^B$.
\end{definition}

\section{The problem}

Our analysis starts from the situation of an electron moving under the influence of a $\Gamma$-periodic electric potential and a superposed magnetic field having a vector potential that is also $\Gamma$-periodic (i.e. satisfying Hypothesis \ref{H-BGamma} below). We consider this system as \textit{unperturbed system} and analyze the influence of the addition of a regular magnetic field (see Subsection \ref{ss1.3} ) having small variations around a non-zero constant value (see Hypothesis \eqref{Hyp-magnField}). We emphasize that no hypothesis of slow variation is made.

\subsection{The unperturbed Hamiltonian}\label{SS-problem}\label{ss2.1}

Due to the fact that the exact form of the Hamiltonian function does not influence our arguments, we shall work with a smooth Hamiltonian function $h:\Xi\rightarrow\R$ that generalizes the usual symbol  $h_0(x,\xi):=\frac 12 |\xi|^2+V(x)$ with $V\in BC^\infty(\X)$, defining the non-relativistic Schr\"{o}dinger Hamiltonian.

\paragraph{The H\"{o}rmander classes of symbols.} Given $p\in\mathbb{R}$ and $\rho\in\{0,1\}$ let us define:
\beq\nonumber \begin{split}
	\qquad \qquad S^p_\rho(\Xi)\,:&=\,\big\{F\in C^\infty_{\text{\tt pol}}(\Xi)\,,\,\nu^{p,\rho}_{n,m}(F)<\infty,\,\forall(n,m)\in\mathbb{N}\times\mathbb{N}\big\},\\
	\text{where:}\qquad\qquad \qquad&\\& \nu^{p,\rho}_{n,m}(F):=\underset{|\alpha|\leq n}{\max}\,\underset{|\beta|\leq m}{\max}\,\underset{(x,\xi)\in\Xi}{\sup}<\xi>^{-p+\rho m}\big|\big(\partial_x^\alpha\partial_\xi^\beta F\big)(x,\xi)\big|\,.\qquad \qquad\qquad
\end{split}\eeq
Given $\Gamma\subset\X$ as above, we shall denote by $S^p_\rho(\Xi)_\Gamma$ the symbols of class $S^p_\rho(\Xi)$ that are periodic with respect to $\Gamma$ in the $x$-variable.
We shall use the following family of $\X$ independent symbols $\mathfrak{m}_s(x,\xi):=<\xi>^s$, with $s\in\R$. 
\begin{definition}
	A symbol $F\in S^p_\rho(\Xi)$ is called elliptic, if there exists positive constants $c$ and $R$ such that, for $(x,\xi) \in \X$ s.t.  $|\xi| \geq R$, 
	\beq\nonumber 
	|F(x,\xi)\big|\,\geq\,c<\xi>^p.
	\eeq
\end{definition}
\begin{hypothesis}\label{H-h1}
 The symbol $h$ is a lower semi-bounded elliptic symbol in $S^p_1(\Xi)_\Gamma$ for some given $p>0$. 
\end{hypothesis}
\noindent We notice that $h_0$ in the beginning of this section is an elliptic symbol in $S^2_1(\Xi)_\Gamma$.\\

We shall fix some magnetic field $B^\circ\in\Fb^2(\X)$ satisfying:
\begin{hypothesis}\label{H-BGamma} $B^\circ\in\Fb^2(\X)$ is closed,  $\Gamma$-periodic and verifies the following {\it zero-flux} property:
	\beq\label{H-Bper}
	\forall(j,k)\in\underline{d}\times\underline{d}:\quad\int\limits_{\text{\rm R}_{jk}}B^\circ=0,\quad\text{where}\quad\text{\rm R}_{jk}:=\big\{s\mathfrak{e}_j+t\mathfrak{e}_k,(s,t)\in[0,1]^2\big\}.
	\eeq
\end{hypothesis}
 In \cite{HH} the authors present an elegant cohomological argument proving that this hypothesis is a necessary and sufficient condition for the existence of a $\Gamma$-periodic vector potential $A^\circ\in\Fb^1(\X)$ such that $B^\circ=dA^\circ$.
 \begin{notation}
 We denote by $\Fp^1(\X)_\Gamma$ the real space of $\Gamma$-periodic functions in $\Fp^1(\X)$.
 \end{notation}
 
\paragraph{The unperturbed Hamiltonian.}
	Given $\Gamma\subset\X$, a symbol $h\in C^\infty(\Xi)$ satisfying Hypothesis~\ref{H-h1} and a magnetic field $B^\circ\in\Fb^2(\X)$ satisfying Hypothesis \ref{H-BGamma} with a choice of $\Gamma$-periodic vector potential $A^\circ\in\Fb^1(\X)$, Proposition 2.4 and Theorem 2.7 in \cite{IMP-3} imply that $\Op^{A^\circ}(h)$ is essentially self-adjoint, commutes with the unitary translations with elements from $\Gamma$ and has a lower semi-bounded self-adjoint closure: 
	\beq\label{DF-h-circ}
	H^\circ:=\overline{\Op^{A^\circ}(h)}:\mathscr{H}^p(\X)\rightarrow L^2(\X)
	\eeq
	where
	$
	\mathscr{H}^p(\X)
	$
	is the Sobolev space of order $p>0$ on $\X$. 

\subsubsection{The $d$-dimensional torus.}\label{SSS-d-torus}

An essential role in dealing with periodic problems is played by the $d$-dimensional torus 
\beq\label{D-torus}
\R^d/\Z^d\cong\mathbb{S}^d:=\big\{\z\in\mathbb{C},\ |\z|=1\big\}^d\,.  
\eeq
When dealing with the identification $\X\cong\Rd$ and $\Gamma\cong\Zd$ we  denote the quotient torus $\X/\Gamma$ by $\T^d\cong\mathbb{S}^d$, while for the dual space $\X^*$ with its dual lattice $\Gamma_*$ we use the notation $\X^*/\Gamma_*=:\T^d_*\cong\mathbb{S}^d$ and denote by $\bz^*$ its points. We  work with a Haar measure that is equal to the restriction of the Lebesgue measure on $\R^{2d}$ with a normalization giving the total measure one  to the manifold. With our definitions we get an  explicit form for the canonical quotient projection:
\beq\label{DF-p}
\p:\X\repi\mathbb{T}^d,\quad\p\big(\underset{1\leq j\leq d}{\sum}x_j\e_j\big)\,=\,\big(e^{2\pi ix_1},\ldots,e^{2\pi ix_d}\big),\,\quad\forall(x_1,\ldots,x_d)\in\X\,,
\eeq
and we choose the following discontinuous bijective section for it:
\beq\nonumber 
\s:\mathbb{T}^d\rightarrow\X,\quad\s(\bz)\equiv\s(\z_1,\ldots\z_d):=(2\pi )^{-1}(\arg\,\z_1,\ldots,\arg\,\z_d),\quad\arg\z\in[-\pi,\pi).
\eeq
We  use these notations for the quotient $\T^d:=\X/\Gamma$ and  the notations $\p_*:\X^*\repi\T^d_*$ and $\s_*:\T^d_*\rightarrow\X^*$ for their analogue in the dual setting $\T^d_*:=\X^*/\Gamma_*$.

Let us fix the following unit cell for the lattice $\Gamma\subset\X$ (evidently not unique choice):
\beq\label{eq:defE}
\mathscr{E}:=\big\{x\in\X\,,\,x_j\in[-1/2,1/2),\,\forall j\in\underline{d}\big\}
\eeq
and the decomposition $\X=\Gamma\times\mathscr{E}$ defined by writing each $t\in\mathbb{R}$ as $t=\lfloor t+1/2\rfloor+\hat{t}\,,\mbox{  with } \hat{t}\in[-1/2,1/2)$, where  the lower integer part function $\R\ni t\mapsto\lfloor t\rfloor\in\Z$ is defined by 
$\lfloor t\rfloor:=\max\big\{k\in\Z,\,k\leq t\big\}$.
For $x\in\Rd$ we  use the notation: 
\beq\label{DF-X-dec-Gamma}
\iota(x):=\big(\lfloor x_1+1/2\rfloor,\ldots,\lfloor x_d+1/2\rfloor\big)\in\Zd,\qquad \hat{x}=\big(\hat{x}_1,\ldots,\hat{x}_d\big):=x-\iota(x)\in\mathcal{E}.
\eeq
We recall the discrete Fourier transform and its inverse:
\beq\label{DF-discrFtrsf}\begin{split}
&\big(\mathcal{F}_\Gamma\vec{c}\big)(\bz^*):=\underset{\gamma\in\Gamma}{\sum}\,e^{-i<\s_*(\bz^*),\gamma>}\,\vec{c}_\gamma,\ \forall\vec{c}\in\ell^1(\Gamma),\\
&\big(\mathcal{F}_{\T^d_*}\varphi\big)_\gamma:=(2\pi)^{-d}\int_{\T^d_*}\,d\bz^*\,e^{i<\s_*(\bz^*),\gamma>}\,\varphi(\bz^*),\ \forall\varphi\in L^1(\T^d_*).
\end{split}\eeq

\subsubsection{The Bloch-Floquet structure of the unperturbed Hamiltonian.} \label{SS-BF-str}

Let us recall the Bloch-Floquet representation:
\beq\label{DF-BF-trsf}
\big(\mathfrak{U}_{F\Gamma}f\big)(\hat{x},\bz^*)\:=\underset{\gamma\in\Gamma}{\sum}e^{-i<\s_*(\bz^*),\gamma>}\,f(\hat{x}+\gamma),\quad\forall f\in\mathscr{S}(\X),\ \forall\hat x \in \mathcal E, { \forall \bz^*\in \T^d_*.}
\eeq
While the Bloch-Floquet theory has been mainly developed in connection with differential operators with periodic coefficients, its extension to periodic pseudo-differential operators has been mentioned in \cite{Ku} 
and presented in some detail in \cite{HM} and \cite{Mo}.

 We now list a number of well-known results concerning the structure of $\Gamma$-periodic operators:
\begin{theorem}[ Bloch-Floquet decomposition]\label{T-FBdec} If $H^{\circ}$ is a lower semi-bounded self-adjoint operator in $L^2(\X)$ with domain $\mathscr{H}^p(\X)$ ($p>0$), which commutes  with all the translations $\big\{U_\gamma,\,\gamma\in\Gamma\big\}$ as in \eqref{DF-h-circ}, then: 
	\begin{enumerate}
		\item  $H^\circ$  is unitarily equivalent with a direct integral (in the sense of \cite{Di}). More precisely, considering the complex linear spaces indexed by $\bz^*\in\T^d_*$:
  \beq\label{F-Fzstar}
  \mathscr{F}_{\bz^*}:=\big\{\hat{F}\in L^2_{\text{\tt loc}}(\X)\,,\,\, \hat{F}(x+\gamma)=e^{i<\s_*(\bz^*),\gamma>}\hat{F}(x)\big\}\,,
  \eeq
  endowed with the scalar product $\|\hat{F}\|_{\bz^*}^2:=\int_{\mathcal{E}}d\hat{x}\,|\hat{F}(\hat{x})|^2$, the linear operator dfined by \eqref{DF-BF-trsf} is a unitary operator $L^2(\X)\rightarrow\int_{\T^d_*}^{\oplus}d\bz^*\,\mathscr{F}_{\bz^*}$ and we have the decomposition:
		\beq\nonumber 
		\mathfrak{U}_{F\Gamma}H^\circ \mathfrak{U}_{F\Gamma}^{-1}=\int^\oplus_{\T^d_*}\hspace*{-5pt}d\bz^*\,\widehat{H^\circ}(\bz^*) 
		\eeq
		where each $\widehat{H^\circ}(\bz^*)$ is a lower semi-bounded self-adjoint operator having as domain $\mathcal{D}(\widehat{H^\circ})=\mathscr{F}_{\bz^*}\cap\mathscr{H}^p_{\text{\tt loc}}(\X)$ which is  denoted by $\mathscr{F}^p_{\bz^*}$.
		\item For any $\zz\in\mathbb{C}\setminus\sigma(H^\circ)$, the map $$\T^d_*\ni\bz^*\mapsto\widehat{R}^\circ_\zz(\bz^*):=\big(\widehat{H}^\circ(\bz^*)-\zz\bb1\big)^{-1}\in\mathbb{B}(\mathscr{F}_{\bz^*})$$
		 is well-defined and smooth.
		\item {For each $\bz^*\in\T^d_*$ and $\zz\in\mathbb{C}\setminus\sigma(H^\circ)$, the resolvent $\widehat{R}^\circ_\zz(\bz^*)$ is a compact operator.}
		\item  For any $\bz^*\in\T^d_*$, the spectrum of $\widehat{H^\circ}(\bz^*)$ consists of discrete eigenvalues with finite multiplicity  $$\sigma\big(\widehat{H^\circ}(\bz^*)\big)=\big\{\widetilde{\lambda}_k(\bz^*)\big\}_{k\in\mathbb{N}_\bullet}\,,$$ which satisfy  $-\infty<\widetilde{\lambda}_k(\bz^*)<\widetilde{\lambda}_{k+1}(\bz^*)$ for any $k\in\mathbb{N}_\bullet$. 
  
		\item If the geometric multiplicity of $\widetilde{\lambda}_k(\bz^*)$ is $N_k(\bz^*)$, then we can re-index the spectrum by using "simple" eigenvalues in the following way: 
  $$\widetilde{\lambda}_1(\bz^*)=\lambda_1(\bz^*)=\lambda_2(\bz^*)=\dots=\lambda_{N_1} (\bz^*)<\widetilde{\lambda}_2(\bz^*)=\lambda_{N_1+1}(\bz^*)=\dots=\lambda_{N_1+N_2}(\bz^*)<\dots $$
  In this way, for each $k\in\mathbb{N}_\bullet$, the function $\T^d_*\ni\bz^*\mapsto\lambda_k(\bz^*)\in\mathbb{R}$ is continuous. 
	\end{enumerate}
\end{theorem}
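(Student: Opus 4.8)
The plan is to push $H^\circ$ through the Bloch--Floquet transform $\mathfrak{U}_{F\Gamma}$, identify each fibre with an \emph{elliptic pseudodifferential operator on the compact manifold} $\T^d$, and then harvest items (1)--(5) from standard elliptic theory on compact manifolds together with perturbation theory in the parameter $\bz^*$. For item (1) I would first check, for $f\in\mathscr{S}(\X)$, that the series in \eqref{DF-BF-trsf} converges locally uniformly and that $(\mathfrak{U}_{F\Gamma}f)(\cdot,\bz^*)\in\mathscr{F}_{\bz^*}$ for every $\bz^*\in\T^d_*$. The Plancherel identity $\|f\|^2_{L^2(\X)}=\int_{\T^d_*}d\bz^*\,\|(\mathfrak{U}_{F\Gamma}f)(\cdot,\bz^*)\|^2_{\bz^*}$ then follows by writing $\X=\Gamma\times\mathcal{E}$ and invoking orthogonality of the characters $\gamma\mapsto e^{-i<\s_*(\bz^*),\gamma>}$ over $\T^d_*$; density of $\mathscr{S}(\X)$ and a check of surjectivity on finite Fourier sums give the claimed unitary $L^2(\X)\to\int^{\oplus}_{\T^d_*}d\bz^*\,\mathscr{F}_{\bz^*}$. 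Since $\big(\mathfrak{U}_{F\Gamma}U_\gamma f\big)(\cdot,\bz^*)=e^{i<\s_*(\bz^*),\gamma>}\big(\mathfrak{U}_{F\Gamma}f\big)(\cdot,\bz^*)$, the transform diagonalises $\{U_\gamma\}_{\gamma\in\Gamma}$; by the commutation hypothesis $[H^\circ,U_\gamma]=0$ and the theory of decomposable operators (\cite{Di}, with \cite{Ku,HM,Mo} for the pseudodifferential setting) one obtains $\mathfrak{U}_{F\Gamma}H^\circ\mathfrak{U}_{F\Gamma}^{-1}=\int^{\oplus}_{\T^d_*}d\bz^*\,\widehat{H^\circ}(\bz^*)$, the fibres inheriting uniform lower semi-boundedness and domains $\mathscr{F}^p_{\bz^*}$ from the mapping properties of $\Op^{A^\circ}(h)$ recorded in \cite{IMP-3}.

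The key device for items (2)--(5) is the unitary $\mathscr{F}_{\bz^*}\to L^2(\T^d)$, $\hat F\mapsto e^{-i<\theta,\cdot>}\hat F\big|_{\mathcal{E}}$, where $\theta\in\X^*$ is any representative of $\bz^*$: using \eqref{F-OpA-trsl} and the $\Gamma$-periodicity of $h$ and $A^\circ$, it conjugates $\widehat{H^\circ}(\bz^*)$ into the magnetic Weyl quantization on $\T^d$ of the shifted symbol $h(x,\xi+\theta)$, still an elliptic element of $S^p_1(\Xi)_\Gamma$ uniformly for $\theta$ in compact sets, self-adjoint with domain $\mathscr{H}^p(\T^d)$ by \cite{IMP-3}. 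On the compact manifold $\T^d$ its resolvent maps $L^2(\T^d)$ continuously into $\mathscr{H}^p(\T^d)$, which embeds compactly into $L^2(\T^d)$ (Rellich--Kondrachov, $p>0$); hence $\widehat R^\circ_\zz(\bz^*)$ is compact, which is item (3). Item (4) then follows since a lower semi-bounded self-adjoint operator with compact resolvent has purely discrete spectrum of finite multiplicities accumulating only at $+\infty$; relabelling without repetition gives the strictly increasing $\{\widetilde\lambda_k(\bz^*)\}_{k\in\Nb}$, and the $\mathrm{min}$--$\mathrm{max}$ values $\lambda_k(\bz^*)=\inf_{\dim V=k}\ \sup_{0\neq\psi\in V}\langle\psi,\widehat{H^\circ}(\bz^*)\psi\rangle_{\bz^*}/\|\psi\|^2_{\bz^*}$ are the eigenvalues counted with multiplicity, which is the re-indexing in item (5).

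For item (2), I would choose a \emph{smooth local} lift $\theta(\bz^*)$ in a chart of $\T^d_*$ and note that $\theta\mapsto\big(\Op^{A^\circ}(h(\cdot,\cdot+\theta))-\zz\bb1\big)^{-1}$ is smooth in $\mathbb{B}(L^2(\T^d))$: the symbol and all its $\theta$-derivatives remain in $S^p_1(\Xi)_\Gamma$, so differentiating under the resolvent and iterating the resolvent identity yields norm-convergent expansions in each chart. Two lifts $\theta,\theta+\gamma_*$ produce fibre operators conjugate by the \emph{fixed} unitary $e^{i<\gamma_*,\cdot>}$, so the local pictures glue canonically and $\bz^*\mapsto\widehat R^\circ_\zz(\bz^*)$ is smooth globally. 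Finally, for the continuity asserted in item (5), norm-resolvent continuity of $\bz^*\mapsto\widehat{H^\circ}(\bz^*)$ (a consequence of (2)) together with the uniform lower bound implies, via the $\mathrm{min}$--$\mathrm{max}$ formulas and a standard Kato-type argument, that each $\bz^*\mapsto\lambda_k(\bz^*)$ is continuous.

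The routine analytic ingredients (Plancherel, Rellich, $\mathrm{min}$--$\mathrm{max}$) are classical; the genuinely delicate point is the dependence on $\bz^*$, since the fibres $\mathscr{F}_{\bz^*}$ truly move and the section $\s_*$ is discontinuous, so that "smooth family of operators" must be phrased through local trivialisations whose transition maps are the fixed multiplication unitaries $e^{i<\gamma_*,\cdot>}$ --- this is precisely the Bloch-bundle non-triviality that motivates the rest of the paper, and I expect it to be the main place where care is needed. The second point requiring attention is that the direct-integral decomposition respects operator domains, i.e.\ that $\mathcal{D}(\widehat{H^\circ}(\bz^*))=\mathscr{F}^p_{\bz^*}$ uniformly, which is where one leans on the precise self-adjointness and mapping statements for $\Op^{A^\circ}(h)$ from \cite{IMP-3} rather than re-proving them.
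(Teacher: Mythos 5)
Your proposal is correct, but note that the paper itself does not prove Theorem \ref{T-FBdec}: it presents it as a list of well-known facts, delegating the direct-integral formalism to \cite{Di} and the extension of Bloch--Floquet theory to periodic (magnetic) pseudo-differential operators to \cite{Ku}, \cite{HM} and \cite{Mo}. Your sketch is essentially the standard argument carried out in those references --- Plancherel for $\mathfrak{U}_{F\Gamma}$, decomposability from commutation with $\{U_\gamma\}$, conjugation of each fibre by $e^{-i<\theta,\cdot>}$ into an elliptic, lower semi-bounded magnetic pseudo-differential operator on $\T^d$ with the shifted symbol $h(\cdot,\cdot+\theta)$, Rellich compactness of $\mathscr{H}^p(\T^d)\hookrightarrow L^2(\T^d)$ for the discreteness of the fibre spectra, and smoothness in $\bz^*$ via local lifts glued by the fixed multiplication unitaries $e^{i<\gamma_*,\cdot>}$ --- and your identification of the two genuinely delicate points (the moving fibres $\mathscr{F}_{\bz^*}$ versus the fixed reference space, and the domain statement $\mathcal{D}(\widehat{H^\circ}(\bz^*))=\mathscr{F}^p_{\bz^*}$ resting on \cite{IMP-3}) is exactly where the references do the work. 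The only step stated a bit quickly is the lower semi-boundedness of \emph{every} fibre: direct-integral theory alone gives a uniform lower bound only for almost every $\bz^*$, so you should either invoke the norm-resolvent continuity from your item (2) to upgrade it to all $\bz^*$, or observe directly that each fibre is an elliptic, semi-bounded operator on the compact torus; with that remark the argument is complete.
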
 
\begin{remark}\label{R-1}
 Let us notice that the restriction to the unit cell $\mathcal{E}$ defines an identification of any space $\mathscr{F}_{\bz^*}$ (as Hilbert space) with $L^2(\mathcal{E})$, this identification {depending on $\bz^*\in\T^d_*$ by the boundary relations imposed by \eqref{F-Fzstar}}; using the differential structure of $\T^d_*$ one can then consider differentiabiliyty properties of functions of the form $\T^d_*\ni\bz^*\mapsto f(\bz^*)\in\mathscr{F}_{\bz^*}$ and $\T^d_*\ni\bz^*\mapsto T(\bz^*)\in\mathbb{B}\big(\mathscr{F}_{\bz^*}\big)$. The distinction between the spaces  $\mathscr{F}_{\bz^*}$ for different $\bz^*\in\T^d_*$ becomes important only when we consider the action of the group $\X\cong\Rd$ by translations.
\end{remark}

\paragraph{The Bloch eigenprojections.}
In order to define the eigenprojections $\widehat{\pi}_k(\bz^*)$ associated with the Bloch eigenvalues $\big\{\lambda_k(\bz^*)\big\}_{k\in\mathbb{N}_\bullet}$, we apply the following procedure. To each $k\in \mathbb{N}_\bullet$ and $\bz^*\in \mathbb T^d_*$, we introduce the minimal labelling of $\lambda_k(\bz^*)$:  
$$\nu (k,\bz^*)=\inf\big\{j\in\mathbb{N}_\bullet\,,\,\lambda_j(\bz^*) =\lambda_k(\bz^*)\big\}\,,$$
and  define the eigenprojections by the following formulas:
\beq\label{DF-pi-k}
\begin{split}
	\widehat{\pi}_k(\bz^*):=&\left\{
	\begin{array}{ll}
		=-\frac{1}{2\pi i}\oint_{\mathscr{C}_k(\bz^*)}d\z\,\big(\widehat{H^\circ}(\bz^*)-\z\bb1\big)^{-1}\in  {\mathbb{L}(\mathscr{F}_{
				\bz^*})} & \mbox{ if } \nu(k,\bz^*)=k\,,\\
		=0 & \mbox{ if }  \nu(k,\bz^*) < k\,,
	\end{array}
	\right.
\end{split}
\eeq 
where $\mathscr{C}_k(\bz^*)\subset\mathbb{C}$ is a circle surrounding $\lambda_k(\bz^*)$ in a anticlockwise direction and no other point from $\sigma\big(\widehat{H^\circ}(\bz^*)\big)$. They define measurable $\mathbb{B}(\mathscr{F}_{\bz^*})$-valued functions on $\mathbb{T}^d_*$. {Some standard arguments using the ellipticity of $h\in S^p_1(\X^*\times\X)$ imply the following statement.
\begin{theorem}\label{T-FBdec-proj}
Under the hypothesis of Theorem \ref{T-FBdec} the range of any finite dimensional orthogonal projection $\widehat{\pi}_k(\bz^*)$ is contained in the space of regular functions $\mathscr{F}_{\bz^*}\bigcap\,BC^\infty(\X)$.
\end{theorem}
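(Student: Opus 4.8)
\emph{Proof strategy.} The plan is to reduce the assertion to an elliptic regularity statement for the fiber operators and then run the standard parametrix argument inside the magnetic Weyl calculus. First I would observe that, when $\widehat{\pi}_k(\bz^*)\neq 0$, its range is exactly the (finite-dimensional) eigenspace of $\widehat{H^\circ}(\bz^*)$ for the eigenvalue $\lambda:=\lambda_k(\bz^*)$, and that $\mathscr{F}_{\bz^*}\cap BC^\infty(\X)$ is a linear subspace; hence it suffices to show that every $u\in\mathscr{F}^p_{\bz^*}$ with $\widehat{H^\circ}(\bz^*)u=\lambda u$ lies in $BC^\infty(\X)$. Next I would lift the eigenvalue equation to $\X$: by the quasi-periodicity relation in \eqref{F-Fzstar}, such a $u$, viewed as a function on all of $\X$, has $L^2$-norm over any translated cell equal to $\|u\|_{\bz^*}$, so it is uniformly locally $L^2$; in particular $u\in\mathscr{S}^\prime(\X)\cap\mathscr{H}^p_{\text{\tt loc}}(\X)$, and since $H^\circ=\overline{\Op^{A^\circ}(h)}$ is $\Gamma$-periodic with Bloch--Floquet fibers $\widehat{H^\circ}(\bz^*)$ (Theorem \ref{T-FBdec}), the fiber equation becomes $\big(\Op^{A^\circ}(h)-\lambda\bb1\big)u=0$ in $\mathscr{S}^\prime(\X)$.

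Then I would invoke the elliptic parametrix of the $\Gamma$-periodic magnetic Weyl calculus. The symbol $h\in S^p_1(\Xi)_\Gamma$ is elliptic with $p>0$, hence so is $h-\lambda$ (for $|\xi|$ large one has $|h(x,\xi)-\lambda|\geq c\langle\xi\rangle^p-|\lambda|\geq\tfrac{c}{2}\langle\xi\rangle^p$), and it is still $\Gamma$-periodic in $x$. By the construction recalled in Appendix \ref{A-m-PsiDO} (see \cite{MP-1,IMP-1,IMP-2,IMP-3}) there is a parametrix $q\in S^{-p}_1(\Xi)_\Gamma$ with $q\,\sharp^{B^\circ}\,(h-\lambda)=\bb1-r$, where $r$ belongs to the residual class $\bigcap_{N\in\mathbb{N}}S^{-N}_1(\Xi)_\Gamma$; equivalently $\Op^{A^\circ}(q)\,\Op^{A^\circ}(h-\lambda)=\bb1-\Op^{A^\circ}(r)$ on $\mathscr{H}^p_{\text{\tt loc}}(\X)$, where $\Op^{A^\circ}(q)$ maps $\mathscr{H}^s_{\text{\tt loc}}(\X)$ into $\mathscr{H}^{s+p}_{\text{\tt loc}}(\X)$ and $\Op^{A^\circ}(r)$ maps $L^2_{\text{\tt loc}}(\X)$ into $C^\infty(\X)$. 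Applying $\Op^{A^\circ}(q)$ to the lifted equation yields $u=\Op^{A^\circ}(r)u\in C^\infty(\X)$ (alternatively one bootstraps in the Sobolev scale: $u\in\mathscr{H}^p_{\text{\tt loc}}\Rightarrow u=\Op^{A^\circ}(q)(\lambda u)+\Op^{A^\circ}(r)u\in\mathscr{H}^{2p}_{\text{\tt loc}}\Rightarrow\cdots\Rightarrow u\in C^\infty(\X)$).

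Finally I would upgrade $C^\infty$ to $BC^\infty$ using quasi-periodicity once more: differentiating $u(x+\gamma)=e^{i<\s_*(\bz^*),\gamma>}u(x)$ shows that $|\partial^\alpha u|$ is $\Gamma$-periodic for every multi-index $\alpha$, hence, being continuous, attains its supremum over the compact cell $\overline{\mathscr{E}}$; thus $\lnu_{0,n}(u)<\infty$ for all $n\in\mathbb{N}$, i.e. $u\in BC^\infty(\X)$, and therefore $\Rg\widehat{\pi}_k(\bz^*)\subset\mathscr{F}_{\bz^*}\cap BC^\infty(\X)$. I expect the only genuinely delicate points to be the construction of the parametrix pair $(q,r)$ within the $\Gamma$-periodic magnetic Weyl calculus (where one must track the $\Gamma$-periodicity of all symbols and use that $A^\circ\in\Fb^1(\X)$ is periodic, so that the calculus is $\Gamma$-covariant and local magnetic Sobolev regularity coincides with the ordinary one) and the careful justification that the fiberwise eigenvalue equation lifts to the stated distributional identity on $\X$; both are standard consequences of the results of \cite{MP-1,IMP-1,IMP-2,IMP-3} recalled above.
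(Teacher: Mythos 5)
Your argument is correct and is exactly the ``standard arguments using the ellipticity of $h\in S^p_1$'' that the paper invokes without writing out: lift the fibre eigenvalue equation to a quasi-periodic distributional equation on $\X$, apply a parametrix of $h-\lambda$ in the $\Gamma$-periodic magnetic Weyl calculus so that $u=\Op^{A^\circ}(r)u$ with $r\in S^{-\infty}$ (or bootstrap through the magnetic Sobolev scale), and use quasi-periodicity to turn $C^\infty$ into $BC^\infty$. The only small imprecision is the claim that $\Op^{A^\circ}(r)$ maps all of $L^2_{\text{\tt loc}}(\X)$ into $C^\infty(\X)$ --- the rapidly off-diagonal decaying kernel only handles functions that are \emph{uniformly} locally $L^2$ --- but since you have already observed that quasi-periodicity makes $u$ uniformly locally $L^2$, this does not affect the proof.
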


\subsubsection{The isolated Bloch family hypothesis.}\label{SS-iso-band-hyp}

	\begin{hypothesis}\label{H-I}
		Given $H^\circ$ as in \eqref{DF-h-circ} with Bloch eigenvalues $\{\lambda_k(\bz^*)\}_{k\in\mathbb{N}_\bullet}$, there exist $k_0\in\mathbb{N}_\bullet$ and $N\in\mathbb{N}$ such that:\begin{enumerate}
			\item $\lambda_{k_0-1}(\bz^*)<\lambda_{k_0}(\bz^*),\quad\lambda_{k_0+N}(\bz^*)<\lambda_{k_0+N+1}(\bz^*),\quad\forall\bz^*\in\mathbb{T}^d_*$, (where by convention $\lambda_0:=-\infty$).
			\item $E_-:=\underset{\bz^*\in\mathbb{T}^d_*}{\sup}\lambda_{k_0-1}(\bz^*)\, <\, E_+:=\underset{\bz^*\in\mathbb{T}^d_*}{\inf}\lambda_{k_0+N+1}(\bz^*)$, $\text{\tt d}_0:=E_+-E_->0\,.$ 
		\end{enumerate}
	\end{hypothesis}
Note that	under this hypothesis,  the interval $
			J_\B:=(E_-,E_+)$
			is not empty.
\begin{center}
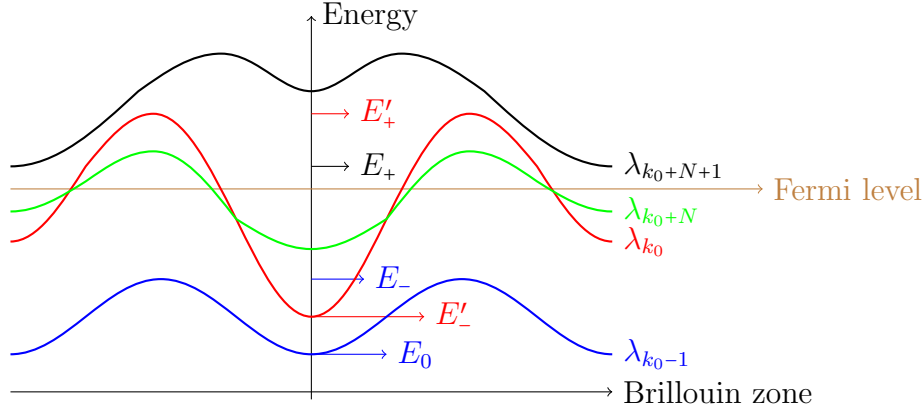

\begin{tikzpicture}
   \draw [->] (-4, 0) -- (4, 0) node[right] {Brillouin zone};
   \draw [->] (0, -0.1) -- (0, 5) node[right] {Energy};
   \draw [->, brown] (-4, 2.7) -- (6, 2.7) node[right] {Fermi level};
   \draw[->, domain= 0: 0.5, black] plot(\x, 3) node[right]{$E_+$};
   \draw[->, domain= 0: 0.5, red] plot(\x, 3.7) node[right]{$E_+'$};
   \draw[->, domain= 0: 1.5, red] plot(\x, 1) node[right]{$E_-'$};
   \draw[->, domain= 0: 0.7, blue] plot(\x, 1.5) node[right]{$E_-$};
      \draw[->, domain= 0: 1, blue] plot(\x, 0.5) node[right]{$E_0$};
   \draw[red, thick] (-4,2) cos (-3, 3) sin (-2.1,3.7) cos (-1,2.3) sin (0, 1)  cos (1,2.3) sin (2.1,3.7) cos (3, 3) sin (4, 2) node[right] {$\lambda_{k_0}$};
   \draw[green, thick] (-4,2.4) cos (-3, 2.8) sin (-2.1,3.2) cos (-1,2.3) sin (0, 1.9)  cos (1,2.3) sin (2.1,3.2) cos (3, 2.8) sin (4, 2.4) node[right] {$\lambda_{k_0+N}$};
   \draw[blue, thick] (-4,0.5) cos (-3, 1) sin (-2,1.5 )  cos (-1, 1) sin (0, 0.5) cos (1, 1) sin (2,1.5) cos (3,1) sin (4, 0.5) node[right] {$\lambda_{k_0-1}$};
   \draw[black, thick] (-4,3) cos (-2.3, 4) sin (-1.2,4.5) cos (-0.5,4.2)  sin (0, 4) cos (0.5, 4.2) sin (1.2 ,4.5) cos (2.3,4) sin (4,3) node[right] {$\lambda_{k_0+N+1}$};
\end{tikzpicture}
\captionof{figure}{\label{picture1} Here $k_0=2$ and $N=1$. The isolated band consists of two crossing eigenvalues in red and green. Formally, the green colour should always be on top of the red colour because $\lambda_{k_0}\leq \lambda_{k_0+N}$, but we will never treat them individually, only as a well-defined isolated family. The energy interval we are interested in is $(E_-,E_+)$ where $E_-$ is the maximum of the blue eigenvalue $\lambda_{k_0-1}$ and $E_+$ is the minimum of the the black one $\lambda_{k_0+N+1}$. The Hamiltonian $H^\circ$ does not have a spectral gap.}
\end{center}

\begin{definition} Any set $\B:=\big\{\lambda_k:\T^d_*\rightarrow\R\,,\,k_0\leq k\leq k_0+N\big\}$ satisfying Hypothesis \ref{H-I} is called an isolated Bloch family of the Hamiltonian $H^\circ$.
\end{definition}

Let us emphasize that we do not assume the existence of any spectral gap below or above the image of the isolated family  $\mathfrak{B}$ in the spectrum of $H^\circ$, i.e. \textbf{we do not assume} that:
\beq\label{H-SpGap}
\underset{\bz^*\in\mathbb{T}^d_*}{\sup}\lambda_{k_0-1}(\bz^*)\,<\,\underset{\bz^*\in\mathbb{T}^d_*}{\inf}\lambda_{k_0}(\bz^*),\quad\text{and}\quad
	\underset{\bz^*\in\mathbb{T}^d_*}{\sup}\lambda_{k_0+N}(\bz^*)\,<\,\underset{\bz^*\in\mathbb{T}^d_*}{\inf}\lambda_{k_0+N+1}(\bz^*)
\eeq
and thus, our results extend also to the case of semimetals.

For some technical reasons that will become clear further, we  choose our energy scale in order to have:
	\begin{hypothesis}\label{H-0} 
	 $$\underset{\bz^*\in\T^d_*}{\inf}\lambda_1(\bz^*)=:E_0>0\,.$$
	\end{hypothesis}
	
Let us put into evidence the following smooth operator-valued $\Gamma_*$-periodic maps associated with  our isolated Bloch family, that play a key role in our arguments:
\beq\begin{split}\label{DF-hat-H-B}
&{ \T_*^d\ni\bz^*}\mapsto\widehat{P}_\B(\bz^*):=\underset{k_0\leq k\leq k_0+N}{\sum}\hat{\pi}_k(\bz^*)\in\mathbb{B}\big(L^2(\mathcal{E})\big),\\
&{ \T_*^d} \ni\bz^*\mapsto\widehat{H}_\B(\bz^*):=\underset{k_0\leq k\leq k_0+N}{\sum}\,\lambda_k(\bz^*)\,\hat{\pi}_k(\bz^*)\in\mathbb{B}\big(L^2(\mathcal{E})\big).
\end{split}\eeq
Hypothesis \ref{H-I} determines the following partition
\beq \label{F-Nb-part}
\Nb=\underline{k_0}\sqcup{\big\{k_0+j,\,0\leq j\leq N\big\}}\sqcup{\big\{k>k_0+N\big\}}
\eeq 
and an induced orthogonal decomposition of the unperturbed Hamiltonian by using \eqref{DF-hat-H-B} and some similar formulas for the other two Hamiltonians (defined in \eqref{eq:3.1}) and \eqref{eq:3.2}):
\beq\label{DF-HB}
\begin{aligned}
H^\circ &=H_0\,\oplus\,H_\B\,\oplus\,H_\infty \mbox{  with } \\
H_\B&=\mathfrak{U}_{F\Gamma}^{-1}\Big(\int^{\oplus}_{\T^d_*}d\bz^*\,\widehat{H}_\B(\bz^*)\Big)\mathfrak{U}_{F\Gamma},\quad {P_\B=\mathfrak{U}_{F\Gamma}^{-1}\Big(\int^{\oplus}_{\T^d_*}d\bz^*\,\widehat{P}_\B(\bz^*)\Big)\mathfrak{U}_{F\Gamma}.}
\end{aligned}
\eeq
In the absence of the gap condition this is not a spectral decomposition.
	
\subsection{The perturbed magnetic Hamiltonian.}\label{SS-pert-m-Ham}
Our main objective in this paper is to analyze  the influence of  a perturbing magnetic field on the dynamics associated with the isolated Bloch family $\B$ of the unperturbed Hamiltonian, as given by  Hypothesis \ref{H-I}. We shall consider that the perturbing magnetic field has a constant part of order $\epsilon$ with possible  fluctuations of strictly less amplitude controlled by the factor $\cc\epsilon$ with $\cc\in[0,1)$ (see \eqref{Hyp-magnField}).

\begin{hypothesis}\label{H-magnField}
	We consider a family of magnetic fields $B^{\epsilon,\cc}\in\Fb^2(\X)$ controlled by two parameters $(\epsilon,c)\in[0,\hat \epsilon_0]\times[0,1]$ for some $\hat \epsilon_0>0$:
	\beq\label{Hyp-magnField}
	B^{\epsilon,\cc}\,:=\,\epsilon B^\bullet\,+\,\cc\epsilon B^\epsilon\,\in\,\Fb^2(\X),
	\eeq 	
where    $B^\bullet$ is a non-zero constant magnetic field and $B^\epsilon $ belongs to a bounded subset ${\cal{S}}_B$ in the Fr\'echet space $\Fb^2(\X)$.

\end{hypothesis}
Under this hypothesis,  we choose some vector  potential $A^\epsilon\in\Fp^1(\X)$ for $B^\epsilon$ and $$A^\bullet_k(x):=(1/2)\underset{1\leq j\leq d}{\sum}B^\bullet_{j,k} x_j$$
so that $B^{\epsilon,\cc}=dA^{\epsilon,\cc}$  for:
 $$A^{\epsilon,\cc}(x):=\epsilon A^\bullet(x)\,+\,\cc\epsilon A^\epsilon(x).$$
Let us emphasize that \textbf{the total magnetic field} of our problem is:
\beq\label{F-A}
B(\epsilon,\cc)\,:=\,B^\circ\,+\,B^{\epsilon,\cc}\,=\,B^\circ\,+\,\epsilon(B^\bullet+\cc\,B^\epsilon)\,,
\eeq
with associate magnetic potential
\beq\label{F-Ab}
A(\epsilon,\cc):= A^\circ+A^{\epsilon,\cc}\,,
\eeq
but we shall frequently use in our arguments the shorthand notation:
\beq\label{N-sh-B}
B\equiv B(\epsilon,\cc),\qquad A\equiv A(\epsilon,\cc).
\eeq
The perturbed magnetic Hamiltonian is then defined by
\beq\label{DF-Heps-c}
H^{\epsilon,\cc}:=\overline{\Op^{A}(h)} \mbox{ on }L^2(\X).
\eeq

 We intend to emphasize the constant part of the perturbing magnetic field and considering that the notations of the form $\Lambda^{A^{\epsilon,c}}$ are rather cumbersome, we shall prefer the following notations:
\beq\label{F-dec-p-m-phase}
\begin{array}{rl}
\Lambda^{\epsilon,\cc}&:=\widetilde{\Lambda}^{\epsilon,\cc}\,\Lambda^{\epsilon},\\\widetilde{\Lambda}^{\epsilon,\cc}(x,y)&:=\exp\Big(-i\cc\epsilon\int_{[x,y]}A^\epsilon\Big),\\ \Lambda^{\epsilon}(x,y)&:=\exp\Big(-i\epsilon\int_{[x,y]}A^\bullet\Big)=\exp\big(-(i\epsilon/2)\underset{1\leq j,k\leq d}{\sum}B^\bullet_{k,j}\,x_k\,y_j\Big).
\end{array}\eeq

 We emphasize that in general the perturbation produced by the magnetic field \eqref{Hyp-magnField} is not relatively bounded with respect to the unperturbed Hamiltonian, and we use the magnetic pseudodifferential calculus to control the perturbation of the $A$-symbols associated with the projections and the Hamiltonians. 
 
\subsection{The main result.}\label{SS-MRes}

Our interest in this paper is to try to understand the behaviour of the restricted dynamics associated with  an isolated Bloch family under the action of a weak {long-range} magnetic field. 

In a first step, {in Section \ref{S-proof-T-I}}, using an argument {based on} the {Feshbach}-Schur {inversion} procedure, we {identify} a quasi-invariant subspace {for $H^{\epsilon,\cc}$} associated with  {the energy window $J_\B$ in Hypothesis \ref{H-I}} and {we construct} an effective Hamiltonian that leaves this subspace invariant. An important fact is that this effective Hamiltonian {approximates} the spectrum of {$H^{\epsilon,\cc}$} and its {unitary time} evolution in the quasi-invariant subspace {up to}  an error of order $\epsilon^2$.
We emphasise {that} these results are proved without the use of Wannier functions, thus without a triviality {condition on} the sub-bundle defined by the isolated Bloch family $\B$. 

In order to obtain a {matricial}  version of these results, even in the absence of exponentially localized Wannier functions, we shall construct strongly localized Parseval tight-frames {instead, consisting of Schwartz functions, see Section \ref{S-Parseval}}. 

In fact, we would like to obtain a more {refined} result, {by highlighting} the effective Hamiltonian associated with  the constant part $\epsilon\,B^\bullet$ of the perturbing magnetic field, in a way extending to our situation the well-known Peierls-Onsager formula obtained by \cite{Ne-RMP,HS} for the case $N=0$ with spectral gap, and in a matrix form by \cite{CIP} for $N\in\mathbb{N}$ with spectral gap and triviality of the Bloch bundle {associated with  the isolated Bloch family}.

Our construction {of the strongly localized Parseval frame} starts from the following fact that we shall discuss in Subsection \ref{SSS-BF-vbdl} {(see Propositions \ref{P-bd-triv} and \ref{P-Pfr-Tstar-sect})}:
\begin{proposition}\label{P-triv-FB}
There exists $n_\B\in\Nb$ (with some precise bounds described in Subsection \ref{SSS-BF-vbdl}) and a family of $n_\B$ smooth functions $\psi_p:\T_*\rightarrow\mathscr{F}$ with {$1\leq p\leq n_\B$}, such that $\psi_p(\bz^*)\in\mathscr{F}_{\bz^*}$ with $\|\psi_p(\bz^*)\|_{\mathscr{F}_{\bz^*}}=1$  and for any $f\in L^2(\T_*;\mathscr{F})$ with $f(\bz^*)\in\mathscr{F}_{\bz^*}$ there exists a \st{unique} {well-defined} $\mathfrak{C}_\B(f)\in L^2(\T_*;\Co^{n_\B})$ such that:
\[
f(\bz^*)=\underset{p\in\underline{n_\B}}{\sum}\mathfrak{C}_\B(f)_p(\bz^*)\,\psi_p(\bz^*),\quad\|f(\bz^*)\|^2=\underset{p\in\underline{n_\B}}{\sum}|\mathfrak{C}_\B(f)_p(\bz^*)|^2.
\]
\end{proposition}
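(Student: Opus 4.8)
\textbf{Proof plan for Proposition \ref{P-triv-FB}.}
The statement asserts that the smooth Hilbert-bundle $\mathscr{F}_\B := \bigsqcup_{\bz^*} \widehat{P}_\B(\bz^*)\mathscr{F}_{\bz^*}$ over $\T^d_*$, whose fibre dimension is $m := N+1$, admits a \emph{Parseval (tight) frame} of smooth global sections $\psi_1,\dots,\psi_{n_\B}$, each of unit norm, with $n_\B \geq m$. Wait — on closer reading the $\psi_p$ take values in the full $\mathscr{F}_{\bz^*}$, not in the range of $\widehat{P}_\B(\bz^*)$, and the reconstruction formula holds for \emph{every} $f(\bz^*)\in\mathscr{F}_{\bz^*}$; so in fact the claim is that $\{\psi_p(\bz^*)\}_{p\le n_\B}$ is a Parseval frame of the \emph{whole} fibre $\mathscr{F}_{\bz^*}$ for each $\bz^*$, with $\bz^*\mapsto\psi_p(\bz^*)$ smooth. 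That is a strictly stronger and cleaner statement, and I would prove it by a partition-of-unity/local-triviality argument rather than anything specific to the isolated band. The plan: first reduce to a local statement — since $\T^d_*$ is compact, cover it by finitely many contractible charts $\{V_j\}_{j=1}^{M}$ over each of which the bundle $\bigsqcup_{\bz^*}\mathscr{F}_{\bz^*}$ is trivialized (here $\mathscr{F}_{\bz^*}$ is an \emph{infinite}-dimensional Hilbert bundle, but it is trivial: the family of boundary conditions in \eqref{F-Fzstar} is governed by the smooth line bundle $e^{i\langle\s_*(\bz^*),\gamma\rangle}$, and a single smooth trivialization $\mathscr{F}_{\bz^*}\cong L^2(\mathcal E)$ exists globally up to the usual magnetic/Bloch cocycle — in any case local triviality over a contractible $V_j$ is automatic). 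Over $V_j$ pick a smooth orthonormal frame; this gives smooth sections, but only defined on $V_j$, and not of unit norm after cutting off.

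The key device is the standard one for producing Parseval frames from an orthonormal basis plus a partition of unity: let $\{\chi_j^2\}_{j=1}^M$ be a smooth partition of unity subordinate to $\{V_j\}$ (so $\sum_j \chi_j^2 \equiv 1$, $\chi_j\in C^\infty_c(V_j)$, $\chi_j\ge0$). If $\{e_k^{(j)}(\bz^*)\}_{k\in\Nb}$ is a smooth orthonormal basis of $\mathscr{F}_{\bz^*}$ for $\bz^*\in V_j$, then the (countable) family $\{\chi_j(\bz^*)\,e_k^{(j)}(\bz^*)\}_{j\le M,\,k\in\Nb}$, extended by zero outside $V_j$, is a smooth Parseval frame of $\mathscr{F}_{\bz^*}$ at every $\bz^*$: indeed for $f(\bz^*)\in\mathscr{F}_{\bz^*}$,
\[
\sum_{j,k}\big|\big(\chi_j(\bz^*)e_k^{(j)}(\bz^*),f(\bz^*)\big)_{\bz^*}\big|^2
=\sum_j \chi_j(\bz^*)^2\sum_k\big|\big(e_k^{(j)}(\bz^*),f(\bz^*)\big)_{\bz^*}\big|^2
=\sum_j\chi_j(\bz^*)^2\,\|f(\bz^*)\|_{\bz^*}^2=\|f(\bz^*)\|_{\bz^*}^2,
\]
and the reconstruction identity $f=\sum_{j,k}(\chi_j e_k^{(j)},f)\,\chi_j e_k^{(j)}$ follows by polarization from the Parseval identity. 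The coefficient map $\mathfrak C_\B(f)$ is then simply $\mathfrak C_\B(f)_{(j,k)}(\bz^*):=\big(\chi_j(\bz^*)e_k^{(j)}(\bz^*),f(\bz^*)\big)_{\bz^*}$, which is measurable in $\bz^*$ (smooth where $\chi_j>0$) and square-summable with $\ell^2$-norm $\|f(\bz^*)\|_{\bz^*}$, so it lies in $L^2(\T^d_*;\ell^2)$; and — matching the paper's insistence on the word ``well-defined'' rather than ``unique'' — this is the canonical (minimal-norm) choice of coefficients, the frame not being a basis. The one genuine subtlety is \emph{finiteness}: to get finitely many $\psi_p$ (a family indexed by $\underline{n_\B}$, $n_\B\in\Nb$) rather than countably many, one must restrict to the range of $\widehat P_\B(\bz^*)$. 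So I expect the actual statement being proved in Subsection \ref{SSS-BF-vbdl} is: $\{\chi_j(\bz^*)\,e_k^{(j)}(\bz^*)\}$ with $k$ running over a basis of the \emph{$m$-dimensional} fibre $\widehat P_\B(\bz^*)\mathscr{F}_{\bz^*}$, giving $n_\B = M\cdot m$ sections (with $M$ the covering number, whence the ``precise bounds'' alluded to); these are unit-normalizable after the cosmetic step of absorbing $\chi_j$ into a renormalization, or one drops the unit-norm requirement off the support of $\chi_j$ where the section vanishes anyway. The smoothness of $\bz^*\mapsto e_k^{(j)}(\bz^*)$ on the range of $\widehat P_\B$ uses Theorem \ref{T-FBdec} items 2–3 (smoothness of the resolvent, compactness) together with the Riesz-projection formula \eqref{DF-pi-k}, exactly as in the derivation of the smooth family $\widehat P_\B(\bz^*)$ in \eqref{DF-hat-H-B}.

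The main obstacle — and the reason this is deferred to a whole subsection rather than dispatched in two lines — is not the Parseval property but \emph{strong spatial localization}: the paper wants $\psi_p(\bz^*)$ to give, via Bloch--Floquet inverse transform, functions on $\X$ that are Schwartz (the ``strongly localized Parseval tight-frames consisting of Schwartz functions''). Smoothness of $\bz^*\mapsto\psi_p(\bz^*)$ in the $\mathscr{F}_{\bz^*}$-norm is \emph{not} enough for that; one needs smoothness (indeed $C^\infty$ up to the $\Gamma_*$-periodicity cocycle) of the sections as functions jointly in $(\hat x,\bz^*)$, with all $\bz^*$-derivatives controlled in, say, $BC^\infty(\mathcal E)$-norm, so that $\mathcal F_{\T^d_*}$ turns $\bz^*$-regularity into spatial decay and $x$-regularity (guaranteed by Theorem \ref{T-FBdec-proj}, which puts $\mathrm{Range}\,\widehat\pi_k(\bz^*)\subset BC^\infty(\X)$) survives. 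The delicate point is that the local orthonormal frames $e_k^{(j)}$ must be chosen to depend smoothly on $\bz^*$ in this strong topology — which is where the ellipticity of $h\in S^p_1(\Xi)_\Gamma$ and the smoothness of the Riesz projections in operator norm \emph{between Sobolev scales} enter — and that the partition-of-unity gluing, being pointwise in $\bz^*$, does not spoil this. I would handle it by first establishing, via \eqref{DF-pi-k} and elliptic regularity for $\widehat{H^\circ}(\bz^*)$, that $\bz^*\mapsto\widehat P_\B(\bz^*)$ is smooth as a map into $\mathbb B(\mathscr F_{\bz^*})$ \emph{and} into $\mathbb B(L^2(\mathcal E);\mathscr F^s_{\bz^*})$ for every $s$; then picking $e_k^{(j)}$ by smoothly orthonormalizing $\widehat P_\B(\bz^*)$ applied to a fixed smooth basis of $L^2(\mathcal E)$ over the contractible $V_j$; then gluing with $\{\chi_j^2\}$. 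The quantitative bounds on $n_\B$ in terms of $m=N+1$ and the dimension $d$ (via the Lebesgue/covering number of $\T^d_*$, which is $\le 2^d$ or $\le d+1$ depending on the chosen good cover) are what Subsection \ref{SSS-BF-vbdl} makes precise.
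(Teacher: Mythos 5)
Your final construction is correct, but it is not the route the paper takes. After your (necessary) retreat from the reading in which $\{\psi_p(\bz^*)\}_{p\le n_\B}$ would frame the whole infinite-dimensional fibre --- impossible for finite $n_\B$, since a Parseval frame reconstructs and hence spans --- you settle on gluing smooth local orthonormal frames of the rank-$(N+1)$ bundle $\mathfrak{F}_\B$ over a finite good cover by means of a partition of unity $\{\chi_j^2\}$, obtaining $n_\B=M(N+1)$ smooth sections. The paper instead invokes the Embedding Theorem for vector bundles (Theorem 7.2 in Chapter 8 of \cite{Hu}): it yields a smooth bundle homomorphism $\mathfrak{I}_\B:\mathfrak{F}_\B\rightarrow\T^d_*\times\Co^{n_\B}$, isometric on each fibre, with the nearly optimal bound $N+1\le n_\B\le N+1+m_d$, $m_d=\lceil d/2\rceil$, and then sets $\hat{\psi}_p(\bz^*):=\mathfrak{I}_{\B,\bz^*}^{-1}\big(Q_\B(\bz^*)\mathcal{e}_p\big)$ as in \eqref{DF-triv-sect}, where $Q_\B(\bz^*)$ is the orthogonal projection of $\Co^{n_\B}$ onto the image of the fibre; the two Parseval identities then follow from a short direct computation (Propositions \ref{P-bd-triv} and \ref{P-Pfr-Tstar-sect}). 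The paper explicitly mentions your partition-of-unity variant as ``the most elementary'' construction and sets it aside precisely because it produces $n'=(N+1)2^d$ (or $M(N+1)$ for a good cover), much larger than $N+1+m_d$; since $n_\B$ fixes the size of the effective tight-binding model in $\ell^2(\Gamma)\otimes\Co^{n_\B}$, the sharper bound is part of what the proposition, via its pointer to Subsection \ref{SSS-BF-vbdl}, is meant to deliver, so your proof establishes the qualitative statement but not the quantitative one. Two further remarks: (i) your concern about strong spatial localization (Schwartz decay after inverse Bloch--Floquet) is pertinent to the later Wannier-type frame in $L^2(\X)$ (Proposition \ref{P-band-frame}), but is not needed for this fibre-wise statement, for which norm-smoothness of the sections together with Theorem \ref{T-FBdec-proj} suffices; (ii) the unit-norm requirement $\|\psi_p(\bz^*)\|_{\mathscr{F}_{\bz^*}}=1$ cannot be rescued by renormalizing your $\chi_j e_k^{(j)}$ (that destroys the Parseval identity), and in fact no Parseval frame of $n_\B>N+1$ unit vectors of an $(N+1)$-dimensional fibre exists, because $\sum_p\|\psi_p(\bz^*)\|^2=N+1$; the paper's own sections \eqref{DF-triv-sect} are likewise not of unit norm, so this is a blemish of the statement itself rather than a gap in your argument.
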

 \begin{notation}\label{N-matrix}~
		\begin{itemize}
			\item  Given $M\in\mathbb{N}_\bullet$, we denote by $\mathscr{M}_M$ the $C^*$-algebra of $M\times M$ complex matrices; 
			\item $\mathscr{M}_\Gamma[\mathfrak{A}]$ is the complex linear space of infinite matrices indexed by $\Gamma\times\Gamma$, having entries in a $C^*$-algebra $\mathfrak{A}$;
			\item $\mathscr{M}^\circ_\Gamma[\mathfrak{A}]$ is the complex subspace in $\mathscr{M}_\Gamma[\mathfrak{A}]$ of matrices having rapid decay outside the diagonal; given any faithful representation $\rho:\mathfrak{A}\rightarrow\mathbb{B}(\mathcal{H})$ we may view $\mathscr{M}^\circ_\Gamma[\mathfrak{A}]$ as a sub-algebra of $\mathbb{B}\big(\ell^2(\Gamma;\mathcal{H})\big)$ endowed with the operator norm that we denote by $\|\cdot\|_{\mathbb{B}(\ell^2(\Gamma;\mathcal{H}))}$;
			\item ${\cal{s}}\big(\Gamma;\mathfrak{A}\big):=\big\{\mathring{V}:\Gamma\rightarrow\,\mathfrak{A},\ \forall n\in\mathbb{N},\ \underset{\gamma\in\Gamma}{\sup}<\gamma>^n\big\|\mathring{V}(\gamma)\big\|_{\mathfrak{A}}\hspace*{-0,2cm}<\infty\ \big\}$.
   \item $\big\{\mathcal{e}_\gamma,\ \gamma\in\Gamma\big\}$ is the canonical orthonormal basis of $\ell^2(\Gamma)$ and for any $M\in\Nb$ let $\big\{\mathcal{e}_p,\ p\in\underline{M}\big\}$ be the canonical orthonormal basis of $\Co^M$.
			\end{itemize}
	\end{notation}

The above facts allow us to associate with  our reduced Hamiltonian $H_\B$ in \eqref{DF-HB} {an equivalent version acting in $\ell^2(\Gamma;\mathscr{M}_{n_\B})$ given by the sequence of matrices }$\mathfrak{m}^\circ_\B\in\mathcal{s}(\Gamma;\mathscr{M}_{n_\B})$ defined in \eqref{DF-m-circ-B} and to construct {an approximate} matricial model for the dynamics of the isolated Bloch family in a {long-range}  magnetic field. 

More precisely, given a a Hamiltonian $H^{\epsilon,\cc}$ as in \eqref{DF-Heps-c} {obeying all the previously formulated assumptions, we will perform the following steps:} 
\begin{itemize}
    \item {Starting from the unperturbed projection $P_\B$ (see \eqref{DF-HB}) we will construct} a family of orthogonal projections $\{P^{\epsilon,\cc}_\B\}_{(\epsilon,\cc)\in[0,\epsilon_0]\times[0,1]}$ for some $\epsilon_0>0$ (Definition \ref{D-m-vers-B}), such that:
    \begin{itemize}
        \item the subspace $P^{\epsilon,\cc}_\B\,L^2(\X)$ is almost  invariant for $H^{\epsilon,\cc}$ modulo an error of order $\epsilon$ (see formula \eqref{F-PH-epsc});
        \item given a compact sub-interval $J\subset\mathring{J}_\B$, we prove that the spectral projection of $H^{\epsilon,\cc}$ associated with  $J$ is contained in $P^{\epsilon,\cc}_\B$ for $\epsilon$ small enough (Proposition \ref{P-est-Jepsilon});
\item $P^{0,0}_\B=P_\B$ and $P^{\epsilon,\cc}_\B=\Op^{A}(p^{\epsilon,\cc}_\B)$ with $p^{\epsilon,\cc}_\B\in S^{-\infty}(\X\times\X^*)$ and $\nu(p^{\epsilon,\cc}_\B-p^{0,0}_\B)\leq C_\nu\epsilon$ for all semi-norms $\nu:S^{-\infty}(\X\times\X^*)\rightarrow\R_+$ with some $C_\nu>0$ independent of $\epsilon\in[0,\epsilon_0]$ (Corollary \ref{C-est-p-symb-pert}).
    \end{itemize}
    \item {We define} the {effective} Hamiltonians $\ham^{\epsilon,\cc}_\B:=P^{\epsilon,\cc}_\B\,H^{\epsilon,\cc}\,P^{\epsilon,\cc}_\B$ (Definition \ref{D-m-vers-B}).
    \item {We construct} a family of sequences $\mathfrak{m}^\epsilon_\B\in\mathcal{s}(\Gamma;\mathscr{M}_{n_\B})$ for $\epsilon\in[0,\epsilon_0]$ {(see Definition \ref{D-m-eps-B})}, {describing the isolated Bloch family dynamics with a constant magnetic field} and such that {(see Proposition \ref{P-4.30})}:
    \begin{align*}
&\underset{\gamma\in\Gamma}{\sup}<\gamma>^n\big\|[\mathfrak{m}^{\epsilon}_\B]_\gamma-[\mathfrak{m}^{\circ}_\B]_\gamma\big\|_{\mathscr{M}_M}\leq C_n\cc\epsilon,\quad 
[\mathfrak{m}^{\circ}_\B]_\gamma\,=\,(2\pi)^{-d}\int_{\T^d_*}d\bz^*\,e^{i<\s_*(\bz^*),\gamma>}\,\widehat{H}_\B(\bz^*).
\end{align*}
    \item {We introduce} a family of isometric linear maps $\mathfrak{C}^{\epsilon,\cc}_\B:P^{\epsilon,\cc}_\B\,L^2(\X)\rightarrow\mathcal{K}_\B:=\ell^2(\Gamma)\otimes\Co^{n_\B}$, with $n_\B\in\Nb$, {already mentioned in} Proposition \ref{P-triv-FB}, {(see \eqref{DF-C-eps-c-B}) that give an equivalent model for the approximate effective dynamics for the isolated Bloch family $\B$ in a magnetic field $B^{\epsilon,\cc}$ satisfying \eqref{Hyp-magnField}.}
\end{itemize}

Given a matrix-valued {rapidly decaying} sequence  $\mathfrak{m}\in\mathcal{s}(\Gamma;\mathscr{M}_{n_\B})$ we {introduce} its discrete Fourier transform:
\[
\hat{\mathfrak{m}}\in\,C^\infty(\T_*;\mathscr{M}_{n_\B}),\quad\hat{\mathfrak{m}}(\bz^*):=\underset{\gamma\in\Gamma}{\sum}\,e^{-i<\s_*(\bz^*),\gamma>}\,\mathfrak{m}_\gamma\in\mathscr{M}_{n_\B},
\]
{and} define a bounded operator associated with  the matrix $\Lambda^{\epsilon,\cc}\mathfrak{m}_{\alpha-\beta}$ given by:
\beq\label{DF-Op-m-eps}
\big(\widetilde{\Op}^{\epsilon,\cc}(\mathfrak{m})\, \vec{c}\big)_\alpha\,:=\,\underset{\beta\in\Gamma}{\sum}\Lambda^{\epsilon,\cc}(\alpha,\beta)\mathfrak{m}_{\alpha-\beta}\, \vec{c}_\beta. 
\eeq
{We} notice that we have the formula:
\beq\label{F-PO}
\widetilde{\Op}^{\epsilon,\cc}(\mathfrak{m})\,=\,(2\pi)^{d/2}\int_{\X}dz\,\Big(\mathcal{F}_{\X^*}\big(\hat{\mathfrak{m}}\circ\p_*\big)\Big)(z)\,U^{A^{\epsilon,\cc}}(z)
\eeq
with $U^{A^{\epsilon,\cc}}(z)$ from \eqref{F-UAx}, {that gives a generalized version of the Peierls-Onsager substitution}.

We are now ready to formulate our main result:
\begin{theorem}\label{T-III}
Let $H^\circ$ be  a Hamiltonian  as in \eqref{DF-h-circ} satisfying Hypotheses \ref{H-I} and \ref{H-0} and  let $H^{\epsilon,\cc}$ be the perturbed Hamiltonian associated with  it as in \eqref{DF-Heps-c} for a magnetic field satisfying \eqref{Hyp-magnField}. {Let $\mathfrak{m}^\epsilon_\B$ be the rapidly decaying sequence from Definition \ref{D-m-eps-B}}. Then there exists $\delta>0$ such that for any $\delta\in[0,\delta_0]$ the interval $J^\delta_\B:=\big(E_-+{2}\delta\,,\,E_+-{2}\delta\big)$ is not void and there exists $\epsilon_0 >0$ and $C>0$ such that, for any pair $(\epsilon,\cc)\in[0,\epsilon_0]\times[0,1]$, the objects introduced above exist and satisfy the following properties for any compact sub-interval $J\subset J^{\delta}_\B$:
\begin{itemize}
\item  We have:
\beq 
\begin{aligned}
&\max\Big\{\underset{\lambda\in\sigma(\widetilde{\Op}^{\epsilon,\cc}(\mathfrak{m}^\epsilon))}{\sup}\dist\Big(\lambda,\sigma(H^{\epsilon,\cc})\bigcap\,J\Big )\,,\,\underset{\lambda\in\sigma(H^{\epsilon,\cc}))}{\sup}\dist\Big(\lambda,\sigma(\widetilde{\Op}^{\epsilon,\cc}(\mathfrak{m}^\epsilon_\B)\big)\bigcap\,J\Big ) \Big\}\\ \label{F-I}
&\qquad \leq\,C\epsilon(\cc+\epsilon)\,.
\end{aligned}
\eeq
\item  For every $v\in E_J( H^{\epsilon,\cc})\,L^2(\X)$  we have:
\beq\label{F-II}
\big\|e^{-itH^{\epsilon,\cc}}v - [\mathfrak{C}^{\epsilon,\cc}_\B]^*e^{-it\widetilde{\Op}^{\epsilon,\cc}(\mathfrak{m}^\epsilon_\B)}\,\mathfrak{C}^{\epsilon,\cc}_\B\,P^{\epsilon,\cc}_\B \, v\big\|_{L^2(\X)}\leq\,C\big[\epsilon\,+\,(1+|t|) ^4\, \epsilon\, (\cc+\epsilon)\big]\|v\|_{L^2(\X)}.
\eeq
\end{itemize}
\end{theorem}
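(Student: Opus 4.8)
The overall strategy is to chain together the three layers of approximation: first pass from the true dynamics $e^{-itH^{\epsilon,\cc}}$ to the dynamics generated by the effective Hamiltonian $\ham^{\epsilon,\cc}_\B=P^{\epsilon,\cc}_\B H^{\epsilon,\cc}P^{\epsilon,\cc}_\B$ restricted to the quasi-invariant subspace $P^{\epsilon,\cc}_\B L^2(\X)$; second, transport that dynamics via the isometry $\mathfrak{C}^{\epsilon,\cc}_\B$ to the matrix space $\mathcal K_\B=\ell^2(\Gamma)\otimes\Co^{n_\B}$, obtaining a magnetic matrix operator; third, replace the full magnetic matrix by $\widetilde{\Op}^{\epsilon,\cc}(\mathfrak{m}^\epsilon_\B)$, which only keeps the constant part $\epsilon B^\bullet$ of the perturbing field in the twisting phase, paying an extra $\cc\epsilon$ for the fluctuation $B^\epsilon$ by Proposition \ref{P-4.30}. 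I would begin by fixing $\delta_0$ and $\epsilon_0$ so that all the objects from Section \ref{S-proof-T-I}, Section \ref{S-Parseval} and the subsequent sections exist with the stated symbol-norm and sequence-norm bounds of order $\epsilon$ (resp.\ $\cc\epsilon$), and so that the spectral-inclusion statement of Proposition \ref{P-est-Jepsilon} holds: for $J\subset J^\delta_\B$ compact, $E_J(H^{\epsilon,\cc})=P^{\epsilon,\cc}_\B E_J(H^{\epsilon,\cc})$, i.e.\ the relevant spectral subspace sits inside the quasi-invariant subspace.

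For the spectral statement \eqref{F-I} I would argue as follows. On one hand, because $P^{\epsilon,\cc}_\B$ is a genuine orthogonal projection and $[H^{\epsilon,\cc},P^{\epsilon,\cc}_\B]=\mathscr O(\epsilon)$ (the almost-invariance \eqref{F-PH-epsc}), standard Feshbach/Schur resolvent comparison shows that $\sigma(H^{\epsilon,\cc})\cap J$ and $\sigma(\ham^{\epsilon,\cc}_\B)\cap J$ are $\mathscr O(\epsilon^2)$-close in Hausdorff distance on the interior interval — this is exactly the content already extracted in the first step of the paper (Theorem \ref{T-I}); one uses that the $\mathscr O(\epsilon)$ off-diagonal blocks enter the Schur complement quadratically. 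On the other hand, $\mathfrak{C}^{\epsilon,\cc}_\B$ is an isometry from $P^{\epsilon,\cc}_\B L^2(\X)$ onto $\mathcal K_\B$, so $\mathfrak{C}^{\epsilon,\cc}_\B\,\ham^{\epsilon,\cc}_\B\,[\mathfrak{C}^{\epsilon,\cc}_\B]^*$ is unitarily equivalent to $\ham^{\epsilon,\cc}_\B$ and hence has the same spectrum inside $J$; then one compares this conjugated operator with $\widetilde{\Op}^{\epsilon,\cc}(\mathfrak{m}^\epsilon_\B)$. The difference of the two matrix operators is, by construction of $\mathfrak m^\epsilon_\B$ together with \eqref{DF-Op-m-eps}–\eqref{F-PO} and the estimate of Proposition \ref{P-4.30}, a bounded operator of norm $\mathscr O(\cc\epsilon)$ — the full magnetic matrix carries the two-variable phase $\Lambda^{\epsilon,\cc}=\widetilde\Lambda^{\epsilon,\cc}\Lambda^\epsilon$ while $\widetilde{\Op}^{\epsilon,\cc}$ keeps only $\Lambda^{\epsilon,\cc}$ against the already-truncated sequence, and the remaining discrepancy is controlled by the rapid decay of $[\mathfrak m^\epsilon_\B]_\gamma-[\mathfrak m^\circ_\B]_\gamma$ in $\cc\epsilon$ (a Schur-test/Young-inequality bound on $\ell^2(\Gamma)$). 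Adding the two comparisons and using that Hausdorff distance of spectra is subadditive under bounded perturbations gives the $C\epsilon(\cc+\epsilon)$ bound in \eqref{F-I}.

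For the dynamical statement \eqref{F-II}, fix $v\in E_J(H^{\epsilon,\cc})L^2(\X)$, so $P^{\epsilon,\cc}_\B v=v$ by the inclusion above (this already explains the bare $\epsilon$ term: replacing $v$ by $P^{\epsilon,\cc}_\B v$ costs nothing here, but in the estimate one keeps an $\|v\|$-term of order $\epsilon$ absorbing the non-sharpness of the projection near the endpoints — it comes from $\|(\bb1-E_{J^\delta_\B})P^{\epsilon,\cc}_\B\|$ being $\mathscr O(\epsilon)$, or more precisely from the mismatch between $E_J(H^{\epsilon,\cc})$ and $E_J(\ham^{\epsilon,\cc}_\B)$). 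Write the error as a telescoping sum through three intermediate propagators: $e^{-itH^{\epsilon,\cc}}v$; $e^{-it\ham^{\epsilon,\cc}_\B}v$ (propagator on the invariant subspace); $[\mathfrak{C}^{\epsilon,\cc}_\B]^*e^{-it(\mathfrak{C}^{\epsilon,\cc}_\B\ham^{\epsilon,\cc}_\B[\mathfrak{C}^{\epsilon,\cc}_\B]^*)}\mathfrak{C}^{\epsilon,\cc}_\B v$ (exactly equal to the previous one by the isometry — a free step); and finally $[\mathfrak{C}^{\epsilon,\cc}_\B]^*e^{-it\widetilde{\Op}^{\epsilon,\cc}(\mathfrak m^\epsilon_\B)}\mathfrak{C}^{\epsilon,\cc}_\B v$. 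The first difference is handled by the Duhamel identity
\[
e^{-itH^{\epsilon,\cc}}v-e^{-it\ham^{\epsilon,\cc}_\B}v=-i\int_0^t e^{-i(t-\rms)H^{\epsilon,\cc}}\big(H^{\epsilon,\cc}-\ham^{\epsilon,\cc}_\B\big)e^{-i\rms\ham^{\epsilon,\cc}_\B}v\,d\rms,
\]
where one inserts $P^{\epsilon,\cc}_\B$ on the right (the inner propagator preserves the subspace) and uses that $(H^{\epsilon,\cc}-\ham^{\epsilon,\cc}_\B)P^{\epsilon,\cc}_\B=(\bb1-P^{\epsilon,\cc}_\B)H^{\epsilon,\cc}P^{\epsilon,\cc}_\B$ has norm $\mathscr O(\epsilon)$; however this naive bound only gives $|t|\epsilon$, so to reach the stated order one must improve it: $\big(\bb1-P^{\epsilon,\cc}_\B\big)e^{-i\rms\ham^{\epsilon,\cc}_\B}v$ is itself $\mathscr O(\epsilon(\cc+\epsilon))$ when $v$ is a spectral vector for $J$, because $\ham^{\epsilon,\cc}_\B$ is (up to $\mathscr O(\epsilon^2)$) unitarily block-diagonal and the Duhamel term acquires a second factor of the commutator — this is the standard super-adiabatic iteration and is the source of the two Duhamel integrations that produce the $(1+|t|)^2\epsilon(\cc+\epsilon)$ and, after a second application comparing with $\widetilde{\Op}^{\epsilon,\cc}$, the $(1+|t|)^4$ power. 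The last difference is a single Duhamel step against the $\mathscr O(\cc\epsilon)$ bounded difference of matrix operators from the previous paragraph, giving $|t|\cc\epsilon$; combining, collecting powers of $(1+|t|)$, and majorizing crudely yields the bound $C[\epsilon+(1+|t|)^4\epsilon(\cc+\epsilon)]\|v\|$.

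The main obstacle, as in all rigorous Peierls–Onsager results, is the middle step: showing that the effective Hamiltonian $\ham^{\epsilon,\cc}_\B$ on the quasi-invariant subspace is, after conjugation by the Parseval-frame isometry $\mathfrak C^{\epsilon,\cc}_\B$, equal to the magnetic matrix $\widetilde{\Op}^{\epsilon,\cc}(\mathfrak m^\epsilon_\B)$ up to $\mathscr O(\cc\epsilon)$ in operator norm — and not merely up to some weaker symbol norm. This is where the absence of a spectral gap and of exponentially localized Wannier functions bites: one cannot use a Riesz projection and a Wannier basis, and must instead exploit that the Parseval frame $\{\psi_p(\bz^*)\}$ consists of Schwartz functions with the covariance property from Proposition \ref{P-triv-FB}, so that the magnetic Weyl symbol of $\ham^{\epsilon,\cc}_\B$, transported to $\Gamma\times\Gamma$-indexed matrices, is a magnetic matrix whose entries decay rapidly and whose $\epsilon$-dependence through $\widetilde\Lambda^{\epsilon,\cc}$ can be isolated by a gauge-type manipulation of the integral kernel (the two-point phase $\Lambda^A$ factorizing as $\Lambda^{A^\circ}\widetilde\Lambda^{\epsilon,\cc}\Lambda^\epsilon$, with $\Lambda^\epsilon$ the Bloch–Landau cocycle carried by $\widetilde{\Op}^{\epsilon,\cc}$). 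Controlling the error uniformly in $(\epsilon,\cc)$ requires the symbol-class estimates of Corollary \ref{C-est-p-symb-pert} and the composition calculus for $\sharp^B$ on $S^{-\infty}$; this is the technical heart, and I would organize it as a separate lemma comparing $\mathfrak C^{\epsilon,\cc}_\B\ham^{\epsilon,\cc}_\B[\mathfrak C^{\epsilon,\cc}_\B]^*$ with $\widetilde{\Op}^{\epsilon,\cc}(\mathfrak m^\epsilon_\B)$ directly, invoked as a black box in the telescoping argument above.
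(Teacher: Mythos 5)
For the spectral estimate \eqref{F-I} your route coincides with the paper's: Theorem \ref{T-I} together with Corollary \ref{C-SchurC} gives the $\mathscr{O}(\epsilon^2)$ comparison of $\sigma(H^{\epsilon,\cc})\cap J$ with $\sigma(\ham^{\epsilon,\cc}_\B)$, the coordinate map of the magnetic Parseval frame transports $\ham^{\epsilon,\cc}_\B$ to a matrix on $\mathcal{K}_\B$, and the matrix is compared with $\widetilde{\Op}^{\epsilon,\cc}(\mathfrak{m}^\epsilon_\B)$ via a Schur-test bound on rapidly decaying entries. Two small corrections: the estimate that actually gives the $\mathscr{O}(\cc\epsilon)$ discrepancy is \eqref{F-m-matrix} in Proposition \ref{Prop-HC}, not Proposition \ref{P-4.30} (the latter compares $\mathfrak{m}^\epsilon_\B$ with $\mathfrak{m}^\circ_\B$ and only at order $\epsilon$); and $\mathfrak{C}^{\epsilon,\cc}_\B$ is an isometry \emph{into} $\mathcal{K}_\B$, not onto, so the conjugated matrix may acquire the extra eigenvalue $0$ — harmless here since $0\notin J^\delta_\B$, but it should be said.

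For the dynamical estimate \eqref{F-II} you take a genuinely different route (Duhamel plus a ``super-adiabatic'' improvement) from the paper, which instead writes $e^{-itH^{\epsilon,\cc}}v=\widetilde{\varphi}(H^{\epsilon,\cc})\varphi_t(H^{\epsilon,\cc})\widetilde{\varphi}(H^{\epsilon,\cc})v$, inserts $P^{\epsilon,\cc}_\B$ at cost $\mathscr{O}(\epsilon)$ via Proposition \ref{P-est-Jepsilon}, and compares $\varphi_t(H^{\epsilon,\cc})$ with $\varphi_t(\ham^{\epsilon,\cc}_\B)$ through the Helffer--Sj\"ostrand formula, feeding in the Feshbach identity $P^{\epsilon,\cc}_\B R^{\epsilon,\cc}(\zz)P^{\epsilon,\cc}_\B=P^{\epsilon,\cc}_\B R^\sim_{\epsilon,\cc}(\zz)P^{\epsilon,\cc}_\B$ and the $\epsilon^2(\Im\hspace*{-1pt}\mathcal{m}\,\zz)^{-2}$ bound of Theorem \ref{T-I}/Remark \ref{R-ext-T-I}; the $(1+|t|)^3$ comes from the derivatives of $\varphi_t$ in the almost-analytic extension (Theorem \ref{C-T-I}), and the last matrix step costs $\mathscr{O}(\langle t\rangle\cc\epsilon)$. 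Your version has a genuine gap precisely where you claim to beat the naive Duhamel bound $|t|\epsilon$ (which is indeed insufficient: for $\cc=0$ and $1\ll|t|\ll\epsilon^{-1/3}$ one has $|t|\epsilon\gg(1+|t|)^4\epsilon^2$). You attribute the gain to ``$(\bb1-P^{\epsilon,\cc}_\B)e^{-i\rms\ham^{\epsilon,\cc}_\B}v=\mathscr{O}(\epsilon(\cc+\epsilon))$'', but this quantity is identically zero once $v$ is replaced by $P^{\epsilon,\cc}_\B v$, because $\ham^{\epsilon,\cc}_\B$ commutes with $P^{\epsilon,\cc}_\B$; it therefore cannot supply the missing factor. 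The Duhamel integrand is $(\bb1-P^{\epsilon,\cc}_\B)H^{\epsilon,\cc}P^{\epsilon,\cc}_\B e^{-i\rms\ham^{\epsilon,\cc}_\B}P^{\epsilon,\cc}_\B v$, which by \eqref{F-PH-epsc} is only $\mathscr{O}(\epsilon)$, and the second factor of $\epsilon$ must come from the energy separation on $(\bb1-P^{\epsilon,\cc}_\B)L^2(\X)$ over the window $J^\delta_\B$ — i.e.\ from the invertibility of $(\bb1-P^{\epsilon,\cc}_\B)(H^{\epsilon,\cc}-\lambda)(\bb1-P^{\epsilon,\cc}_\B)$ for $\lambda$ near $J$ — exploited either by an integration by parts in $\rms$ (non-stationary phase in time, which also requires controlling the approximate spectral localization of $e^{-i\rms\ham^{\epsilon,\cc}_\B}P^{\epsilon,\cc}_\B v$ near $J$, since $v$ is a spectral vector of $H^{\epsilon,\cc}$, not of $\ham^{\epsilon,\cc}_\B$), or by the resolvent-comparison/functional-calculus argument the paper uses. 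As written, this step of your argument would not go through; the rest of the telescoping (isometric transport, and the final $\mathscr{O}(\langle t\rangle\cc\epsilon)$ comparison of the two matrix propagators) is sound and matches the paper's closing computation.
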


The first point of the theorem follows from the results in Section \ref{S-proof-T-I} (see Theorem 3.1) and Section \ref{S-Parseval}, while the second point of the theorem is proved in Section \ref{S-ev}.

\begin{remark}\label{R-main}
Let us comment on the spectral result contained in the first conclusion of our main Theorem \ref{T-III}. We emphasize that one of the main features of the magnetic pseudo-differential calculus is that the "diverging" part of the perturbation induced by a long-range magnetic field is contained in the complex phase function defining the magnetic quantization (see \eqref{DF-OpA-Phi}, \eqref{DF-Op-m-eps} and \eqref{F-PO}). After factorizing this phase away, the problem becomes more regular and we have a stable and systematic perturbative procedure for the symbols. Thus the variation of the spectrum as a set is mainly induced by the replacement of the quantization $\widetilde{\Op}^{A^\circ}$ by the quantization $\widetilde{\Op}^{\epsilon,\cc}$, which in many cases it may open spectral gaps in regions of absolutely continuous spectrum. The abstract results obtained in \cite{CP-1} imply that the order of magnitude of these spectral gaps can be at most $\sqrt{\epsilon}$, an order which can actually be attained (see \cite{CHP-4}). In some other particular situations, these gaps are much smaller and only of order $\epsilon$ (see \cite{CHP-1, CHP-2}). In this later case, it is essential to construct the "dressed" matrix valued symbol $\mathfrak{m}^\epsilon_\B$ in \eqref{F-I} instead of $\mathfrak{m}^\circ_\B$, by incorporating the constant part of the magnetic field up to an error of order $\epsilon^2$. This argument and the techniques developed in our papers \cite{CHP-1, CHP-2, CHP-4}) provide a procedure for analyzing interesting spectral modifications in narrow energy windows located in the spectrum of the isolated Bloch families as in the situations studied in the previously mentioned papers. 
\end{remark}

\subsection{Some important particular cases}

Let us comment on the form that  Theorem \ref{T-III} takes if we assume the gap condition \eqref{H-SpGap}, i.e. that the  following inequalities  hold (see Figure \ref{picture2}): 
 $$0<E_0<E_-<E'_-<E'_+<E_+<\infty .$$ 
\begin{center}
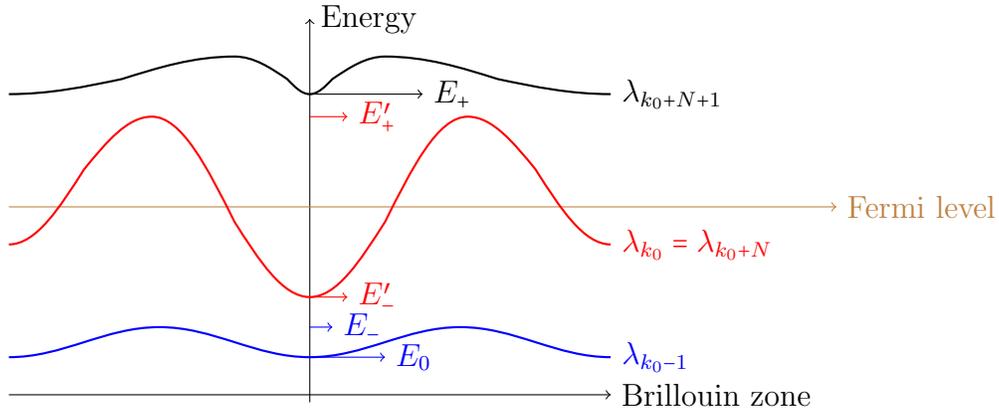

\begin{tikzpicture}
   \draw [->] (-4, 0) -- (4, 0) node[right] {Brillouin zone};
   \draw [->] (0, -0.1) -- (0, 5) node[right] {Energy};
   \draw [->, brown] (-4, 2.5) -- (7, 2.5) node[right] {Fermi level};
   \draw[->, domain= 0: 1.5, black] plot(\x, 4) node[right]{$E_+$};
   \draw[->, domain= 0: 0.5, red] plot(\x, 3.7) node[right]{$E_+'$};
   \draw[->, domain= 0: 0.5, red] plot(\x, 1.3) node[right]{$E_-'$};
   \draw[->, domain= 0: 0.3, blue] plot(\x, 0.9) node[right]{$E_-$};
      \draw [->, domain= 0: 1, blue] plot(\x, 0.5) node[right]{$E_0$};
   \draw[red, thick] (-4,2) cos (-3, 3) sin (-2.1,3.7) cos (-1,2.3) sin (0, 1.3)  cos (1,2.3) sin (2.1,3.7) cos (3, 3) sin (4, 2) node[right] {$\lambda_{k_0}=\lambda_{k_0+N}$};
   \draw[blue, thick] (-4,0.5) cos (-3, 0.7) sin (-2,0.9) cos (-1,0.7) sin (0, 0.5) cos (1, 0.7) sin (2,0.9) cos (3,0.7) sin (4, 0.5) node[right] {$\lambda_{k_0-1}$};
   \draw[black, thick] (-4,4) cos (-2.5, 4.2) sin (-1,4.5) cos (-0.3,4.2)  sin (0, 4) cos (0.3, 4.2) sin (1,4.5) cos (2.5,4.2) sin (4,4) node[right] {$\lambda_{k_0+N+1}$};
\end{tikzpicture}
\captionof{figure}{\label{picture2} Here $k_0=2$ and $N=0$. The isolated band consists of just one eigenvalue and it forms a spectral island for $H^\circ$, separated from the rest of the spectrum by two gaps.}
\end{center}
 
{Then we} can find a circle $\mathring{\mathscr{C}}_\B\subset\mathbb{C}$ containing  $[E'_-,E'_+]$ in its interior domain, leaving $(-\infty,E_0]$ and $[E_+,+\infty)$ in its exterior domain and remaining at a distance $d>0$ from the spectrum $\sigma(H^\circ)$. Then we have the equality:
\beq \nonumber 
H_\B=-(2\pi i)^{-1}\int_{\mathring{\mathscr{C}}_\B}d\zz\,\big(H^\circ-\zz\bb1\big)^{-1}=\Op^{A^\circ}\Big(-(2\pi i)^{-1}\int_{\mathring{\mathscr{C}}_\B}d\zz\,\mathfrak{r}^\circ_\zz\Big)\,.
\eeq
Using the results on spectral continuity obtained  in \cite{AMP,CP-1} as in the arguments leading to \eqref{F-m-sp-est}, we conclude that for any $\delta\in[0,d]$ there exists some $\epsilon(\delta)>0$ such that the conditions verified by $\mathring{\mathscr{C}}_\B$ with respect to $\sigma(H^\circ)$ remain valid for $\sigma(H^{\epsilon,\cc})$ with $(\epsilon,\cc)\in[0,\epsilon(\delta)]\times[0,1]$ and we have the equalities:
\beq   \nonumber \begin{split}
	P^{\epsilon,\cc}_\B&=-(2\pi i)^{-1}\int_{\mathscr{C}_\B}d\zz\big(H_\bot^{\epsilon,\cc}-\zz\bb1\big)^{-1}=-(2\pi i)^{-1}\int_{\mathring{\mathscr{C}}_\B}d\zz\big(H^{\epsilon,\cc}-\zz\bb1\big)^{-1}.
\end{split}\eeq
Moreover we obtain that:
\beq\nonumber 
\ham^{\epsilon,\cc}_\B=H^{\epsilon,\cc}_\B=P^{\epsilon,\cc}_\B\,H^{\epsilon,\cc}\,P^{\epsilon,\cc}_\B.
\eeq

In this case, identifying our perturbing magnetic field $B^{\epsilon,\cc}=\epsilon(B^\bullet+\cc\,B^\epsilon)$ with the field $\epsilon\,B^\epsilon$ used in \cite{CIP}, we can identify our isolated Bloch family operators in magnetic field: $P^{\epsilon,\cc}_\B$ and $\ham^{\epsilon,\cc}_\B$  with the operators $P^{\epsilon}_{I,n}$ (for any $n\in\mathbb{N})$ and $E_I(H^\epsilon)H^\epsilon$ in \cite{CIP} and use Theorem~1.10 in \cite{CIP} in order to obtain an asymptotic expansion  with respect to $\epsilon\in[0,\epsilon_0]$.

{Still working under the global gap condition, if} the isolated Bloch family $\B$ of the  unperturbed Hamiltonian $H^\circ$ admits a system of composite Wannier functions (i.e. if the associated bundle may be trivialized) then Theorem 1.12 {in \cite{CIP}} can also be applied. If the associated bundle of the Bloch isolated family $\B$ is non-trivial, then we may replace the composite Wannier orthonormal basis with the {Parseval} frame  defined in this paper and use our current Theorem \ref{T-III}. 

 Let us now go back to the situation illustrated in Figure \ref{picture1}. We assume that $N=0$ in Hypothesis \ref{H-I} and also assume the triviality of the associated Bloch bundle, i.e. $n_\B=1$. Then we may conclude that there exist $\epsilon_0>0$, {some} $C_n>0$ for all $n\in\mathbb{N}$, {and some} $ C_a>0$ for all $a\in\mathbb{N}^d$, such that   for any pair $(\epsilon,\cc)\in[0,\epsilon_0]\times[0,1]$  there exists a unitary operator $\mathfrak{C}^{\epsilon,\cc}_{k_0}:P^{\epsilon,\cc}_{k_0}L^2(\X)\rightarrow\ell^2(\Gamma)$ such that:
\beq\nonumber 
\underset{(\alpha,\gamma)\in\Gamma\times\Gamma}{\sup}<\gamma>^n \Big |\big[\mathfrak{C}^{\epsilon,\cc}_{k_0}[\ham^{\epsilon,\cc}_{k_0}][\mathfrak{C}^{\epsilon,\cc}_{k_0}]^{-1}\big]_{\alpha,\alpha-\gamma}\hspace*{-8pt}-\Lambda^{A^{\epsilon,\cc}}(\alpha,\alpha-\gamma)\,[\hat{\lambda}^{\epsilon}_{k_0}]_\gamma \Big |\, \leq C_n\cc\epsilon\,,\,\forall n\in \mathbb N
\eeq 
with:
\beq\nonumber 
[\hat{\lambda}^{\epsilon}_{k_0}]_\gamma=(2\pi)^{-d}\int_{\T^d_*}d\bz^*\,e^{i<\s_*(\bz^*)\,,\,\gamma>}\,\lambda^\epsilon_{k_0}(\bz^*)\,,
\eeq
and 
\beq\nonumber 
  \underset{\bz^*\in\T^d_*}{\sup}\big|\big(\partial^a_{\bz*}(\lambda^\epsilon_{k_0}-\lambda_{k_0})(\bz^*)\big)\big|\leq C_a\,\epsilon \,,\, \forall a \in \mathbb N^d\,.
\eeq
Taking also $\cc=0$, i.e. {the} magnetic perturbation to be  constant, we {recover} the results obtained in \cite{Ne-RMP,HS}.

\section{The {effective} magnetic Hamiltonian for the isolated Bloch family}\label{S-proof-T-I}

Our first {step} is to define a subspace of states which {is} quasi-invariant for the perturbed dynamics $H^{\epsilon,\cc}$ (i.e. invariant modulo terms of order $\epsilon$)
and an approximate  dynamics that {lives in} this subspace and approximates the real perturbed dynamics up to errors of order $\epsilon^2$. This provides us with the magnetic version of the isolated Bloch family and allows us to put into evidence the effect of the perturbing magnetic field in the first order in $\epsilon$.
This section is devoted to the proof of the following {result}:
\begin{theorem}\label{T-I} Given a Hamiltonian $H^\circ$ as in \eqref{DF-h-circ} satisfying Hypotheses \ref{H-I} and \ref{H-0}, a magnetic field as in \eqref{Hyp-magnField} and the perturbed Hamiltonian $H^{\epsilon,\cc}$ as in \eqref{DF-Heps-c}, there exists $\delta_0>0$ such that for any $\delta\in(0,\delta_0]$, if we denote by $J^\delta_\B:=\big(E_-+{2}\delta\,,\,E_+-{2}\delta\big)$, there exist constants $\epsilon_0 >0$ and $C>0$ such that, for any pair $(\epsilon,\cc)\in[0,\epsilon_0]\times[0,1]$,
there exists an orthogonal projection $P^{\epsilon,\cc}_\B$ and an {effective} magnetic Hamiltonian $\ham^{\epsilon,\cc}_\B:=P^{\epsilon,\cc}_\B\,H^{\epsilon,\cc}\,P^{\epsilon,\cc}_\B$ commuting with $P^{\epsilon,\cc}_\B$, satisfying:
	\begin{enumerate}
		\item For any $\lambda\in J^\delta_\B$ the operator $(\bb1-P^{\epsilon,\cc}_\B)( H^{\epsilon,\cc}-\lambda\bb1)(\bb1-P^{\epsilon,\cc}_\B)$ is invertible in the subspace $(\bb1-P^{\epsilon,\cc}_\B)L^2(\X)$ and its inverse {denoted by} $[R^\bot_{\epsilon,\cc}(\lambda)]$ is uniformly bounded in $\mathbb{B}\big((\bb1-P^{\epsilon,\cc}_\B)L^2(\X)\big)$ for $\lambda$ in any compact subinterval $J\subset\,J^\delta_\B$. 
	\item For any compact subinterval $J\subset J^{\delta}_\B$  we have the {equality}:
 \[J\cap\sigma(H^{\epsilon,\cc}) =  J\cap\sigma\Big(\big \{\ham^{\epsilon,\cc}_\B\,-\,P^{\epsilon,\cc}_\B H^{\epsilon,\cc}[R^\bot_{\epsilon,\cc}(\lambda)]H^{\epsilon,\cc}P^{\epsilon,\cc}_\B\big \}\big|_{P^{\epsilon,\cc}_{\B}\,L^2(\X)}\Big)\,.\]
 
  \item For any $\lambda\in\,J^\delta_\B\setminus\sigma(H^{\epsilon,\cc})$, considering the orthogonal decomposition $$L^2(\X)=P^{\epsilon,\cc}_{\B}\,L^2(\X)\,\oplus\,(\bb1-P^{\epsilon,\cc}_{\B})L^2(\X)$$ and  denoting by $[R^\sim_{\epsilon,\cc}(\lambda)]$ the inverse {in $P^{\epsilon,\cc}_\B\,L^2(\X)$} of the operator $$\ham^{\epsilon,\cc}_\B\,-\,P^{\epsilon,\cc}_\B H^{\epsilon,\cc}[R^\bot_{\epsilon,\cc}(\lambda)]H^{\epsilon,\cc}P^{\epsilon,\cc}_\B\,-\,\lambda\,P^{\epsilon,\cc}_\B$$ we have the {Feshbach}-Schur block decomposition:
		\begin{align*}
			&\big(H^{\epsilon,\cc}-\lambda\bb1\big)^{-1}\\ 
   &\qquad = 
			\left(\begin{array}{cc}
				[R^\sim_{\epsilon,\cc}(\lambda)] & -[R^\sim_{\epsilon,\cc}(\lambda)]H^{\epsilon,\cc}[R^\bot_{\epsilon,\cc}(\lambda)] \\
				-[R^\bot_{\epsilon,\cc}(\lambda)] H^{\epsilon,\cc}[R^\sim_{\epsilon,\cc}(\lambda)] & [R^\bot_{\epsilon,\cc}(\lambda)]+ [R^\bot_{\epsilon,\cc}H^{\epsilon,\cc}[R^\sim_{\epsilon,\cc}(\lambda)] H^{\epsilon,\cc}[R^\bot_{\epsilon,\cc}(\lambda)]
			\end{array}\right)
		\end{align*}
		and the estimate:
		\beq\label{F-est-FS-epsilon}
		\big\|P^{\epsilon,\cc}_\B H^{\epsilon,\cc}[R^\bot_{\epsilon,\cc}( \lambda)]H^{\epsilon,\cc}P^{\epsilon,\cc}_\B\big\|_{\mathbb{B}(L^2(\X))}\,\leq \,C  \epsilon^2\,\big\|R^\bot_{\epsilon,\cc}(\lambda)\big\|_{\mathbb{B}(L^2(\X))}.
		\eeq
	\end{enumerate}
\end{theorem}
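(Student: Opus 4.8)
The plan is to carry out a Feshbach--Schur (Grushin) reduction entirely at the level of the magnetic $A$-symbols, so that the magnetic perturbation --- which is not relatively bounded with respect to $H^\circ$, and for which $J^\delta_\B$ need not sit in a spectral gap of $H^\circ$ --- is controlled only through the stability of the magnetic pseudodifferential calculus under the $\mathscr{O}(\epsilon)$ variation of the field $B^{\epsilon,\cc}$; one may take $\delta_0:=\text{\tt d}_0/4$ so that $J^\delta_\B\ne\emptyset$ for $\delta\in(0,\delta_0]$, and each constructed object will come with an $\epsilon_0$ and a $C$ depending on $\delta$. First I would fix the magnetic projection. The unperturbed projection $P_\B$ of \eqref{DF-HB} commutes with $H^\circ$; since the Bloch family is isolated at each fixed $\bz^*$, the fibre $\widehat{P}_\B(\bz^*)$ is a Riesz projection onto the cluster $\{\lambda_k(\bz^*):k_0\le k\le k_0+N\}$, of constant rank $N+1$, smooth in $\bz^*$, with range in $BC^\infty(\X)$ by Theorem \ref{T-FBdec-proj}; hence $P_\B=\Op^{A^\circ}(p_\B)$ is a $\Gamma$-periodic pseudodifferential operator with a real symbol $p_\B\in S^{-\infty}(\Xi)_\Gamma$ satisfying $p_\B\,\sharp^{B^\circ}p_\B=p_\B$ and $p_\B\,\sharp^{B^\circ}h\,\sharp^{B^\circ}(1-p_\B)=0$. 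Because both the magnetic Moyal product and the Moyal inversion depend continuously on the field and $B-B^\circ=B^{\epsilon,\cc}=\mathscr{O}(\epsilon)$ in $\Fb^2(\X)$, one has $p_\B\,\sharp^{B}p_\B-p_\B=p_\B\,\sharp^{B}p_\B-p_\B\,\sharp^{B^\circ}p_\B=\mathscr{O}(\epsilon)$ in every seminorm of $S^{-\infty}(\Xi)$, so $\Op^{A}(p_\B)$ is self-adjoint with spectrum in $[-C\epsilon,C\epsilon]\cup[1-C\epsilon,1+C\epsilon]$ for $\epsilon$ small. I would then let $P^{\epsilon,\cc}_\B$ be the Riesz projection of $\Op^{A}(p_\B)$ around $1$; the holomorphic functional calculus being closed in the magnetic classes, $P^{\epsilon,\cc}_\B=\Op^{A}(p^{\epsilon,\cc}_\B)$ with $p^{\epsilon,\cc}_\B\in S^{-\infty}(\Xi)$ a self-adjoint $\sharp^{B}$-idempotent, $p^{0,0}_\B=p_\B$, and $p^{\epsilon,\cc}_\B-p_\B=\mathscr{O}(\epsilon)$ in $S^{-\infty}(\Xi)$. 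Moreover the off-diagonal symbols, such as $p^{\epsilon,\cc}_\B\,\sharp^{B}h\,\sharp^{B}(1-p^{\epsilon,\cc}_\B)$, lie in $S^{-\infty}(\Xi)$ (the smoothing factor absorbing $h\in S^p_1(\Xi)_\Gamma$) and are $\mathscr{O}(\epsilon)$; hence $\big\|P^{\epsilon,\cc}_\B H^{\epsilon,\cc}(\bb1-P^{\epsilon,\cc}_\B)\big\|+\big\|(\bb1-P^{\epsilon,\cc}_\B)H^{\epsilon,\cc}P^{\epsilon,\cc}_\B\big\|\le C\epsilon$ (the quasi-invariance), and $\ham^{\epsilon,\cc}_\B=P^{\epsilon,\cc}_\B H^{\epsilon,\cc}P^{\epsilon,\cc}_\B$ is bounded.

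The heart of the matter is conclusion (1). Write $P:=P^{\epsilon,\cc}_\B$, $P^\bot:=\bb1-P$, $H:=H^{\epsilon,\cc}$, and introduce the auxiliary operator $T^{\epsilon,\cc}_\lambda:=P^\bot(H-\lambda\bb1)P^\bot+P$, which, with respect to $L^2(\X)=PL^2(\X)\oplus P^\bot L^2(\X)$, is block-diagonal, equal to $\bb1$ on $PL^2(\X)$ and to $P^\bot(H-\lambda\bb1)P^\bot$ on $P^\bot L^2(\X)$. For the unperturbed data $T^{0,0}_\lambda=\Op^{A^\circ}(\tau^\circ_\lambda)$ with $\tau^\circ_\lambda:=(1-p_\B)\sharp^{B^\circ}(h-\lambda)\sharp^{B^\circ}(1-p_\B)+p_\B$, which is affine in $\lambda$ (using $(1-p_\B)\sharp^{B^\circ}(1-p_\B)=1-p_\B$) and elliptic of order $p$, since for $|\xi|$ large it coincides with $h-\lambda$ modulo smoothing terms. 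By \eqref{DF-HB}, writing $H^\circ=H_0\oplus H_\B\oplus H_\infty$, Hypotheses \ref{H-I} and \ref{H-0} give $\sigma(H_0)\subset[E_0,E_-]$ and $\sigma(H_\infty)\subset[E_+,+\infty)$, whence $\sigma(T^{0,0}_\lambda|_{P_\B^\bot L^2(\X)})\subset[E_0-\lambda,E_--\lambda]\cup[E_+-\lambda,+\infty)$; so for $\lambda\in J^\delta_\B$ the distance from $0$ to this set is at least $2\delta$, $T^{0,0}_\lambda$ is invertible on $L^2(\X)$ with $\|(T^{0,0}_\lambda)^{-1}\|\le\max\{1,(2\delta)^{-1}\}$, and --- an elliptic invertible magnetic pseudodifferential operator having a pseudodifferential inverse --- its Moyal inverse $s_\lambda:=(\tau^\circ_\lambda)^-_{B^\circ}\in S^{-p}_1(\Xi)$ exists with seminorms bounded uniformly for $\lambda$ in compact subintervals of $J^\delta_\B$. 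Now $\tau^{\epsilon,\cc}_\lambda:=(1-p^{\epsilon,\cc}_\B)\sharp^{B}(h-\lambda)\sharp^{B}(1-p^{\epsilon,\cc}_\B)+p^{\epsilon,\cc}_\B$ is the $A$-symbol of $T^{\epsilon,\cc}_\lambda$, and since replacing $(p_\B,\sharp^{B^\circ})$ by $(p^{\epsilon,\cc}_\B,\sharp^{B})$ is an $\mathscr{O}(\epsilon)$ change, $\tau^{\epsilon,\cc}_\lambda-\tau^\circ_\lambda=\mathscr{O}(\epsilon)$ in $S^p_1(\Xi)$ and therefore $\tau^{\epsilon,\cc}_\lambda\,\sharp^{B}s_\lambda=1+R_\lambda$, $s_\lambda\,\sharp^{B}\tau^{\epsilon,\cc}_\lambda=1+\widetilde R_\lambda$ with $R_\lambda,\widetilde R_\lambda\in S^0(\Xi)$ whose seminorms are $\le C_J\epsilon$ (a Moyal product of an $S^{-\infty}$ factor with $h$ remains in $S^{-\infty}$, and the field change contributes a small factor). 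By the Calder\'on--Vaillancourt bound in the magnetic calculus, uniform over bounded families of fields, $\|\Op^{A}(R_\lambda)\|_{\mathbb{B}(L^2(\X))}<1$ for $\epsilon$ small enough, so $T^{\epsilon,\cc}_\lambda=\Op^{A}(\tau^{\epsilon,\cc}_\lambda)$ is boundedly invertible, uniformly for $\lambda$ in compact subintervals of $J^\delta_\B$; its inverse is again block-diagonal, and its $P^\bot L^2(\X)$-block is precisely the bounded inverse $[R^\bot_{\epsilon,\cc}(\lambda)]$ of $P^\bot(H^{\epsilon,\cc}-\lambda\bb1)P^\bot$, establishing conclusion (1).

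Conclusions (2) and (3) then follow from the classical Feshbach--Schur algebra. With $A:=\ham^{\epsilon,\cc}_\B-\lambda P$ in $PL^2(\X)$, $B:=PH^{\epsilon,\cc}P^\bot$, $C:=P^\bot H^{\epsilon,\cc}P$ (both bounded by Step 1), and $D:=P^\bot(H^{\epsilon,\cc}-\lambda\bb1)P^\bot|_{P^\bot L^2(\X)}$ invertible for $\lambda\in J^\delta_\B$, the block factorisation $H^{\epsilon,\cc}-\lambda\bb1=\left(\begin{smallmatrix}\bb1 & BD^{-1}\\ 0 & \bb1\end{smallmatrix}\right)\left(\begin{smallmatrix}A-BD^{-1}C & 0\\ 0 & D\end{smallmatrix}\right)\left(\begin{smallmatrix}\bb1 & 0\\ D^{-1}C & \bb1\end{smallmatrix}\right)$ yields at once the $2\times2$ block form of (3), with $[R^\sim_{\epsilon,\cc}(\lambda)]$ the inverse in $PL^2(\X)$ of the Schur complement $S(\lambda):=\ham^{\epsilon,\cc}_\B-\lambda P-PH^{\epsilon,\cc}[R^\bot_{\epsilon,\cc}(\lambda)]H^{\epsilon,\cc}P$ (using $[R^\bot_{\epsilon,\cc}(\lambda)]=P^\bot[R^\bot_{\epsilon,\cc}(\lambda)]P^\bot$, so that $BD^{-1}C=PH^{\epsilon,\cc}[R^\bot_{\epsilon,\cc}(\lambda)]H^{\epsilon,\cc}P$). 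Since $H^{\epsilon,\cc}-\lambda\bb1$ is invertible if and only if $S(\lambda)$ is, and for real $\lambda$ the operator $S(\lambda)$ is self-adjoint on $PL^2(\X)$, this gives the spectral identity of (2) on any compact $J\subset J^\delta_\B$; and \eqref{F-est-FS-epsilon} is immediate from $\|PH^{\epsilon,\cc}[R^\bot_{\epsilon,\cc}(\lambda)]H^{\epsilon,\cc}P\|\le\|PH^{\epsilon,\cc}P^\bot\|\,\|[R^\bot_{\epsilon,\cc}(\lambda)]\|\,\|P^\bot H^{\epsilon,\cc}P\|\le C\epsilon^2\,\|R^\bot_{\epsilon,\cc}(\lambda)\|_{\mathbb{B}(L^2(\X))}$ together with the $\mathscr{O}(\epsilon)$ bounds on the off-diagonal blocks from Step 1.

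I expect the genuine difficulty to be concentrated in the second step. One cannot compare $H^{\epsilon,\cc}$ with $H^\circ$ directly --- they are unbounded, the perturbation is not relatively bounded, and $\Op^{A}(p_\B)$ is \emph{not} norm-close to $P_\B$, since the complex phase attached to the constant part $\epsilon A^\bullet$ is not uniformly small. Everything must instead be phrased through the $A$-symbols and the Moyal parametrix identities, where the field change $B^{\epsilon,\cc}=\mathscr{O}(\epsilon)$ does produce a uniformly controlled $\mathscr{O}(\epsilon)$ perturbation; the factor $2\delta$ in $J^\delta_\B$ is exactly the room needed to absorb the $\mathscr{O}(\epsilon)$ deformation of $p_\B$ into $p^{\epsilon,\cc}_\B$ when one passes to a compact subinterval.
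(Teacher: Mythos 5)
Your argument is sound and reaches all three conclusions, but it takes a genuinely different route from the paper in two places. First, the projection: you define $P^{\epsilon,\cc}_\B$ as the Riesz projection near $1$ of the almost-idempotent $\Op^{A}(p_\B)$, whereas the paper (Definition \ref{D-m-vers-B}) defines it as the spectral projection of $H^{\epsilon,\cc}_\bot=\overline{\Op^{A}(h_\bot)}$ associated with its spectrum near zero, exploiting Hypothesis \ref{H-0} (which makes $P_\B$ the kernel projection of $H_\bot$ with the gap $[0,E_0]$) together with the spectral-stability results \eqref{F-m-sp-est} of \cite{AMP,CP-1} to keep the contour $\mathscr{C}_\B$ in the resolvent set. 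Since Theorem \ref{T-I} is an existence statement, your projection is admissible, but note that the paper's later results (Proposition \ref{P-est-Jepsilon}, Section \ref{S-Parseval}, Theorem \ref{T-III}) are tied to the specific $P^{\epsilon,\cc}_\B$ of \eqref{DF-p-epsilonc-B}, so your variant would have to be propagated there. Second, conclusion (1): the paper proves $P^{\epsilon,\cc}_\bot H^{\epsilon,\cc}P^{\epsilon,\cc}_\bot=H^{\epsilon,\cc}_\bot+\mathcal{O}(\epsilon)$ (see \eqref{F-dif-h-bot}) and again invokes the Hausdorff continuity of spectra for $H^{\epsilon,\cc}_\bot$, while you re-prove the needed invertibility by hand via a parametrix: the Moyal inverse $s_\lambda=(\tau^\circ_\lambda)^-_{B^\circ}\in S^{-p}_1(\Xi)$ of the unperturbed block operator plus a Neumann series using the $\mathcal{O}(\epsilon)$ field-dependence of $\sharp^B$ (the analogue of Proposition \ref{P-replP3_5}); this is self-contained and avoids citing \cite{AMP,CP-1}, at the price of needing spectral invariance of the magnetic calculus for a non-positive elliptic symbol (justifiable, e.g., by writing the inverse as $\bar\tau_\lambda$ composed with the Moyal resolvent at $0$ of the positive elliptic symbol $\tau_\lambda\sharp^{B^\circ}\bar\tau_\lambda$, or via the Beals criterion of \cite{IMP-2,CHP-3}), with uniformity in $\lambda$ on compacts. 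Conclusions (2) and (3) are then the same abstract Feshbach--Schur algebra as the paper's Proposition \ref{P-SchurC} and Corollary \ref{C-SchurC}, and the $\epsilon^2$ bound \eqref{F-est-FS-epsilon} follows identically from the $\mathcal{O}(\epsilon)$ off-diagonal estimates. Two steps you state rather than prove deserve expansion if this were written out: that $p_\B\in S^{-\infty}(\Xi)_\Gamma$ (the paper's Proposition \ref{R-p-symb}, proved via the Beals criterion plus a bootstrap, not merely fibre-wise smoothness), and the $\mathcal{O}(\epsilon)$ seminorm estimates for $p^{\epsilon,\cc}_\B-p_\B$ and for the off-diagonal symbols, which in the paper are Corollary \ref{C-est-p-symb-pert} and Proposition \ref{P-est-epsilonc}; both are available with the paper's toolkit, so these are compressions rather than gaps.
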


	\begin{remark}\label{R-ext-T-I}
		We can extend the estimate \eqref{F-est-FS-epsilon} and point (3) of the above theorem {to}  any complex number $\lambda$ in $\{\zz\in \mathbb C\,,\, \Re\hspace*{-1pt}{\cal{e}}\zz\in\mathring{J}_\B\}$, with uniform {estimates} with respect to $\Im\hspace*{-1pt}\mathcal{m}\,\lambda$.
\end{remark}

The main {ingredient} in the proof of Theorem \ref{T-I} is the abstract spectral  result presented in Paragraph~\ref{A-SchurC} (that we also {used} in \cite{CHP-2, CHP-3}), together with the magnetic pseudodifferential calculus. We begin by studying closer the decomposition introduced before Equation~\eqref{DF-HB} in order to define (in Definition \ref{D-m-vers-B}) the orthogonal projection $P^{\epsilon,\cc}_\B$ and the {effective} magnetic Hamiltonian $\ham^{\epsilon,\cc}_\B$ appearing in Theorem \ref{T-I} and in the list of results presented in Subsection \ref{SS-MRes}.

\subsection{The decomposition of the unperturbed Hamiltonian}\label{SS-free-dyn}

As explained before {Equation} \eqref{DF-HB},  {from} Theorem \ref{T-FBdec} and  Hypothesis \ref{H-I} we infer that we can write $H^\circ$ as an orthogonal sum of three commuting self-adjoint operators 
$$ H^\circ\,=\,H_0\,\oplus\,H_\B\,\oplus\,H_\infty\,.$$

Starting from the decomposition \eqref{F-Nb-part} and proceeding as in \eqref{DF-hat-H-B}, we can define for any $\bz^*
\in\T^d_*$:
\begin{align}\label{eq:3.1}
&\widehat{H_0}(\bz^*):=\underset{1\leq k\leq k_0-1}{\sum}\lambda_k(\bz^*)\,\hat{\pi}_k( \bz^*)\,\in\mathbb{B}\big(L^2(\mathcal{E})\big),\\
\label{eq:3.2}&\widehat{H_\infty}(\bz^*):=\overline{\underset{k_0+N+1\leq k}{\sum}\lambda_k(\bz^*)\,\hat{\pi}_k( \bz^*)}\,\in  \mathcal L (\mathscr{F}^p_{\bz^*};\mathscr{F}_{\bz^*})\,,
\end{align}
and by an inverse Bloch-Floquet transform we obtain:
\begin{align*}
&H_0:=\mathfrak{U}_{F\Gamma}^{-1}\Big[\int_{\T^d_*}^{\oplus}d \bz^*\,\widehat{H_0}(\bz^*)\Big] \mathfrak{U}_{F\Gamma}\,\in\,\mathbb{B}\big(L^2(\X)\big),\\
&H_\infty:=\mathfrak{U}_{F\Gamma}^{-1}\Big[\int_{\T^d_*}^{\oplus}d \bz^*\,\widehat{H_\infty}(\bz^*)\Big]\mathfrak{U}_{F\Gamma}\in \mathcal L (\mathscr{H}^p(\X); L^2(\X))\,. 
\end{align*}
These formulas together with \eqref{DF-hat-H-B}  provide the stated decomposition of $H^\circ$ in \eqref{DF-HB}. Each of these three commuting self-adjoint operators acts in a closed subspace of $L^2(\X)$ and we have a corresponding decomposition of the identity in $L^2(\X)$ as:
\beq\nonumber 
\bb1_{\mathcal{H}}=P_0\,\oplus\,P_\B\,\oplus\,P_\infty.
\eeq

Using Hypotheses \ref{H-I} and \ref{H-0} we notice that (for $E_0\leq E_-<E_+$ as defined in these Hypotheses):
\[
\sigma(H_0)\subset\{0\}\,\bigcup\, [E_0,E_-],\quad\sigma(H_\B)\subset\{0\}\bigcup\, [E_-^\prime,E_+^\prime],\quad\sigma(H_\infty)\subset\{0\}\bigcup\, [E_+,\infty),
\]
for some $E'_-$ and $E'_+$ such that $0<E_0\leq E'_-<E'_+<\infty$. If we have the strict inequalities $E_-<E_-^\prime$ and $E_+^\prime<E_+$ {as in Figure \ref{picture2}}, then the three orthogonal projections in the above decomposition are in fact spectral projections of $H^\circ$ associated with disjoint components of the spectrum $\sigma(H^\circ)$ so that the "{global} gap condition" is fulfilled and we are in the situation studied in \cite{CIP}. Thus, we shall be interested in the case {illustrated in Figure \ref{picture1}}:
\beq\label{F-sp-points}
0<E_0\leq E_-^\prime\leq E_-<E_+\leq E_+^\prime<\infty,
\eeq
when the three orthogonal projections of the above decomposition of the identity are not necessarily spectral projections of $H^\circ$. {We stress though, that the non-zero spectrum of the fibre of $\H_\B$ is always well isolated from the other Bloch bands which build up $H_0$ and $H_\infty$, for all $\zz^*\in \T^d_*$.}

\subsection{The orthogonal projection associated with  the isolated Bloch family $\mathbf{\B}$.}

\begin{proposition}\label{R-p-symb} There exists $p_\B$ in $S^{-\infty}(\Xi)_\Gamma$ such that $P_\B\equiv\Op^{A^\circ}(p_{\B})$.
\end{proposition}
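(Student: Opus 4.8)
The plan is to realize $P_\B$ as a magnetic pseudodifferential operator by writing it, on each Bloch fibre, as a Riesz-type contour integral of the fibre resolvent and then recognizing that such an object is the $A^\circ$-quantization of a symbol that inherits the regularity and rapid decay in $\xi$ from the ellipticity of $h$. First I would recall from Theorem \ref{T-FBdec} that $H^\circ=\overline{\Op^{A^\circ}(h)}$ has a Bloch-Floquet decomposition with smooth, $\Gamma_*$-periodic, compact-resolvent fibres $\widehat{H^\circ}(\bz^*)$, and that by Hypothesis \ref{H-I} the portion of $\sigma(\widehat{H^\circ}(\bz^*))$ coming from the isolated family $\B$ is contained in a fixed compact set $[E_-',E_+']$ and is separated, uniformly in $\bz^*$, by a distance at least $\text{\tt d}_0/2$ from the rest of the spectrum (this is precisely the content of the last sentence of Subsection \ref{SS-free-dyn}). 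Hence I can choose a single contour $\mathscr{C}_\B\subset\mathbb{C}$ — independent of $\bz^*$ — that encircles $[E_-',E_+']$ once counterclockwise, stays at distance $\geq d>0$ from $\sigma(H^\circ)$, and leaves the rest of each fibre spectrum outside. Then $\widehat{P}_\B(\bz^*)=-(2\pi i)^{-1}\oint_{\mathscr{C}_\B}d\zz\,(\widehat{H^\circ}(\bz^*)-\zz\bb1)^{-1}$ agrees with the definition \eqref{DF-hat-H-B} (it equals the sum of the $\hat\pi_k$, $k_0\leq k\leq k_0+N$), and after inverse Bloch-Floquet transform this gives
\[
P_\B=-(2\pi i)^{-1}\int_{\mathscr{C}_\B}d\zz\,\big(H^\circ-\zz\bb1\big)^{-1}.
\]

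Next I would invoke the magnetic pseudodifferential calculus (Appendix \ref{A-m-PsiDO}, and the references \cite{MP-1,IMP-1,IMP-2,IMP-3}) to identify the $A^\circ$-symbol of each resolvent $(H^\circ-\zz\bb1)^{-1}$. Since $h$ is a lower semi-bounded elliptic symbol in $S^p_1(\Xi)_\Gamma$ with $p>0$ (Hypothesis \ref{H-h1}), the parametrix construction in the magnetic Weyl calculus shows that for $\zz$ in the resolvent set $h-\zz$ is invertible for the magnetic Moyal product $\sharp^{B^\circ}$, with inverse $(h-\zz)^{-B^\circ}=\mathfrak{r}^\circ_\zz\in S^{-p}_1(\Xi)_\Gamma$; periodicity in $x$ is preserved because $h$ and $A^\circ$ are both $\Gamma$-periodic (so $\sharp^{B^\circ}$ maps $\Gamma$-periodic symbols to $\Gamma$-periodic symbols). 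Thus $(H^\circ-\zz\bb1)^{-1}=\Op^{A^\circ}(\mathfrak{r}^\circ_\zz)$, and integrating over the fixed compact contour $\mathscr{C}_\B$,
\[
P_\B=\Op^{A^\circ}\!\Big(-(2\pi i)^{-1}\int_{\mathscr{C}_\B}d\zz\,\mathfrak{r}^\circ_\zz\Big)=:\Op^{A^\circ}(p_\B),\qquad p_\B:=-(2\pi i)^{-1}\int_{\mathscr{C}_\B}d\zz\,\mathfrak{r}^\circ_\zz .
\]
The symbol $p_\B$ is $\Gamma$-periodic because each $\mathfrak{r}^\circ_\zz$ is, and smoothness plus polynomial growth survive the (compact-contour, hence uniformly convergent) integration.

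Finally I would upgrade $p_\B\in S^{-p}_1(\Xi)_\Gamma$ to $p_\B\in S^{-\infty}(\Xi)_\Gamma$. The key point is that $P_\B$ is a \emph{finite-rank-on-each-fibre} projection whose fibres $\widehat P_\B(\bz^*)$ have range inside $\mathscr{F}_{\bz^*}\cap BC^\infty(\X)$ by Theorem \ref{T-FBdec-proj}; more efficiently, one exploits the spectral-calculus identity $\mathfrak{m}_s\,\sharp^{B^\circ}\,p_\B\,\sharp^{B^\circ}\,\mathfrak{m}_{s'}\in S^{-\infty}$ for all $s,s'$. Concretely: because on the range of $\widehat P_\B$ the operator $\widehat{H^\circ}(\bz^*)$ is uniformly bounded, for any $M\in\mathbb N$ one can write $\widehat P_\B=(\widehat{H^\circ})^{-M}\,(\widehat{H^\circ})^{M}\widehat P_\B$ and, since $(H^\circ)^{-M}=\Op^{A^\circ}(q_M)$ with $q_M\in S^{-Mp}_1$ (here Hypothesis \ref{H-0}, $E_0>0$, guarantees $0\notin\sigma(H^\circ)$ so negative powers are bona fide elements of the calculus) while $(H^\circ)^M P_\B$ is again an $\Op^{A^\circ}$ of a symbol no worse than $p_\B$ composed with $h^{\sharp M}$ localized by the contour, the Moyal composition lands in $S^{-Mp}_1\sharp^{B^\circ} S^{-p}_1\subset S^{-(M+1)p}_1$. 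Letting $M\to\infty$ and using that the $S^p_\rho$ scales are nested gives $p_\B\in\bigcap_{m}S^{-m}_1(\Xi)_\Gamma=S^{-\infty}(\Xi)_\Gamma$. The main obstacle, and the place deserving care, is this last step: one must make sure the contour integral and the algebraic manipulations with negative powers of $H^\circ$ are carried out \emph{inside} the symbol calculus with $\Gamma$-periodicity and with estimates uniform in $\zz\in\mathscr{C}_\B$ — in particular that $0\notin\sigma(H^\circ)$ (Hypothesis \ref{H-0}) is genuinely used so that $(H^\circ)^{-M}$ has a well-behaved $S^{-Mp}_1(\Xi)_\Gamma$ symbol, and that the "local gap" (not a global spectral gap) is nonetheless enough because the contour $\mathscr{C}_\B$ can be chosen fibrewise-uniform thanks to the uniform separation stated at the end of Subsection \ref{SS-free-dyn}.
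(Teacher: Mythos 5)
Your opening step is where the argument breaks, and it breaks precisely on the difficulty this proposition is meant to overcome. Hypothesis \ref{H-I} gives, at each fixed $\bz^*$, a gap between the isolated family and the rest of $\sigma(\widehat{H^\circ}(\bz^*))$ (and by continuity and compactness of $\T^d_*$ its size is bounded below), but the \emph{location} of that gap moves with $\bz^*$: by \eqref{F-sp-points} one has $E'_-\leq E_-<E_+\leq E'_+$, i.e.\ the range $[E'_-,E'_+]$ of the isolated family overlaps the ranges of $\lambda_{k_0-1}$ and $\lambda_{k_0+N+1}$ (see Figure \ref{picture1}, where $H^\circ$ has no spectral gap at all). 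Hence there is in general no single contour $\mathscr{C}_\B$, independent of $\bz^*$, that encircles the eigenvalues $\lambda_{k_0}(\bz^*),\dots,\lambda_{k_0+N}(\bz^*)$ for every $\bz^*$ while staying at positive distance from $\sigma(H^\circ)$; the sentence you cite from Subsection \ref{SS-free-dyn} asserts only fibrewise isolation, not the existence of such a common contour. In particular $P_\B$ is \emph{not} a Riesz spectral projection of $H^\circ$ (the paper stresses that the decomposition \eqref{DF-HB} is not a spectral decomposition), and the formula $P_\B=-(2\pi i)^{-1}\int_{\mathscr{C}_\B}(H^\circ-\zz\bb1)^{-1}d\zz$ you start from is valid only under the global gap condition \eqref{H-SpGap}, which is exactly the particular case treated separately after Theorem \ref{T-III}. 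Consequently the identification $p_\B=-(2\pi i)^{-1}\int_{\mathscr{C}_\B}\mathfrak{r}^\circ_\zz\,d\zz$, and everything built on it, collapses in the semimetal setting.

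The paper's proof circumvents this by working fibrewise: it writes $\widehat P_\B(\bz^*)=\chi_{\bz^*}\big(\widehat{H^\circ}(\bz^*)\big)$ with cut-offs that can be taken locally constant in $\bz^*$, uses a finite partition of unity on $\T^d_*$ together with almost-analytic extensions $\widetilde\chi_k$ and the Helffer--Sj\"ostrand formula (so only the fibre resolvents $\widehat{R^\circ_\zz}(\bz^*)$ on $\bz^*$-dependent regions enter, never a global Riesz integral for $H^\circ$), and then establishes that $P_\B$ is a magnetic pseudodifferential operator via the Beals commutator criterion, estimating iterated commutators with $Q_j$ and $\Pi^{A^\circ}_j$ through the Bloch--Floquet decomposition; finally the identity $P_\B=(H^\circ+\bb1)^{-n}(H^\circ+\bb1)^nP_\B$ bootstraps the symbol class from $S^0_0$ to $S^{-\infty}$. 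Your closing bootstrap with negative powers of $H^\circ$ is close in spirit to this last step, but it cannot repair the argument because it presupposes the contour representation of $P_\B$ and of $(H^\circ)^MP_\B$ that is unavailable here; to make your proposal work you would have to replace the fixed-contour Riesz projection by a $\bz^*$-dependent functional-calculus construction of the type the paper uses.
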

\begin{proof}
	Let $\sigma_\B(\bz^*):=\underset{k_0\leq k\leq k_0+N}{\bigcup}\lambda_k(\bz^*)$. Due to Hypothesis \ref{H-I}, for any $\bz^*\in\T_*$ we can find a cut-off function $\chi_{\bz^*}\in\,C^\infty_0(\R)$ with support in $\widetilde{I}_{\bz^*}:=(\lambda_{k_0-1}(\bz^*),\lambda_{k_0+N+1}(\bz^*)\ne\emptyset$ and equal to 1 on $J_{\bz^*}:=[\lambda_{k_0}(\bz^*),\lambda_{k_0+N}(\bz^*)]$ so that:
 \[
\widehat{P}_{\B}(\bz^*)\,=\,\chi_{\bz^*}\big(\widehat{H}^\circ(\bz^*)\big).
 \]
 Let us notice that for any $\tilde{\bz}^*\in\T^d_*$, we can find an open neighbourhood $\tilde{O}\subset\T^d_*$ such that we may take $\chi_{\bz^*}$ constant for all $\bz^*\in\tilde{O}$ and using the second statement of Theorem \ref{T-FBdec} we deduce the smoothness of the application:
 \begin{equation}
	\begin{aligned}\label{F-201}
		&\tilde{O}\ni\bz^*\mapsto\widehat{P}_{\B}(\bz^*)\in\mathbb{B}\big(L^2(\mathcal{E})\big).
	\end{aligned}
 \end{equation}
 Due to the compactness of $\T^d_*$ it follows that we can find an open cover $\big\{\mathscr{O}_k,\ k\in\underline{N}\big\}$ for some $N\in\Nb$ and an associated finite partition of unity $\big\{\theta_k\in\,C^\infty_0(\T^d_*),\ k\in\underline{N}\big\}$, as well as a family of functions $\chi_k\in\,C^\infty_0(\R),\ k\in\underline{N}\big\}$ such that $\chi_k=1$ on $J_{\bz^*}$ and $\supp\chi_k\subset\widetilde{I}_{\bz^*}$ for any $\bz^*\in\mathscr{O}_k$. Moreover, having in mind the Helffer-Sj\"{o}strand formula (see \cite{D-95}) we consider the following almost analytic extensions {$\widetilde{\chi}_{k}$ with $k\in\underline{N}$}:
\beq\label{hcd1}
{\zz=x+iy},\quad\widetilde{\chi}_{k}{(\zz,\overline{\zz})}:=\underset{{m}\in\mathbb{N}}{\sum}({\partial^m \chi_k})(x)(iy)^{ m}{({m}!)}^{-1}\rho_{k,m}(y)
\eeq
with $\rho_{k,m}\in\,C^\infty_0(\R)$ with support included in $[-2C_{k,m},2C_{k,m}]$ and $\rho_{k,m}=1$ on $[-C_{k,m},C_{k,m}]$ for $C_{k,m}={\min \Big \{1,\, \Big[\underset{x\in\R}{\sup}\big|\big(\partial^m\chi_k\big)(x)\big|\Big]^{-1}\Big \}}$.
 
{For}  $\zz\in\Co\setminus\sigma(H^\circ)$ {we denote by}:
\beq\label{DF-Rcirc}
R^\circ_{\zz}\,\equiv\,\big(H^\circ-\zz\bb1\big)^{-1}\equiv\,\mathfrak{U}_{F\Gamma}^{-1}\Big(\int^{\oplus}_{\T^d_*}d\bz^*\,\big(\widehat{H}^\circ(\bz^*)-\zz\bb1\big)^{-1}\Big)\mathfrak{U}_{F\Gamma}.
\eeq
We know from \cite{IMP-2} that {the resolvent $R^\circ_{\zz}$} has a symbol $\mathfrak{r}_{\zz}^\circ$ of class $S^{-p}_1(\X^*\times\X)$ and we introduce the notation:
\beq\label{DF-Rzstar}
\widehat{R^\circ_\zz}(\bz^*)\,:=\,\big(\widehat{H}^\circ(\bz^*)-\zz\bb1\big)^{-1}.
\eeq
From Theorem \ref{T-FBdec} we know that the map $\T^d_*\ni\mapsto\widehat{R^\circ_\zz}(\bz^*)\in\mathbb{B}\big(L^2(\mathcal{E})\big)$ is smooth. {Using \eqref{hcd1} we have}
\[
\widetilde{\chi}_{k}\big|_{\R}=\chi_{k},\quad  \underset{y\rightarrow0}{\lim}\,{\Big (|y|^{-n}}\, \partial_{\overline{\zz}}\widetilde{\chi}_{k}{(\zz,\overline{\zz})\Big )}=0,\quad  \forall\,x\in\R,\ \forall n\in\mathbb{N} \,.
\]
With this choice, for $\bz^*\in\mathscr{O}_k$ we have the equality:
\[
\widehat{P}_{\B}(\bz^*)\,=\,\pi^{-1}\iint\big(-\frac i2d\zz\,d\overline{\zz}\big)\,\big(\partial_{\overline{\zz}}\widetilde{\chi}_{k}\big)(\zz,\overline{\zz})\,\widehat{R^\circ_{\zz}}(\bz^*)
\]
 where the function $\zz\mapsto\big(\partial_{\overline{\zz}}\widetilde{\chi}_{k}\big)(\zz,\overline{\zz})\,\widehat{R^\circ_{\zz}}(\bz^*)\in\mathbb{B}(\mathscr{F}_{\bz^*})$ has a continuous extension bounded in norm on the support of $\widetilde{\chi}_{k}$.

 Let us notice that there exist some compact set $K_\B\subset\Co$ such that $\supp\widetilde{\chi}_{k}\subset\,K_\B$ for all $k\in\underline{N}$ and let us consider the inverse Bloch-Floquet transform:
 \beq\label{hdc2}\begin{split}
P_\B:&=\mathfrak{U}_{F\Gamma}^{-1}\Big(\int^{\oplus}_{\T^d_*}d\bz^*\,\widehat{P}_{\B}(\bz^*)\Big)\mathfrak{U}_{F\Gamma}\\
&=\underset{k\in\underline{N}}{\sum}\mathfrak{U}_{F\Gamma}^{-1}\left[\int^{\oplus}_{\T^d_*}d\bz^*\,\theta_k(\bz^*)\iint_{K_\B}\big(-\frac{i}{2\pi}d\zz\,d\overline{\zz}\big)\,\Big(\big(\partial_{\overline{\zz}}\widetilde{\chi}_{k}\big)(\zz,\overline{\zz})\,\widehat{R^\circ_{\zz}}(\bz^*)\Big)\right]\mathfrak{U}_{F\Gamma}
 \end{split}\eeq

In order to show that $P_\B$ is a smoothing operator, we will perform two steps. The first one is to apply  to $P_\B$ the Beals' criterion for magnetic pseudo-differential operators as given in \cite{IMP-2,CHP-3}, and show that $P_\B$ has a symbol of class at least $S_0^0$. The second step is to show by a boot-strap procedure that the symbol is $S^{-\infty}$. 

For the first step let us start by computing the Bloch-Floquet transform of the basic unbounded operators:\\ $\big(Q_j\varphi\big)(x):=x_j\varphi(x)$ and $\big(\Pi^{A^\circ}_j\varphi\big)(x):=-i\big(\partial_j\varphi\big)(x)-A^\circ(x)\varphi(x)$ for $\varphi\in\mathscr{S}(\X)$,\\
which is defined by:
 \begin{align*}
\big(\mathfrak{U}_{F\Gamma}Q_j\varphi\big)_{\bz^*}(x)&=\underset{\gamma\in\Gamma}{\sum}e^{-i<\s_*(\bz^*),\gamma>}\,(x_j+\gamma_j)\varphi(x+\gamma)\\
&=\big(\widehat{Q}_j\mathfrak{U}_{F\Gamma}\varphi\big)_{\bz^*}(x)+\big[i\partial_{\bz^*_j}\big(\mathfrak{U}_{F\Gamma}\varphi\big)\big]_{\bz^*}(x),\\
\big(\mathfrak{U}_{F\Gamma}\Pi^{A^\circ}_j\varphi\big)_{\bz^*}(x)&=\big[\big(-i\partial_{x_j}-A^\circ\big)\big(\mathfrak{U}_{F\Gamma}\varphi\big)\big]_{\bz^*}(x)=\big[\widehat{\Pi}^{A^\circ}_j\big(\mathfrak{U}_{F\Gamma}\varphi\big)\big]_{\bz^*}(x).
 \end{align*}
 where we have also denoted by $\widehat{Q}_j$ and $\widehat{\Pi}^{A^\circ}_j$ for $j\in\underline{d}$ the operator of multiplication with the variable $x_j$ and respectively the covariant derivation with respect to $x\in\X$ on $\mathscr{S}^\prime(\X)$. We notice that for $\varphi\in\mathscr{S}(\X)$:
 \[
\big(\mathfrak{U}_{F\Gamma}\varphi\big)_{\bz^*}(x)=\underset{\gamma\in\Gamma}{\sum}e^{-i<\s_*(\bz^*),\gamma>}\,\varphi(x+\gamma)
 \]
 implies that for each $x\in\X$ the sum over $\gamma\in\Gamma$ is convergent due to the rapid decay of $\varphi\in\mathscr{S}(\X)$ and the map $\T^d_*\ni\bz^*\mapsto\big(\mathfrak{U}_{F\Gamma}\varphi\big)_{\bz^*}\in\mathscr{F}_{\bz^*}$ takes values in $\mathscr{F}_{\bz^*}\bigcap\,BC^\infty(\X)$ and is of class $C^\infty\big(\T^d_*;BC^\infty(\X)\big)$ for the Fr\'{e}chet topology on $BC^\infty(\X)$.

 Now let us compute the commutators (for some pair $(N,M)\in\Nb\times\Nb$:
 \begin{align}\label{F-m-comm-PB}
&[Q_{j_1}\,,\ldots[Q_{j_N}\,,\,[\Pi^{A^\circ}_{k_1}\,,[\Pi^{A^\circ}_{k_M}\,,\ldots\,P_\B]\ldots]\varphi=\\ \nonumber
&\hspace*{12pt}=\underset{k\in\underline{N}}{\sum}\mathfrak{U}_{F\Gamma}^{-1}\int^{\oplus}_{\T^d_*}d\bz^*\,\theta_k(\bz^*)\iint_{K_\B}\big(-\frac{i}{2\pi}d\zz\,d\overline{\zz}\big)\,\big(\partial_{\overline{\zz}}\widetilde{\chi}_{k}\big)(\zz,\overline{\zz})\,\times\\ \nonumber
&\times\,\big[\big(\widehat{Q}_{j_1}+i\partial_{\bz^*_{j_1}}\big)\,,\ldots\big[\big(\widehat{Q}_{j_N}+i\partial_{\bz^*_{j_N}}\big)\,,\big[\widehat{\Pi}^{A^\circ}_{k_1}\,,\ldots\big[\widehat{\Pi}^{A^\circ}_{k_M}\,,\,\widehat{R^\circ_{\zz}}(\bz^*)\big]\ldots\big]\,\big(\mathfrak{U}_{F\Gamma}\varphi\big).
 \end{align}
 Since  our resolvent $R^o_\zz$ in \eqref{DF-Rcirc} has a symbol of class $S_1^{-p}(\X^*\times \X)$, the commutators 
 \[ [Q_{j_1}\,,\ldots[Q_{j_N}\,,\,[\Pi^{A^\circ}_{k_1}\,,[\Pi^{A^\circ}_{k_M}\,,\ldots\,R^\circ_{\zz}]\ldots]
 \]
are also pseudo-differential operators which can be extended to bounded operators on $L^2(\X)$, with norms which are not faster than polynomially increasing in $|y|^{-1}$. These commutators are also $\Gamma$-periodic and we can write:
\begin{align*}
&[Q_{j_1}\,,\ldots[Q_{j_N}\,,\,[\Pi^{A^\circ}_{k_1}\,,[\Pi^{A^\circ}_{k_M}\,,\ldots\,R^\circ_{\zz}]\ldots]=\mathfrak{U}_{F\Gamma}^{-1}\int^{\oplus}_{\T^d_*}d\bz^*\,\times\\
&\times\,\big[\big(\widehat{Q}_{j_1}+i\partial_{\bz^*_{j_1}}\big)\,,\ldots\big[\big(\widehat{Q}_{j_N}+i\partial_{\bz^*_{j_N}}\big)\,,\big[\widehat{\Pi}^{A^\circ}_{k_1}\,,\ldots\big[\widehat{\Pi}^{A^\circ}_{k_M}\,,\,\widehat{R^\circ_{\zz}}(\bz^*)\big]\ldots\big]\,\mathfrak{U}_{F\Gamma},
 \end{align*}
from which we conclude that the operator valued functions 
\[
\T^d_*\ni\bz^*\mapsto\big[\big(\widehat{Q}_{j_1}+i\partial_{\bz^*_{j_1}}\big)\,,\ldots\big[\big(\widehat{Q}_{j_N}+i\partial_{\bz^*_{j_N}}\big)\,,\big[\widehat{\Pi}^{A^\circ}_{k_1}\,,\ldots\big[\widehat{\Pi}^{A^\circ}_{k_M}\,,\,\widehat{R^\circ_{\zz}}(\bz^*)\big]\ldots\big]\in \mathbb{B}\big (L^2(\E)\big )
\]
have norms which can grow at most polynomially in $|y|^{-1}$, uniformly in $\bz^*$.
Using this information in \eqref{F-m-comm-PB} allows us to conclude that $P_\B$ is a magnetic pseudo-differential operator for the vector potential $A^\circ$, with a symbol in the class $S_{0}^0(\X^*\times \X)$. In order to show that the symbol is actually smoothing, let us write 
$$P_\B= (H^\circ+\bb1)^{-n}\, (H^\circ+\bb1)^n P_\B,\quad n\geq 1$$
where $(H^\circ+\bb1)^n P_\B$ can be expressed just like in \eqref{hdc2} but with an extra factor $(\zz+1)^n$ in the integrand. Reasoning like in the case of $P_\B$, the operator $(H^\circ+\bb1)^n P_\B$ has a symbol in $S_{0}^0(\X^*\times \X)$, which shows that the symbol of $P_\B$ belongs to the class $S_0^{-np}$ for all $n\geq 1$, and we are done.

\end{proof}

\begin{remark}\label{R-P-reg-op}
	Similar  arguments  may be applied to $P_0$.
\end{remark}

	 We then conclude that {both}  $H_0=P_0H^\circ P_0$ and $H_\B=P_\B H^\circ P_\B$ have $A^\circ$-symbols $h_0$ and $h_\B$ in $S^{-\infty}(\Xi)_\Gamma$ while $H_\infty=H^\circ-H_{\B}-H_0$ and $H_\bot=H^\circ-H_{\B}$ have $A^\circ$-symbols $h_\infty$ and $h_\bot$ in $S^p_1(\Xi)_\Gamma$. Similarly we deduce that $P_\infty$ has a $A^\circ$-symbol $p_\infty\in S^0_1(\Xi)_\Gamma$.

Due to  Hypothesis \ref{H-0} (that amounts {to} a simple shift of the energy origin) the interval $[0,E_0]\subset\R$ is always non-void and it is a spectral gap for the operator $$H_\bot:= H_0+H_{\infty}\,$$
and  $P_\B$ is simply the orthogonal projection $P_{\ker H_\bot}$ on the kernel of $H_\bot$. Thus we choose {the} circle $\mathscr{C}_\B\subset\mathbb{C}$ {centred at} $0\in\mathbb{C}$ and having a radius $r_0:=E_0/2$ and notice that we have the equalities:
\beq\label{F-R-PB}
P_\B=P_{\ker H_\bot}=- (2\pi i)^{-1}\int_{\mathscr{C}_\B}d\zz\,\big(H_\bot-\zz\bb1\big)^{-1}.
\eeq
Moreover we have the formulas:
$$
p_\B=- (2\pi i)^{-1}\int_{\mathscr{C}_\B}d\zz\,\big(h_\bot-\zz\big)^-_{B^\circ},\quad\,h_\B=p_\B\sharp^{B^\circ}h\sharp^{B^\circ}p_\B.
$$
Since the interval $(E_-,E_+)$ is a spectral gap for $H_\bot$, we consider a contour $\mathscr{C}_0$ which does not contain $0$ in its interior but contains the interval $[E_0,(E_-+E_+)/2]$, and stays at a positive distance from the spectrum of $H_\bot$. Then we have the similar formulas:
\beq\label{DF-p0}
p_0={ -}(2\pi i)^{-1}\int_{\mathscr{C}_0}d\zz\,\big(h_\bot-\zz\big)^-_{B^\circ},\quad\,h_0=p_0\sharp^{B^\circ}h\sharp^{B^\circ}p_0.
\eeq
Finally we have the equalities 
\beq\label{DF-pa} p_\infty=1-p_0-p_\B \mbox{ and  } h_\infty=p_\infty\sharp^{B^\circ}h\sharp^{B^\circ}p_\infty=1-h_0-h_\B\,.
\eeq

\subsection{The perturbed Hamiltonian decomposition induced by $\B$.}
The symbol decomposition $$h=h_0+h_{\B}+h_\infty$$ defined above allows us to consider a similar operator decomposition when the perturbing magnetic field given in \eqref{Hyp-magnField} is added:
\beq\label{F-descOepscch}
\Op^{A}(h)\,=\,\Op^{A}(h_0)+\Op^{A}(h_\B)+\Op^{A}(h_\infty).
\eeq

We denote by  $H^{\epsilon,\cc}_0$,  $H^{\epsilon,\cc}_{\B}$ and $H^{\epsilon,\cc}_\infty$ the self-adjoint operators obtained by closing $\Op^{A}(h_0)$, $\Op^{A}(h_\B)$ and $\Op^{A}(h_\infty)$ respectively.
We  also consider  the lower-semibounded self-adjoint operator $$ H^{\epsilon,\cc}_\bot:=H_0^{\epsilon,\cc}+H^{\epsilon,\cc}_\infty$$ defined as the closure of $\Op^{A}(h_\bot)$ with symbol $h_\bot\in S^p_1(\Xi)_\Gamma$.

An important aspect is that these {perturbed} Hamiltonians no longer commute among themselves, as it was the case in Subsection \ref{SS-free-dyn}. {Another important observation is that their spectra do not change much seen as sets with the Hausdorff distance:} the results in \cite{AMP,CP-1} imply that there exists $\delta_0 >0$ and for any $\delta \in (0,\delta_0]$ there exists $\epsilon_0>0$ such that for any $(\epsilon,c)\in [0,\epsilon_0]\times [0,1]$ we have the inclusions:
\beq\begin{split}\label{F-m-sp-est}
	&\sigma\big(H_0^{\epsilon,\cc}\big)\subset[-\delta\,,\,\delta]\cup[E_0-\delta,E_-+\delta],\quad 
	\sigma\big(H_\B^{\epsilon,\cc}\big)\subset[-\delta\,,\,\delta]\cup[E'_--\delta,E'_++\delta],\\
	&\sigma\big(H^{\epsilon,\cc}_\infty\big)\subset[-\delta\,,\,\delta]\cup[E_+-\delta,+\infty),\\ 
 &\sigma\big(H^{\epsilon,\cc}_\bot\big)\subset[-\delta\,,\,\delta]\cup[E_0-\delta,E_-+\delta]\cup[E_+-\delta,+\infty).
\end{split}\eeq
Actually, much more precise regularity estimates {for the spectral edges seen as functions of $\epsilon$} {were} obtained {in} \cite{CP-2}{, but they are not needed here.}

\subsubsection{The magnetic perturbation of the isolated Bloch family}

We intend to associate with the projection $P_{\B}$ of the isolated Bloch family $\B$ a magnetic version $P^{\epsilon,\cc}_{\B}$ of it when the magnetic field $B^{\epsilon,\cc}$ is added. The first idea would be to consider the operator having integral kernel $\Lambda^{\epsilon,\cc}\cdot\K[P_{\B}]$ with $\K[P_{\B}]$ the integral kernel of $P_\B$ as the general quantization procedure indicates, but we immediately notice that although being still hermitian it is no longer a projector, {only close to one. Our choice is as follows:} 
\begin{definition}\label{D-m-vers-B}~
\begin{itemize}
		\item Consider the circle $\mathscr{C}_\B$ introduced in 
  \eqref{F-R-PB} and assume that $\delta$ from \eqref{F-m-sp-est} is less than $E_0/4$. Then the {effective} magnetic projector  of the isolated Bloch family $\B$ is defined to be the spectral projection of $H^{\epsilon,\cc}_\bot$ corresponding to its spectrum located near zero and equals: 
		\beq\begin{split}\label{DF-p-epsilonc-B}
			P^{\epsilon,\cc}_\B\,:=\,- \,(2\pi i)^{-1}\int_{\mathscr{C}_{\B}}d\zz\,\big(H^{\epsilon,\cc}_\bot-\zz\bb1\big)^{-1}.
		\end{split}\eeq
  Note that $P^{\epsilon,\cc}_\B$ is not a spectral projection for $H^{\epsilon,\cc}$.
		\item The {effective} magnetic Hamiltonian of the isolated Bloch family $\B$ is:
		\beq\label{DF-h-epsilonc-B}
\mathfrak{H}^{\epsilon,\cc}_{\mathfrak{B}}\,:=\,P^{\epsilon,\cc}_\B\,H^{\epsilon,\cc}\,P^{\epsilon,\cc}_\B.
		\eeq
	\end{itemize}
\end{definition}
The properties of the magnetic pseudo-differential calculus listed in Appendix~\ref{A-m-PsiDO} imply that both the operators introduced  in Definition \ref{D-m-vers-B} are regularizing operators with $A$-symbol in $S^{-\infty}(\Xi)$.

Let us notice that in general $\mathfrak{H}^{\epsilon,\cc}_{\B}$ is different from  $H^{\epsilon,\cc}_{\B}=\Op^A(h_\B)$ and we have the relation:
\beq\label{dhc2}
	\mathfrak{H}^{\epsilon,\cc}_{\mathfrak{B}}\,=\,P^{\epsilon,\cc}_\B\,H^{\epsilon,\cc}\,P^{\epsilon,\cc}_\B\,=\,P^{\epsilon,\cc}_\B\,H^{\epsilon,\cc}_\B\,P^{\epsilon,\cc}_\B+P^{\epsilon,\cc}_\B\,\Op^A(h_\bot)\,P^{\epsilon,\cc}_\B.
\eeq


 Consider the circle $\mathscr{C}_0$ used to define $p_0$ in \eqref{DF-p0}. If $\delta$ in \eqref{F-m-sp-est} is small enough, then $\mathscr{C}_0$ encloses the interval $[E_0-\delta,E_-+\delta]$ in its interior domain and remains at a positive distance from the spectrum of $H^{\epsilon,\cc}_\bot$. 
We {define} the following {spectral} projections {of $H^{\epsilon,\cc}_\bot$}:
\beq\begin{split}\label{DF-p-epsilonc-0}
	&P_0^{\epsilon,\cc}\,:={-}\,(2\pi i)^{-1}\int_{\mathscr{C}_0}d\z\,\big(H^{\epsilon,\cc}_\bot-\z\bb1\big)^{-1},\quad P^{\epsilon,\cc}_\infty\,:=\,\bb1\ominus\big(P^{\epsilon,\cc}_0\oplus P^{\epsilon,\cc}_\B\big),\quad P_\bot^{\epsilon,\cc}:=\bb1\ominus\ P^{\epsilon,\cc}_\B.
\end{split}\eeq
From \eqref{DF-p-epsilonc-B} and \eqref{DF-p-epsilonc-0} we have: 
	\beq  \label{R-p-epsilonc} \bb1=P_0^{\epsilon,\cc}\oplus P^{\epsilon,\cc}_\B\oplus P^{\epsilon,\cc}_\infty\,,
	\eeq
	as spectral projections of $H^{\epsilon,\cc}_\bot$, although they are not spectral projections for $H^{\epsilon,\cc}$. Moreover, they do not commute with $H^{\epsilon,\cc}$ either. 

We write the $A$-symbols of the above projection operators as:
\beq\begin{split}\label{F-p-epsilon-c}
	&p_\B^{\epsilon,\cc}:=-(2\pi i)^{-1}\int_{\mathscr{C}_\B}d\zz\,\big(h_\bot-\zz\big)^-_{B},\quad p^{\epsilon,\cc}_0:=-(2\pi i)^{-1}\int_{\mathscr{C}_0}d\zz\,\big(h_\bot-\zz\big)^-_{B} \, ,\\ 
 & p^{\epsilon,\cc}_\infty:=1-p^{\epsilon,\cc}_\B-p^{\epsilon,\cc}_0.
\end{split}\eeq

\subsubsection{Estimating the magnetic perturbation on symbols.}

We recall the definition in \eqref{DF-Rcirc} for the unperturbed resolvent $R^\circ_\zz:=\big(H^\circ-\zz\bb1\big)^{-1}=:\Op^{A^\circ}(\mathfrak{r}^\circ_\zz)$ for $\zz\in\mathbb{C}\setminus\sigma(H^\circ)$ and also define the perturbed resolvent $R^{\epsilon,\cc}_\zz:=\big(H^{\epsilon,\cc}-\zz\bb1\big)^{-1}=:\Op^{A}(\mathfrak{r}^{\epsilon,\cc}_\zz)$ for $\zz\in\mathbb{C}\setminus\sigma(H^{\epsilon,\cc})$.
Due to the statement introducing \eqref{F-simb-rez} we conclude that both $\mathfrak{r}^\circ_\zz$ and $\mathfrak{r}^{\epsilon,\cc}_\zz$ belong to  $S^{-p}_1(\Xi)$ with $\mathfrak{r}^\circ_\zz$  being $\Gamma$-periodic.

Let us formulate {a technical result which - in our current setting -} replaces Proposition~3.5 in \cite{CIP} and which can be proved with similar arguments.

\begin{proposition}\label{P-replP3_5}
Given $(m,s)\in\R\times\R$, there exists $\epsilon_0>0$ such that for any $(\epsilon,c) \in[0,\epsilon_0]\times [0,1]$ there exists a  continuous bilinear map  
$$ S^m_1(\Xi)_\Gamma\times S^s_1(\Xi)_\Gamma \ni (F,G) \mapsto \mathcal{r}_{\epsilon,\cc}\big(F,G)\in S^{m+s-2}_1(\Xi)$$ 
such that (with $B\equiv B(\epsilon,\cc)$ and $B^\circ$ as in \eqref{F-A}, \eqref{Hyp-magnField} and \eqref{H-Bper}), 
\beq\nonumber 
F\sharp^BG\,=\,F\sharp^{B^\circ}G\,+\,\epsilon\,\mathcal{r}_{\epsilon,\cc}\big(F,G)
\eeq
{Moreover, the bilinear map $\mathcal{r}_{\epsilon,c}$ is uniformly bounded with respect to $(\epsilon,c) \in[0,\epsilon_0]\times [0,1]$.}
\end{proposition}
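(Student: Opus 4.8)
The plan is to compare the two Moyal products $\sharp^B$ and $\sharp^{B^\circ}$ by isolating the contribution of the difference field $B - B^\circ = B^{\epsilon,\cc} = \epsilon(B^\bullet + \cc B^\epsilon)$, which carries an explicit factor of $\epsilon$. Recall from Definition \ref{D-MoyalPr} that $\sharp^B$ is the composition law transported from operator multiplication via $\Op^A$; the known expansion of the magnetic Moyal product (see the references \cite{MP-1,IMP-1,IMP-2,IMP-3} cited in the excerpt) writes $F\sharp^B G$ as an oscillatory integral over $\Xi\times\Xi$ whose only dependence on the magnetic field enters through the flux factor $\exp\big(-i\int_{\langle x-y/2-z, x-y/2+z', \ldots\rangle} B\big)$ through a triangle determined by the phase-space arguments. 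Writing the flux through that triangle for $B$ as the flux for $B^\circ$ plus the flux for $B^{\epsilon,\cc}$, and then using $e^{-i\Phi_B} = e^{-i\Phi_{B^\circ}}\big(1 + (e^{-i\Phi_{B^{\epsilon,\cc}}}-1)\big)$, splits $F\sharp^B G$ into $F\sharp^{B^\circ}G$ plus a remainder oscillatory integral in which the new flux factor $e^{-i\Phi_{B^{\epsilon,\cc}}}-1$ appears. Since $B^{\epsilon,\cc} = \epsilon(B^\bullet+\cc B^\epsilon)$, the flux $\Phi_{B^{\epsilon,\cc}}$ is $O(\epsilon)$ on any bounded region (with uniformity in $\cc\in[0,1]$ because $B^\epsilon$ ranges over a bounded set ${\cal S}_B$ in $\Fb^2(\X)$), so $e^{-i\Phi_{B^{\epsilon,\cc}}}-1 = \epsilon\, G_{\epsilon,\cc}(\cdot)$ with $G_{\epsilon,\cc}$ smooth with all derivatives bounded uniformly. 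Factoring out this $\epsilon$ defines $\mathcal{r}_{\epsilon,\cc}(F,G)$ as the resulting oscillatory integral.

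The steps I would carry out, in order: (1) recall the precise oscillatory-integral formula for $F\sharp^B G$ and the exact expression for the magnetic flux factor as a function of $B$ and the integration variables; (2) use the linearity of the flux (the flux of a sum of $2$-forms through a fixed surface is the sum of the fluxes) to write $\Phi_B = \Phi_{B^\circ} + \Phi_{B^{\epsilon,\cc}}$, and peel off $e^{-i\Phi_{B^\circ}}$; (3) Taylor-expand $e^{-i\Phi_{B^{\epsilon,\cc}}}-1$ to first order, extracting the explicit factor $\epsilon$ and obtaining a smooth, bounded-with-derivatives function $G_{\epsilon,\cc}$ of the phase-space variables (here I would use that the vector potential $A^\bullet$ is linear and $A^\epsilon\in\Fp^1(\X)$ lies in a bounded family, so $\Phi_{B^{\epsilon,\cc}}$ and all its derivatives in the relevant variables are $O(\epsilon)$ on compacta with polynomial growth at worst — which is exactly what is needed to keep the oscillatory integral in a Hörmander symbol class); (4) estimate the resulting integral to show it lands in $S^{m+s-2}_1(\Xi)$ with seminorms bounded by the product of seminorms of $F$ and $G$. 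The gain of two orders in $\langle\xi\rangle$ is the familiar phenomenon: the "$1$" term in $e^{-i\Phi}-1$ is killed, and the remaining factor, being itself a flux of a bounded $2$-form, effectively behaves like a first-order-in-momentum correction on each side, exactly as in the non-magnetic Moyal expansion where the leading correction to the pointwise product is a Poisson-bracket term of order $-1$, and here we get $-2$ because the $F\sharp^{B^\circ}G$ already contains the pointwise product and the first Poisson term. The $\Gamma$-periodicity of the output in $x$ follows because both $F$ and $G$ are $\Gamma$-periodic in $x$ and $B^{\epsilon,\cc}$ — wait, $B^{\epsilon,\cc}$ is not $\Gamma$-periodic; so in fact $\mathcal{r}_{\epsilon,\cc}(F,G)$ need only be in $S^{m+s-2}_1(\Xi)$, not the periodic subclass, which matches the statement.

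The main obstacle, and where the real work lies, is Step (3)–(4): controlling the oscillatory integral with the inserted factor $G_{\epsilon,\cc}$ uniformly in $(\epsilon,\cc)$ while proving it genuinely improves the symbol order by two and not just by one. The subtlety is that $\Phi_{B^{\epsilon,\cc}}$, coming from the constant field $B^\bullet$, contains a term quadratic in the spatial arguments (see the explicit formula for $\Lambda^\epsilon$ in \eqref{F-dec-p-m-phase}), so naively its $x$-derivatives are not bounded — one must exploit the Gaussian/oscillatory damping in the $\xi$-variables or integrate by parts to trade growth in configuration variables for decay in momentum, exactly as is done in the proof of Proposition 3.5 of \cite{CIP} which this result is meant to replace. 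Once that bookkeeping is set up, the bilinearity and continuity of $\mathcal{r}_{\epsilon,\cc}$, and its uniform boundedness in $(\epsilon,\cc)\in[0,\epsilon_0]\times[0,1]$, follow by tracking the seminorm estimates through the integral; the only place $\cc$ enters is through $B^\epsilon\in{\cal S}_B$, and the boundedness of ${\cal S}_B$ in $\Fb^2(\X)$ gives the required uniformity. I expect the argument to be a direct, if technical, adaptation of \cite[Prop.~3.5]{CIP} together with the standard stationary-phase / integration-by-parts estimates for magnetic symbol classes from \cite{IMP-1,IMP-2}.
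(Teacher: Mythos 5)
Your proposal follows essentially the same route as the paper, which gives no independent argument but states that the result ``can be proved with similar arguments'' to Proposition~3.5 of \cite{CIP}: decompose the flux factor in the oscillatory integral for $F\sharp^{B}G$ as the $B^\circ$-flux times the flux of $B-B^\circ=\epsilon(B^\bullet+\cc B^\epsilon)$, extract the explicit factor $\epsilon$ from $e^{-i\Phi_{B^{\epsilon,\cc}}}-1$, and control the remainder by integration by parts, with uniformity in $\cc$ coming from the boundedness of $\mathcal{S}_B$ in $\Fb^2(\X)$. Your intermediate heuristics are shaky (the flux factor is \emph{not} bounded with bounded derivatives, and the order gain of two has nothing to do with a Poisson-bracket term ``already contained'' in $F\sharp^{B^\circ}G$ --- it comes from the flux through the triangle being bilinear in the spatial differences, each factor being traded for a $\partial_\xi$ on $F$ or $G$ by non-stationary phase), but your final paragraph correctly identifies this integration-by-parts mechanism from \cite{CIP}, so the approach matches the paper's intended proof.
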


We shall also need to work with the resolvents:
\begin{itemize}
	\item $R^\circ_{\bot,\zz}:=\big(H_\bot-\zz\bb1\big)^{-1}=:\Op^{A^\circ}(\mathfrak{r}^\circ_{\bot,\zz})$ for $\zz\in\mathbb{C}\setminus\sigma(H_\bot)$,
	\item $R^{\epsilon,\cc}_{\bot,\zz}:=\big(H^{\epsilon,\cc}_\bot-\zz\bb1\big)^{-1}=:\Op^{A}(\mathfrak{r}^{\epsilon,\cc}_{\bot,\zz})$ for $\zz\in\mathbb{C}\setminus\sigma(H^{\epsilon,\cc}_\bot)$.
\end{itemize}
\begin{corollary}\label{P-1_8-CIP}~
 For $K\subset\big[\mathbb{C}\setminus\sigma(H_\bot)\big]$, there exists $\epsilon_0>0$ such that:
	\begin{enumerate}
		\item  $K\subset\big[\mathbb{C}\setminus\sigma(H^{\epsilon,\cc}_\bot)\big]$ for any $(\epsilon,\cc)\in[0,\epsilon_0]\times[0,1]$.
		\item The map $\mathbb{C}\setminus\sigma(H^{\epsilon,\cc}_\bot)\ni\zz\mapsto\mathfrak{r}^{\epsilon,\cc}_{\bot,\zz}\in S^{-p}_1(\Xi)$ is continuous for the Fr\'{e}chet topology on $S^{-p}_1(\Xi)$, uniformly in $(\epsilon,\cc)\in[0,\epsilon_0]\times[0,1]$.
		\item We have
		\beq \nonumber 
\mathfrak{r}^{\epsilon,\cc}_{\bot,\zz}\,=\,(1+\epsilon\,\mathcal{r}_{\epsilon\cc}(h_\bot,\mathfrak{r}^{\circ}_{\bot,\zz})\big)^-_B\,\overset{B}{\sharp}\mathfrak{r}^{\circ}_{\bot,\zz}.
		\eeq
	\end{enumerate}
\end{corollary}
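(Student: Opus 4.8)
The plan is to treat the corollary as a symbol-level transcription of the resolvent identity, based on Proposition \ref{P-replP3_5}, the spectral stability estimates \eqref{F-m-sp-est} (from \cite{AMP,CP-1}) and the standard mapping and inversion properties of the magnetic Weyl calculus recalled in Appendix \ref{A-m-PsiDO}; throughout I take $K$ compact, as in the analogous statement of \cite{CIP}. For point (1): since $\dist\big(K,\sigma(H_\bot)\big)=:2d>0$ and, by \eqref{F-m-sp-est}, $\distH\big(\sigma(H^{\epsilon,\cc}_\bot),\sigma(H_\bot)\big)\to 0$ as $\epsilon\to 0$ uniformly in $\cc\in[0,1]$, I would fix $\epsilon_0>0$ small enough that $\dist\big(K,\sigma(H^{\epsilon,\cc}_\bot)\big)\ge d$ for all $(\epsilon,\cc)\in[0,\epsilon_0]\times[0,1]$; for such $\zz$ the perturbed resolvent $R^{\epsilon,\cc}_{\bot,\zz}$ is well defined and, by the fact quoted around \eqref{F-simb-rez}, its $A$-symbol $\mathfrak{r}^{\epsilon,\cc}_{\bot,\zz}$ lies in $S^{-p}_1(\Xi)$.

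Point (3) is the algebraic core and I would establish it first. By definition $\mathfrak{r}^\circ_{\bot,\zz}$ is the two-sided $\sharp^{B^\circ}$-inverse of $h_\bot-\zz$. Applying Proposition \ref{P-replP3_5} to the product $(h_\bot-\zz)\,\sharp^{B}\,\mathfrak{r}^\circ_{\bot,\zz}$ (resp.\ to $\mathfrak{r}^\circ_{\bot,\zz}\,\sharp^{B}\,(h_\bot-\zz)$) and using that a constant symbol is central and Moyal-invariant, so that $\mathcal{r}_{\epsilon,\cc}(\cdot,\zz)=\mathcal{r}_{\epsilon,\cc}(\zz,\cdot)=0$, one obtains $(h_\bot-\zz)\,\sharp^{B}\,\mathfrak{r}^\circ_{\bot,\zz}=\bb1+\epsilon\,\mathcal{r}_{\epsilon,\cc}(h_\bot,\mathfrak{r}^\circ_{\bot,\zz})$ (resp.\ a symmetric identity), the correction lying in $\bb1+S^{-2}_1(\Xi)$ with all semi-norms of its $S^{-2}_1(\Xi)\hookrightarrow S^0_1(\Xi)$-part bounded uniformly in $(\epsilon,\cc)$ and $\zz\in K$ by Proposition \ref{P-replP3_5}. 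Shrinking $\epsilon_0$ so that this correction factor stays in a fixed neighbourhood of $\bb1$ on which $\sharp^{B}$-inversion is available (Neumann series in $\epsilon$; Appendix \ref{A-m-PsiDO}), one inverts the correction and composes with $\mathfrak{r}^\circ_{\bot,\zz}$ to obtain a one-sided $\sharp^{B}$-inverse of $h_\bot-\zz$ in $S^{-p}_1(\Xi)$, the other one-sided inverse coming from the symmetric identity. Since $\zz\notin\sigma(H^{\epsilon,\cc}_\bot)$, the operator $H^{\epsilon,\cc}_\bot-\zz\bb1$ is boundedly invertible, hence $h_\bot-\zz$ has a unique two-sided $\sharp^{B}$-inverse, necessarily $\mathfrak{r}^{\epsilon,\cc}_{\bot,\zz}$, with which both expressions coincide — this is the stated formula.

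For point (2) I would read the continuity off this formula. The map $\zz\mapsto\mathfrak{r}^\circ_{\bot,\zz}\in S^{-p}_1(\Xi)$ is continuous for the Fr\'echet topology (exactly as for $H^\circ$, cf.\ \cite{IMP-2} and the discussion around \eqref{F-simb-rez}); composing with the continuous bilinear map $\mathcal{r}_{\epsilon,\cc}$ shows $\zz\mapsto\epsilon\,\mathcal{r}_{\epsilon,\cc}(h_\bot,\mathfrak{r}^\circ_{\bot,\zz})$ is continuous and remains, uniformly in $\zz\in K$ and in $(\epsilon,\cc)$, inside a fixed small ball on which $G\mapsto(\bb1+G)^-_B$ is continuous into $S^0_1(\Xi)$; finally $\sharp^{B}:S^0_1(\Xi)\times S^{-p}_1(\Xi)\to S^{-p}_1(\Xi)$ is continuous bilinear, so the composite $\zz\mapsto\mathfrak{r}^{\epsilon,\cc}_{\bot,\zz}$ is continuous, all the bounds inheriting the uniformity in $(\epsilon,\cc)$ from Proposition \ref{P-replP3_5}.

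The substantive analytic input is concentrated in Proposition \ref{P-replP3_5} (itself obtained as in \cite{CIP}); once that is granted, the main obstacle I anticipate is purely bookkeeping — keeping every estimate simultaneously uniform in $\zz\in K$ and in $(\epsilon,\cc)\in[0,\epsilon_0]\times[0,1]$ — together with the standard, though not entirely trivial, fact that Moyal inversion is well defined and continuous on a neighbourhood of the unit symbol in $S^0_1(\Xi)$. Beyond these inputs the corollary is essentially formal.
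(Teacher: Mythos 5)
Your proposal is correct and takes essentially the same route as the paper: the paper's own proof simply defers to Proposition 1.8 of \cite{CIP}, observing that the only change needed is to replace Eq.~(3.22) there by the identity $1=(h_\bot-\zz)\sharp^{B}\mathfrak{r}^{\circ}_{\bot,\zz}-\epsilon\,\mathcal{r}_{\epsilon,\cc}(h_\bot,\mathfrak{r}^{\circ}_{\bot,\zz})$ furnished by Proposition \ref{P-replP3_5}, which is exactly the identity on which your argument is built, followed by the same inversion of the near-unit correction and identification with the perturbed resolvent symbol. The only cosmetic divergence is that you deduce point (1) from the spectral-stability estimates behind \eqref{F-m-sp-est} (correctly assuming $K$ compact), whereas the \cite{CIP} argument reads the inclusion directly off the constructed two-sided $\sharp^{B}$-inverse; both are legitimate.
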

\begin{proof}
	We shall  only present the small changes  to be done in the proof of Proposition 1.8 in \cite{CIP}. In fact, the main point is to replace Equation (3.22) in \cite{CIP} by the  equality:
	\beq\nonumber 
	1=\big(h_\bot-\zz\big)\sharp^{B^\circ}\mathfrak{r}^{\circ}_{\bot,\zz}=\big(h_\bot-\zz\big)\sharp^{B}\mathfrak{r}^{\circ}_{\bot,\zz}\,-\,\epsilon\,\mathcal{r}_{\epsilon,\cc}\big(h_\bot,\mathfrak{r}^{\circ}_{\bot,\zz}\big)
	\eeq 
	with $\mathcal{r}_{\epsilon\cc}\big(h_\bot,\mathfrak{r}^{\circ}_{\bot,\zz}\big)$ in $ S^{-2}_1(\Xi)$ uniformly for $(\epsilon,\cc)\in[0,\epsilon_0]\times[0,1]$. Once we made these replacements in \cite{CIP} the arguments in the proof of Proposition 1.8 therein remain  true.
\end{proof}
Using the notations \eqref{DF-p0} and \eqref{DF-pa} one obtains:
\begin{corollary}\label{C-est-p-symb-pert}
	There exists $\epsilon_0>0$ such that for any index  $a\in\{0,\B,\infty,\bot\}$ the differences $p_a^{\epsilon,\cc}-p_a$ belong to $S^{-\infty}(\Xi)$ and given any seminorm $\lnu:S^{-\infty}(\Xi)\rightarrow\R_+$ defining its Fr\'{e}chet topology, there exists $C_\nu>0$ such that  for any $\epsilon,\cc$ in $[0,\epsilon_0]\times[0,1] $, $$\lnu\big(p_a^{\epsilon,\cc}-p_a\big)\leq\,C_\nu\epsilon \,.
	$$
\end{corollary}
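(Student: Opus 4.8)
The proof reduces immediately to the two cases $a=\B$ and $a=0$: by \eqref{DF-pa} and \eqref{F-p-epsilon-c} we have $p^{\epsilon,\cc}_\infty-p_\infty=-(p^{\epsilon,\cc}_\B-p_\B)-(p^{\epsilon,\cc}_0-p_0)$ and $p^{\epsilon,\cc}_\bot-p_\bot=-(p^{\epsilon,\cc}_\B-p_\B)$, so the cases $a\in\{\infty,\bot\}$ follow by linearity once the two ``building blocks'' are settled. These two are treated in exactly the same way — for $a=0$ one replaces the contour $\mathscr{C}_\B$ of \eqref{F-R-PB} by the contour $\mathscr{C}_0$ of \eqref{DF-p0} throughout, and invokes Remark \ref{R-P-reg-op} in place of Proposition \ref{R-p-symb} — so I only describe $a=\B$, writing $B=B(\epsilon,\cc)$ and $B^\circ$ as in \eqref{F-A}, \eqref{H-Bper}.

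The first ingredient is a comparison of resolvent symbols. Multiplying the identity of Corollary \ref{P-1_8-CIP}(3) on the left by $\big(1+\epsilon\,\mathcal{r}_{\epsilon,\cc}(h_\bot,\mathfrak{r}^\circ_{\bot,\zz})\big)$ gives
$$\mathfrak{r}^{\epsilon,\cc}_{\bot,\zz}-\mathfrak{r}^\circ_{\bot,\zz}=-\epsilon\,\mathcal{r}_{\epsilon,\cc}\big(h_\bot,\mathfrak{r}^\circ_{\bot,\zz}\big)\sharp^B\mathfrak{r}^{\epsilon,\cc}_{\bot,\zz}.$$
By Proposition \ref{P-replP3_5}, $\mathcal{r}_{\epsilon,\cc}(h_\bot,\mathfrak{r}^\circ_{\bot,\zz})$ stays in a bounded subset of $S^{-2}_1(\Xi)$; by Corollary \ref{P-1_8-CIP}(2), $\mathfrak{r}^{\epsilon,\cc}_{\bot,\zz}$ stays in a bounded subset of $S^{-p}_1(\Xi)$; both are continuous in $\zz$ and uniform in $(\epsilon,\cc)\in[0,\epsilon_0]\times[0,1]$. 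Using the continuity of $\sharp^B$ on symbol classes (Appendix \ref{A-m-PsiDO}), for $\zz$ on the compact contour $\mathscr{C}_\B$ we obtain $\mathfrak{r}^{\epsilon,\cc}_{\bot,\zz}-\mathfrak{r}^\circ_{\bot,\zz}=\epsilon\,q_{\epsilon,\cc,\zz}$ with $\{q_{\epsilon,\cc,\zz}\}$ a bounded subset of $S^{-p-2}_1(\Xi)$, continuous in $\zz\in\mathscr{C}_\B$, uniformly in $(\epsilon,\cc)$. Recalling that $p_\B$, resp. $p^{\epsilon,\cc}_\B$, is the integral of $\mathfrak{r}^\circ_{\bot,\zz}$, resp. $\mathfrak{r}^{\epsilon,\cc}_{\bot,\zz}$, over $\mathscr{C}_\B$ (cf. \eqref{F-R-PB}, \eqref{DF-p-epsilonc-B}, \eqref{F-p-epsilon-c}), integration already gives $p^{\epsilon,\cc}_\B-p_\B\in\epsilon\times(\text{a bounded subset of }S^{-p-2}_1(\Xi))$.

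It remains to upgrade this to the statement in $S^{-\infty}(\Xi)$ without losing the factor $\epsilon$, which I would do by rerunning the boot-strap of Proposition \ref{R-p-symb} directly on the difference. Fix $\zz_0\in\R$ to the left of $\sigma(H_\bot)$; by \eqref{F-m-sp-est}, shrinking $\epsilon_0$ if needed, $\zz_0\notin\sigma(H^{\epsilon,\cc}_\bot)$ for all $(\epsilon,\cc)$. For $n\in\Nb$ let $a^\epsilon_n:=\big(\mathfrak{r}^{\epsilon,\cc}_{\bot,\zz_0}\big)^{\sharp^B n}$ and $a^\circ_n:=\big(\mathfrak{r}^\circ_{\bot,\zz_0}\big)^{\sharp^{B^\circ} n}$ be the $B$-, resp. $B^\circ$-symbols of $(H^{\epsilon,\cc}_\bot-\zz_0\bb1)^{-n}$, resp. $(H_\bot-\zz_0\bb1)^{-n}$, which lie in bounded subsets of $S^{-np}_1(\Xi)$ uniformly in $(\epsilon,\cc)$ by Corollary \ref{P-1_8-CIP}(2); and, using $(h_\bot-\zz_0)\sharp^B\mathfrak{r}^{\epsilon,\cc}_{\bot,\zz}=1+(\zz-\zz_0)\mathfrak{r}^{\epsilon,\cc}_{\bot,\zz}$ together with the vanishing of contour integrals of polynomials in $\zz$ over the closed curve $\mathscr{C}_\B$, set
$$b^\epsilon_n:=(h_\bot-\zz_0)^{\sharp^B n}\sharp^B p^{\epsilon,\cc}_\B=-(2\pi i)^{-1}\!\int_{\mathscr{C}_\B}(\zz-\zz_0)^n\,\mathfrak{r}^{\epsilon,\cc}_{\bot,\zz}\,d\zz,\qquad b^\circ_n:=-(2\pi i)^{-1}\!\int_{\mathscr{C}_\B}(\zz-\zz_0)^n\,\mathfrak{r}^\circ_{\bot,\zz}\,d\zz,$$
which lie in bounded subsets of $S^{-p}_1(\Xi)$ uniformly in $(\epsilon,\cc)$, so that $p^{\epsilon,\cc}_\B=a^\epsilon_n\sharp^B b^\epsilon_n$ and $p_\B=a^\circ_n\sharp^{B^\circ}b^\circ_n$. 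Writing
$$p^{\epsilon,\cc}_\B-p_\B=(a^\epsilon_n-a^\circ_n)\sharp^B b^\epsilon_n+a^\circ_n\sharp^B(b^\epsilon_n-b^\circ_n)+\big(a^\circ_n\sharp^B b^\circ_n-a^\circ_n\sharp^{B^\circ}b^\circ_n\big),$$
the third term is $\epsilon\,\mathcal{r}_{\epsilon,\cc}(a^\circ_n,b^\circ_n)$ by Proposition \ref{P-replP3_5}; $b^\epsilon_n-b^\circ_n=-(2\pi i)^{-1}\int_{\mathscr{C}_\B}(\zz-\zz_0)^n\,\epsilon\,q_{\epsilon,\cc,\zz}\,d\zz$ by the previous step; and $a^\epsilon_n-a^\circ_n$ is $\epsilon$ times a bounded symbol by a telescoping argument (each of the $n$ Moyal factors replaced one at a time, costing $\epsilon\,q_{\epsilon,\cc,\zz_0}$ by the previous step, and each passage $\sharp^B\rightsquigarrow\sharp^{B^\circ}$ costing $\epsilon\,\mathcal{r}_{\epsilon,\cc}$ by Proposition \ref{P-replP3_5}). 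All bounds being uniform in $(\epsilon,\cc)$, this shows that for every $n\in\Nb$ the symbol $p^{\epsilon,\cc}_\B-p_\B$ lies in $\epsilon$ times a bounded subset of $S^{-(n+1)p-2}_1(\Xi)$, uniformly in $(\epsilon,\cc)$. Since a subset of $S^{-\infty}(\Xi)$ is bounded precisely when it is bounded in every $S^{-m}_1(\Xi)$, this is exactly the assertion that for each continuous seminorm $\lnu$ on $S^{-\infty}(\Xi)$ there is $C_\lnu>0$ with $\lnu\big(p^{\epsilon,\cc}_\B-p_\B\big)\le C_\lnu\,\epsilon$.

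The only genuine difficulty is this last step: ``$p^{\epsilon,\cc}_\B-p_\B=\mathscr{O}(\epsilon)$ in $S^{-p-2}_1$'' and ``$p^{\epsilon,\cc}_\B-p_\B\in S^{-\infty}$'' are each easy separately (the first from the resolvent comparison, the second from Proposition \ref{R-p-symb}), but combining them means carrying the $\epsilon$ through the order-reducing boot-strap; applying that boot-strap to the \emph{difference}, rather than to $p^{\epsilon,\cc}_\B$ and $p_\B$ individually, is what prevents the $\epsilon$ extracted from the resolvent comparison from being reabsorbed.
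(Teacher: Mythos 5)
Your proposal is correct and is essentially the argument the paper intends: the corollary is obtained by contour-integrating the resolvent-symbol identity of Corollary \ref{P-1_8-CIP}(3) with Proposition \ref{P-replP3_5} controlling the change of Moyal product, and the upgrade to all seminorms of $S^{-\infty}(\Xi)$ is exactly the order-reducing boot-strap of Proposition \ref{R-p-symb} (powers of the resolvent at a point $\zz_0$ off the spectrum) applied to the difference with the factor $\epsilon$ tracked through — a step the paper leaves implicit by deferring to the analogous Propositions 1.8 and 3.5 of \cite{CIP}. The only point needing care is that Proposition \ref{P-replP3_5} is stated for pairs of $\Gamma$-periodic symbols, whereas some intermediates in your telescoping (mixed $\sharp^{B}$-products) are not periodic; this is repaired by peeling factors recursively from one side so that each application of Proposition \ref{P-replP3_5} involves only $\Gamma$-periodic arguments (and by noting that the continuity constants of $\sharp^{B}$ are uniform since $B(\epsilon,\cc)$ stays in a bounded subset of $\Fb^2(\X)$), so it is a routine rearrangement rather than a gap.
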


This corollary together with Theorem 2.6 in \cite{IMP-3} (see also Subsection \ref{SS-problem}) implies the following result.
\begin{corollary}\label{C-est-p-pert}
	There exist $\epsilon_0>0$ and $C>0$ such that for any index  $a\in\{0,\B,\infty,\bot\}$ 
	\beq\nonumber 
	\big\|P_a^{\epsilon,\cc}\,-\,\Op^{A(\epsilon,c)}(p_a)\big\|_{\mathbb{B}(L^2(\X))}\,\leq\,C \,\epsilon\,,\qquad\forall(\epsilon,\cc)\in[0,\epsilon_0]\times[0,1]\,.
	\eeq
\end{corollary}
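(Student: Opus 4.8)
The plan is to reduce the estimate to the $L^2$-boundedness theorem for magnetic pseudodifferential operators, combined with the symbol estimate already recorded in Corollary \ref{C-est-p-symb-pert}.

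First I would observe that, by the very construction in \eqref{DF-p-epsilonc-B}, \eqref{DF-p-epsilonc-0} and \eqref{F-p-epsilon-c}, each operator $P_a^{\epsilon,\cc}$ equals $\overline{\Op^{A(\epsilon,\cc)}(p_a^{\epsilon,\cc})}$ — its $A$-symbol being obtained by integrating, along the relevant contour, the symbols $\mathfrak{r}^{\epsilon,\cc}_{\bot,\zz}$ of the resolvents of $H^{\epsilon,\cc}_\bot$ — and that, since $p_a^{\epsilon,\cc}\in S^{-\infty}(\Xi)$, this closure is already a bounded operator. As the \emph{same} vector potential $A(\epsilon,\cc)$ occurs in $\Op^{A(\epsilon,\cc)}(p_a)$, linearity of the quantization map yields
\[
P_a^{\epsilon,\cc}-\Op^{A(\epsilon,\cc)}(p_a)=\Op^{A(\epsilon,\cc)}\big(p_a^{\epsilon,\cc}-p_a\big),
\]
so everything reduces to estimating the operator norm of the magnetic quantization, with potential $A(\epsilon,\cc)$, of the symbol difference $p_a^{\epsilon,\cc}-p_a$.

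Next I would invoke the magnetic Calder\'on--Vaillancourt estimate (Theorem 2.6 in \cite{IMP-3}, see also Appendix \ref{A-m-PsiDO}): there exist a seminorm $\nu_0$ of the Fr\'echet space $S^0_0(\Xi)$ and, for each regular field $B\in\Fb^2(\X)$ in a fixed bounded set, a constant $C_B$ bounded on that set, such that $\|\Op^{A}(G)\|_{\mathbb{B}(L^2(\X))}\leq C_B\,\nu_0(G)$ for every $G\in S^0_0(\Xi)$ and every potential $A$ with $dA=B$ — the bound being gauge-independent, it depends on the field alone. The one genuinely delicate point is the uniformity in $(\epsilon,\cc)$: by Hypothesis \ref{H-magnField} the total field $B(\epsilon,\cc)=B^\circ+\epsilon B^\bullet+\cc\epsilon B^\epsilon$ ranges, as $(\epsilon,\cc)$ runs over $[0,\epsilon_0]\times[0,1]$, inside one fixed bounded subset of $\Fb^2(\X)$ (namely $B^\circ+\epsilon_0 B^\bullet+\epsilon_0\,{\cal{S}}_B$, up to passing to its balanced convex hull), hence $C_{B(\epsilon,\cc)}$ is dominated by a single constant $C$ independent of $(\epsilon,\cc)$.

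Finally I would combine this with Corollary \ref{C-est-p-symb-pert}. Since the inclusion $S^{-\infty}(\Xi)\hookrightarrow S^0_0(\Xi)$ is continuous, the seminorm $\nu_0$ is dominated on $S^{-\infty}(\Xi)$ by one of the defining seminorms $\lnu$ of $S^{-\infty}(\Xi)$, so Corollary \ref{C-est-p-symb-pert} gives $\nu_0\big(p_a^{\epsilon,\cc}-p_a\big)\leq C_{\nu_0}\,\epsilon$ for all $a\in\{0,\B,\infty,\bot\}$ and all $(\epsilon,\cc)\in[0,\epsilon_0]\times[0,1]$, after shrinking $\epsilon_0$ so that both corollaries apply simultaneously. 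Multiplying the two displayed inequalities yields $\|P_a^{\epsilon,\cc}-\Op^{A(\epsilon,\cc)}(p_a)\|_{\mathbb{B}(L^2(\X))}\leq C\,C_{\nu_0}\,\epsilon$, which is the claim.
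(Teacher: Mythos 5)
Your proposal is correct and follows essentially the same route as the paper, which obtains the corollary directly from Corollary \ref{C-est-p-symb-pert} combined with the $L^2$-boundedness theorem for magnetic pseudodifferential operators (Theorem 2.6 in \cite{IMP-3}); you merely spell out the details the paper leaves implicit, namely that $P_a^{\epsilon,\cc}=\Op^{A(\epsilon,\cc)}(p_a^{\epsilon,\cc})$ so the difference is the quantization of $p_a^{\epsilon,\cc}-p_a$, and that the operator-norm bound is uniform because the fields $B(\epsilon,\cc)$ stay in a fixed bounded subset of $\Fb^2(\X)$. Your handling of the gauge-independence and the uniformity in $(\epsilon,\cc)$ is exactly the point the citation of Theorem 2.6 in \cite{IMP-3} is meant to cover, so no gap remains.
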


\begin{proposition}\label{P-est-epsilonc}
	There exist $\epsilon_0>0$ and $C>0$ such that  for any $(\epsilon,\cc)\in[0,\epsilon_0]\times[0,1]$:
	\begin{itemize}
		\item $H_\B^{\epsilon,\cc}P_0^{\epsilon,\cc}\in\mathbb{B}\big(L^2(\X)\big),\ H^{\epsilon,\cc}_0P_\B^{\epsilon,\cc}\in\mathbb{B}\big(L^2(\X)\big),\ H^{\epsilon,\cc}_\infty P_0^{\epsilon,\cc}\in\mathbb{B}\big(L^2(\X)\big)$,\\ 
		$H^{\epsilon,\cc}_0P_\infty^{\epsilon,\cc}\in\mathbb{B}\big(L^2(\X)\big),\ H_\infty^{\epsilon,\cc}P_\B^{\epsilon,\cc}\in\mathbb{B}\big(L^2(\X)\big),\ H_\B^{\epsilon,\cc}P_\infty^{\epsilon,\cc}\in\mathbb{B}\big(L^2(\X)\big)$,
		\item  $\big\|H_\B^{\epsilon,\cc}P_0^{\epsilon,\cc}\big\|_{\mathbb{B}(L^2(\X))}\leq\,C\,\epsilon\,$\,;\quad$\ \ \big\|H^{\epsilon,\cc}_0P_\B^{\epsilon,\cc}\big\|_{\mathbb{B}(L^2(\X))}\leq\,C\,\epsilon$\,;\quad$\big\|H^{\epsilon,\cc}_\infty P_0^{\epsilon,\cc}\big\|_{\mathbb{B}(L^2(\X))}\leq\,C\,\epsilon$\,;\\
		$\big\|H^{\epsilon,\cc}_0P_\infty^{\epsilon,\cc}\big\|_{\mathbb{B}(L^2(\X))}\leq\,C\,\epsilon$\, ;\quad$\big\|H_\infty^{\epsilon,\cc}P_\B^{\epsilon,\cc}\big\|_{\mathbb{B}(L^2(\X))}\leq\,C\,\epsilon$\, ;\quad$\ \big\|H_\B^{\epsilon,\cc}P_\infty^{\epsilon,\cc}\big\|_{\mathbb{B}(L^2(\X))}\leq\,C\,\epsilon$.
	\end{itemize}
\end{proposition}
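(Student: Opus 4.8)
The plan is to push everything through the magnetic Weyl calculus and to reduce to the two technical inputs already at hand, Proposition~\ref{P-replP3_5} and Corollary~\ref{C-est-p-symb-pert}. First I would observe that all six operators in the statement are of the form $H^{\epsilon,\cc}_a\,P^{\epsilon,\cc}_b$ with $a\neq b$ in $\{0,\B,\infty\}$, that $H^{\epsilon,\cc}_a=\overline{\Op^A(h_a)}$ and $P^{\epsilon,\cc}_b=\Op^A(p^{\epsilon,\cc}_b)$, and --- this is the structural point that makes the whole argument work --- that in each of these six pairs at least one of the two symbols $h_a$, $p_b$ is smoothing: the ``bad'' symbols $h_\infty\in S^p_1(\Xi)_\Gamma$ and $p_\infty\in S^0_1(\Xi)_\Gamma$ never get paired with each other, $h_\infty$ occurring only next to $p_0,p_\B\in S^{-\infty}(\Xi)_\Gamma$ and $p_\infty$ only next to $h_0,h_\B\in S^{-\infty}(\Xi)_\Gamma$. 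Hence in each case either $\Op^A(p^{\epsilon,\cc}_b)$ is smoothing (and maps $L^2(\X)$ into $\mathscr{S}(\X)\subset\mathcal D(H^{\epsilon,\cc}_a)$) or $H^{\epsilon,\cc}_a$ is bounded, so the composition is legitimate and
\[
H^{\epsilon,\cc}_a\,P^{\epsilon,\cc}_b=\Op^A\big(h_a\,\sharp^B\,p^{\epsilon,\cc}_b\big).
\]

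The second step is the cancellation inherited from the unperturbed picture of Subsection~\ref{SS-free-dyn}: there $H_0,H_\B,H_\infty$ commute and live in mutually orthogonal subspaces, so $H_aP_b=0$ for $a\neq b$, and the injectivity of $\Op^{A^\circ}$ (Proposition~\ref{R-p-symb} plus the isomorphism property of $\Op^{A^\circ}$) gives $h_a\sharp^{B^\circ}p_b=0$. I would then write
\[
h_a\,\sharp^B\,p^{\epsilon,\cc}_b=\big(h_a\,\sharp^B\,p_b-h_a\,\sharp^{B^\circ}\,p_b\big)+h_a\,\sharp^{B^\circ}\,p_b+h_a\,\sharp^B\,\big(p^{\epsilon,\cc}_b-p_b\big),
\]
the middle term being zero. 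The first term equals $\epsilon\,\mathcal{r}_{\epsilon,\cc}(h_a,p_b)$ by Proposition~\ref{P-replP3_5}, and since one of the two H\"ormander orders can be taken arbitrarily negative this bracket lies in $S^{-\infty}(\Xi)$ with all its seminorms bounded uniformly for $(\epsilon,\cc)\in[0,\epsilon_0]\times[0,1]$. For the third term, Corollary~\ref{C-est-p-symb-pert} gives $p^{\epsilon,\cc}_b-p_b\in S^{-\infty}(\Xi)$ with every seminorm $\leq C\epsilon$; composing through $\sharp^B$ with the fixed symbol $h_a$ and invoking the continuity of the magnetic Moyal product, uniformly over the bounded family ${\cal{S}}_B$ of fields (Appendix~\ref{A-m-PsiDO}), keeps it in $S^{-\infty}(\Xi)$ with every seminorm $\leq C\epsilon$. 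Altogether $h_a\sharp^B p^{\epsilon,\cc}_b\in S^{-\infty}(\Xi)$ with every seminorm of order $\epsilon$, uniformly in $(\epsilon,\cc)$.

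For the last step, since $S^{-\infty}(\Xi)\subset S^0_0(\Xi)$, the magnetic Calder\'on--Vaillancourt theorem (Theorem~2.6 in \cite{IMP-3}, recalled in Appendix~\ref{A-m-PsiDO}) shows that $\Op^A$ of such a symbol is a bounded operator on $L^2(\X)$ with norm controlled by finitely many seminorms of the symbol; this yields at once $H^{\epsilon,\cc}_a P^{\epsilon,\cc}_b\in\mathbb{B}\big(L^2(\X)\big)$ and $\|H^{\epsilon,\cc}_a P^{\epsilon,\cc}_b\|_{\mathbb{B}(L^2(\X))}\leq C\epsilon$ with $C$ independent of $(\epsilon,\cc)$, and running over the six ordered pairs $(a,b)$ finishes the proof. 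I do not foresee a genuine obstacle: the substantive analytic work is already encapsulated in Proposition~\ref{P-replP3_5} and Corollary~\ref{C-est-p-symb-pert}, and the only point demanding care is the elementary bookkeeping that a smoothing factor is always present in each product --- it is precisely this that both legitimizes the composition and produces the gain of one power of $\epsilon$.
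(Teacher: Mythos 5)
Your proposal is correct and follows essentially the same route as the paper's proof: the vanishing of the unperturbed products $h_a\sharp^{B^\circ}p_b$, Proposition \ref{P-replP3_5} to compare $\sharp^{B}$ with $\sharp^{B^\circ}$ (together with Corollary \ref{C-est-p-symb-pert} to handle $p_b^{\epsilon,\cc}-p_b$), and the $L^2$-boundedness of magnetic operators with $S^{-\infty}$ symbols; your explicit three-term decomposition just spells out what the paper's displayed chain of equalities compresses. The only small inaccuracy is the assertion that a smoothing magnetic operator maps $L^2(\X)$ into $\mathscr{S}(\X)$ --- it maps into the intersection of all magnetic Sobolev spaces (smoothness, not spatial decay), which is still enough for the domain inclusion $\Rg\,P^{\epsilon,\cc}_b\subset\mathcal{D}(H^{\epsilon,\cc}_a)$ that you actually use, so the argument is unaffected.
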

\begin{proof}
	The following identities are evident due to the definitions of the magnetic Weyl operators corresponding to the given symbols:
	$$
	h_0\sharp^{B^\circ} p_\B=h_\B\sharp^{B^\circ} p_0=h_0\sharp^{B^\circ}\p_\infty=h_\infty\sharp^{B^\circ} p_0=h_\B\sharp^{B^\circ} h_\infty=h_\infty\sharp^{B^\circ} p_\B\,=\,h_\B\sharp^{B^\circ}(1-p_{\B})\,=\,0\, .
	$$
	Using these identities and Proposition \ref{P-replP3_5}, we get, {with $A=A(\epsilon,c)$ and $B=B(\epsilon,c)$,} that there exists some $C>0$ such that:
	\begin{align*}
		&H_{0}^{\epsilon,\cc}P_\B^{\epsilon,\cc}=\Op^{A}\big(h_{0}\sharp^{B}p_\B^{\epsilon,\cc}\big)=\epsilon\,\Op^{A}\big(\mathcal{r}^{\epsilon,\cc}(h_{0},p_\B^{\epsilon,\cc})\big)\,\Rightarrow\,\big\|H_{0}^{\epsilon,\cc}P_\B^{\epsilon,\cc}\big\|_{\mathbb{B}(L^2(\X))}\leq\,C\epsilon\,;\\
		&H_{\B}^{\epsilon,\cc}P_0^{\epsilon,\cc}=\Op^{A}\big(h_{\B}\sharp^{B}p_0^{\epsilon,\cc}\big)=\epsilon\,\Op^{A}\big(\mathcal{r}^{\epsilon,\cc}(h_{\B},p_0^{\epsilon,\cc})\big)\,\Rightarrow\,\big\|H_{\B}^{\epsilon,\cc}P_0^{\epsilon,\cc}\big\|_{\mathbb{B}(L^2(\X))}\leq\,C\epsilon\,;\\
		&H^{\epsilon,\cc}_\infty\,P_0^{\epsilon}=\Op^{A}\big(h_\infty\sharp^{B}p_0^{\epsilon,\cc}\big)=\epsilon\,\Op^{A}\big(\mathcal{r}^{\epsilon,\cc}(h_\infty,p_0^{\epsilon,\cc})\big)\,\Rightarrow\,\big\|H^{\epsilon,\cc}_\infty P_0^{\epsilon,\cc}\big\|_{\mathbb{B}(L^2(\X))}\leq\,C\epsilon\,;\\
		&H_{\B}^{\epsilon,\cc} P_\infty^{\epsilon,\cc}=\Op^{A}\big(h_{\B}\sharp^{B}p_\infty^{\epsilon,\cc}\big)=\Op^{A}\big(h_{\B}\sharp^{B}(1-p^{\epsilon,\cc}_{\B}-p^{\epsilon,\cc}_0\big)\,\Rightarrow\,\big\|H_{\B}^{\epsilon,\cc}P_\infty^{\epsilon,\cc}\big\|_{\mathbb{B}(L^2(\X))}\leq\,C\epsilon\,;\\
		&H_{\infty}^{\epsilon,\cc} P_\B^{\epsilon,\cc}=\Op^{A}\big(h_{\infty}\sharp^{B}p_\B^{\epsilon,\cc}\big)=\epsilon\,\Op^{A}\big(\mathcal{r}^{\epsilon,\cc}(h_\infty,p_\B^{\epsilon,\cc})\big)\,\Rightarrow\,\big\|H^{\epsilon,\cc}_\infty P_\B^{\epsilon,\cc}\big\|_{\mathbb{B}(L^2(\X))}\leq\, C\epsilon.
	\end{align*}
\end{proof}

\begin{corollary}\label{cordhc1} We have 
    $$\big \| \mathfrak{H}^{\epsilon,\cc}_{\mathfrak{B}}-H^{\epsilon,\cc}_\B\big \|_{\mathbb{B}(L^2(\X))}\leq\, C\epsilon.$$
\end{corollary}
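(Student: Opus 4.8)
The plan is to read the estimate off directly from the identity \eqref{dhc2} together with Proposition \ref{P-est-epsilonc}. Write, for brevity, $P:=P^{\epsilon,\cc}_\B$, and note first that every operator entering the argument is bounded on $L^2(\X)$: indeed $H^{\epsilon,\cc}_\B=\Op^A(h_\B)$ with $h_\B\in S^{-\infty}(\Xi)_\Gamma$, the operator $P$ is an orthogonal projection, and $\mathfrak{H}^{\epsilon,\cc}_\B$ is regularizing, so all the algebraic manipulations below are legitimate. By \eqref{dhc2},
\[
\mathfrak{H}^{\epsilon,\cc}_{\mathfrak{B}}-H^{\epsilon,\cc}_\B\,=\,\big(P\,H^{\epsilon,\cc}_\B\,P-H^{\epsilon,\cc}_\B\big)\,+\,P\,\Op^A(h_\bot)\,P\,,
\]
and I would bound the two summands separately by $C\epsilon$ in $\mathbb{B}(L^2(\X))$.

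For the second summand I would use $\Op^A(h_\bot)=H^{\epsilon,\cc}_\bot=H^{\epsilon,\cc}_0+H^{\epsilon,\cc}_\infty$ together with $\|P\|_{\mathbb{B}(L^2(\X))}\le 1$, so that $\big\|P\,\Op^A(h_\bot)\,P\big\|_{\mathbb{B}(L^2(\X))}\le\big\|H^{\epsilon,\cc}_0P\big\|_{\mathbb{B}(L^2(\X))}+\big\|H^{\epsilon,\cc}_\infty P\big\|_{\mathbb{B}(L^2(\X))}$, and both terms on the right are $\le C\epsilon$ by Proposition \ref{P-est-epsilonc}. For the first summand I would use the elementary identity
\[
P\,H^{\epsilon,\cc}_\B\,P-H^{\epsilon,\cc}_\B\,=\,-\,P\,H^{\epsilon,\cc}_\B\,(\bb1-P)\,-\,(\bb1-P)\,H^{\epsilon,\cc}_\B\,,
\]
combined with $\bb1-P=P^{\epsilon,\cc}_0\oplus P^{\epsilon,\cc}_\infty$ from \eqref{R-p-epsilonc}. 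Then $\big\|H^{\epsilon,\cc}_\B(\bb1-P)\big\|_{\mathbb{B}(L^2(\X))}\le\big\|H^{\epsilon,\cc}_\B P^{\epsilon,\cc}_0\big\|_{\mathbb{B}(L^2(\X))}+\big\|H^{\epsilon,\cc}_\B P^{\epsilon,\cc}_\infty\big\|_{\mathbb{B}(L^2(\X))}\le C\epsilon$ directly by Proposition \ref{P-est-epsilonc}, while for $\big\|(\bb1-P)H^{\epsilon,\cc}_\B\big\|_{\mathbb{B}(L^2(\X))}$ one passes to adjoints: since $H^{\epsilon,\cc}_\B$ and $P^{\epsilon,\cc}_a$ for $a\in\{0,\infty\}$ are self-adjoint, $\big\|P^{\epsilon,\cc}_a H^{\epsilon,\cc}_\B\big\|_{\mathbb{B}(L^2(\X))}=\big\|H^{\epsilon,\cc}_\B P^{\epsilon,\cc}_a\big\|_{\mathbb{B}(L^2(\X))}\le C\epsilon$. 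Collecting the two bounds yields $\|\mathfrak{H}^{\epsilon,\cc}_{\mathfrak{B}}-H^{\epsilon,\cc}_\B\|_{\mathbb{B}(L^2(\X))}\le C\epsilon$ after renaming the constant.

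There is essentially no obstacle here: the corollary is a formal consequence of \eqref{dhc2} and Proposition \ref{P-est-epsilonc}. The only point that deserves a line of justification is that all the operators above are bounded (so that the factorizations and the passage to adjoints are unambiguous), and this is guaranteed by the $S^{-\infty}$-regularity of $h_\B$ and of the symbols of $P^{\epsilon,\cc}_\B$ and $\mathfrak{H}^{\epsilon,\cc}_\B$ recalled just before the statement.
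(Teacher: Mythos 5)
Your argument is correct and is exactly the paper's route: the authors also obtain the corollary by applying the estimates of Proposition \ref{P-est-epsilonc} to the identity \eqref{dhc2}, and your write-up merely spells out the details (the splitting of $P\,H^{\epsilon,\cc}_\B\,P-H^{\epsilon,\cc}_\B$, the use of $\bb1-P^{\epsilon,\cc}_\B=P^{\epsilon,\cc}_0\oplus P^{\epsilon,\cc}_\infty$, and the passage to adjoints) that the paper leaves implicit. No gap to report.
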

\begin{proof}
    It is a direct consequence of applying the estimates of Proposition \ref{P-est-epsilonc} to the identity \eqref{dhc2}. 
\end{proof}

\subsection{Reduction to the subspace ${P^{\epsilon,\cc}_\B\,L^2(\X)}$}

\subsubsection{Schur complement and reduction to a quasi-invariant subspace.}\label{A-SchurC}

We shall consider the following abstract setting already introduced and used in \cite{CHP-2,CHP-4}.

\begin{hypothesis}\label{H-II}
	In a separable complex Hilbert space $\mathcal{H}$ we consider a family of pairs $(H_\kappa,P_\kappa)$ indexed by $\kappa\in[0,\kappa_0]$ for some $\kappa_0>0$, where $H_\kappa:\mathcal{D}(H_\kappa)\rightarrow\mathcal{H}$ is a lower-semibounded self-adjoint operator and $\Pi_\kappa=\Pi_\kappa^*=\Pi_\kappa^2$ is an orthogonal projection, such that, for  $\Pi_\kappa^\bot:=\bb1-\Pi_\kappa$, we have the properties:
	\begin{enumerate}
		\item there exists $C>0$ such that for any $\kappa\in[0,\kappa_0]$  we have that  $\Pi_\kappa\mathcal{H}\subset\mathcal{D}(H_\kappa)$ and $ \|\Pi_\kappa^\bot H_\kappa\Pi_\kappa \|_{\mathbb B(\mathcal{H})}\, \leq\,C\,\kappa$; 
		\item there exists an interval $J\subset\mathbb{R}$ with non-void interior, such that for any $\kappa\in[0,\kappa_0]$ and any $t\in J$ the operator $\Pi_\kappa^\bot H_\kappa\Pi_\kappa^\bot\,-\,t\Pi_\kappa^\bot$ is invertible  as operator in $\Pi_\kappa^\bot\mathcal{H}$ with the inverse being uniformly bounded on $J$.
	\end{enumerate}
\end{hypothesis} 

	We notice that under Hypothesis \ref{H-II} we have the identity:
	\beq\nonumber 
	\Pi_\kappa H_\kappa R^\bot_\kappa(t)H_\kappa\Pi_\kappa=\Pi_\kappa H_\kappa \Pi_\kappa^\bot R^\bot_\kappa(t)\Pi_\kappa^\bot H\Pi_\kappa\,\in\,\mathbb{B}(\mathcal{H})
	\eeq 
	and the estimate (for some $C>0$):
	\beq\label{F-est-FS}
	\big\|\Pi_\kappa H_\kappa R^\bot_\kappa(t)H_\kappa\Pi_\kappa\big\|_{\mathbb{B}(\mathcal{H})}\,\leq\,C \kappa^2\,\big\|R^\bot_\kappa(t)\big\|_{\mathbb{B}(\mathcal{H})}.
	\eeq

A simple algebraic computation allows us to prove the following statement about the Schur complement  (\cite{CH}).

\begin{proposition}\label{P-SchurC} 
	Under Hypothesis \ref{H-II} we have that:
	\begin{itemize}
		\item $t\in J\cap\sigma(H)$ if and only if $t\in J\cap\sigma\big(\Pi_\kappa H_\kappa\Pi_\kappa\,-\,\Pi_\kappa HR^\bot_\kappa(t)H_\kappa\Pi_\kappa\big)$ \,;
		\item if we denote by $R^\sim_\kappa(t):=\big(\Pi_\kappa(H_\kappa-t)\Pi_\kappa\,-\,\Pi_\kappa H_\kappa R^\bot_\kappa(t)H_\kappa\Pi_\kappa\big)^{-1}$ as operator in $\Pi_\kappa\mathcal{H}$, we have the identity
		\vspace*{-1cm}
		
		\beq\begin{split}\nonumber
			\big(H_\kappa-t\bb1_{\mathcal{H}}\big)^{-1}&=\left(\begin{array}{cc}
				\Pi_\kappa(H_\kappa-t)\Pi_\kappa & \Pi_\kappa H_\kappa\Pi_\kappa^\bot \\
				\Pi_\kappa^\bot H_\kappa\Pi_\kappa & \Pi_\kappa^\bot(H_\kappa-t)\Pi_\kappa^\bot
			\end{array}\right)^{-1}\\ &=\  
			\left(\begin{array}{cc}
				R^\sim_\kappa(t) & -R^\sim_\kappa(t)H_\kappa R^\bot_\kappa(t) \\
				-R^\bot_\kappa(t) HR^\sim_\kappa(t) & R^\bot_\kappa(t)+ R^\bot_\kappa(t)H_\kappa R^\sim_\kappa(t) H_\kappa R^\bot_\kappa(t)
			\end{array}\right).
		\end{split}\eeq 
	\end{itemize}
\end{proposition}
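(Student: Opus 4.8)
The plan is to reduce the statement to the purely algebraic Feshbach--Schur block-inversion identity, carried out with respect to the orthogonal splitting $\mathcal{H}=\Pi_\kappa\mathcal{H}\oplus\Pi_\kappa^\bot\mathcal{H}$, with $\kappa\in[0,\kappa_0]$ and $t\in J$ fixed throughout. The preliminary step is the domain bookkeeping that legitimizes the algebra. Since $\Pi_\kappa\mathcal{H}\subset\mathcal{D}(H_\kappa)$, the operator $H_\kappa\Pi_\kappa$ is everywhere defined and closed, hence bounded by the closed graph theorem; therefore $\Pi_\kappa H_\kappa\Pi_\kappa$ and $\Pi_\kappa^\bot H_\kappa\Pi_\kappa$ are bounded, and taking adjoints (using the self-adjointness of $H_\kappa$) so is $\Pi_\kappa H_\kappa\Pi_\kappa^\bot$. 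Moreover $\mathcal{D}(H_\kappa)=\Pi_\kappa\mathcal{H}\oplus\big(\Pi_\kappa^\bot\mathcal{D}(H_\kappa)\big)$, so with respect to the splitting the operator $H_\kappa-t\bb1$ is faithfully represented by the block operator
\[
\begin{pmatrix} A & B \\ C & D\end{pmatrix},\qquad A:=\Pi_\kappa(H_\kappa-t)\Pi_\kappa,\ \ B:=\Pi_\kappa H_\kappa\Pi_\kappa^\bot,\ \ C:=\Pi_\kappa^\bot H_\kappa\Pi_\kappa,\ \ D:=\Pi_\kappa^\bot(H_\kappa-t)\Pi_\kappa^\bot,
\]
where $A,B,C$ are bounded, and $D$, regarded as an operator in $\Pi_\kappa^\bot\mathcal{H}$, is invertible with bounded inverse $R^\bot_\kappa(t)$ by the second item of Hypothesis \ref{H-II}; when $R^\bot_\kappa(t)$ is viewed inside $\mathcal{H}$ one has $R^\bot_\kappa(t)=\Pi_\kappa^\bot R^\bot_\kappa(t)\Pi_\kappa^\bot$.

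Next I would invoke the factorization valid whenever the $(2,2)$-block is invertible,
\[
\begin{pmatrix} A & B \\ C & D\end{pmatrix}=\begin{pmatrix}\bb1 & B R^\bot_\kappa(t) \\ 0 & \bb1\end{pmatrix}\begin{pmatrix} S & 0 \\ 0 & D\end{pmatrix}\begin{pmatrix}\bb1 & 0 \\ R^\bot_\kappa(t) C & \bb1\end{pmatrix},\qquad S:=A-B R^\bot_\kappa(t) C,
\]
whose outer factors are bounded and unipotent, hence boundedly invertible and domain-preserving. Consequently $H_\kappa-t\bb1$ is invertible in $\mathcal{H}$ if and only if the Schur complement $S$ is invertible in $\Pi_\kappa\mathcal{H}$. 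Here $B R^\bot_\kappa(t) C=\Pi_\kappa H_\kappa\Pi_\kappa^\bot R^\bot_\kappa(t)\Pi_\kappa^\bot H_\kappa\Pi_\kappa=\Pi_\kappa H_\kappa R^\bot_\kappa(t)H_\kappa\Pi_\kappa$ (this is precisely the identity stated just before the Proposition, obtained by inserting $\Pi_\kappa^\bot R^\bot_\kappa(t)\Pi_\kappa^\bot=R^\bot_\kappa(t)$), so $S=\Pi_\kappa(H_\kappa-t)\Pi_\kappa-\Pi_\kappa H_\kappa R^\bot_\kappa(t)H_\kappa\Pi_\kappa$ and its inverse, when it exists, is by definition $R^\sim_\kappa(t)$. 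Inverting the three-factor product and multiplying the resulting bounded block matrices out (using $\Pi_\kappa^\bot R^\bot_\kappa(t)=R^\bot_\kappa(t)=R^\bot_\kappa(t)\Pi_\kappa^\bot$ and $R^\sim_\kappa(t)\Pi_\kappa=R^\sim_\kappa(t)=\Pi_\kappa R^\sim_\kappa(t)$ to re-absorb the projections) gives exactly the claimed four-entry formula for $\big(H_\kappa-t\bb1\big)^{-1}$.

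For the first bullet I would then observe that, for $t\in J$, having $t\in\sigma(H_\kappa)$ means that $H_\kappa-t\bb1$ is not boundedly invertible on $\mathcal{H}$, which by the above is equivalent to $S$ not being invertible in $\Pi_\kappa\mathcal{H}$, i.e. to $t\in\sigma\big(\Pi_\kappa H_\kappa\Pi_\kappa-\Pi_\kappa H_\kappa R^\bot_\kappa(t)H_\kappa\Pi_\kappa\big)$; intersecting with $J$ yields the stated equivalence. (Self-adjointness of $H_\kappa$, which gives $\sigma(H_\kappa)\subset\mathbb{R}$ and makes $S$ self-adjoint for real $t$, is convenient but not strictly needed for the statement itself.)

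There is no genuine obstacle here --- the content is an identity --- but the point requiring care is precisely the unboundedness of $H_\kappa$: one must check once and for all that each factor and each product in the block factorization is a bona fide bounded operator and that the factorization is an operator identity between the correct domains, which is exactly why the first item of Hypothesis \ref{H-II} is formulated as a boundedness statement (equivalently, as $\Pi_\kappa\mathcal{H}\subset\mathcal{D}(H_\kappa)$ together with closedness) rather than as a mere inclusion of ranges. Once this is in place everything is mechanical; in particular the estimate \eqref{F-est-FS} for $\big\|\Pi_\kappa H_\kappa R^\bot_\kappa(t)H_\kappa\Pi_\kappa\big\|_{\mathbb{B}(\mathcal{H})}$ that is needed afterwards follows at once by applying the first item of Hypothesis \ref{H-II} twice and using the uniform bound on $R^\bot_\kappa(t)$ over $J$.
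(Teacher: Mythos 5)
Your proposal is correct and follows essentially the same route as the paper, which dispatches this proposition as ``a simple algebraic computation'' on the Schur complement (citing \cite{CH}) with respect to the splitting $\mathcal{H}=\Pi_\kappa\mathcal{H}\oplus\Pi_\kappa^\bot\mathcal{H}$: your triangular factorization, the identification $BR^\bot_\kappa(t)C=\Pi_\kappa H_\kappa R^\bot_\kappa(t)H_\kappa\Pi_\kappa$, and the resulting block inverse are exactly the intended argument. The additional domain bookkeeping you supply (boundedness of $H_\kappa\Pi_\kappa$ via the closed graph theorem and the decomposition $\mathcal{D}(H_\kappa)=\Pi_\kappa\mathcal{H}\oplus\Pi_\kappa^\bot\mathcal{D}(H_\kappa)$) is a correct and welcome elaboration of what the paper leaves implicit in item~1 of Hypothesis \ref{H-II}.
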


\begin{corollary}\label{C-SchurC} 
	Under  Hypothesis \ref{H-II}, the operator $\ham_{\kappa}:=\Pi_{\kappa}H_\kappa\Pi_{\kappa}$ defines  a bounded self-adjoint operator acting in $\Pi_{\kappa}\mathcal{H}$ and there exists $C>0$ such that, for all $\kappa\in [0,\kappa_0]$, 
	$$
	\max\left\{\underset{\lambda\in J\cap\sigma(H_\kappa)}{\sup}\dist\Big(\lambda\,,\,\sigma\big(\ham_{\kappa}\big)\Big)\ ,\ \underset{\lambda\in J\cap\sigma([H_{\kappa}])}{\sup}\dist\Big(\lambda\,,\,\sigma(H_\kappa)\Big)\right\}\,\leq\,C\,\kappa^2\,\big\|R^\bot_\kappa(t)\big\|_{\mathbb{B}(\mathcal{H})}.
	$$
\end{corollary}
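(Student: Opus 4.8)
The plan is to read the corollary off from the Schur complement identity of Proposition~\ref{P-SchurC} together with the quadratic bound \eqref{F-est-FS}; no new analysis is needed, so the argument is essentially bookkeeping. (In the second entry of the $\max$ the symbol $[H_\kappa]$ is to be understood as the effective Hamiltonian $\ham_\kappa$.)

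First I would dispose of the boundedness and self-adjointness of $\ham_\kappa=\Pi_\kappa H_\kappa\Pi_\kappa$ on the Hilbert space $\Pi_\kappa\mathcal{H}$. By Hypothesis~\ref{H-II}(1) we have $\Pi_\kappa\mathcal{H}\subset\mathcal{D}(H_\kappa)$, so $H_\kappa\Pi_\kappa$ is everywhere defined on $\mathcal{H}$ and closed, hence bounded by the closed graph theorem, and therefore $\ham_\kappa$ is bounded; moreover for $u,v\in\Pi_\kappa\mathcal{H}\subset\mathcal{D}(H_\kappa)$ the self-adjointness of $H_\kappa$ gives $(\ham_\kappa u,v)=(H_\kappa\Pi_\kappa u,\Pi_\kappa v)=(\Pi_\kappa u,H_\kappa\Pi_\kappa v)=(u,\ham_\kappa v)$, so $\ham_\kappa$ is a bounded, everywhere defined symmetric operator on $\Pi_\kappa\mathcal{H}$ and thus self-adjoint.

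For the spectral estimate I would fix $\lambda\in J$ and abbreviate $S_\kappa(\lambda):=\Pi_\kappa H_\kappa R^\bot_\kappa(\lambda)H_\kappa\Pi_\kappa$ and $M_\kappa(\lambda):=\ham_\kappa-\lambda\Pi_\kappa-S_\kappa(\lambda)$, both regarded as operators in $\Pi_\kappa\mathcal{H}$ (self-adjoint for real $\lambda$, since $R^\bot_\kappa(\lambda)$ then is). By \eqref{F-est-FS} there is a constant $C_0>0$, uniform in $\kappa\in[0,\kappa_0]$ and $\lambda\in J$, with $\|S_\kappa(\lambda)\|\le C_0\kappa^2\|R^\bot_\kappa(\lambda)\|$, and $\sup_{\kappa,\,\lambda\in J}\|R^\bot_\kappa(\lambda)\|<\infty$ by Hypothesis~\ref{H-II}(2). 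Proposition~\ref{P-SchurC} says that for $\lambda\in J$ one has $\lambda\in\sigma(H_\kappa)$ if and only if $M_\kappa(\lambda)$ is not invertible in $\Pi_\kappa\mathcal{H}$, and that when it is invertible $M_\kappa(\lambda)^{-1}$ is the $(1,1)$-block of $(H_\kappa-\lambda)^{-1}$, so $\|M_\kappa(\lambda)^{-1}\|\le\dist(\lambda,\sigma(H_\kappa))^{-1}$ for $\lambda\in J\setminus\sigma(H_\kappa)$. Fixing $C:=2C_0$, the two required inclusions then follow by the same contradiction argument (for $\kappa>0$; the case $\kappa=0$ is immediate since then $S_0(\lambda)=0$ and $M_0(\lambda)=\ham_0-\lambda\Pi_0$). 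If $\lambda\in J\cap\sigma(H_\kappa)$ but $\dist(\lambda,\sigma(\ham_\kappa))>C\kappa^2\|R^\bot_\kappa(\lambda)\|$, then $\|(\ham_\kappa-\lambda\Pi_\kappa)^{-1}S_\kappa(\lambda)\|\le C_0/C=1/2$, hence $M_\kappa(\lambda)=(\ham_\kappa-\lambda\Pi_\kappa)\big(\bb1-(\ham_\kappa-\lambda\Pi_\kappa)^{-1}S_\kappa(\lambda)\big)$ is invertible by Neumann series, forcing $\lambda\notin\sigma(H_\kappa)$, a contradiction. Symmetrically, if $\lambda\in J\cap\sigma(\ham_\kappa)$ but $\dist(\lambda,\sigma(H_\kappa))>C\kappa^2\|R^\bot_\kappa(\lambda)\|$, then $\lambda\notin\sigma(H_\kappa)$, so $\|M_\kappa(\lambda)^{-1}S_\kappa(\lambda)\|\le 1/2$ and $\ham_\kappa-\lambda\Pi_\kappa=M_\kappa(\lambda)\big(\bb1+M_\kappa(\lambda)^{-1}S_\kappa(\lambda)\big)$ is invertible, contradicting $\lambda\in\sigma(\ham_\kappa)$. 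Taking suprema over the respective $\lambda$'s yields the claimed estimate, and absorbing $\sup_{\kappa,\lambda}\|R^\bot_\kappa(\lambda)\|$ into the constant gives the cleaner bound $C\kappa^2$.

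I do not expect a genuine obstacle, since the real content sits in Proposition~\ref{P-SchurC} and \eqref{F-est-FS}, which are already available. The only mildly delicate points are the Hellinger--Toeplitz step, which relies on $\Pi_\kappa\mathcal{H}\subset\mathcal{D}(H_\kappa)$, and the cosmetic circularity that $\|R^\bot_\kappa(\lambda)\|$ appears on both sides of the target inequality — harmless, because Hypothesis~\ref{H-II}(2) makes it uniformly bounded over $\lambda\in J$ and $\kappa\in[0,\kappa_0]$; one should also note that $\lambda$ ranges only over $J$ throughout, exactly the set on which Proposition~\ref{P-SchurC} and the uniform resolvent bound hold, so no passage to an interior subinterval is required.
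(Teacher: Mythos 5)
Your argument is correct and is essentially the route the paper intends: Corollary \ref{C-SchurC} is stated there without a separate proof, as an immediate consequence of Proposition \ref{P-SchurC} and the estimate \eqref{F-est-FS}, and your Neumann-series/contradiction bookkeeping (together with the closed-graph argument giving boundedness and self-adjointness of $\ham_\kappa$ on $\Pi_\kappa\mathcal{H}$) is the standard way to fill in exactly that gap. Your reading of $[H_\kappa]$ as $\ham_\kappa$ in the second supremum and your use of the uniform bound on $\big\|R^\bot_\kappa(\lambda)\big\|$ over $\lambda\in J$ from Hypothesis \ref{H-II}(2) also match the paper's intent.
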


\subsubsection{Verification of Hypothesis \ref{H-II}. End of the proof of Theorem \ref{T-I}}

In the context of Paragraph \ref{A-SchurC}, let us take $\mathcal{H}=L^2(\X)$ and look at the magnetic field in \eqref{Hyp-magnField} by fixing any value of $\cc\in[0,1]$ and considering $B^{\epsilon,\cc}=\epsilon\big(B^\bullet\,+\,\cc\,B^\epsilon\big)$ as depending only on the  parameter $\epsilon\in[0,\epsilon_0]$. Then we shall take $\kappa=\epsilon$ and $\Pi_\kappa=P^{\epsilon,\cc}_\B$ and $H_\kappa=H^{\epsilon,\cc}$ for any fixed value of $\cc\in[0,1]$. 

\paragraph{Verification of condition 1.} We notice that:
\beq\nonumber 
P^{\epsilon,\cc}_\B\mathcal{H}\,=-\,(2\pi i)^{-1}\int_{\mathscr{C}_\bot}d\zz \, \big(H^{\epsilon,\cc}_\bot-\zz\bb1\big)^{-1}\,\mathcal{H}\,\subset\,\mathcal{D}\big(H^{\epsilon,\cc}_\bot\big)\,=\,\mathcal{D}\big(H^{\epsilon,\cc}\big).
\eeq
The equality {between the domains} is implied by the fact that $H^{\epsilon,\cc}=H^{\epsilon,\cc}_\bot+H^{\epsilon,\cc}_\B$ with  $H^{\epsilon,\cc}_\B\in\mathbb{B}(\mathcal{H})$.
 Moreover, this last equality and Proposition \ref{P-est-epsilonc} imply that there exist $C>0$ and $\epsilon_0>0$ such that for any $(\epsilon,\cc)\in[0,\epsilon_0]\times[0,1]$ we have the following estimates:
 \begin{equation}\label{F-PH-epsc}
\begin{aligned}
	\|[P^{\epsilon,\cc}_\B]^\bot H^{\epsilon,\cc}P^{\epsilon,\cc}_\B \|_{\mathbb{B}(L^2(\X))}&\leq\ \|P^{\epsilon,\cc}_0 H^{\epsilon,\cc}P^{\epsilon,\cc}_\B \|_{\mathbb{B}(L^2(\X))}+\|P^{\epsilon,\cc}_\infty H^{\epsilon,\cc}P^{\epsilon,\cc}_\B \|_{\mathbb{B}(L^2(\X))}\\
	&\hspace*{-3cm}\leq\ \ \|P^{\epsilon,\cc}_0 H^{\epsilon,\cc}_0P^{\epsilon, \cc }_\B \|_{\mathbb{B}(L^2(\X))}+\|P^{\epsilon,\cc}_0 H^{\epsilon,\cc}_\B P^{\epsilon}_\B \|_{\mathbb{B}(L^2(\X))}+\|P^{\epsilon,\cc}_0 H^{\epsilon,\cc}_\infty P^{\epsilon,\cc}_\B \|_{\mathbb{B}(L^2(\X))}\\
	&\hspace*{-3cm}\ \ +\|P^{\epsilon,\cc}_\infty H^{\epsilon}_0P^{\epsilon,\cc}_\B \|_{\mathbb{B}(L^2(\X))}+\|P^{\epsilon,\cc}_\infty H^{\epsilon,\cc}_\B P^{\epsilon,\cc}_\B \|_{\mathbb{B}(L^2(\X))}+\|P^{\epsilon,\cc}_\infty H^{\epsilon,\cc}_\infty P^{\epsilon,\cc}_\B \|_{\mathbb{B}(L^2(\X))}\ \leq\\ 
 &\,\leq\,C\epsilon.
\end{aligned}
\end{equation}

\paragraph{Verification of condition 2.} 

Let us consider the product $P^{\epsilon,\cc}_\bot H^{\epsilon,\cc}P^{\epsilon,\cc}_\bot$. Using Remark \ref{R-p-symb} we notice that $H_\bot=P_\bot HP_\bot$ has two spectral gaps: $[0,E_0]$ and $[E_-,E_+]$, with $0<E_0\leq E_-<E_+<\infty$. We shall
prove that condition 2 in Hypothesis \ref{H-II} is verified for some interval $J\subset[E_-,E_+]$ with non-void interior and the  pair $\big(H^{\epsilon,\cc}\,,\,P^{\epsilon,\cc}_\B\big)$ in $\mathcal{H}=L^2(\X)$ where we consider, as explained above, any fixed value of $\cc\in[0,1]$ and take $\kappa=\epsilon$. Then we can write the equalities:
\begin{align}\nonumber
	P^{\epsilon,\cc}_\bot H^{\epsilon,\cc}P^{\epsilon,\cc}_\bot&=\big(P^{\epsilon,\cc}_0+P^{\epsilon,\cc}_\infty\big)\big(H^{\epsilon,\cc}_0+H^{\epsilon,\cc}_\B+H^{\epsilon,\cc}_\infty\big) \big(P^{\epsilon}_0+P^{\epsilon}_\infty\big)\\ \label{F-c2}
	&=\big(P^{\epsilon,\cc}_0 H^{\epsilon,\cc}_0 P^{\epsilon,\cc}_0\big)+\big(P^{\epsilon,\cc}_\infty H^{\epsilon,\cc}_\infty P^{\epsilon,\cc}_\infty\big)\\ \nonumber
	&\hspace*{12pt}+P^{\epsilon,\cc}_0\big(H^{\epsilon,\cc}_\B+H^{\epsilon,\cc}_\infty\big) \big(P^{\epsilon,\cc}_0+P^{\epsilon,\cc}_\infty\big)+P^{\epsilon,\cc}_\infty\big(H^{\epsilon,\cc}_0+H^{\epsilon,\cc}_\B\big) \big(P^{\epsilon,\cc}_0+P^{\epsilon,\cc}_\infty\big)\\ \nonumber
	&\hspace*{12pt}+\big(P^{\epsilon,\cc}_0 H^{\epsilon,\cc}_0 P^{\epsilon,\cc}_\infty\big)+\big(P^{\epsilon,\cc}_\infty H^{\epsilon,\cc}_\infty P^{\epsilon,\cc}_0\big)
\end{align}
and the same arguments using Proposition \ref{P-est-epsilonc} imply that there exist $C>0$ and $\epsilon_0>0$ such that for any $(\epsilon,\cc)\in[0,\epsilon_0]\times[0,1]$ we have the following estimate :
\begin{align*}
	\big\|P^{\epsilon,\cc}_0\big(H^{\epsilon,\cc}_\B+H^{\epsilon,\cc}_\infty\big) \big(P^{\epsilon,\cc}_0+P^{\epsilon,\cc}_\infty\big)&+P^{\epsilon,\cc}_\infty\big(H^{\epsilon,\cc}_0+H^{\epsilon,\cc}_\B\big) \big(P^{\epsilon,\cc}_0+P^{\epsilon,\cc}_\infty\big)+\\
	&+\big(P^{\epsilon,\cc}_0 H^{\epsilon,\cc}_0 P^{\epsilon,\cc}_\infty\big)+\big(P^{\epsilon,\cc}_\infty H^{\epsilon,\cc}_\infty P^{\epsilon,\cc}_0\big)\big\|_{\mathbb{B}(L^2(\X))}\leq\,C\epsilon.
\end{align*}
Concerning the first two terms in the last line  of \eqref{F-c2} we recall from \eqref{F-A} that for $H^{\epsilon,\cc}$ as in \eqref{DF-Heps-c} the total magnetic field is in fact (see \eqref{F-A} and \eqref{F-Ab}) of the form $B=B^\circ+B^{\epsilon,\cc}=dA$ with $A=A^\circ+\epsilon\,A^\bullet+\cc\epsilon\,A^\epsilon$ and we notice that (the bar on top of the operators in the formulas below  means closure):
\beq\nonumber \begin{split}
	\big(P^{\epsilon,\cc}_0 H^{\epsilon,\cc}_0 P^{\epsilon,\cc}_0\big)+\big(P^{\epsilon,\cc}_\infty H^{\epsilon,\cc}_\infty P^{\epsilon,\cc}_\infty\big)&=\Op^{A}\big(\p^{\epsilon,\cc}_0\sharp^{B}h_0\sharp^{B}\p^{\epsilon,\cc}_0\big)\,+\,\overline{\mathfrak{Op}^{A}\big(\p^{\epsilon,\cc}_\infty\sharp^{B}h_\infty\sharp^{B}\p^{\epsilon,\cc}_\infty\big)}\\
	&=\Op^{A}\big(\p_0\sharp^{B^\circ} h_0\sharp^{B^\circ}\p_0\big)\,+\,\overline{\mathfrak{Op}^{A}\big(\p_\infty\sharp^{B^\circ} h_\infty\sharp^{B^\circ}\p_\infty\big)}\,+\,\mathcal{O}(\epsilon)\\
	&=\Op^{A}\big(h_0\big)\,+\,\overline{\mathfrak{Op}^{A}\big(h_\infty\big)}\,+\,\mathcal{O}(\epsilon)=\overline{\mathfrak{Op}^{A}\big(h_\bot\big)}\,+\,\mathcal{O}(\epsilon)
\end{split}\eeq
in order to conclude that:
\beq\label{F-dif-h-bot}
P^{\epsilon,\cc}_\bot H^{\epsilon,\cc}P^{\epsilon,\cc}_\bot\,=\,H^{\epsilon,\cc}_\bot\,+\,\mathcal{O}(\epsilon)
\eeq
where $\mathcal{O}(\epsilon)$ is  a bounded self-adjoint operator with norm of order $\epsilon$.

The arguments in Subsection \ref{SS-free-dyn} imply that: $\sigma\big(H_\bot\big)\subset\{0\}\cup[E_0,E_- ]\cup [ E_+,+\infty)$. Thus we can use {the last inclusion listed in} \eqref{F-m-sp-est} in order to conclude that there exists $\delta_0>0$ such that for any $\delta\in(0,\delta_0]$ there exist constants $\epsilon_0(\delta)>0$ and $C>0$ such that for any pair $(\epsilon,\cc)\in[0,\epsilon_0]\times[0,1]$:
$$
 J:=\big(E_-+\delta\,,\,E_+-\delta\big)\subset\,\mathbb{R}\setminus\sigma\big(H^{\epsilon,\cc}_\bot\big).
$$ 
Thus we have proved the following statement:
\begin{proposition}\label{P-red-magn-Hamilt-0}
Let $H^\circ$ satisfying Hypothesis \ref{H-I} and \ref{H-0} and a perturbation by a magnetic field satisfying Hypothesis \ref{Hyp-magnField}. With $J_{\B}:=[E_-,E_+]\subset\R$ as in Hypothesis \ref{H-I}, there exists $\delta_0>0$  such that for any $\delta\in(0,\delta_0]$ the interval $J^{\delta}_\B:=\big(E_-+\delta\,,\,E_+-\delta\big)$ is not void and there exists a constant $\epsilon_0(\delta)>0$ such that for any pair $(\epsilon,\cc)\in[0,\epsilon_0]\times[0,1]$ and for any $t\in J^{\delta}_\B$ the operator   $(\bb1-P^{\epsilon,\cc}_\B)\big(H^{\epsilon,\cc}_\bot\,-\,t\bb1\big)(\bb1-P^{\epsilon,\cc}_\B)$ is invertible as a self-adjoint operator in  $(\bb1-P^{\epsilon,\cc}_\B)L^2(\X)$. This  inverse is denoted by $[R^\bot_{\epsilon,\cc}(t)]$ and is uniformly bounded on $J^{\delta}_\B$. 
\end{proposition}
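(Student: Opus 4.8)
The plan is to exploit the fact that, by Definition~\ref{D-m-vers-B}, $P^{\epsilon,\cc}_\B$ is a genuine spectral (Riesz) projection of the self-adjoint operator $H^{\epsilon,\cc}_\bot$ — the one attached to the contour $\mathscr{C}_\B$, i.e.\ the circle centred at $0\in\mathbb{C}$ of radius $r_0=E_0/2$ from \eqref{F-R-PB}. Consequently $P^{\epsilon,\cc}_\B$ commutes with $H^{\epsilon,\cc}_\bot$, the closed subspace $(\bb1-P^{\epsilon,\cc}_\B)L^2(\X)$ reduces $H^{\epsilon,\cc}_\bot$, and on that reducing subspace the sandwiched operator $(\bb1-P^{\epsilon,\cc}_\B)\big(H^{\epsilon,\cc}_\bot-t\bb1\big)(\bb1-P^{\epsilon,\cc}_\B)$ is just the restriction of $H^{\epsilon,\cc}_\bot-t\bb1$. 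So the whole claim reduces to locating $\sigma\big(H^{\epsilon,\cc}_\bot\big|_{(\bb1-P^{\epsilon,\cc}_\B)L^2(\X)}\big)$ and checking that $J^\delta_\B$ avoids it with a margin uniform in $(\epsilon,\cc)$.

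First I would fix $\delta_0:=\min\{E_0/8,\text{\tt d}_0/3\}$ (any value below $\min\{E_0/4,\text{\tt d}_0/2\}$ works): the bound $\delta<E_0/4$ will keep the component $[-\delta,\delta]$ of the spectral inclusion \eqref{F-m-sp-est} strictly inside the disc bounded by $\mathscr{C}_\B$ while $[E_0-\delta,E_-+\delta]\cup[E_+-\delta,+\infty)$ stays strictly outside it (since $E_0-\delta>3E_0/4>r_0$), and $\delta<\text{\tt d}_0/2$ will keep $J^\delta_\B=(E_-+\delta,E_+-\delta)$ non-void. Then, for $\delta\in(0,\delta_0]$, I would invoke the uniform inclusion $\sigma\big(H^{\epsilon,\cc}_\bot\big)\subset[-\delta,\delta]\cup[E_0-\delta,E_-+\delta]\cup[E_+-\delta,+\infty)$ from \eqref{F-m-sp-est}, valid for $(\epsilon,\cc)\in[0,\epsilon_0(\delta)]\times[0,1]$, and split the spectrum according to whether it lies inside or outside $\mathscr{C}_\B$. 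Since $P^{\epsilon,\cc}_\B$ projects onto the enclosed part, the reduced operator has $\sigma\big(H^{\epsilon,\cc}_\bot\big|_{(\bb1-P^{\epsilon,\cc}_\B)L^2(\X)}\big)\subset[E_0-\delta,E_-+\delta]\cup[E_+-\delta,+\infty)$, which is disjoint from $J^\delta_\B$.

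From here the conclusion is immediate: for $t\in J^\delta_\B$ the restricted operator is boundedly invertible in $(\bb1-P^{\epsilon,\cc}_\B)L^2(\X)$ with $[R^\bot_{\epsilon,\cc}(t)]=(\bb1-P^{\epsilon,\cc}_\B)\big(H^{\epsilon,\cc}_\bot-t\bb1\big)^{-1}(\bb1-P^{\epsilon,\cc}_\B)$, and by the spectral theorem its norm does not exceed $\dist\big(t,[E_0-\delta,E_-+\delta]\cup[E_+-\delta,+\infty)\big)^{-1}$. On any compact sub-interval $J\subset J^\delta_\B$ this is bounded by a constant depending only on $J$ and $\delta$, uniform in $(\epsilon,\cc)$ precisely because the inclusion \eqref{F-m-sp-est} is uniform in those parameters — and that is the correct reading of ``uniformly bounded on $J^\delta_\B$'' used in Theorem~\ref{T-I}.

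There is no genuine obstacle here: all the substance sits in the spectral-localization estimate \eqref{F-m-sp-est}, which rests on the spectral stability results of \cite{AMP,CP-1} together with the symbol estimates of Corollary~\ref{C-est-p-symb-pert} and is taken as given, and on the structural observation — built into Definition~\ref{D-m-vers-B} — that $P^{\epsilon,\cc}_\B$ is an honest spectral projection of $H^{\epsilon,\cc}_\bot$ (while it is \emph{not} one for $H^{\epsilon,\cc}$). The only care needed is the bookkeeping of the constants ($\delta<E_0/4$ to keep $[-\delta,\delta]$ inside $\mathscr{C}_\B$, $\delta<\text{\tt d}_0/2$ to keep $J^\delta_\B$ non-void) and the mild point that the uniform resolvent bound holds on compact sub-intervals of $J^\delta_\B$.
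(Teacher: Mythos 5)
Your argument is correct and follows essentially the same route as the paper: the only real inputs are the uniform spectral localization \eqref{F-m-sp-est} for $H^{\epsilon,\cc}_\bot$ (resting on \cite{AMP,CP-1}) together with the observation that $P^{\epsilon,\cc}_\B$ is, by Definition \ref{D-m-vers-B}, a spectral projection of $H^{\epsilon,\cc}_\bot$, so that the compressed operator is just the restriction of $H^{\epsilon,\cc}_\bot-t\bb1$ to the reducing subspace $(\bb1-P^{\epsilon,\cc}_\B)L^2(\X)$. Your explicit bookkeeping of $\delta_0$ (the constraints $\delta<E_0/4$ and $\delta<\text{\tt d}_0/2$) and the remark that the uniform resolvent bound should be read on compact subintervals, as in Theorem \ref{T-I}, merely make explicit what the paper leaves implicit.
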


Putting this result together with \eqref{F-dif-h-bot}, we obtain {that the} Condition 2 in Hypothesis~\ref{H-II} is satisfied, and Theorem \ref{T-I} is then a direct consequence of Proposition \ref{P-SchurC} and Corollary \ref{C-SchurC}.

\subsection{Spectral properties of the subspace ${P^{\epsilon,\cc}_\B\,L^2(\X)}$}

In this subsection we prove a result estimating the distance between the spectral measure of $H^{\epsilon,\cc}$ restricted to the energy window $\big(E_-+{ 2}\delta\,,\,E_+-{2}\delta\big)$ and \textit{the magnetic band projection} $P^{\epsilon,\cc}_\B$, result that we shall need in the proof of our Theorem \ref{C-T-I}¸

\begin{proposition}\label{P-est-Jepsilon}
 Under the hypothesis of Proposition \ref{P-red-magn-Hamilt-0} there exist $\delta_0>0$ so that the interval $$J^{\delta}_\B:=\big(E_-+{ 2}\delta\,,\,E_+-{2}\delta\big)$$ is not empty for any $\delta \in (0,\delta_0]$, and there exist $\epsilon_0 >0$ and $C>0$ such that for any pair $(\epsilon,\cc)\in[0,\epsilon_0]\times[0,1]$ and any test function $\varphi\in\,C^\infty_0(\R)$ with $\supp\varphi\subset J^{\delta}_\B$ we have the following estimation:
	  $$\big\|P^{\epsilon,\cc}_\B\,\varphi\big(H^{\epsilon,\cc}\big)\,-\,\varphi\big(H^{\epsilon,\cc}\big)\big\|_{\mathbb{B}(L^2(\X))}\,\leq\,C\,\epsilon.$$
\end{proposition}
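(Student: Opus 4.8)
The plan is to compare $\varphi(H^{\epsilon,\cc})$ with $\varphi$ of the block-diagonal part of $H^{\epsilon,\cc}$ relative to the decomposition $L^2(\X)=P^{\epsilon,\cc}_\B L^2(\X)\oplus(\bb1-P^{\epsilon,\cc}_\B)L^2(\X)$. Set
$$G^{\epsilon,\cc}\,:=\,P^{\epsilon,\cc}_\B\,H^{\epsilon,\cc}\,P^{\epsilon,\cc}_\B\,+\,(\bb1-P^{\epsilon,\cc}_\B)\,H^{\epsilon,\cc}\,(\bb1-P^{\epsilon,\cc}_\B)\,.$$
By \eqref{F-PH-epsc}, the off-diagonal remainder $H^{\epsilon,\cc}-G^{\epsilon,\cc}=P^{\epsilon,\cc}_\B H^{\epsilon,\cc}(\bb1-P^{\epsilon,\cc}_\B)+(\bb1-P^{\epsilon,\cc}_\B)H^{\epsilon,\cc}P^{\epsilon,\cc}_\B$ is a bounded self-adjoint operator with $\|H^{\epsilon,\cc}-G^{\epsilon,\cc}\|_{\mathbb{B}(L^2(\X))}\le C\epsilon$; in particular $G^{\epsilon,\cc}$ is self-adjoint on $\mathcal{D}(H^{\epsilon,\cc})$, it commutes with $P^{\epsilon,\cc}_\B$, and hence $\varphi(G^{\epsilon,\cc})$ also commutes with $P^{\epsilon,\cc}_\B$ and is block-diagonal along the above decomposition, its restriction to $(\bb1-P^{\epsilon,\cc}_\B)L^2(\X)$ being $\varphi(G^{\epsilon,\cc}_\bot)$, where $G^{\epsilon,\cc}_\bot:=(\bb1-P^{\epsilon,\cc}_\B)H^{\epsilon,\cc}(\bb1-P^{\epsilon,\cc}_\B)$ is regarded as a self-adjoint operator in $(\bb1-P^{\epsilon,\cc}_\B)L^2(\X)$.

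The next step is to show $\varphi(G^{\epsilon,\cc}_\bot)=0$. Here the first point of Theorem~\ref{T-I} is precisely what is needed: for $\epsilon\in[0,\epsilon_0]$ the operator $G^{\epsilon,\cc}_\bot-\lambda$ is boundedly invertible in $(\bb1-P^{\epsilon,\cc}_\B)L^2(\X)$ for every $\lambda\in J^\delta_\B$, with inverse uniformly bounded for $\lambda$ in any compact subinterval of $J^\delta_\B$; since $\supp\varphi$ is a compact subset of the open interval $J^\delta_\B$, this forces $\dist\big(\supp\varphi,\sigma(G^{\epsilon,\cc}_\bot)\big)\ge c>0$ with $c$ independent of $(\epsilon,\cc)$, so the spectral theorem gives $\varphi(G^{\epsilon,\cc}_\bot)=0$. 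As $\varphi(G^{\epsilon,\cc})$ commutes with $P^{\epsilon,\cc}_\B$ and restricts to $\varphi(G^{\epsilon,\cc}_\bot)=0$ on $(\bb1-P^{\epsilon,\cc}_\B)L^2(\X)$, we get $(\bb1-P^{\epsilon,\cc}_\B)\varphi(G^{\epsilon,\cc})=0$.

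It then remains to bound $(\bb1-P^{\epsilon,\cc}_\B)\varphi(H^{\epsilon,\cc})=(\bb1-P^{\epsilon,\cc}_\B)\big[\varphi(H^{\epsilon,\cc})-\varphi(G^{\epsilon,\cc})\big]$ in operator norm, which is the standard Lipschitz property of the smooth functional calculus under a bounded self-adjoint perturbation. Representing $\varphi(T)$ through the Helffer-Sj\"{o}strand formula as in the proof of Proposition~\ref{R-p-symb}, with an almost analytic extension $\widetilde{\varphi}$ of $\varphi$ built as in \eqref{hcd1}, and using the resolvent identity
$$\big(H^{\epsilon,\cc}-\zz\big)^{-1}-\big(G^{\epsilon,\cc}-\zz\big)^{-1}\,=\,-\big(H^{\epsilon,\cc}-\zz\big)^{-1}\big(H^{\epsilon,\cc}-G^{\epsilon,\cc}\big)\big(G^{\epsilon,\cc}-\zz\big)^{-1}$$
together with $\|(H^{\epsilon,\cc}-\zz)^{-1}\|\le|\Im\zz|^{-1}$, $\|(G^{\epsilon,\cc}-\zz)^{-1}\|\le|\Im\zz|^{-1}$ and the decay $|\bar\partial\widetilde{\varphi}(\zz)|\le C_N|\Im\zz|^{N}$ of the extension, one obtains $\|\varphi(H^{\epsilon,\cc})-\varphi(G^{\epsilon,\cc})\|_{\mathbb{B}(L^2(\X))}\le C\epsilon$, with $C$ controlled by finitely many seminorms of $\varphi$ and $\epsilon_0,C$ independent of $\cc$. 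Combining this with $(\bb1-P^{\epsilon,\cc}_\B)\varphi(G^{\epsilon,\cc})=0$ and with $P^{\epsilon,\cc}_\B\varphi(H^{\epsilon,\cc})-\varphi(H^{\epsilon,\cc})=-(\bb1-P^{\epsilon,\cc}_\B)\varphi(H^{\epsilon,\cc})$ finishes the proof.

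The one step that is not entirely routine is $\varphi(G^{\epsilon,\cc}_\bot)=0$: one cannot invoke a spectral gap of $H^{\epsilon,\cc}$ itself inside $J^\delta_\B$ because there is none ($P^{\epsilon,\cc}_\B$ is not a spectral projection of $H^{\epsilon,\cc}$), so the argument has to go through the compressed operator $G^{\epsilon,\cc}_\bot$ and the uniform bounded invertibility of $G^{\epsilon,\cc}_\bot-\lambda$ on $J^\delta_\B$ supplied by Theorem~\ref{T-I}, which itself rests on replacing $G^{\epsilon,\cc}_\bot$ by the genuine spectral object $H^{\epsilon,\cc}_\bot\big|_{(\bb1-P^{\epsilon,\cc}_\B)L^2(\X)}$ up to a bounded $\mathcal{O}(\epsilon)$ error (cf. \eqref{F-dif-h-bot}, \eqref{F-m-sp-est} and Proposition~\ref{P-red-magn-Hamilt-0}); the factor $2\delta$ in $J^\delta_\B=(E_-+2\delta,E_+-2\delta)$ provides exactly the room this needs.
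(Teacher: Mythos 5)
Your argument is correct, and it reaches the estimate by a genuinely different route than the paper. The paper's own proof stays at the level of the magnetic pseudodifferential calculus: it first observes the \emph{exact} unperturbed identity $P_\B\,\varphi(H^\circ)=\varphi(H^\circ)$ (because $E^\circ_h(J)\subset P_\B$ for $J\subset J^\delta_\B$, by the commuting decomposition $H^\circ=H_\bot\oplus H_\B$), writes $\varphi(H^{\epsilon,\cc})$ as a magnetic operator $\Op^{\epsilon,\cc}\big(\varphi^{\epsilon,\cc}(h)\big)$ via Proposition 6.33 in \cite{IMP-2}, and then invokes the $\mathscr{O}(\epsilon)$ stability of these symbols under the magnetic perturbation to transfer the identity up to an error $C\epsilon$. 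You instead work abstractly in $\mathbb{B}\big(L^2(\X)\big)$: you compare $H^{\epsilon,\cc}$ with its block-diagonal part $G^{\epsilon,\cc}$ relative to $P^{\epsilon,\cc}_\B$, use \eqref{F-PH-epsc} to bound the off-diagonal remainder by $C\epsilon$, kill $\varphi(G^{\epsilon,\cc}_\bot)$ through the uniform invertibility of $(\bb1-P^{\epsilon,\cc}_\B)(H^{\epsilon,\cc}-\lambda)(\bb1-P^{\epsilon,\cc}_\B)$ on $J^\delta_\B$ from Theorem \ref{T-I}(1) (there is no circularity, since that theorem is proved before and independently of this proposition), and conclude with the standard Helffer--Sj\"ostrand Lipschitz bound $\|\varphi(H^{\epsilon,\cc})-\varphi(G^{\epsilon,\cc})\|\leq C\epsilon$. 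What each approach buys: the paper's proof is very short but leans on the external symbol-level functional calculus result of \cite{IMP-2}; yours uses only ingredients already established in the paper plus textbook functional calculus, is manifestly uniform in $\cc\in[0,1]$, and makes explicit where the $2\delta$ margin in $J^\delta_\B$ enters (through the resolvent bound for the compressed operator). In both proofs the constant is controlled by finitely many seminorms of $\varphi$, which is the natural reading of the statement; also note that for $\varphi(G^{\epsilon,\cc}_\bot)=0$ you only need $\sigma(G^{\epsilon,\cc}_\bot)\cap\supp\varphi=\emptyset$, so the uniform positive distance you derive is more than is strictly required.
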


\begin{proof}
The hypothesis $\supp\varphi\subset(E_-+2\delta,E_+-2\delta)$ implies that: 
\beq\label{F-Ecirc-h-J}
E^\circ_h(J)=E^\circ_{h_\bot}(J)\oplus\,E^\circ_{h_\B}(J)=E^\circ_{h_\B}(J)\subset\,E^\circ_{h_\bot}(\{0\})=P_\B.
\eeq
If we use Proposition 6.33 in \cite{IMP-2} and the notations:
\beq\begin{split}
&\varphi\big(H^{\epsilon,\cc}\big)\,=\,\Op^{\epsilon,\cc}\big(\varphi^{\epsilon,\cc}(h)\big),\quad\varphi^{\epsilon,\cc}(h)\in\,S^{-p}_{1}(\X^*\times\X),\\
&\varphi\big(H^{\circ}\big)\,=\,\Op^{A^\circ}\big(\varphi^{\circ}(h)\big),\quad\varphi^{\circ}(h)\in\,S^{-p}_{1}(\X^*\times\X)
\end{split}\eeq
we notice that \eqref{F-Ecirc-h-J} implies the equality:
\beq
P_\B\,\varphi\big(H^{\circ}\big)\,=\,\varphi\big(H^{\circ}\big)\quad\text{i.e.:}\quad\,p_\B\sharp^\circ\varphi^{\circ}(h)\,-\,\varphi^{\circ}(h)\,=\,0.
\eeq
Thus, the usual estimation of the magnetic perturbation  on symbols implies that:
\beq\begin{split}
\big\|P^{\epsilon,\cc}_\B\,\varphi\big(H^{\epsilon,\cc}\big)\,-\,\varphi\big(H^{\epsilon,\cc}\big)\big\|_{\mathbb{B}(L^2(\X))}\,=\,\mathscr{O}(\epsilon).
\end{split}\eeq
\end{proof}

\section{Magnetic matrices in Parseval tight-frames.}\label{S-Parseval} 



While the theory of direct integrals of Hilbert spaces {roughly} implies that \textit{"any bounded operator $T\in\mathbb{B}\big(L^2(\X)\big)$ commuting with all the unitary translations with vectors from $\Gamma$ is unitary equivalent with the multiplication in $ \L^2\big(\T_*^d;L^2(\mathcal{E})\big)$ with a measurable operator-valued function} $\T^d_*\ni\bz^*\mapsto\widehat{T}(\bz^*)\in\mathbb{B}\big(L^2(\mathcal{E})\big)$",   the study of the image in this representation of the canonical unitary representations of $\X$ and resp. $\X^*$ on $L^2(\X)$ induces some important differential structures related to the $\Zd$-principal bundle defined by the following short sequence of groups: $ 0\hookrightarrow\Zd\hookrightarrow\Rd\repi\mathbb{S}^d\repi1$. The topological property of \textit{associated vector bundles} described in Theorem 7.2 in Chapter 8 of \cite{Hu} that we recall in the next subsection allows to construct in Subsection \ref{SS-Peps-c-B-frame} a Parseval frame for the subspace $P^{\epsilon,\cc}_\B\,L^2(\X)$.

\subsection{The Bloch-Floquet vector bundle.} \label{SSS-BF-vbdl}
 Firstly, we notice that the Bloch-Floquet unitary transformation takes the canonical unitary action of $\X$ by translations on $L^2(\X)$ into the canonical unitary action by translations on each space $\mathscr{F}_{\bz^*}$ defined in point (1) of Theorem \ref{T-FBdec}.\\
 
 Secondly, let us identify the direct integral $\int_{\T^d_*}d\bz^*\,\mathscr{F}_{\bz^*}$ with a subspace of $L^2_{\text{\tt loc}}\big(\X^*;L^2(\T^d)\big)$ after multiplication with the smooth function $\T^d_*\times\X\ni(\bz^*,x)\mapsto\,e^{-i<\s_*(\bz^*),x>}\in\mathbb{S}$ and of periodic functions, on $\X$ or $\X^*$, with functions on the corresponding torus. One obtains a unitary equivalence of $\mathscr{F}$ with the Hilbert space of functions $$ \mathscr{G}:=\big\{f\in L^2_{\text{\tt loc}}\big(\X^*;L^2(\T^d)\big),\ f(\xi+\gamma^*)=e^{-i<\gamma^*,\cdot>}f(\xi)\big\}$$  endowed with the scalar product  $$\big(f\,,\,g\big)_{\mathscr{G}}:=\int_{\mathcal{B}}d\hat{\xi}\,\big(f(\hat{\xi})\,,\,g(\hat{\xi})\big)_{L^2(\T^d)}\,.$$
 Under the above unitary transformation and after using on $\X^*=\Gamma^*\times\mathcal{B}$ a similar decomposition to \eqref{DF-X-dec-Gamma}, the canonical representation of $\X^*$ on $L^2(\X)$ becomes:
 \beq\label{DF-Xstar-repr}
\X^*\ni\xi\mapsto \hat{V}(\xi)\in\mathbb{U}(\mathscr{G}),\quad\big(\hat{V}(\xi)f\big)(\zeta,\bz):=e^{-i<\iota(\xi),\s(\bz)>}\,f(\hat{\xi}+\zeta,\bz)
\eeq
 that is exactly \textit{the equivariant representation} of $\X^*$ induced by the following representation of its sub-group $\Gamma*\subset\X^*$, \cite{Ma-51}:
 \beq\label{DF-Gstar-repr}
\Gamma_*\ni\gamma^*\mapsto \hat{V}_\circ(\gamma^*)\in\mathbb{U}\big(L^2( \T^d)\big),\quad\big(\hat{V}_\circ(\gamma^*)f\big)(\zeta,\bz):=e^{-i<\gamma^*,\s(\bz)>}\,f(\zeta,\bz).
\eeq

Thirdly, it is known that this equivariant representation may be identified with the representation by translations on sections of the vector bundle $\tp_*:\mathfrak{F}\repi\T^d_*$ associated with  the principal bundle $\X^*\repi\T^d_*$ with fibre isomorphic to $\Gamma_*$ by its representation \eqref{DF-Gstar-repr}, \cite{Bo,PP}. Let us recall that the fibre of this representation may have the following identifications (by the above unitaries):
\beq\nonumber 
\tp_*^{-1}(\bz^*)\,\cong\,\mathscr{F}_{\bz^*}\,\cong\,L^2(\T^d)\,\cong\,L^2(\mathcal{E})
\eeq 
and let us denote by $\big(\cdot,\cdot\big)_{\tp_*^{-1}(\bz^*)}$ the scalar product on this fibre. 
Smoothness of sections with respect to the representation \eqref{DF-Xstar-repr} of $\X^*$ is equivalent to smoothness for the differentiable structure of the vector bundle $\mathfrak{F}\repi\T^d_*$.


Within the just introduced context, the family of projections $\big\{\widehat{P}_\B(\bz^*)\big\}_{\bz^*\in\T^d_*}$ {which builds up $P_\B$ in \eqref{hdc2}} defines a smooth sub-bundle $\mathfrak{F}_\B\overset{\tp_*}{\repi}\T^d_*$ of finite rank $N+1$. 
Let us denote by $\mathfrak{S}(\mathfrak{F}_\B;O)$ the complex linear space of smooth sections of the vector bundle $\mathfrak{F}_\B\repi\T^d_*$ over the open set $O\subset\T^d_*$. We  emphasize that the smoothness of these sections implies the rapid decay of their Bloch-Floquet unitary transformed functions living in $L^2(\X)$. We also introduce the notation $\mathfrak{S}^2(\mathfrak{F}_\B;O)$ for the square integrable measurable sections of $\mathfrak{F}_\B\repi\T^d_*$ over the open subset $O\subset\T^d_*$.

We call the sub-bundle $\mathfrak{F}_\B\repi\T^d_*$ trivializable {if it exists} an identification $\mathfrak{S}(\mathfrak{F}_\B;\T^d_*)\cong C^\infty(\T^d_*;\mathbb{C}^{N+1})$.  There are a lot of results about finding necessary and sufficient conditions under which such a trivialisation is possible when the dimension $d\in \{1,2,3\}$, and which allows one to construct strongly localized composite Wannier functions  (see \cite{HS, Ne-RMP, FMP, CHN, CM, Mon, Ku, MMP}). 

For the case of a non-trivializable sub-bundle $\mathfrak{F}_\B\repi\T^d_*$ (for example for a sub-bundle having a non-zero Chern class), we shall develop in what follows a technique based on {tight}-frames which are no longer orthonormal, but which are still smooth and form an over-complete system of generators  (see \cite{AK, Ku2, CMM}). 
The main property of fibre bundles that we shall need in our construction is the following statement (Theorem 7.2 in Chapter 8 of \cite{Hu}):
\begin{theorem} [Embedding Theorem]  Given a smooth vector bundle $\mathfrak{B}\repi M$ of rank $n$ over a smooth real manifold $M$ of dimension $d$, there exists $m\in\mathbb{N}$ with 
$$0\leq m\leq\,m_d:=\inf\{k\in\mathbb{N},\ k\geq(d/2)\}$$ 
and a smooth bundle homomorphism $\mathfrak{I}:\mathfrak{B}\rightarrow M\times\mathbb{C}^{n+m}$.
\end{theorem}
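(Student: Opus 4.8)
A "bundle homomorphism" here has to be understood as a fibrewise injective one, so the goal is to realise $\mathfrak{B}$ as a smooth subbundle of the trivial bundle $M\times\mathbb{C}^{n+m}$ for some $m\le m_d=\lceil d/2\rceil$. I would use the equivalence of this with two standard reformulations and move back and forth between them: (i) the classifying map $f\colon M\to BU(n)\simeq\bigcup_{k}G_n(\mathbb{C}^{n+k})$ of $\mathfrak{B}$ (so $f^*\gamma_n\cong\mathfrak{B}$, $\gamma_n$ the tautological bundle) can be deformed into the finite Grassmannian $G_n(\mathbb{C}^{n+m})$, over which the tautological bundle sits inside the trivial $\underline{\mathbb{C}^{n+m}}$; and (ii) $\mathfrak{B}$ admits a complement bundle $\eta$ of rank $\le m$ with $\mathfrak{B}\oplus\eta\cong\underline{\mathbb{C}^{n+m}}$. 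Only paracompactness and finiteness of $\dim M$ are used below — and in our application $M=\T^d_*$ is compact — so all partitions of unity are finite and smooth.

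\textbf{Step 1.} First I would produce \emph{some} smooth embedding of $\mathfrak{B}$ into a trivial bundle. Choose a finite open cover $\{U_\alpha\}_{1\le\alpha\le K}$ trivialising $\mathfrak{B}$, with trivialisations $h_\alpha\colon\mathfrak{B}|_{U_\alpha}\to U_\alpha\times\mathbb{C}^n$, and a subordinate smooth partition of unity $\{\rho_\alpha\}$. Then $v\mapsto\bigl(\rho_\alpha\cdot(\mathrm{pr}_2\circ h_\alpha)(v)\bigr)_{\alpha}$ is a smooth fibrewise injective homomorphism $\mathfrak{I}_0\colon\mathfrak{B}\to M\times\mathbb{C}^{nK}=:M\times\mathbb{C}^{N}$; taking the orthogonal complement for the standard Hermitian structure on $\underline{\mathbb{C}^N}$ gives a complement $\eta_0$ with $\mathfrak{B}\oplus\eta_0\cong\underline{\mathbb{C}^N}$, and the Gauss map $x\mapsto\mathfrak{I}_0(\mathfrak{B}_x)\in G_n(\mathbb{C}^N)$ is a classifying map.

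\textbf{Step 2.} The actual work is to bring $N$ down to $n+m_d$. Suppose $\mathfrak{B}\oplus\eta\cong\underline{\mathbb{C}^N}$ with $r=\operatorname{rank}\eta$ and $2r>d$. The sphere bundle of $\eta$ has fibre $S^{2r-1}$, which is $(2r-2)$-connected, so the successive obstructions to a nowhere-vanishing section lie in $H^{k+1}(M;\pi_k(S^{2r-1}))$, the first being the Euler class in $H^{2r}(M;\mathbb{Z})$, and all of them vanish because $d<2r$. Hence $\eta\cong\eta'\oplus\underline{\mathbb{C}}$ with $\operatorname{rank}\eta'=r-1$, whence $\mathfrak{B}\oplus\eta'\oplus\underline{\mathbb{C}}\cong\underline{\mathbb{C}^{N-1}}\oplus\underline{\mathbb{C}}$; and as long as $\operatorname{rank}(\mathfrak{B}\oplus\eta')=N-1$ exceeds $d/2$, the cancellation theorem for vector bundles over a $d$-dimensional base gives $\mathfrak{B}\oplus\eta'\cong\underline{\mathbb{C}^{N-1}}$. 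Iterating lowers $N$ one step at a time and terminates once $\operatorname{rank}\eta=m\le m_d$, which yields the asserted $\mathfrak{I}\colon\mathfrak{B}\to M\times\mathbb{C}^{n+m}$. Equivalently and more uniformly, I would argue directly on $f$: the Schubert cells of $\bigcup_k G_n(\mathbb{C}^{n+k})$ not contained in $G_n(\mathbb{C}^{n+m})$ have real dimension $\ge 2(m+1)$, so the inclusion $G_n(\mathbb{C}^{n+m})\hookrightarrow BU(n)$ is $(2m+1)$-connected; since for $m=m_d$ this pair is at least $d$-connected, cellular approximation / obstruction theory homotopes $f$ into $G_n(\mathbb{C}^{n+m_d})$, and pulling back the tautological subbundle of $\underline{\mathbb{C}^{n+m_d}}$ produces $\mathfrak{I}$ (one smooths the compression homotopy to keep $\mathfrak{I}$ in the $C^\infty$ category).

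\textbf{Expected main obstacle.} Step 1, and the splitting-off of trivial summands in Step 2, are routine. The delicate point is landing \emph{exactly} at the bound $m_d=\lceil d/2\rceil$: this rests on the connectivity estimate for the truncated complex Grassmannian — equivalently, on combining the vanishing of the Euler-class obstruction in $H^{2r}(M;\mathbb{Z})$ (valid once $2r>d$) with the stable-range cancellation theorem, together with the bookkeeping that each cancellation step stays within the stable range. The remaining $C^\infty$ issues (smoothing the compression homotopy and the partition-of-unity constructions) are harmless, and over the compact manifold $M=\T^d_*$ actually relevant here the argument goes through verbatim.
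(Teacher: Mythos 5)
The paper does not actually prove this statement: it is imported verbatim from Husemoller (Theorem 7.2, Chapter 8 of \cite{Hu}), so there is no in-paper argument to compare yours against. Your sketch is, in substance, the standard proof underlying that citation, and it is correct in outline: Step 1 (finite trivializing cover, partition of unity, Gauss map) gives $\mathfrak{B}\oplus\eta\cong\underline{\mathbb{C}^N}$; in Step 2 the obstruction count is right (the obstructions to a nowhere-vanishing section of $\eta$ live in $H^{k+1}(M;\pi_k(S^{2r-1}))$, which vanish once $2r>d$), and the cancellation step is legitimate because the rank $N-1=n+r-1$ stays in the stable range $2(N-1)\geq d$ at every iteration; the alternative compression of the classifying map into $G_n(\mathbb{C}^{n+m})$ via the $(2m+1)$-connectivity of the inclusion is an equally valid packaging. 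Two small remarks: your bookkeeping in fact lands at $m\leq\lfloor d/2\rfloor$ (resp. $m\leq\lceil (d-1)/2\rceil$ in the compression variant), which is consistent with, and slightly sharper than, the stated bound $m_d=\lceil d/2\rceil$; and your reading of ``bundle homomorphism'' as fibrewise injective is the intended one, since Corollary \ref{C-bd-triv} upgrades it to a fibrewise isometry, which follows from injectivity by the usual polar-decomposition argument. The smoothing issues you flag are indeed harmless over the compact base $\T^d_*$ used in the paper.
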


Applying this to our case where $n=N+1$ and $M=\T^d_*$ we have:
\begin{corollary}\label{C-bd-triv}
There exist some $n_\B\in \Nb$ with $N+1\leq n_\B\leq\,{N+1+ m}$ and a smooth bundle homomorphism  $\mathfrak{I}_\B$ from $\mathfrak{F}_\B$ into $\T^d_*\times\mathbb{C}^{n_\B}$ that is isometric on each fibre.
\end{corollary}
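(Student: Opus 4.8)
The plan is to obtain Corollary \ref{C-bd-triv} as a direct application of the Embedding Theorem, followed by a standard polar-decomposition argument to promote the resulting fibrewise-injective bundle map to a fibrewise isometry. Concretely, I would apply the Embedding Theorem with $\mathfrak{B}=\mathfrak{F}_\B$ (which has rank $n=N+1$) and $M=\T^d_*$ (which has dimension $d$): it produces an integer $m\in\mathbb{N}$ with $0\le m\le m_d=\inf\{k\in\mathbb{N}\,,\,k\ge d/2\}$ and a smooth bundle homomorphism $\mathfrak{I}:\mathfrak{F}_\B\to\T^d_*\times\Co^{N+1+m}$ that is injective on each fibre. Setting $n_\B:=N+1+m$ we immediately get $N+1\le n_\B\le N+1+m$ and, since $N\ge 0$, also $n_\B\ge 1$, i.e. $n_\B\in\Nb$.

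It remains to replace $\mathfrak{I}$ by a homomorphism which is isometric, and not merely injective, on each fibre. Recall that $\mathfrak{F}_\B$ carries a canonical smooth Hermitian metric: the fibres $\tp_*^{-1}(\bz^*)\cong\mathscr{F}_{\bz^*}\cong L^2(\E)$ inherit the scalar product $\big(\cdot,\cdot\big)_{\tp_*^{-1}(\bz^*)}$ from $L^2(\E)$, and in the $\mathscr{G}$-picture the structure group of $\mathfrak{F}$ (hence of its subbundle $\mathfrak{F}_\B$) is unitary, so this metric is smooth. Equipping the trivial bundle $\T^d_*\times\Co^{n_\B}$ with the standard Hermitian metric on $\Co^{n_\B}$, let $\mathfrak{I}^*\colon\T^d_*\times\Co^{n_\B}\to\mathfrak{F}_\B$ denote the fibrewise adjoint of $\mathfrak{I}$; it is again a smooth bundle homomorphism. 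Then $G:=\mathfrak{I}^*\mathfrak{I}$ is a smooth endomorphism of $\mathfrak{F}_\B$ whose restriction to each fibre is self-adjoint and positive (positivity because $\mathfrak{I}$ is fibrewise injective), hence fibrewise invertible; since $\T^d_*$ is compact, the spectra of these fibrewise operators stay in a fixed compact subinterval of $(0,\infty)$, so by the smooth (holomorphic) functional calculus with a fixed contour $G^{-1/2}$ is a well-defined smooth positive bundle endomorphism of $\mathfrak{F}_\B$. I would then set
$$\mathfrak{I}_\B:=\mathfrak{I}\circ G^{-1/2}\colon\mathfrak{F}_\B\to\T^d_*\times\Co^{n_\B}.$$
This is a composition of smooth bundle homomorphisms, hence smooth, and for any $v,w$ in the fibre over $\bz^*$,
$$\big(\mathfrak{I}_\B v\,,\,\mathfrak{I}_\B w\big)_{\Co^{n_\B}}=\big(\mathfrak{I}G^{-1/2}v\,,\,\mathfrak{I}G^{-1/2}w\big)_{\Co^{n_\B}}=\big(G^{-1/2}v\,,\,G\,G^{-1/2}w\big)_{\tp_*^{-1}(\bz^*)}=\big(v\,,\,w\big)_{\tp_*^{-1}(\bz^*)},$$
using that $G^{-1/2}$ is fibrewise self-adjoint and $G^{-1/2}GG^{-1/2}=\bb1$. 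Thus $\mathfrak{I}_\B$ is isometric on each fibre, as required.

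The only genuinely substantive input is the Embedding Theorem itself, namely the fact that a rank-$n$ smooth complex vector bundle over a $d$-dimensional manifold admits a monomorphism into the trivial bundle of rank $n+m_d$; this is precisely where the dimensional restriction $n_\B\le N+1+m_d$ enters, and it is what ultimately bounds the size of the tight frame to be constructed in Subsection \ref{SS-Peps-c-B-frame}. Everything else is routine: the passage from a fibrewise injective to a fibrewise isometric homomorphism via $\mathfrak{I}\mapsto\mathfrak{I}(\mathfrak{I}^*\mathfrak{I})^{-1/2}$ is classical, the only small point requiring care being the smoothness of $G^{-1/2}$, which follows from applying the smooth functional calculus to the smooth, positive, self-adjoint, fibrewise-invertible endomorphism $G$ whose spectrum is bounded away from $0$ uniformly over the compact base $\T^d_*$.
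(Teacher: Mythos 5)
Your proposal is correct and follows the same route the paper intends: the paper states the corollary as an immediate consequence of the Embedding Theorem applied with $n=N+1$, $M=\T^d_*$, giving no further proof, and your polar-decomposition step $\mathfrak{I}\mapsto\mathfrak{I}\,(\mathfrak{I}^*\mathfrak{I})^{-1/2}$ correctly supplies the one detail left implicit, namely the passage from a fibrewise injective homomorphism to a fibrewise isometric one. Your justification of the smoothness of $G^{-1/2}$ (fibrewise positivity from injectivity, spectrum uniformly bounded away from $0$ by compactness of $\T^d_*$, holomorphic functional calculus on a fixed contour) is exactly the right argument, so the proposal stands as a complete proof of the corollary.
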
 
\begin{proposition}\label{P-bd-triv}
	The application $\sigma\mapsto\mathfrak{I}_\B\circ\sigma$ defines an isometric map from $\mathfrak{S}^2(\mathfrak{F}_\B,\T^d_*)$ into $ L^2(\T^d_*;\mathbb{C}^{n_\B})$ and a continuous linear embedding $\mathfrak{S}(\mathfrak{F}_\B,\T^d_*)\hookrightarrow\big[C^\infty(\T^d_*)\big]^{n_\B}$.
\end{proposition}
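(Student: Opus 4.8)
The plan is to verify the two assertions by directly unwinding the definitions: the isometry property is an immediate consequence of Corollary \ref{C-bd-triv}, and the continuity is a routine computation in a finite atlas of local frames, using compactness of $\T^d_*$.

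First I would treat the $L^2$ statement. Let $\sigma\in\mathfrak{S}^2(\mathfrak{F}_\B,\T^d_*)$. Since $\mathfrak{I}_\B$ is a bundle homomorphism covering the identity of $\T^d_*$, the composition $\mathfrak{I}_\B\circ\sigma$ is a measurable section of the trivial bundle $\T^d_*\times\mathbb{C}^{n_\B}$, i.e.\ a measurable map $\T^d_*\to\mathbb{C}^{n_\B}$. By Corollary \ref{C-bd-triv}, for each $\bz^*$ the linear map $\mathfrak{I}_\B$ restricted to the fibre $\tp_*^{-1}(\bz^*)$ is an isometry into $\mathbb{C}^{n_\B}$, so $\big|(\mathfrak{I}_\B\circ\sigma)(\bz^*)\big|_{\mathbb{C}^{n_\B}}=\|\sigma(\bz^*)\|_{\tp_*^{-1}(\bz^*)}$ pointwise; integrating over $\T^d_*$ against the normalized Haar measure gives $\|\mathfrak{I}_\B\circ\sigma\|_{L^2(\T^d_*;\mathbb{C}^{n_\B})}=\|\sigma\|_{\mathfrak{S}^2(\mathfrak{F}_\B,\T^d_*)}$. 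In particular the image lies in $L^2(\T^d_*;\mathbb{C}^{n_\B})$ and $\sigma\mapsto\mathfrak{I}_\B\circ\sigma$ is a linear isometry; injectivity is automatic, since a fibrewise isometry is fibrewise injective, so $\mathfrak{I}_\B\circ\sigma=0$ a.e.\ forces $\sigma=0$ a.e.

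Next I would treat the smooth statement. If $\sigma\in\mathfrak{S}(\mathfrak{F}_\B,\T^d_*)$, then $\mathfrak{I}_\B\circ\sigma$, being a composition of smooth maps, is a smooth section of $\T^d_*\times\mathbb{C}^{n_\B}$, hence an element of $\big[C^\infty(\T^d_*)\big]^{n_\B}$; together with the injectivity established above this is an embedding of vector spaces. For continuity with respect to the natural Fréchet topologies I would invoke compactness of $\T^d_*$: fix a finite cover $\{O_j\}$ by open sets over which $\mathfrak{F}_\B$ carries smooth local frames, together with a shrunk cover $\{O_j'\}$ with $\overline{O_j'}\subset O_j$ still covering $\T^d_*$. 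Over $O_j$ the section $\sigma$ is represented by a smooth $\mathbb{C}^{N+1}$-valued function $\sigma_j$ and $\mathfrak{I}_\B$ by a fixed smooth $n_\B\times(N+1)$ matrix-valued function $M_j$ (its columns being the images of the frame vectors), so $(\mathfrak{I}_\B\circ\sigma)|_{O_j}=M_j\,\sigma_j$. By the Leibniz rule, every $C^k$-seminorm of $M_j\sigma_j$ over $\overline{O_j'}$ is bounded by a constant depending only on $M_j$ and $k$ times a $C^k$-seminorm of $\sigma_j$ over $\overline{O_j'}$; taking the maximum over the finite index set shows that each defining seminorm of $\mathfrak{I}_\B\circ\sigma$ is controlled by finitely many defining seminorms of $\sigma$, so the embedding is continuous.

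The step requiring the most care — though it is not a genuine obstacle — is this last Fréchet-space bookkeeping, since $\mathfrak{F}_\B$ is only locally trivial and one must therefore pass through a finite atlas of local frames; but because $\mathfrak{I}_\B$ is a single fixed smooth homomorphism and $\T^d_*$ is compact, all the quantities $\partial^\beta M_j$ are uniformly bounded on the $\overline{O_j'}$ and the estimate closes without difficulty. It is worth stressing that $\sigma\mapsto\mathfrak{I}_\B\circ\sigma$ is only an embedding, not an isomorphism, precisely when $n_\B>N+1$; this over-completeness is exactly what will later produce a Parseval tight frame instead of an orthonormal basis of composite Wannier functions.
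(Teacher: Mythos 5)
Your proof is correct; the paper actually states Proposition \ref{P-bd-triv} without any written proof, treating it as an immediate consequence of Corollary \ref{C-bd-triv}, and your verification --- the pointwise fibrewise isometry integrated over $\T^d_*$ for the $L^2$ statement, injectivity from fibrewise injectivity, and the finite atlas of local frames with a Leibniz-rule estimate on a compact manifold for the Fr\'echet continuity --- is precisely the routine unwinding that the authors leave implicit. No gaps.
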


Let us mention that explicit constructions of a smooth vector bundle homomorphism $\mathfrak{I}_\B:\mathfrak{F}_\B\rightarrow\T^d_*\times\mathbb{C}^{n'}$ {with $n'\geq n$} may be given. The most elementary one is by using the usual covering of $\mathbb{S}^d$ with $2^d$ open charts; this  unfortunately has $n'=n2^d$ which can be much larger then the optimal value $n_\B=n+m$ with $m\leq m_d$ given by the above Embedding Theorem. In the non-trivializable case, where $d=2,3$ and $m= m_d=1$, more or less explicit constructions of the homomorphism are presented in \cite{Ku2, CM, CMM, AK}.

We finish this subsection by a very brief outlook of the construction that follows in this Section. In Paragraph \ref{SS-band-P-frame} we define a Parseval frame for the closed subspace $P_\B\,L^2(\X)$ starting from the linear embedding given by Proposition \ref{P-bd-triv}. This allows us to associate with  our pseudo-differential operators an algebra of infinite matrices having off-diagonal rapid decay.  In Subsection \ref{SS-Peps-c-B-frame} we extend a procedure of 'deformed' translations introduced in the case of constant magnetic fields by \cite{Z} and \cite{HS,Ne-LMP} in order to construct a strongly localized Parseval frame for the subspace $P^{\epsilon,\cc}_\B\,L^2(\X)$, and associate with our effective Hamiltonian an infinite matrix which is strongly localized near the diagonal. 

\subsection{The unperturbed projection}

We have the following consequence of the abstract statement in Proposition \ref{P-bd-triv}:
\subsubsection{A strongly localized Parseval tight-frame for the range of $P_\B$}\label{SS-band-P-frame}

\begin{proposition}\label{P-Pfr-Tstar-sect}
There exists a family of $n_\B$ smooth global sections $\{\hat{\psi}_{p}\}_{1\leq p\leq n_\B}$ in the Bloch bundle $\mathfrak{F}_\B\repi\T^d_*$ such that the closed complex linear space they generate is equal to $\mathfrak{S}(\mathfrak{F}_\B,\T^d_*)$ and for any section $\sigma\in\mathfrak{S}^2(\mathfrak{F}_\B,\T^d_*)$ the following equalities hold:
\beq\nonumber 
\sigma(\bz^*)\,=\,\underset{1\leq p\leq n_\B}{\sum}\,\big(\hat{\psi}_{p}(\bz^*)\,,\,\sigma(\bz^*)\big)_{\tilde{\p}_*^{-1}(\bz^*)}\hat{\psi}_{p}(\bz^*),\quad\forall\bz^*\in\T^d_*,
\eeq
with the identity
\beq\nonumber 
\big\|\sigma(\bz^*)\big\|_{\tilde{\p}_*^{-1}(\bz^*)}^2\,=\,\underset{1\leq p\leq n_\B}{\sum}\,\big|\big(\hat{\psi}_{p}(\bz^*)\,,\,\sigma(\bz^*)\big)_{\tilde{\p}_*^{-1}(\bz^*)}\big|^2,\quad\forall\bz^*\in\T^d_*.
\eeq
\end{proposition}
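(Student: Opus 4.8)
The plan is to read off the frame directly from the fibrewise isometric bundle homomorphism supplied by Corollary \ref{C-bd-triv} (and reformulated in Proposition \ref{P-bd-triv}), by pulling the standard orthonormal basis of $\Co^{n_\B}$ back into the fibres of $\mathfrak{F}_\B$ through its adjoint. Concretely, I would denote by $\mathfrak{I}_\B(\bz^*):\tp_*^{-1}(\bz^*)\to\Co^{n_\B}$ the restriction of $\mathfrak{I}_\B$ to the fibre over $\bz^*\in\T^d_*$; since this map is a linear isometry it preserves scalar products (polarization), so its adjoint $\mathfrak{I}_\B(\bz^*)^*:\Co^{n_\B}\to\tp_*^{-1}(\bz^*)$ satisfies $\mathfrak{I}_\B(\bz^*)^*\mathfrak{I}_\B(\bz^*)=\bb1$ on the fibre $\tp_*^{-1}(\bz^*)$. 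I would then define
\[
\hat{\psi}_p(\bz^*):=\mathfrak{I}_\B(\bz^*)^*\mathcal{e}_p\in\tp_*^{-1}(\bz^*),\qquad 1\leq p\leq n_\B,
\]
where $\{\mathcal{e}_p\}_{1\leq p\leq n_\B}$ is the canonical orthonormal basis of $\Co^{n_\B}$ (Notation \ref{N-matrix}). These are smooth global sections of $\mathfrak{F}_\B$ because $\mathfrak{I}_\B$ is a smooth bundle homomorphism and passing to the fibrewise adjoint preserves smoothness.

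The two displayed identities are then a short algebraic check. Fixing $\bz^*$ and an arbitrary $\sigma\in\tp_*^{-1}(\bz^*)$, the defining relation of the adjoint (with our convention of scalar products antilinear in the first argument) gives $\big(\hat{\psi}_p(\bz^*),\sigma\big)_{\tp_*^{-1}(\bz^*)}=\big(\mathcal{e}_p,\mathfrak{I}_\B(\bz^*)\sigma\big)_{\Co^{n_\B}}$. Expanding the vector $\mathfrak{I}_\B(\bz^*)\sigma\in\Co^{n_\B}$ in the basis $\{\mathcal{e}_p\}$ and applying the (linear) operator $\mathfrak{I}_\B(\bz^*)^*$ yields
\[
\underset{1\leq p\leq n_\B}{\sum}\big(\hat{\psi}_p(\bz^*),\sigma\big)_{\tp_*^{-1}(\bz^*)}\,\hat{\psi}_p(\bz^*)\;=\;\mathfrak{I}_\B(\bz^*)^*\mathfrak{I}_\B(\bz^*)\sigma\;=\;\sigma ,
\]
while the Parseval identity in $\Co^{n_\B}$ together with the isometry of $\mathfrak{I}_\B(\bz^*)$ gives
\[
\underset{1\leq p\leq n_\B}{\sum}\big|\big(\hat{\psi}_p(\bz^*),\sigma\big)_{\tp_*^{-1}(\bz^*)}\big|^2\;=\;\big\|\mathfrak{I}_\B(\bz^*)\sigma\big\|_{\Co^{n_\B}}^2\;=\;\big\|\sigma\big\|_{\tp_*^{-1}(\bz^*)}^2 .
\]
Applying these pointwise (a.e.\ $\bz^*$) to a section $\sigma\in\mathfrak{S}^2(\mathfrak{F}_\B,\T^d_*)$ produces exactly the asserted equalities.

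For the generating statement I would argue as follows: given $\sigma\in\mathfrak{S}(\mathfrak{F}_\B,\T^d_*)$, the functions $c_p(\bz^*):=\big(\hat{\psi}_p(\bz^*),\sigma(\bz^*)\big)_{\tp_*^{-1}(\bz^*)}$ are smooth on $\T^d_*$ (pointwise scalar products of smooth sections), and the reconstruction identity above reads $\sigma=\sum_{p}c_p\,\hat{\psi}_p$, an exact finite decomposition. Since $\|\hat{\psi}_p(\bz^*)\|_{\tp_*^{-1}(\bz^*)}\leq 1$, the same decomposition holds with $c_p\in L^2(\T^d_*)$ whenever $\sigma\in\mathfrak{S}^2(\mathfrak{F}_\B,\T^d_*)$. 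Hence the $C^\infty(\T^d_*)$-module generated by $\{\hat{\psi}_p\}_{1\leq p\leq n_\B}$ is already all of $\mathfrak{S}(\mathfrak{F}_\B,\T^d_*)$ (so the closure in the statement is automatic).

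The only point requiring genuine care — and the one I would expect to be the main obstacle — is the smooth dependence on $\bz^*$ of the fibrewise adjoint $\mathfrak{I}_\B(\bz^*)^*$, equivalently of the frame vectors $\hat{\psi}_p$. This is the standard fact that the adjoint of a smooth, fibrewise injective morphism between finite‑rank Hermitian bundles is again smooth: in a local trivialization $\mathfrak{I}_\B(\bz^*)$ is a smooth matrix of maximal rank $N+1$, the Hermitian metrics are smooth positive‑definite matrix fields, and $\mathfrak{I}_\B(\bz^*)^*$ is obtained from them by an explicit algebraic formula, hence is smooth; equivalently, $\mathfrak{I}_\B(\bz^*)\mathfrak{I}_\B(\bz^*)^*$ is the smooth field of orthogonal projections of $\Co^{n_\B}$ onto $\mathfrak{I}_\B\big(\tp_*^{-1}(\bz^*)\big)$. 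This smoothness is precisely what guarantees, after the inverse Bloch–Floquet transform, that the corresponding frame in $L^2(\X)$ consists of Schwartz functions, as used in the remainder of this section.
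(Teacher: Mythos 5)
Your proposal is correct and coincides with the paper's own proof: the sections $\hat{\psi}_p(\bz^*):=\mathfrak{I}_\B(\bz^*)^*\mathcal{e}_p$ are exactly the paper's $\hat{\psi}_p(\bz^*)=\mathfrak{I}_{\B,\bz^*}^{-1}\big(Q_\B(\bz^*)\mathcal{e}_p\big)$, since for a fibrewise isometry the adjoint equals the inverse (on the range) composed with the orthogonal projection $Q_\B(\bz^*)$ onto that range. The reconstruction and Parseval identities are then verified by the same algebra as in the paper, and your extra remarks on smoothness of the fibrewise adjoint and on the generating property are consistent with what the paper implicitly uses.
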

\begin{proof}
We notice that the bundle homomorphism $\mathfrak{I}_\B:\mathfrak{F}_\B\rightarrow\T^d_*\times\mathbb{C}^{n_\B}$ is in fact a smooth family of linear isometries $\T^d_*\ni\bz^*\mapsto\mathfrak{I}_{\B,\bz^*}\in\mathcal{L}\big(\tilde{\p}_*^{-1}(\bz^*),\mathbb{C}^{n_\B}\big)$.
For any ${\bz^*}\in\T^d_*$ let: 
\beq\label{DF-LB-theta}
\mathfrak{L}_\B({\bz^*}):=\big\{\mathfrak{I}_{\B,\bz^*}\mathfrak{v}\,,\, \mathfrak{v}\in\tilde{\p}_*^{-1}(\bz^*)\big\}\,\subset\,\mathbb{C}^{n_\B},
\eeq
so that $\mathfrak{I}_{\B,\bz^*}:\tilde{\p}_*^{-1}(\bz^*)\rightarrow\mathfrak{L}_\B({\bz^*})$ is invertible and let $Q_\B({\bz^*}):\mathbb{C}^{n_\B}\rightarrow\mathfrak{L}_\B({\bz^*})$ be the canonical orthogonal projection defined by this subspace.
If $\big\{\mathcal{e}_1,\ldots,\mathcal{e}_{n_\B}\big\}$ is the canonical orthonormal basis of $\mathbb{C}^{n_\B}$, we can define the following global sections:
\beq\label{DF-triv-sect}
\hat{\psi}_{p}({\bz^*}):=\mathfrak{I}_{\B,\bz^*}^{-1}\big(Q_\B(\bz^*)\mathcal{e}_p\big),\quad\forall (p,{\bz^*})\in\underline{n_\B}\times\T^d_*.
\eeq
Given any $\sigma\in\mathfrak{S}(\mathfrak{F}_\B,\T^d_*)$ we can write that:
\begin{align*}
\sigma(\bz^*)&=\mathfrak{I}_{\B,\bz^*}^{-1}\big(Q_\B(\bz^*)\mathfrak{I}_{\B,\bz^*}\sigma(\bz^*)\big)=\mathfrak{I}_{\B,\bz^*}^{-1}\Big(Q_\B(\bz^*)\Big(\underset{1\leq p\leq n_\B}{\sum}\big(\mathfrak{I}_{\B,\bz^*}\sigma(\bz^*)\,,\,\mathcal{e}_p\big)_{\mathbb{C}^{n_\B}}\,\mathcal{e}_p\Big)\Big)\\
&=\underset{1\leq p\leq n_\B}{\sum}\big(\mathfrak{I}_{\B,\bz^*}\sigma(\bz^*)\,,\,\mathcal{e}_p\big)_{\mathbb{C}^{n_\B}}\,\hat{\psi}_p(\bz^*)=\underset{1\leq p\leq n_\B}{\sum}\big(Q_\B(\bz^*)\mathfrak{I}_{\B,\bz^*}\sigma(\bz^*)\,,\,Q_\B(\bz^*)\mathcal{e}_p\big)_{\mathbb{C}^{n_\B}}\,\hat{\psi}_p(\bz^*)\\
&=\underset{1\leq p\leq n_\B}{\sum}\big(\sigma(\bz^*)\,,\,\hat{\psi}_p(\bz^*)\big)_{\tilde{\p}_*^{-1}(\bz^*)}\,\hat{\psi}_p(\bz^*).
\end{align*}
In a similar way we notice that:
\begin{align*}
\big\|\sigma(\bz^*)\big\|_{\tilde{\p}_*^{-1}(\bz^*)}^2&=\big\|\mathfrak{I}_{\B,\bz^*}\sigma(\bz^*)\big\|_{\mathbb{C}^{n_\B}}^2=\underset{1\leq p\leq n_\B}{\sum}\big|\big(Q_\B(\bz^*)\mathfrak{I}_{\B,\bz^*}\sigma(\bz^*),\mathcal{e}_p\big)_{\mathbb{C}^{n_\B}}\big|^2\\
&=\underset{1\leq p\leq n_\B}{\sum}\big|\big(\sigma(\bz^*),\hat{\psi}_p\big)_{\tilde{\p}_*^{-1}(\bz^*)}\big|^2.
\end{align*}
\end{proof}

{Let us {map} our $n_\B$ global smooth sections obtained in Proposition \ref{P-Pfr-Tstar-sect} {into $n_\B$} rapidly decaying {smooth} functions in $L^2(\X)$ by the inverse Bloch-Floquet transform \eqref{DF-BF-trsf}:}
\beq\label{F-PN-1}
\psi_{p}(\hat{x}+\gamma)\,:=\,(2\pi)^{-d}\int_{\T^d_*}d{\bz^*}\,e^{i<{\bz^*},\gamma>}\,\hat{\psi}_{p}(\hat{x};{\bz^*}).
\eeq
They define a well-localized finite-dimensional complex subspace of the isolated Bloch family subspace $P_\B\,L^2(\X)$. Moreover, for any $\alpha\in\Gamma$, the translated functions $\tau_\alpha\psi_{p}$ verify the equalities:
\beq \nonumber 
\big(\mathfrak{U}_{F\Gamma}\tau_\alpha\psi_{p}\big)(\hat{x};{\bz^*})=\underset{\gamma\in\Gamma}{\sum}e^{-i<{\bz^*},\gamma>}\,\psi_{p}(\hat{x}+\gamma+\alpha)=e^{i<{\bz^*},\alpha>}\hat{\psi}_{p}(\hat{x};{\bz^*})\in\big[\widehat{P}_\B({\bz^*})\,\mathscr{F}_{\bz^*}\big]
\eeq
and we conclude that:
$$\Big\{\tau_\alpha\,\psi_{p},\ \alpha\in\Gamma,\ 1\leq p\leq n_\B\Big\}\,\subset\,P_\B\,L^2(\X)\,.$$
\begin{proposition}\label{P-band-frame}
The family $\Big\{\tau_\alpha\,\psi_{p},\ \alpha\in\Gamma,\ 1\leq p\leq n_\B\Big\}$ defines a Parseval {tight-}frame (see Definition \ref{D-frame}) for the subspace $P_\B\,L^2(\X)$ associated with  the isolated Bloch family $\B$.
\end{proposition}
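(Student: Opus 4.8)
The plan is to transfer the fibrewise Parseval-frame identity from Proposition \ref{P-Pfr-Tstar-sect} through the inverse Bloch--Floquet unitary $\mathfrak{U}_{F\Gamma}$, using that this unitary intertwines the translations $\tau_\alpha$ on $L^2(\X)$ with multiplication by $e^{i<\bz^*,\alpha>}$ on the fibres. Concretely, fix $f\in P_\B\,L^2(\X)$ and set $\hat f:=\mathfrak{U}_{F\Gamma}f$, so that $\hat f(\bz^*)\in\widehat P_\B(\bz^*)\mathscr{F}_{\bz^*}$ for a.e.\ $\bz^*$; this identifies $\hat f$ with a square-integrable section $\sigma\in\mathfrak{S}^2(\mathfrak{F}_\B,\T^d_*)$. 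Applying Proposition \ref{P-Pfr-Tstar-sect} pointwise gives
\[
\sigma(\bz^*)=\sum_{1\leq p\leq n_\B}\big(\hat\psi_p(\bz^*),\sigma(\bz^*)\big)_{\tilde{\p}_*^{-1}(\bz^*)}\,\hat\psi_p(\bz^*),
\qquad
\|\sigma(\bz^*)\|^2=\sum_{1\leq p\leq n_\B}\big|\big(\hat\psi_p(\bz^*),\sigma(\bz^*)\big)_{\tilde{\p}_*^{-1}(\bz^*)}\big|^2 .
\]
First I would rewrite the scalar coefficient $\big(\hat\psi_p(\bz^*),\sigma(\bz^*)\big)_{\tilde{\p}_*^{-1}(\bz^*)}$ as a Fourier series in $\alpha\in\Gamma$: by Parseval for $\mathcal{F}_{\T^d_*}$ and the identity $\big(\mathfrak{U}_{F\Gamma}\tau_\alpha\psi_p\big)(\hat x;\bz^*)=e^{i<\bz^*,\alpha>}\hat\psi_p(\hat x;\bz^*)$ already recorded in the excerpt, we get
\[
\big(\hat\psi_p(\bz^*),\sigma(\bz^*)\big)_{\tilde{\p}_*^{-1}(\bz^*)}
=\sum_{\alpha\in\Gamma} e^{-i<\bz^*,\alpha>}\,\big(\tau_\alpha\psi_p,f\big)_{L^2(\X)} ,
\]
the series converging in $L^2(\T^d_*)$. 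Substituting this back and recognising the multiplication by $e^{-i<\bz^*,\alpha>}$ on the fibre as the image under $\mathfrak{U}_{F\Gamma}$ of the translation $\tau_\alpha$, the pointwise reconstruction identity becomes, after applying $\mathfrak{U}_{F\Gamma}^{-1}$,
\[
f=\sum_{\alpha\in\Gamma}\sum_{1\leq p\leq n_\B}\big(\tau_\alpha\psi_p,f\big)_{L^2(\X)}\,\tau_\alpha\psi_p,
\]
with convergence in $L^2(\X)$, and the norm identity integrates over $\T^d_*$ (using the normalisation of the Haar measure and the Parseval identity for $\mathcal{F}_{\T^d_*}$) to
\[
\|f\|_{L^2(\X)}^2=\sum_{\alpha\in\Gamma}\sum_{1\leq p\leq n_\B}\big|\big(\tau_\alpha\psi_p,f\big)_{L^2(\X)}\big|^2 .
\]
This is exactly the defining property of a Parseval tight-frame (Definition \ref{D-frame}) for $P_\B\,L^2(\X)$, which closes the proof.

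Two points require a little care. The unitary $\mathfrak{U}_{F\Gamma}$ is only defined a priori on $\mathscr{S}(\X)$; I would first establish the frame identities on a dense subspace (e.g.\ on $P_\B\,\mathscr{S}(\X)$, or on finite linear combinations of the $\tau_\alpha\psi_p$, which lie in $P_\B\,L^2(\X)$ by the inclusion displayed just before the statement) and then extend by density, using that both sides of the norm identity are continuous in $f$. Secondly, one must justify interchanging the sum over $\alpha\in\Gamma$ with the fibre integral over $\T^d_*$ and with $\mathfrak{U}_{F\Gamma}^{-1}$; this is routine Fubini/Parseval bookkeeping since everything is an $L^2$-convergent series. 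The main (and only mild) obstacle is purely notational: keeping the three layers of identification straight — the abstract fibre $\tp_*^{-1}(\bz^*)$, its model $L^2(\mathcal{E})\cong\mathscr{F}_{\bz^*}$, and the functions $\psi_p\in L^2(\X)$ — and checking that the sections $\sigma$ arising from $f\in P_\B L^2(\X)$ are precisely the square-integrable sections of $\mathfrak{F}_\B$, which is the content of the identification of $P_\B L^2(\X)$ with $\mathfrak{S}^2(\mathfrak{F}_\B,\T^d_*)$ via $\mathfrak{U}_{F\Gamma}$. No genuinely new estimate is needed beyond Proposition \ref{P-Pfr-Tstar-sect} and the unitarity/intertwining properties of $\mathfrak{U}_{F\Gamma}$.
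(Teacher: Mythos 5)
Your proposal is correct and follows essentially the same route as the paper: transfer the fibrewise identities of Proposition \ref{P-Pfr-Tstar-sect} through the Bloch--Floquet transform, expand the coefficient functions $\bz^*\mapsto\big(\hat\psi_p(\bz^*),\hat f_{\bz^*}\big)$ in their discrete Fourier series over $\Gamma$, identify the Fourier coefficients with the scalar products $\big(\tau_{\pm\alpha}\psi_p,f\big)_{L^2(\X)}$, and integrate the fibrewise norm identity over $\T^d_*$ using Parseval for $\mathcal{F}_{\T^d_*}$. The only cosmetic difference (irrelevant since the family is indexed by all of $\Gamma$) is a sign convention on the translation index, and the density/interchange remarks you add are routine and consistent with what the paper implicitly uses.
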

\begin{proof}
First let us prove that $P_\B\,L^2(\X)$ is the linear span of the family $\Big\{\tau_\alpha\,\psi_{p},\ \alpha\in\Gamma,\ 1\leq p\leq n_\B\Big\}$ over $\mathbb{C}$. Thus let us choose any $f\in P_\B\,L^2(\X)$ and consider its Bloch-Floquet transform:
	\beq \nonumber 
	\hat{f}_{\bz^*}(\hat{x}):=\big(\mathfrak{U}_{F\Gamma}f\big)(\hat{x},{\bz^*})=\big(\mathfrak{U}_{F\Gamma}P_\B f\big)(\hat{x},{\bz^*})=\hat{P}_\B({\bz^*})\big[\hat{f}_{\bz^*}(\hat{x})\big]
	\eeq
that defines a bounded measurable global section $\T^d_*\ni\bz^*\mapsto\hat{f}_{\bz^*}\in L^2(\mathcal{E})\simeq\tilde{\p}_*^{-1}(\bz^*)$ and thus an element in $\mathfrak{S}^2(\mathfrak{F}_\B;\T^d_*)$.
	Using Proposition \ref{P-Pfr-Tstar-sect} we can write that:
	\beq\nonumber 
	\hat{f}_{\bz^*}=\underset{1\leq p\leq n_\B}{\sum}\,\big(\hat{\psi}_{p}({\bz^*})\,,\,\hat{f}_{\bz^*}\big)_{\mathscr{F}_{\bz^*}}\hat{\psi}_{p}({\bz^*})
	\eeq
	and consequently:
	\begin{align*}
	f(\hat{x}+\gamma)\,&=\,\big(\mathfrak{U}_{F\Gamma}^{-1}\hat{f}\big)(\hat{x}+\gamma)=\int_{\T^d_*}d{\bz^*}\,e^{i<{\bz^*},\gamma>}\,\hat{f}_{\bz^*}(\hat{x})\\
		&=\underset{1\leq p\leq n_\B}{\sum}\,\int_{\T^d_*}d{\bz^*}\,e^{i<{\bz^*},\gamma>}\,\big(\hat{\psi}_{p}({\bz^*})\,,\,\hat{f}_{\bz^*}\big)_{\tp_*^{-1}(\bz^*)}\hat{\psi}_{p}(\hat{x},{\bz^*})\\
		&=\,\underset{1\leq p\leq n_\B}{\sum}\,\underset{\alpha\in\Gamma}{\sum}\,\big[\mathcal{F}_{\T^d_*}\big(\hat{\psi}_{p}(\cdot)\,,\,\hat{f}\big)_{\tp_*^{-1}(\cdot)}\big](\alpha)\,\psi_{p}(\hat{x}+\gamma-\alpha).
	\end{align*}
For $(\alpha,p)\in\Gamma\times\underline{n_\B}$ we introduce the notation:
\beq\nonumber 
\mathbf{\Phi}_\B(f)_{\alpha,p}:=\big[\mathcal{F}_{\T^d_*}\big(\hat{\psi}_{p}(\cdot)\,,\,\hat{f}\big)_{\tp_*^{-1}(\cdot)}\big](\alpha)
\eeq
and conclude that:
\beq\nonumber 
f=\underset{1\leq p\leq n_\B}{\sum}\,\underset{\alpha\in\Gamma}{\sum}\,\mathbf{\Phi}_\B(f)_{\alpha,p}\,[\tau_{-\alpha}\psi_p].
\eeq

In order to prove that it is a Parseval frame let us compute now the scalar products: $\big(\tau_{-\alpha}\psi_{p}\,,\,f\big)_{L^2(\X)}$ for some $f\in L^2(\X)$ and some indices $(\alpha,p)\in\Gamma\times\underline{n_\B}$:
	\begin{align*}
		\big(\tau_{-\alpha}\psi_{p},f\big)_{L^2(\X)}&=\underset{\gamma\in\Gamma}{\sum}\int_{\mathcal{E}}d\hat{x}\,f(\hat{x}+\gamma)\,\overline{\psi_{p}(\hat{x}+\gamma-\alpha)}=\int_{\T^d_*}d{\bz^*}\,e^{i<{\bz^*},\alpha>}\int_{\mathcal{E}}d\hat{x}\,\hat{f}_{\bz^*}(\hat{x})\,\overline{\hat{\psi}_{p}(\hat{x},{\bz^*})}\\
	&=\int_{\T^d_*}d{\bz^*}\,e^{i<{\bz^*},\alpha>}\,\big(\hat{\psi}_{p}({\bz^*})\,,\,\hat{f}_{\bz^*}\big)_{\tp_*^{-1}(\bz^*)}=\mathbf{\Phi}_\B(f)_{\alpha,p}.
	\end{align*}
Using once again Proposition \ref{P-Pfr-Tstar-sect} we obtain finally:
	\begin{align*}
		\big\|f\big\|_{L^2(\X)}^2&=\underset{\gamma\in\Gamma}{\sum}\int_{\mathcal{E}}d\hat{x}\,\big|f(\hat{x}+\gamma)\big|^2=\int_{\T^d_*}d{\bz^*}\,\big\|\hat{f}_{\bz^*}\big\|_{\mathscr{F}_{{\bz^*}}}^2=\int_{\T^d_*}d{\bz^*}\Big(\underset{1\leq p\leq n_\B}{\sum}\,\big|\big(\hat{\psi}_{p}({\bz^*})\,,\,\hat{f}_{\bz^*}\big)_{\mathscr{F}_{\bz^*}}\big|^2\Big)\\
		&=\underset{1\leq p\leq n_\B}{\sum}\,\underset{\alpha\in\Gamma}{\sum}\Big|\mathbf{\Phi}_\B(f)_{\alpha,p}\Big|^2=\underset{1\leq p\leq n_\B}{\sum}\,\underset{\alpha\in\Gamma}{\sum}\Big|\big(\tau_{-\alpha}\psi_{p},f\big)_{L^2(\X)}\Big|^2
	\end{align*}
 and we notice that our map $\bPhi$ is in fact the coordinate map associated with  the Parseval frame $\{\tau_{-\alpha}\psi_{p}\}_{(\alpha,p)\in\Gamma\times\underline{n_\B}}$ as recalled in \eqref{DF-fr-1} in Appendix \ref{Ap-B}.
\end{proof}

The previous identity leads to the following corollary:
\begin{corollary}
	The coordinate map defined as in \eqref{DF-fr-1}, associated with  a Parseval frame:
	\beq\nonumber 
	\mathfrak{C}_\B:P_\B\,L^2(\X)\ni\,f\,\mapsto\,\big((\tau_{\alpha}\psi_{p}\,,\,f)_{L^2(\X)}\big)_{(\alpha,p)\in\Gamma\times\underline{n_\B}}\,\in\,[\ell^2(\Gamma)]^{n_\B}\cong\ell^2(\Gamma)\otimes\Co^{n_\B}
	\eeq
	is an isometry.
\end{corollary}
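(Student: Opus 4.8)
The plan is to read the statement off directly from the Parseval identity established in the proof of Proposition \ref{P-band-frame}. First I would recall that in that proof it is shown, for every $f\in P_\B\,L^2(\X)$, that
\[
\|f\|_{L^2(\X)}^2=\underset{1\leq p\leq n_\B}{\sum}\,\underset{\alpha\in\Gamma}{\sum}\big|\big(\tau_{-\alpha}\psi_p\,,\,f\big)_{L^2(\X)}\big|^2 .
\]
Since $\alpha\mapsto-\alpha$ is an involution of the lattice $\Gamma$, reindexing the inner sum yields the same identity with $\tau_{-\alpha}$ replaced by $\tau_\alpha$; equivalently, the family $\{\tau_\alpha\psi_p\}_{(\alpha,p)\in\Gamma\times\underline{n_\B}}$ is again a Parseval tight-frame for $P_\B\,L^2(\X)$ and its coordinate map is the map $\mathfrak{C}_\B$ of the statement.

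Next I would observe that the right-hand side above being finite is exactly the assertion that $\mathfrak{C}_\B(f)\in[\ell^2(\Gamma)]^{n_\B}\cong\ell^2(\Gamma)\otimes\Co^{n_\B}$, so that $\mathfrak{C}_\B$ is well-defined as a map into this Hilbert space; and the identity itself reads $\|\mathfrak{C}_\B(f)\|_{[\ell^2(\Gamma)]^{n_\B}}=\|f\|_{L^2(\X)}$ for all $f\in P_\B\,L^2(\X)$. Linearity of $\mathfrak{C}_\B$ is immediate because $L^2(\X)\ni f\mapsto(\tau_\alpha\psi_p\,,\,f)_{L^2(\X)}$ is linear in its second slot (our scalar product being anti-linear in the first). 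A linear, norm-preserving map between Hilbert spaces is by definition an isometry, which is the claim.

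There is essentially no obstacle here: the corollary is a bookkeeping consequence of the Parseval relation of Proposition \ref{P-band-frame}, the only point worth a word of care being the harmless index change $\alpha\leftrightarrow-\alpha$, under which both the frame property and the $\ell^2$-norm of the coordinate sequence are invariant because the counting measure on the abelian group $\Gamma$ is. If one prefers, one may avoid it altogether by repeating verbatim the computations of $(\tau_\alpha\psi_p,f)_{L^2(\X)}$ and of $\|f\|^2_{L^2(\X)}$ from the proof of Proposition \ref{P-band-frame} with $-\alpha$ systematically replaced by $\alpha$.
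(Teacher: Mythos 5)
Your proposal is correct and follows essentially the same route as the paper: the corollary is read off directly from the Parseval identity established in the proof of Proposition \ref{P-band-frame}, which immediately gives linearity and norm preservation of the coordinate map. Your explicit remark on the harmless reindexing $\alpha\leftrightarrow-\alpha$ over the lattice $\Gamma$ is a fine (if tacit in the paper) point of bookkeeping and changes nothing of substance.
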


Notice that this isometry  may not be surjective.

\subsubsection{Strongly off-diagonal localized matrices for operators in ${\mathbb{B}\big(P_\B\,L^2(\X)\big)}$.}
\label{SSS-band-W-frame}

\begin{definition}
	Let $\mathcal{e}_{\gamma,p}:=\mathcal{e}_\gamma\otimes\mathcal{e}_{p}$ be the canonical basis of $[\ell^2(\Gamma)]\otimes\Co^{n_\B}$, $\mathcal{K}_\B:=\ell^2(\Gamma)\otimes\Co^{n_\B}$ and let $\widetilde{P}_{\B}:[\ell^2(\Gamma)]^{n_\B}\,\rightarrow\,\mathfrak{C}_\B[P_\B\,L^2(\X)]$ denote the orthogonal projection on the image of $\mathfrak{C}_\B:P_\B\,L^2(\X)\rightarrow\mathcal{K}_\B$.
\end{definition}

	The following identities hold true:
\[
\mathfrak{C}_\B\,=\,\underset{(\alpha,p,j)\in\Gamma\times\underline{n_\B}}{\sum}[\mathcal{e}_{\alpha,p}]\bowtie[\tau_{\alpha}\psi_{p}],\quad\,\widetilde{P}_{\B}=\mathfrak{C}_\B\,\mathfrak{C}_\B^*,\quad\Id_{P_\B\,L^2(\X)}=\mathfrak{C}_\B^*\,\mathfrak{C}_\B\,,
\]
and the map $\widetilde{\mathfrak{C}}_\B:\mathbb{B}\big(P_\B\,L^2(\X)\big)\rightarrow\mathbb{B}\big(\mathcal{K}_\B\big)$ introduced  in Definition \ref{DF-coord-op-hom} is an isometric homomorphism of $C^*$-algebras, and thus an isometric isomorphism on its image that is equal to $\widetilde{P}_{\B}\,\mathbb{B}(\mathcal{K}_\B)\widetilde{P}_{\B}$.

We shall use the canonical orthonormal basis in $\mathcal{K}_\B\cong\ell^2(\Gamma)\otimes\Co^{n_\B}$ in order to associate infinite matrices with operators in $\mathbb{B}\big(P_\B\,L^2(\X)\big)$. On the algebra $\mathscr{M}_{n_\B}$ (see Notation \ref{N-matrix}) we shall work with the $C^*$-norm $\|\cdot\|_{\mathscr{M}_{n_\B}}$ defined by the operator norm for operators on $\Co^{n_\B}$ and on $\mathscr{M}_{\Gamma}^\circ[\mathscr{M}_{n_\B}]$ we shall work with the $C^*$-norm {$\norm{\cdot}$} defined by the operator norm of the matrix considered as linear operator on $\mathcal{K}_\B$.
\begin{definition}
Given a bounded operator $T\in\mathbb{B}\big(P_\B\,L^2(\X)\big)$, we define its infinite matrix $\mathfrak{M}[T]$ associated {to} the Parseval frame $\big\{\tau_{-\alpha}\,\psi_p\big\}_{(\alpha,p)\in\Gamma\times\underline{n_\B}}$ {through its elements}:
$$
\mathfrak{M}[T]_{(\alpha,p),(\beta,q)}\,:=\,\big(\tau_{-\alpha}\,\psi_p\,,\,T\,\tau_{-\beta}\,\psi_q\big)_{L^2(X)}{=\big(\mathcal{e}_{\alpha,p}\,,\,\widetilde{\mathfrak{C}}_\B(T)\,\mathcal{e}_{\beta,q}\big)_{\mathcal{K}_\B}.}
$$
\end{definition}

 \begin{remark}
Using some previous arguments one can easily prove that for any pair of operators $S$ and $T$ such that $\mathfrak{M}_\B[T]\in\mathscr{M}^\circ_\Gamma[\mathscr{M}_{n_\B}]$ and $\mathfrak{M}_\B[S]\in\mathscr{M}^\circ_\Gamma[\mathscr{M}_{n_\B}]$, we have that $$\mathfrak{M}_\B[T S]\,=\,\mathfrak{M}_\B[T]\,\cdot\, \mathfrak{M}_\B[S]\in\mathscr{M}^\circ_\Gamma[\mathscr{M}_{n_\B}]$$
	 where we have denoted by "$\cdot$"  the natural extension of the usual matrix product to infinite matrices  having rapid decay off the diagonal.
  \end{remark}

\begin{remark} 
For any bounded operator $T\in\mathbb{B}\big(P_\B\,L^2(\X)\big)$,  we have the equality:
		\beq\nonumber 
	\sigma\big(\mathfrak{M}_\B[T]\big)\,=\,\left\{\begin{array}{l}\vspace*{0.1cm}
		     \sigma\big(T\big)\,\bigcup\,\{0\}\quad\text{if $\widetilde{\mathfrak{C}}_\B$ is not injective},\\
		     \sigma\big(T\big)\quad\text{if $\widetilde{\mathfrak{C}}_\B$ is injective}.
		\end{array}\right.
		\eeq
\end{remark}

\begin{proposition}
Suppose given $T\in\mathbb{B}\big(P_\B\,L^2(\X)\big)$ that commutes with all the translations with elements from $\Gamma$ and its Bloch-Floquet transform defines a smooth map $\T_*\ni\bz^*\mapsto\widehat{T}(\bz^*)\in\mathbb{B}\big(\mathscr{F}_{\bz^*}\big)$. Then $\mathfrak{M}_{\B}[T]$ belongs to $\mathscr{M}_\Gamma^\circ[\mathscr{M}_{n_{\B}}]$ and for any $(\alpha,\beta)\in\Gamma\times\Gamma$ we have the identity $\mathfrak{M}_{\B}[T]_{\alpha,\beta}=\mathfrak{M}_{\B}[T]_{\alpha-\beta,0}\in\mathscr{M}_{n_{\B}}$.
\end{proposition}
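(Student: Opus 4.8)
The plan is to exploit the $\Gamma$-translation covariance of $T$ together with the translation covariance of the Parseval frame $\{\tau_{-\alpha}\psi_p\}$. The first observation is the covariance identity for the frame vectors: since $\tau_{-\alpha}\psi_p = \tau_\gamma(\tau_{-\alpha-\gamma}\psi_p)$ for any $\gamma\in\Gamma$, and since by hypothesis $T$ commutes with every $U_\gamma$ (which acts on $P_\B L^2(\X)$ as $\tau_\gamma$), I would compute
\[
\mathfrak{M}_\B[T]_{(\alpha,p),(\beta,q)} = \big(\tau_{-\alpha}\psi_p\,,\,T\,\tau_{-\beta}\psi_q\big)_{L^2(\X)} = \big(\tau_\gamma\tau_{-\alpha-\gamma}\psi_p\,,\,T\,\tau_\gamma\tau_{-\beta-\gamma}\psi_q\big)_{L^2(\X)}.
\]
Using that $\tau_\gamma$ is unitary and commutes with $T$, the right-hand side equals $\big(\tau_{-\alpha-\gamma}\psi_p\,,\,T\,\tau_{-\beta-\gamma}\psi_q\big)_{L^2(\X)} = \mathfrak{M}_\B[T]_{(\alpha+\gamma,p),(\beta+\gamma,q)}$. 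Taking $\gamma=-\beta$ yields $\mathfrak{M}_\B[T]_{(\alpha,p),(\beta,q)} = \mathfrak{M}_\B[T]_{(\alpha-\beta,p),(0,q)}$, which is exactly the stated identity $\mathfrak{M}_\B[T]_{\alpha,\beta}=\mathfrak{M}_\B[T]_{\alpha-\beta,0}$ (read as $n_\B\times n_\B$ blocks indexed by $(p,q)$).

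The second, more substantial part is showing $\mathfrak{M}_\B[T]\in\mathscr{M}_\Gamma^\circ[\mathscr{M}_{n_\B}]$, i.e. rapid off-diagonal decay: for every $n$, $\sup_{\gamma\in\Gamma}\langle\gamma\rangle^n\|\mathfrak{M}_\B[T]_{\gamma,0}\|_{\mathscr{M}_{n_\B}}<\infty$. By the covariance just established, it suffices to estimate $|(\tau_{-\gamma}\psi_p\,,\,T\,\psi_q)_{L^2(\X)}|$ for large $|\gamma|$. The plan here is to pass to the Bloch-Floquet representation: writing $\hat\psi_p,\hat\psi_q$ for the sections and $\widehat T(\bz^*)$ for the fibered operator, one has
\[
(\tau_{-\gamma}\psi_p\,,\,T\,\psi_q)_{L^2(\X)} = \int_{\T^d_*} d\bz^*\, e^{i<\s_*(\bz^*),\gamma>}\,\big(\hat\psi_p(\bz^*)\,,\,\widehat T(\bz^*)\hat\psi_q(\bz^*)\big)_{\tp_*^{-1}(\bz^*)},
\]
so $\mathfrak{M}_\B[T]_{\gamma,0}$ is precisely the $\gamma$-th Fourier coefficient of the $\mathscr{M}_{n_\B}$-valued function $\bz^*\mapsto\big(\hat\psi_p(\bz^*),\widehat T(\bz^*)\hat\psi_q(\bz^*)\big)$. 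Since the sections $\hat\psi_p$ are smooth global sections (Proposition~\ref{P-Pfr-Tstar-sect}) and $\widehat T(\cdot)$ is smooth by hypothesis, this matrix-valued function is $C^\infty$ on the compact torus $\T^d_*$; integration by parts in $\bz^*$ (equivalently, the standard decay of Fourier coefficients of smooth functions) then gives $\|\mathfrak{M}_\B[T]_{\gamma,0}\|_{\mathscr{M}_{n_\B}}\le C_n\langle\gamma\rangle^{-n}$ for every $n$. Combining the covariance and the decay gives $\mathfrak{M}_\B[T]\in\mathscr{M}_\Gamma^\circ[\mathscr{M}_{n_\B}]$.

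The main obstacle I anticipate is purely bookkeeping rather than conceptual: one must be careful that the frame vectors $\psi_p$ are genuinely of Schwartz (or at least rapidly decaying $C^\infty$) class so that the pairing with the translated frame vectors converges and so that the Bloch-Floquet transform of $\psi_p$ really is the smooth section $\hat\psi_p$ — this is guaranteed by \eqref{F-PN-1} and the smoothness of $\hat\psi_p$, together with the remark after Theorem~\ref{T-FBdec-proj} that smoothness of the section corresponds to rapid decay in $L^2(\X)$. One should also double-check the identification of $P_\B L^2(\X)$-translations with the fiber translations and the fact that $T$ commuting with $\{U_\gamma\}$ on $L^2(\X)$ restricts correctly; this is where Theorem~\ref{T-FBdec} and the direct-integral decomposition are invoked. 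Once these identifications are in place, the argument is a one-line covariance computation followed by the classical smooth-function Fourier-decay estimate, so no serious difficulty remains.
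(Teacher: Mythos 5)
Your proposal is correct and follows essentially the same route as the paper: the paper's (very terse) proof likewise derives the rapid off-diagonal decay from the smoothness of the frame sections $\hat\psi_p$ and of $\bz^*\mapsto\widehat{T}(\bz^*)$ (i.e. Fourier coefficients of a smooth $\mathscr{M}_{n_\B}$-valued function on $\T^d_*$), and the identity $\mathfrak{M}_\B[T]_{\alpha,\beta}=\mathfrak{M}_\B[T]_{\alpha-\beta,0}$ from the commutation of $T$ with the $\Gamma$-translations, exactly as in your covariance computation.
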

\begin{proof}
We use the {smoothness} of the {tight-}frame given by Proposition \ref{P-Pfr-Tstar-sect} and of the Bloch-Floquet transform of $T\in\mathbb{B}\big(P_\B\,L^2(\X)\big)$ in order to obtain the rapid off-diagonal decay. The commutation with all the translations with elements from $\Gamma$ then implies the last equalities.
\end{proof}
\begin{remark}
We shall use the above proposition for the operators $P_{\B}$ and $\ham_{\B}$.
\end{remark}

We introduce the following notations:
\beq\label{DF-m-circ-B}
[\mathfrak{m}^\circ_\B]_\gamma\,:=\,\mathfrak{M}_\B[H_\B]_{\gamma,0}\in\mathscr{M}_{n_\B},\quad\widehat{\mathfrak{m}}_\B(\bz^*)_{p,q}:=\big(\hat{\psi}_p(\bz^*)\,,\,\widehat{H}_\B(\bz^*)\,\hat{\psi}_q(\bz^*)\big)_{\mathscr{F}_{\bz^*}}.
\eeq

\subsection{The magnetic strongly localized tight-frame}\label{SS-Peps-c-B-frame}

For studying the effective dynamics in the subspace $P^{\epsilon,\cc}_\B\,L^2(\X)$ associated with the isolated Bloch family $\B$ in the perturbing magnetic field \eqref{Hyp-magnField}, an important tool is an analog of the Parseval frame constructed in Paragraph \ref{SS-band-P-frame} in which we replace the usual translations with elements from $\Gamma$ by Zak magnetic translations. Following the ideas in \cite{Ne-RMP}, we observe  that for a constant magnetic field, one may define a twisted representation of $ \Gamma $ on $L^2(\X)$ (see point 1 in Proposition \ref{Zak-transl}), commuting with the magnetic quantization of any $\Gamma$-periodic symbol; it appears in a series of previous references (\cite{Z} and the mathematical developments in \cite{HS1,Sj,HS,Ne-LMP,Ne-RMP,CHN}) and we shall call its operators \textit{the Zak translations}. Considering weak non-constant perturbing magnetic fields obliged us to a rather complex construction (presented in the next paragraphs \ref{SS-Z-trsl} - \ref{SS-EndProof-T-III} of this subsection) due to the fact that the simple-minded extension of the Zak translations to non-constant magnetic fields does no longer provide us with a twisted group representation. 
Our argument has two main steps:
\begin{itemize}
    \item In a first step (Paragraphs \ref{SS-Z-trsl} - \ref{SSS-PB-eps0-Wframe}), we use the Zak translations associated with  the constant part $\epsilon\,B^\bullet$ of the perturbing magnetic field in order to define a Parseval frame for the subspace $P^{\epsilon,0}_\B\,L^2(\X)$. Then in Paragraph \ref{SSS-const-mf-Pfr} we consider the associated constant field magnetic matrices and their special form given in Proposition \ref{P-B2}.
    \item In a second step (Paragraph \ref{SSS-nonconst-mf-Pfr}),  we deform the subspace $P^{\epsilon,0}_\B\,L^2(\X)$ by taking into account the phase functions $\widetilde{\Lambda}^{\epsilon,\cc}$ defined  in \eqref{F-dec-p-m-phase} by the weak non-constant perturbing magnetic field $\epsilon\cc\,B^\epsilon$, obtaining a Parseval frame for the deformed subspace $Q^{\epsilon,\cc}_\B\,L^2(\X)$ that we prove to be unitary equivalent with $P^{\epsilon,\cc}_\B\,L^2(\X)$. \\
    Finally, in Paragraph \ref{SS-EndProof-T-III} we prove a number of estimations that allow to get  the conclusion \eqref{F-I} in Theorem \ref{T-III} from Theorem \ref{T-I} which was proven  in Section \ref{S-proof-T-I}.
\end{itemize}

\subsubsection{The magnetic infinite matrices}\label{SSS-m-inf-matr}

With the same  intuition as behind the construction of the modified Wannier function in {\cite{Ne-RMP,CMM} }together with our previous analysis in \cite{CHP-1}, let us  introduce the following system of functions using the notations \eqref{F-dec-p-m-phase}:
\beq\label{DF-m-W-f}
\mathfrak{T}^{\epsilon,\cc}_\gamma\psi_p\,:=\,\Lambda^{\epsilon,\cc}( \cdot ,\gamma)\,\tau_{-\gamma}\psi_p,\quad\forall(\gamma,p)\in\Gamma\times\underline{n_\B}.
\eeq
We emphasize that the above unitary operators $\mathfrak{T}^{\epsilon,\cc}_\gamma$, are different of the unitary Weyl translations appearing in \eqref{F-UAx}:
		\[
U^{\epsilon,\cc}_\gamma\,f=\Lambda^{\epsilon,\cc}(\cdot ,\cdot +\gamma)\,\tau_{\gamma}f.
		\] 

{{Our strategy is to modify the family \eqref{DF-m-W-f} using Proposition \ref{P-band-frame} in order to obtain a Parseval frame for the perturbed subspace $P^{\epsilon,\cc}_\B\,L^2(\X)$ and to use this frame (as briefly explained in Appendix \ref{Ap-B}) in order to transport the algebra of operators on $P^{\epsilon,\cc}_\B\,L^2(\X)$ into an algebra of infinite magnetic matrices (see \eqref{DF-C-eps-c-B}}.} Recalling the notation introduced in \eqref{F-dec-p-m-phase}, we introduce the following notations:}
\beq \label{dhc5}
\Lambda^{\epsilon,\cc}_\gamma(x):=\Lambda^{\epsilon,\cc}(x,\gamma),\quad\widetilde{\Lambda}^{\epsilon,\cc}_\gamma(x):=\widetilde{\Lambda}^{\epsilon,\cc}(x,\gamma),\quad\Lambda^{\epsilon}_\gamma(x):=\Lambda^{\epsilon}(x,\gamma).
\eeq

The analysis in this paragraph could be used in order to obtain a rough version of  Theorem \ref{T-III} in which $\mathfrak{m}^\epsilon_\B$ would have been replaced by $\mathfrak{m}^\circ_\B$, with the price of having an error of order $\epsilon$ in formula \eqref{F-I} in Theorem \ref{T-III} instead of $\cc\epsilon$. Having in mind the Remark \ref{R-main} and the fact that in many interesting situations, like those discussed in \cite{CHP-1, CHP-2}, the spectral islands produced by the quantization $\widetilde{\Op}^{\epsilon,\cc}$ are of order $\epsilon$, an error of the same order would make the result insignificant; the error of order $\epsilon(\epsilon+\cc)$ appearing in \eqref{F-I} is thus essential for the significance of our result. Looking more attentively at formula \eqref{F-4_12}  while recalling that $\Lambda_\gamma^{\epsilon,\cc}=\widetilde{\Lambda}_\gamma^{\epsilon,\cc}\,\Lambda^\epsilon_\gamma$ and using the constant magnetic field Zak translations defined in the following paragraph {and the Parseval frame defined in \eqref{dhc25} and its coordinate map \eqref{DF-C-eps-c-B}}, allows us to put into evidence the matrix-valued symbol $\mathfrak{m}^\epsilon_\B$ and to obtain the estimation in \eqref{F-I}. {Before starting this programme, let us consider the infinite matrix associated with  an integral operator with respect to the system of functions in \eqref{DF-m-W-f} and notice the following interesting property that they have.}

\begin{proposition}
Given a magnetic field as in Hypothesis \ref{H-magnField} and a $\Gamma$-periodic distribution kernel $\mathfrak{K}\in\mathring{\mathscr{S}}
(\X\times\X)_\Gamma$ (see the Notation \ref{N-Scirc} in Appendix \ref{A-m-PsiDO}), {the infinite matrix associated with  the operator with integral kernel $\Lambda^{\epsilon,\cc}\mathfrak{K}$ with respect to the system of functions in \eqref{DF-m-W-f},} is of the form 
\beq\label{F-4_12}
\big(\mathfrak{T}^{\epsilon,\cc}_\alpha\psi_p\,,\,\big[\Int\,\Lambda^{\epsilon,\cc}\mathfrak{K}\big]\mathfrak{T}^{\epsilon,\cc}_\beta\, \psi_q\big)_{L^2(\X)} =\Lambda^{\epsilon,\cc}(\alpha,\beta)\mathring{\mathfrak{K}}_{\alpha-\beta}+\epsilon\mathfrak{m}^{\epsilon,\cc}[\mathfrak{K}]_{\alpha,\beta}
\eeq
 where  $\mathring{\mathfrak{K}}\in{\cal{s}}(\Gamma;\MmN)$ is independent of $(\epsilon,\cc)$ and the  family 
 $\big\{\mathfrak{m}^{\epsilon,\cc}[\mathfrak{K}],\,(\epsilon,c)\in [0,\epsilon_0]\times [0,1]\big\}$  belongs to  a bounded set of  $\mathscr{M}^\circ_\Gamma[\MmN]$.
\end{proposition}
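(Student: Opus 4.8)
The plan is to expand the scalar product on the left-hand side of \eqref{F-4_12} using the definitions \eqref{DF-m-W-f} and \eqref{DF-Int}, and to track the phase functions carefully. Writing $\mathfrak{T}^{\epsilon,\cc}_\alpha\psi_p=\Lambda^{\epsilon,\cc}(\cdot,\alpha)\,\tau_{-\alpha}\psi_p$ and similarly for $\beta$, the matrix element becomes an integral of $\overline{\Lambda^{\epsilon,\cc}(x,\alpha)}\,\Lambda^{\epsilon,\cc}(x,y)\,\Lambda^{\epsilon,\cc}(y,\beta)$ against $\overline{\psi_p(x-\alpha)}\,\mathfrak{K}(x,y)\,\psi_q(y-\beta)$. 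The first observation is that, thanks to the cocycle-type identity for the magnetic phase factor $\Lambda^{A^{\epsilon,\cc}}$ (the flux through the triangle $\{\alpha,x,y\}$ minus the triangle $\{\alpha,\beta,y\}$ type decomposition, as used in \cite{Ne-LMP,HS,CHP-1}), one can write $\overline{\Lambda^{\epsilon,\cc}(x,\alpha)}\,\Lambda^{\epsilon,\cc}(x,y)\,\Lambda^{\epsilon,\cc}(y,\beta)=\Lambda^{\epsilon,\cc}(\alpha,\beta)\cdot\Theta^{\epsilon,\cc}_{\alpha,\beta}(x,y)$, where $\Theta^{\epsilon,\cc}_{\alpha,\beta}(x,y)$ is an exponential of a flux through a small region whose size is controlled by $|x-\alpha|$ and $|y-\beta|$; because $\psi_p,\psi_q\in\mathscr{S}(\X)$ these are effectively localized near $\alpha$ and $\beta$, so $\Theta^{\epsilon,\cc}_{\alpha,\beta}-1$ is $\mathscr{O}(\epsilon)$ with rapidly decaying coefficients in $\alpha-\beta$.

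The next step is to separate the constant part of the magnetic field from the fluctuation, using $\Lambda^{\epsilon,\cc}=\widetilde{\Lambda}^{\epsilon,\cc}\Lambda^{\epsilon}$ as in \eqref{F-dec-p-m-phase}. The constant-field phase $\Lambda^\epsilon$ satisfies the \emph{exact} bilinear identity displayed in \eqref{F-dec-p-m-phase}, namely $\Lambda^\epsilon(x,y)=\exp\big(-(i\epsilon/2)\sum_{j,k}B^\bullet_{k,j}x_ky_j\big)$, which makes the algebra of the constant-field triangle fluxes completely explicit; the fluctuation phase $\widetilde{\Lambda}^{\epsilon,\cc}(x,y)=\exp(-i\cc\epsilon\int_{[x,y]}A^\epsilon)$ contributes only through its deviation from $1$, which is $\mathscr{O}(\cc\epsilon)=\mathscr{O}(\epsilon)$ on the support-scale of the Schwartz tails. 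Thus, setting $\epsilon=0$ kills all phase factors and one is left with $\sum_{\gamma}\int_{\mathcal{E}}d\hat x\,\overline{\psi_p(\hat x+\gamma-\alpha)}\,\mathfrak{K}(\hat x+\gamma,y)\,\psi_q$ summed appropriately, which by $\Gamma$-periodicity of $\mathfrak{K}$ depends only on $\alpha-\beta$; this defines $\mathring{\mathfrak{K}}_{\alpha-\beta}\in\mathscr{M}_{n_\B}$, and the rapid decay $\mathring{\mathfrak{K}}\in\mathcal{s}(\Gamma;\MmN)$ follows from $\mathfrak{K}\in\mathring{\mathscr{S}}(\X\times\X)_\Gamma$ (rapid off-diagonal decay) together with the Schwartz decay of $\psi_p,\psi_q$.

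Finally, one collects everything that was not captured by the $\epsilon=0$ term into $\epsilon\,\mathfrak{m}^{\epsilon,\cc}[\mathfrak{K}]_{\alpha,\beta}$: this is the sum of (i) the Taylor remainder of $\Theta^{\epsilon,\cc}_{\alpha,\beta}-1$, (ii) the remainder of $\widetilde{\Lambda}^{\epsilon,\cc}-1$, and (iii) the cross terms. Each of these carries an explicit factor of $\epsilon$ (or $\cc\epsilon$) in front, and — this is the crucial point for membership in $\mathscr{M}^\circ_\Gamma[\MmN]$ — each remaining phase difference is multiplied by a polynomial in $|x-\alpha|,|y-\beta|$ which is absorbed by the Schwartz decay of $\psi_p,\psi_q$ and the rapid off-diagonal decay of $\mathfrak{K}$, producing, after integration and summation over $\Gamma$, a kernel with rapid decay in $\alpha-\beta$ uniformly in $(\epsilon,\cc)\in[0,\epsilon_0]\times[0,1]$. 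The boundedness of the family $\{\mathfrak{m}^{\epsilon,\cc}[\mathfrak{K}]\}$ in $\mathscr{M}^\circ_\Gamma[\MmN]$ then follows from the operator-norm estimate for matrices with rapid off-diagonal decay (a Schur-test argument, as in Notation \ref{N-matrix}). I expect the main obstacle to be the bookkeeping in step one: writing down a clean decomposition of the product of the three phase factors that isolates exactly $\Lambda^{\epsilon,\cc}(\alpha,\beta)$ while keeping the leftover flux manifestly $\mathscr{O}(\epsilon)$ with polynomially-controlled geometry — this is where the non-constancy of $B^\epsilon$ forces one to work with fluxes through curvilinear regions rather than the exact quadratic form available in the constant case, and one must be careful that the estimates remain uniform in $\alpha,\beta$.
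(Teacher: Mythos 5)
Your skeleton is the paper's: pair $\mathfrak{K}$ with the product of the three phases times the translated Schwartz functions, factor out $\Lambda^{\epsilon,\cc}(\alpha,\beta)$ by a Stokes-type identity so that the leftover is an exponential of fluxes of the full perturbing field $B^{\epsilon,\cc}$ through the triangles with vertices $\alpha,x,y$ and $\alpha,y,\beta$, identify the $(\epsilon,\cc)$-independent main term (depending only on $\alpha-\beta$ by $\Gamma$-periodicity and giving $\mathring{\mathfrak{K}}$), and absorb the remainder using localization. But two of your steps, as written, do not go through. The split $\Lambda^{\epsilon,\cc}=\widetilde{\Lambda}^{\epsilon,\cc}\Lambda^{\epsilon}$ together with the claim that $\widetilde{\Lambda}^{\epsilon,\cc}(x,y)-1$ is $\mathscr{O}(\cc\epsilon)$ ``on the support-scale'' and that every leftover phase is multiplied only by a polynomial in $|x-\alpha|,|y-\beta|$ is false here: $A^{\epsilon}$ is merely in $\Fp^1(\X)$ and the paper explicitly allows it to be unbounded, so $\int_{[x,y]}A^{\epsilon}$ grows with the absolute position (for linearly growing $A^{\epsilon}$ it is of size $<x>\,|x-y|$, i.e. of size $|\alpha|$ where $\tau_{-\alpha}\psi_p$ lives). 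Your error term (ii) would then not be uniformly bounded in $(\alpha,\beta)$, and both the $\mathscr{O}(\epsilon)$ bound and the membership of $\{\mathfrak{m}^{\epsilon,\cc}[\mathfrak{K}]\}$ in a bounded subset of $\mathscr{M}^\circ_\Gamma[\MmN]$ would be lost. Only the gauge-invariant loop combination $\big(\overline{\Lambda^{\epsilon,\cc}_\alpha}\otimes\Lambda^{\epsilon,\cc}_\beta\big)\Lambda^{\epsilon,\cc}$ — fluxes of $B^{\epsilon,\cc}$, never line integrals of $A^{\epsilon}$ alone — obeys the needed bound $C\epsilon\big[<x-\alpha><x-y>+<y-\beta><x-y>\big]$; your own first step already provides this, so the constant/fluctuation split should simply be dropped (the paper never makes it in this proof).

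The second gap is that $\mathfrak{K}\in\mathring{\mathscr{S}}(\X\times\X)_\Gamma$ is a genuine distribution near the diagonal, so you cannot estimate a pointwise integral against it: the remainder must be controlled through Schwartz seminorms of the modified test function, which requires bounds on \emph{all derivatives} of the flux phases, not just on the phases themselves. The paper does exactly this (it differentiates $\int_{<\alpha,x,y>}B^{\epsilon,\cc}$, iterates, and shows every derivative still carries a factor $\epsilon$ times weights $<x-\alpha>^{M_1}<y-\beta>^{M_2}<x-y>^{M_3}$); note also that the weights involve $<x-y>$, not only $<x-\alpha>,<y-\beta>$. The rapid decay in $\alpha-\beta$ is then extracted by moving $<x-y>^{N}$ onto $\mathfrak{K}$ (the product $\mathfrak{k}_N\mathfrak{K}$ stays in $\mathring{\mathscr{S}}(\X\times\X)$) and using $<\alpha-\beta>^{N}\leq C_N<x-\alpha>^{N}<y-\beta>^{N}<x-y>^{N}$, uniformly in $(\alpha,\beta)$ and $(\epsilon,\cc)$. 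With these two repairs — staying with fluxes of $B^{\epsilon,\cc}$ throughout, and carrying out the seminorm/derivative estimates with the weight-trading argument — your outline becomes the paper's proof.
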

\begin{proof}
Let us compute:
\begin{align}\label{F-B-3}
\big(\mathfrak{T}^{\epsilon,\cc}_\alpha\psi_p\,,\,[\Int\,\Lambda^{\epsilon,\cc}\mathfrak{K}]\mathfrak{T}^{\epsilon,\cc}_\beta\, \psi_q\big)_{L^2(\X)}=\big\langle\,\mathfrak{K}\,,\,\big(\overline{\Lambda^{\epsilon,\cc}_\alpha}\otimes\Lambda^{\epsilon,\cc}_\beta\big)\,\Lambda^{\epsilon,\cc}\,\big(\overline{\tau_{-\alpha}\psi_p}\otimes\tau_{-\beta}\psi_q\big)\big\rangle_{\mathscr{S}(\X\times\X)}.
\end{align}
We notice that:
\begin{align*}
\big[\big(\overline{\Lambda^{\epsilon,\cc}_\alpha}\otimes\Lambda^{\epsilon,\cc}_\beta\big)\,\Lambda^{\epsilon,\cc}](x,y)&=\exp\Big[-i\Big(\int_{[\alpha,x]}A^{\epsilon,\cc}+\int_{[y,\beta]}A^{\epsilon,\cc}+\int_{[x,y]}A^{\epsilon,\cc}\Big]\\
&=\Lambda^{\epsilon,\cc}(\alpha,\beta)\,\exp\Big[-i\Big(\int_{<\alpha,x,y>}B^{\epsilon,\cc}\Big)\Big]\,\exp\Big[-i\Big(\int_{<\alpha,y,\beta>}B^{\epsilon,\cc}\Big)\Big];
\end{align*}
\begin{align*}
\Big|\big[\big(\overline{\Lambda^{\epsilon,\cc}_\alpha}\otimes\Lambda^{\epsilon,\cc}_\beta\big)\,\Lambda^{\epsilon,\cc}](x,y)\,-\,\Lambda^{\epsilon,\cc}(\alpha,\beta)\Big|\leq\,C\epsilon\big[<x-\alpha><x-y>+<y-\beta><x-y>\big].
\end{align*}
Let us also compute a derivative, for example:
\begin{align*}
\partial_{x_l}\int_{<\alpha,x,y>} & B^{\epsilon,\cc}=\partial_{x_l}\Big(\underset{1\leq j,k\leq d}{\sum}(x-\alpha)_j(y-x)_k\int_0^1ds\int_0^sdr\,B_{j,k}\big(\alpha+s(x-\alpha)+r(y-x)\big)\Big)\\
&=\underset{1\leq k\leq d}{\sum}(y-x)_k\int_0^1ds\int_0^sdr\,B_{l,k}\big(\alpha+s(x-\alpha)+r(y-x)\big)\\
&+\underset{1\leq k\leq d}{\sum}(x-\alpha)_k\int_0^1ds\int_0^sdr\,B_{k,l}\big(\alpha+s(x-\alpha)+r(y-x)\big)\\
&+\underset{1\leq j,k\leq d}{\sum}(x-\alpha)_j(y-x)_k\int_0^1ds\int_0^sdr\,(s-r)\big(\partial_{x_l}B_{j,k}\big)\big(\alpha+s(x-\alpha)+r(y-x)\big)
\end{align*}
in order to conclude by iteration that all the derivatives of $\big(\overline{\Lambda^{\epsilon,\cc}_\alpha}\otimes\Lambda^{\epsilon,\cc}_\beta\big)\,\Lambda^{\epsilon,\cc}$ may be bounded by polynomials of the form:
$$
\epsilon<x-\alpha>^{M_1}<y-\beta>^{M_2}<x-y>^{M_3}.
$$
Denoting by $\mathfrak{q}_N(x):=<x>^N$ we notice that:
\begin{align*}
\Big[\big[\big(\overline{\Lambda^{\epsilon,\cc}_\alpha}\otimes\Lambda^{\epsilon,\cc}_\beta\big)\,\Lambda^{\epsilon,\cc}]\,-\,\Lambda^{\epsilon,\cc}(\alpha,\beta)\Big]\big((\tau_{-\alpha}\mathfrak{q}_N)\otimes(\tau_{-\beta}\mathfrak{q}_N)\big)\big(\overline{\tau_{-\alpha}\psi_p}\otimes\tau_{-\beta}\psi_q\big)\in\,\mathscr{S}(\X\times\X).
\end{align*}
By hypothesis, for any $N\in\mathbb{N}$, the product of the smooth kernel $$(x,y)\mapsto \mathfrak{k}_N(x,y):=<x-y>^N$$ with $\mathfrak{K}$ defines a distribution kernel $\mathfrak{K}^\prime_N:=\mathfrak{k}_N\mathfrak{K}\in\mathring{\mathscr{S}}(\X\times\X)$ and thus
\begin{align*}
&\big\langle\,\mathfrak{K}\,,\,\big(\overline{\Lambda^{\epsilon,\cc}_\alpha}\otimes\Lambda^{\epsilon,\cc}_\beta\big)\,\Lambda^{\epsilon,\cc}\,\big(\overline{\tau_{-\alpha}\psi_p}\otimes\tau_{-\beta}\psi_q\big)\big\rangle_{\mathscr{S}(\X\times\X)}=\big\langle\,\mathfrak{K}\,,\,\Lambda^{\epsilon,\cc}(\alpha,\beta)\,\big(\overline{\tau_{-\alpha}\psi_p}\otimes\tau_{-\beta}\psi_q\big)\big\rangle_{\mathscr{S}(\X\times\X)}\\
&=\big\langle\,\mathfrak{K}^\prime_N\,,\,\mathfrak{k}_{-N}\Big[\big[\big(\overline{\Lambda^{\epsilon,\cc}_\alpha}\otimes\Lambda^{\epsilon,\cc}_\beta\big)\,\Lambda^{\epsilon,\cc}]\,-\,\Lambda^{\epsilon,\cc}(\alpha,\beta)\Big]\big(\overline{\tau_{-\alpha}\psi_p}\otimes\tau_{-\beta}\psi_q\big)\big\rangle_{\mathscr{S}(\X\times\X)}.
\end{align*}
Considering $\mathfrak{k}_{-N}\Big[\big[\big(\overline{\Lambda^{\epsilon,\cc}_\alpha}\otimes\Lambda^{\epsilon,\cc}_\beta\big)\,\Lambda^{\epsilon,\cc}]\,-\,\Lambda^{\epsilon,\cc}(\alpha,\beta)\Big]\big(\overline{\tau_{-\alpha}\psi_p}\otimes\tau_{-\beta}\psi_q\big)\in\mathscr{S}(\X\times\X)$ we notice that:
\begin{align*}
<x-y>^{-N}=<(x-\alpha)-(y-\beta)+(\alpha-\beta)>^{-N}\leq\,C_N<x-\alpha>^N<y-\beta>^N<\alpha-\beta>^{-N}
\end{align*}
so that
\begin{align*}
\Big[\mathfrak{k}_{-N}\Big[\big[\big(\overline{\Lambda^{\epsilon,\cc}_\alpha}\otimes\Lambda^{\epsilon,\cc}_\beta\big)\,\Lambda^{\epsilon,\cc}]\,-\,\Lambda^{\epsilon,\cc}(\alpha,\beta)\Big]\big(\overline{\tau_{-\alpha}\psi_p}\otimes\tau_{-\beta}\psi_q\big)\Big](x,y)=<\alpha-\beta>^{-N}\Phi_{N,\alpha,\beta}(x,y)
\end{align*}
with $\Phi_{N,\alpha,\beta}\in\mathscr{S}(\X\times\X)$ uniformly in $(\alpha,\beta)\in\Gamma\times\Gamma$.

Finally, the main contribution {reads as}:
\begin{align*}
\big\langle\,\mathfrak{K}\,,\,\Lambda^{\epsilon,\cc}(\alpha,\beta)\,\big(\overline{\tau_{-\alpha}\psi_p}\otimes\tau_{-\beta}\psi_q\big)\big\rangle_{\mathscr{S}(\X\times\X)}&=\Lambda^{\epsilon,\cc}(\alpha,\beta)\,\big\langle\,(\tau_\alpha\otimes\tau_\beta)\mathfrak{K}\,,\,\big(\overline{\psi_p}\otimes\psi_q\big)\big\rangle_{\mathscr{S}(\X\times\X)}\\
&=\Lambda^{\epsilon,\cc}(\alpha,\beta)\,\big(\psi_p\,,\,\Int\big[(\tau_{\alpha-\beta}\otimes\bb1)\mathfrak{K}\big]\psi_q\big)_{L^2(\X)},
\end{align*}
where we may identify 
\[
\big [\mathring{\mathfrak{K}}_{\gamma}\big ]_{pq}:=\big(\psi_p\,,\,\Int\big[(\tau_{\alpha-\beta}\otimes\bb1)\mathfrak{K}\big]\psi_q\big)_{L^2(\X)},\quad \forall \gamma\in\Gamma,\quad 1\leq p,q\leq n_{\B}.
\]
\end{proof}

\subsubsection{The Zak translations in constant magnetic field}\label{SS-Z-trsl}

\begin{definition}
Given a constant magnetic field of the form $\epsilon\,B^\bullet$, we call \emph{ Zak magnetic  translations} with vectors from $\Gamma$  the following family of twisted unitary translations on $L^2(\X)$ (using the notation in \eqref{F-UAx} and \eqref{F-dec-p-m-phase})
	\beq\label{DF-Z-trsl}	{\mathfrak{T}^{\epsilon}_\gamma f (x)} :=\Lambda^\epsilon(x,\gamma)\,f(x-\gamma)=\Lambda^\epsilon_\gamma(x)\,f(x-\gamma),\quad\
 forall\gamma\in\Gamma.
\eeq
\end{definition}
\noindent We notice that $\mathfrak{T}^{\epsilon}_\gamma=\mathfrak{T}^{\epsilon,0}_\gamma$ for $\mathfrak{T}^{\epsilon,\cc}_\gamma$ defined in \eqref{DF-m-W-f} and recall the following (see e.g. \cite{CHN}):

\begin{proposition}\label{Zak-transl}
	{The}  family  in \eqref{DF-Z-trsl} satisfies the following properties:
	\begin{enumerate}
		\item The map $\mathfrak{T}^{\epsilon}:\Gamma\ni\gamma\rightarrow\mathfrak{T}^{\epsilon}_\gamma\in\mathbb{U}\big(L^2(\X)\big)$ defines a projective representation with 2-cocycle $\overline{\Lambda^{\epsilon}}:\Gamma\times\Gamma\rightarrow\mathbb{S}^1$ i.e.
		$$\mathfrak{T}^{\epsilon}_\alpha\mathfrak{T}^{\epsilon}_\beta= \Lambda^\epsilon(\beta,\alpha)\, \mathfrak{T}^{\epsilon}_{\alpha+\beta}\,.$$
		\item The operator $\mathfrak{Op}^{\epsilon}(F)$ commutes with all the $\{\mathfrak{T}^{\epsilon}_\gamma\}_{\gamma\in\Gamma}$ if and only if $F\in\mathscr{S}^\prime(\Xi)$ is $\Gamma$-periodic with respect to the variable in $\X$.
	\end{enumerate}
\end{proposition}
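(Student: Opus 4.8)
\textbf{Proof proposal for Proposition \ref{Zak-transl}.}

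The plan is to verify the two assertions by direct computation with the explicit formula \eqref{DF-Z-trsl} and the explicit form of the phase $\Lambda^\epsilon$ recorded in \eqref{F-dec-p-m-phase}. For the cocycle property in point (1), I would first check that each $\mathfrak{T}^{\epsilon}_\gamma$ is unitary: since $|\Lambda^\epsilon_\gamma(x)|=1$ and $f\mapsto f(\cdot-\gamma)$ is unitary on $L^2(\X)$, this is immediate. Then I would compute $\big(\mathfrak{T}^{\epsilon}_\alpha\mathfrak{T}^{\epsilon}_\beta f\big)(x)=\Lambda^\epsilon(x,\alpha)\,\Lambda^\epsilon(x-\alpha,\beta)\,f(x-\alpha-\beta)$ and compare with $\big(\mathfrak{T}^{\epsilon}_{\alpha+\beta}f\big)(x)=\Lambda^\epsilon(x,\alpha+\beta)\,f(x-\alpha-\beta)$. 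The whole point reduces to the identity
\[
\Lambda^\epsilon(x,\alpha)\,\Lambda^\epsilon(x-\alpha,\beta)\,=\,\Lambda^\epsilon(\beta,\alpha)\,\Lambda^\epsilon(x,\alpha+\beta),
\]
which is a purely algebraic statement about the bilinear exponent $\Lambda^\epsilon(x,y)=\exp\big(-(i\epsilon/2)\sum_{j,k}B^\bullet_{k,j}x_k y_j\big)$: expand both sides, use bilinearity of the exponent in $(x,y)$, and collect terms; the cross terms combine using the constancy of $B^\bullet$ (and only the flux matrix $B^\bullet_{k,j}$, not a general potential, enters, which is exactly why the constant-field case behaves so much better). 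One must keep track of the antisymmetrization coming from $\int_{[x,x+z]}A^\bullet$ versus the bilinear exponent; this is the standard computation identifying the Zak 2-cocycle, and I would present it in two or three lines.

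For point (2), the plan is to use the translation-type formula \eqref{F-OpA-trsl} for symbols independent of the configuration variable, together with the Weyl-quantization form \eqref{DF-OpA-Phi}. One direction: if $F$ is $\Gamma$-periodic in $x$, I would write its $x$-Fourier series and use the fact (from the $\Gamma$-periodicity, i.e. that the $x$-dependence lives on the dual lattice $\Gamma_*$, paired trivially with $\gamma\in\Gamma$) that the magnetic covariance relation $\mathfrak{T}^{\epsilon}_\gamma\,\mathfrak{Op}^{\epsilon}(F)\,(\mathfrak{T}^{\epsilon}_\gamma)^{-1}=\mathfrak{Op}^{\epsilon}(F_\gamma)$ holds with $F_\gamma(x,\xi)=F(x,\xi)\,e^{i\langle\cdot\rangle}$, where the extra phase disappears precisely when $F$ is $\Gamma$-periodic in $x$ because $\langle\gamma_*,\gamma\rangle\in 2\pi\Z$; here I would invoke the gauge-covariance of $\mathfrak{Op}$ under translations (Definition \ref{D-OpA} and the properties cited from \cite{MP-1,IMP-1,IMP-2,IMP-3}), and the fact that conjugation of $U^A(z)$ by $\mathfrak{T}^\epsilon_\gamma$ just shifts the base point of the line integral of $A^\bullet$, producing a flux term through a parallelogram that is a coboundary and is absorbed into the quantization. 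For the converse, I would use that $\Op^\epsilon$ is a bijection from $\mathscr{S}'(\Xi)$ onto $\mathcal{L}(\mathscr{S}(\X);\mathscr{S}'(\X))$, so if $\mathfrak{Op}^{\epsilon}(F)$ commutes with all $\mathfrak{T}^{\epsilon}_\gamma$ then the symbol computed above must satisfy $F_\gamma=F$ for all $\gamma\in\Gamma$, which forces the $x$-frequencies of $F$ to lie in $\Gamma_*$, i.e. $F$ is $\Gamma$-periodic in $x$. This is essentially the content of the references \cite{CHN}, so I would keep it brief and cite them.

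The main obstacle — really the only place requiring care rather than bookkeeping — is the bilinearity/coboundary manipulation of the line integrals $\int_{[x,y]}A^\bullet$ underlying both parts: one must be scrupulous about signs, about which argument of $\Lambda^\epsilon$ is translated, and about the fact that $A^\bullet$ is the symmetric gauge $A^\bullet_k(x)=\tfrac12\sum_j B^\bullet_{j,k}x_j$, so that $\Lambda^\epsilon(x,y)=\exp\big(-(i\epsilon/2)\sum_{j,k}B^\bullet_{k,j}x_k y_j\big)$ as already noted in \eqref{F-dec-p-m-phase}. Once that identity is in hand, both the cocycle relation and the commutation characterization fall out. Since this is a well-known fact (cf. \cite{Z,HS1,Sj,HS,Ne-LMP,Ne-RMP,CHN}), I would present the cocycle computation explicitly and merely sketch point (2), referring to \cite{CHN} for the details.
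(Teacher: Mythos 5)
The paper itself offers no proof of this proposition: it is recalled as a standard fact with a pointer to \cite{CHN} (and the older references \cite{Z,HS1,Ne-LMP,Ne-RMP}), so your proposal is to be judged against the standard argument in that literature, which it essentially reproduces. Your treatment of point (1) is correct: writing $\Lambda^\epsilon(x,y)=\exp\big(-i\epsilon\langle B^\bullet,x\wedge y\rangle\big)$ with the bilinear \emph{antisymmetric} exponent from \eqref{R-wedge}, the composition gives exponent $\langle B^\bullet,x\wedge\alpha\rangle+\langle B^\bullet,(x-\alpha)\wedge\beta\rangle=\langle B^\bullet,x\wedge(\alpha+\beta)\rangle-\langle B^\bullet,\alpha\wedge\beta\rangle$, and the leftover constant is exactly $\Lambda^\epsilon(\beta,\alpha)=\overline{\Lambda^\epsilon(\alpha,\beta)}$, as stated.

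In point (2) your overall route (conjugate $\Op^{\epsilon}(F)$ by $\mathfrak{T}^\epsilon_\gamma$, identify the resulting symbol, conclude by injectivity of $\Op^\epsilon$ on $\mathscr{S}'(\Xi)$) is the right one, but the intermediate formula you state is garbled: conjugation does \emph{not} produce a symbol of the form $F(x,\xi)e^{i\langle\cdot\rangle}$ whose phase disappears only for periodic $F$. The correct covariance, obtained on the level of kernels, is
\begin{equation*}
\mathfrak{T}^\epsilon_\gamma\,\Op^{\epsilon}(F)\,\big(\mathfrak{T}^\epsilon_\gamma\big)^{-1}\,=\,\Op^{\epsilon}\big(F(\cdot-\gamma,\cdot)\big),\qquad\forall\gamma\in\Gamma,
\end{equation*}
with \emph{no} residual phase: writing the kernel of $\Op^\epsilon(F)$ as $\Lambda^\epsilon(x,y)\,G\big(\tfrac{x+y}{2},x-y\big)$, the conjugated kernel carries the factor $\Lambda^\epsilon(x,\gamma)\,\overline{\Lambda^\epsilon(y,\gamma)}\,\Lambda^\epsilon(x-\gamma,y-\gamma)$, and the same antisymmetric bilinearity used in point (1) shows this equals $\Lambda^\epsilon(x,y)$ exactly; no Fourier series in $x$ and no condition $\langle\gamma_*,\gamma\rangle\in2\pi\Z$ enters. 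Once this is in place, commutation with all $\mathfrak{T}^\epsilon_\gamma$ is equivalent to $F(\cdot-\gamma,\cdot)=F$ for all $\gamma\in\Gamma$ by the injectivity of $\Op^\epsilon:\mathscr{S}'(\Xi)\rightarrow\mathcal{L}\big(\mathscr{S}(\X);\mathscr{S}'(\X)\big)$, i.e.\ to $\Gamma$-periodicity of $F$ in $x$, which is exactly the assertion. So the gap is only the mis-stated covariance relation; with it corrected the proof closes along the lines you indicate and coincides with the argument in the cited references.
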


While working with a constant magnetic field $\epsilon\,B^\bullet$ and with the 'transverse gauge' for its associated vector potential $A^\bullet_j(x)=(1/2)\underset{1\leq k\leq d}{\sum}\,B^\bullet_{k,j}\,x_k$, {by using that $B^\bullet_{k,j}=-B^\bullet_{j,k}$} we notice that:
\beq \label{R-wedge} \begin{split}
\int_{[x,y]}A^\bullet&=\int_0^1ds\,\underset{1\leq j\leq d}{\sum}A^\bullet_j(x+s(y-x))\,(y_j-x_j)\, \\
&=\int_0^1ds\,(1/2)\underset{1\leq j,k\leq d}{\sum}B^\bullet_{k,j}(x_k+s(y_k-x_k))\,(y_j-x_j)\\
&=(1/2)\underset{1\leq j,k\leq d}{\sum}B^\bullet_{k,j}\,x_k\,(y_j-x_j)=(1/2)\underset{1\leq j,k\leq d}{\sum}B^\bullet_{k,j}\,x_k\,y_j\,:=\,<B^\bullet,x\wedge y>\,,
\end{split}\eeq
and will make use of this last notation.

\subsubsection{The Parseval tight-frame with a constant magnetic field}
\label{SSS-constmf-Wframe}

Using the Zak translations as defined in \eqref{DF-Z-trsl}, we start by defining the family 
\beq\label{DF-epsilon-Wframe}
\big\{\mathfrak{T}^\epsilon_\gamma\,\psi_{p},\ \gamma\in\Gamma,\,p\in\underline{n_\B}\big\}\subset\mathscr{S}(\X).
\eeq

Due to the Parseval property of the non-magnetic frame in Proposition \ref{P-band-frame} we have the equalities:
\beq \label{dhc11}
\tau_\alpha\psi_p=\sum_{\gamma, r} \big (\tau_\gamma \psi_r,\tau_\alpha \psi_p\big )_{L^2(\X)}\,\tau_\gamma\psi_r,\quad\forall(\alpha,p)\in\Gamma\times\underline{n_\B}.
\eeq
We conclude that the projection operator associated with  the isolated Bloch band $\B$ has the integral kernel:
$$\mathfrak{K}[P_\B](x,y):= \sum_{\alpha,p} \tau_\alpha \psi_p(x)\, \overline{\tau_\alpha \psi_p(y)} $$
that is well defined due to the rapid decay of the functions of the chosen Parseval frame, is a pseudo-differential operator with regular symbol $p_\B\in S^{-\infty}(\X\times\X^*)$ and satisfies the equality:
\beq \label{dhc12}
\int_{\X} \mathfrak{K}[P_\B](x,y)\,\mathfrak{K}[P_\B](y,x')\, dy=\mathfrak{K}[P_\B](x,x').
\eeq

Having in mind \eqref{DF-Z-trsl}, the operator $\widetilde{P}_\B^\epsilon$ with integral kernel:
\beq\label{dhc23}
\mathfrak{K}[\widetilde{P}_\B^\epsilon](x,y):=\sum_{\alpha\in \Gamma,p\in \underline{n_\B}} \mathfrak{T}^\epsilon_\alpha \psi_p(x)\, \overline{\mathfrak{T}_\alpha^\epsilon \psi_p(y)}
\eeq
also obeys 
\beq\label{F-ker-tPBeps}
\mathfrak{K}[\widetilde{P}_\B^\epsilon](x,y)= \Lambda^\epsilon(x,y)\,\Big (\mathfrak{K}[P_\B](x,y)+\epsilon \mathfrak{K}_1^\epsilon(x,y) \Big )
\eeq
where $\mathfrak{K}^\epsilon_1\in BC^\infty(\X\times\X)$ and has rapid off-diagonal decay. Hence $\widetilde{P}_\B^\epsilon$ is bounded and self-adjoint, but it is not an orthogonal projection (it lacks idempotency). 

From \eqref{F-ker-tPBeps} we deduce that:
\beq\label{F-PepsB}
\widetilde{P}_\B^\epsilon=\Op^\epsilon(p_\B)\,+\,\epsilon\, \Int\,\big (\Lambda^\epsilon \mathfrak{K}^\epsilon_1\big).
\eeq
The operator $\Int\,\Big (\Lambda^\epsilon \mathfrak{K}^\epsilon_1\Big )$ is bounded and has a regular $\epsilon\,A^\bullet$-symbol. The magnetic pseudo-differential calculus then implies that $\widetilde{P}^\epsilon_\B$ also has a regular $\epsilon\,A^\bullet$-symbol and thus a regular integral kernel with rapid off-diagonal decay (which can also be seen from \eqref{F-ker-tPBeps}). 

Using \eqref{dhc12} and \eqref{F-ker-tPBeps} we can show that 
the kernel of $(\widetilde{P}_\B^\epsilon)^2-\widetilde{P}_\B^\epsilon$ is of order $\epsilon $ (for the operator norm) and has rapid off-diagonal decay. 
If $\epsilon$ is small enough, the spectral theorem implies that {there exists some constant $C>0$ such that}:
\beq\label{F-sp-tPepsB}
\sigma(\widetilde{P}_\B^\epsilon)\subset(-C\epsilon,C\epsilon)\bigcup(1-C\epsilon,1+C\epsilon).
\eeq
Let us choose a circle $\mathscr{C}\subset\Co$ centred in $1\in\Co$ with a radius $1/2$. Then
\beq \label{dhc22}
Q_\B^\epsilon:= -(2\pi i)^{-1}\oint_{\mathscr{C}}d\zz\,\big(\widetilde{P}_\B^\epsilon-\zz\bb1\big)^{-1}
\eeq
is an orthogonal projection corresponding to the spectral projection of $\widetilde{P}_\B^\epsilon$ on $(1-\epsilon,1+\epsilon)$.

The definition \eqref{dhc23} implies that $\widetilde{P}_\B^\epsilon$ -thus also its resolvent- commutes with all the Zak translations. Due to \eqref{dhc22}, the projection $Q^\epsilon_\B$ has the same commutation property.

By usual holomorphic functional calculus with the bounded self-adjoint operator $\widetilde{P}^\epsilon_\B$, if we consider the function $\zz\mapsto \zz^{-1/2}=e^{-2^{-1}{\rm Ln}(z)}$ that is holomorphic on a disk around $1$ not containing $0$, we can define:
\beq\label{dhc21}
\Theta^\epsilon_\B\,:=\,-(2\pi i)^{-1}\oint_{\mathscr{C}}d\zz\,\zz^{-1/2}\,\big(\widetilde{P}_\B^\epsilon-\zz\bb1\big)^{-1}.
\eeq
 Due to \eqref{dhc21} we deduce that $\Theta^\epsilon_\B$ commutes with $\widetilde{P}^\epsilon_\B$ and we also have the following equalities:
\beq \label{F-PepsB-dec}\begin{aligned}
Q_\B^\epsilon\,&=\, -(2\pi i)^{-1}\oint_{\mathscr{C}}d\zz\,\big(\widetilde{P}_\B^\epsilon-\zz\bb1\big)^{-1}\,=\, \Theta^\epsilon_\B\,\widetilde{P}_\B^\epsilon\,\Theta^\epsilon_\B\,=\, [\Theta^\epsilon_\B]^2\,\widetilde{P}_\B^\epsilon\,=\, \widetilde{P}_\B^\epsilon\,[\Theta^\epsilon_\B]^2\,\\ 
&=\,\Theta^\epsilon_\B\,\Big[\sum_{\alpha\in \Gamma,p\in \underline{n_\B}} \mathfrak{T}^\epsilon_\alpha \psi_p(x)\, \overline{\mathfrak{T}_\alpha^\epsilon \psi_p(y)}\Big]\,\Theta^\epsilon_\B\,=\,\sum_{\alpha\in \Gamma,p\in \underline{n_\B}} \Big (\mathfrak{T}^\epsilon_\alpha \Theta^\epsilon_\B\,\psi_p(x)\Big )\, \Big (\overline{\mathfrak{T}_\alpha^\epsilon \Theta^\epsilon_\B\psi_p(y)}\Big ),
\end{aligned}
\eeq
where in the last equality we used that $\Theta^\epsilon_\B$ also commutes with the Zak translations. We notice that {$\Theta^\epsilon_\B \, = \Theta^\epsilon_\B\, Q^\epsilon_\B $ and} $\Rg\,Q^\epsilon_\B=\Rg\,\Theta^\epsilon_\B$.
\begin{definition}\label{D-P-fr-eps}
Let us fix the following family of functions in $L^2(\X)$:
$$
\psi^\epsilon_p\,:=\,\Theta^\epsilon_\B\,\psi_p,\quad\forall p\in\underline{n_\B}
$$
and let $\mathcal{L}^\epsilon_\B$ be the Hilbert subspace generated by the family $\big\{\psi^\epsilon_{\alpha,p}:=\mathfrak{T}^\epsilon_\alpha\psi^\epsilon_p,\ \forall(\alpha,p)\in\Gamma\times\underline{n_\B}\big\}$ in $L^2(\X)$.
\end{definition}
By definition, we have the inclusions $\Rg\,Q^\epsilon_\B\subset\mathcal{L}^\epsilon_\B\subset \Rg\,\Theta^\epsilon_\B$ and our remark above  implies the equality $\Rg\,Q^\epsilon_\B\,=\,\mathcal{L}^\epsilon_\B$.
\begin{lemma}\label{L-1}
There exists $\epsilon_0>0$ such that for any $N>0$ there exists $C_N>0$ such that for all $(\alpha,p)\in\Gamma\times\underline{n_\B}$ and $\epsilon\in[0,\epsilon_0]$ we have
$$
\big\|<\cdot-\alpha>^N\Big (\Theta^\epsilon_\B\,\mathfrak{T}^\epsilon_\alpha\psi_p-\mathfrak{T}^\epsilon_\alpha\psi_p\Big )\big\|_{L^2(\X)}\leq C_N\,\epsilon \, \Vert <\cdot-\alpha>^{N+1}\psi_p\Vert_{L^2(\X)}.
$$
\end{lemma}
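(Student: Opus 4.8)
\emph{Plan of proof.} The structural fact I would exploit is that $\Theta^\epsilon_\B$, being a contour integral along $\mathscr{C}$ of the resolvent of $\widetilde{P}_\B^\epsilon$ (see \eqref{dhc21}), commutes with all the Zak translations $\{\mathfrak{T}^\epsilon_\gamma\}_{\gamma\in\Gamma}$, exactly as already used in the last line of \eqref{F-PepsB-dec}. Hence $\Theta^\epsilon_\B\,\mathfrak{T}^\epsilon_\alpha\psi_p-\mathfrak{T}^\epsilon_\alpha\psi_p=\mathfrak{T}^\epsilon_\alpha(\Theta^\epsilon_\B-\bb1)\psi_p$, and since $(\mathfrak{T}^\epsilon_\alpha g)(x)=\Lambda^\epsilon(x,\alpha)\,g(x-\alpha)$ with $|\Lambda^\epsilon|\equiv1$, the change of variables $u=x-\alpha$ gives $\big\|<\cdot-\alpha>^N\mathfrak{T}^\epsilon_\alpha g\big\|_{L^2(\X)}=\big\|<\cdot>^N g\big\|_{L^2(\X)}$ for any $g$. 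Thus the left-hand side of the Lemma equals $\big\|<\cdot>^N(\Theta^\epsilon_\B-\bb1)\psi_p\big\|_{L^2(\X)}$, which does not depend on $\alpha$; as $<\cdot-\alpha>^{N+1}\geq1$ we have $\|<\cdot-\alpha>^{N+1}\psi_p\|_{L^2(\X)}\geq\|\psi_p\|_{L^2(\X)}=1$, so it suffices to prove the $\alpha$-independent estimate $\big\|<\cdot>^N(\Theta^\epsilon_\B-\bb1)\psi_p\big\|_{L^2(\X)}\leq C_N\epsilon$ for $\epsilon\in[0,\epsilon_0]$.

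I would then compare with $\epsilon=0$. By \eqref{dhc23} at $\epsilon=0$, $\mathfrak{T}^0_\gamma$ reduces to an ordinary translation and the family is a Parseval frame for $P_\B$, so $\widetilde{P}_\B^0=P_\B$, hence $\Theta^0_\B=P_\B$ (from \eqref{dhc21}, $\mathscr{C}$ enclosing only the eigenvalue $1$ of $P_\B$ and $\zz^{-1/2}|_{\zz=1}=1$); since $\psi_p\in\Rg P_\B$ this gives $(\Theta^\epsilon_\B-\bb1)\psi_p=(\Theta^\epsilon_\B-\Theta^0_\B)\psi_p$. Using the resolvent identity and $(P_\B-\zz\bb1)^{-1}\psi_p=(1-\zz)^{-1}\psi_p$ (legitimate since $\mathscr{C}$ avoids $\{0,1\}$) one arrives at the representation
\[
(\Theta^\epsilon_\B-\bb1)\psi_p=-\frac1{2\pi i}\oint_{\mathscr{C}}\frac{\zz^{-1/2}}{1-\zz}\,\big(\widetilde{P}_\B^\epsilon-\zz\bb1\big)^{-1}\big(P_\B-\widetilde{P}_\B^\epsilon\big)\psi_p\,d\zz .
\]
By \eqref{F-sp-tPepsB} the spectrum of $\widetilde{P}_\B^\epsilon$ stays in $(-C\epsilon,C\epsilon)\cup(1-C\epsilon,1+C\epsilon)$, so for $\epsilon_0$ small the circle $\mathscr{C}$ remains at a fixed positive distance from $\sigma(\widetilde{P}_\B^\epsilon)$, and $\sup_{\zz\in\mathscr{C}}|\zz|^{-1/2}|1-\zz|^{-1}<\infty$.

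It then remains to estimate, uniformly in $\zz\in\mathscr{C}$ and $\epsilon\in[0,\epsilon_0]$, the two factors appearing in the integrand after inserting the weight $<\cdot>^N$. First, $\big\|<\cdot>^N(P_\B-\widetilde{P}_\B^\epsilon)\psi_p\big\|_{L^2(\X)}\leq C_N\epsilon$: by \eqref{F-ker-tPBeps} the kernel of $P_\B-\widetilde{P}_\B^\epsilon$ equals $\big(1-\Lambda^\epsilon(x,y)\big)\mathfrak{K}[P_\B](x,y)-\epsilon\,\Lambda^\epsilon(x,y)\mathfrak{K}_1^\epsilon(x,y)$; by \eqref{R-wedge}, $|1-\Lambda^\epsilon(x,y)|\leq\epsilon\,|<B^\bullet,x\wedge(y-x)>|\leq C\epsilon\,|x|\,|x-y|$, and the resulting polynomial growth is absorbed by the (uniform) rapid off-diagonal decay of $\mathfrak{K}[P_\B]$ and of $\mathfrak{K}_1^\epsilon$, while the remaining powers of $<x>$, $<y>$ are controlled by the Schwartz decay of $\psi_p$ — this is exactly the flux–cocycle bookkeeping already carried out in the proof of \eqref{F-4_12} — after which Young's inequality closes the bound. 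Secondly, for $\zz\in\mathscr{C}$ the operator $(\widetilde{P}_\B^\epsilon-\zz\bb1)^{-1}$ is bounded on the weighted space $<\cdot>^{-N}L^2(\X)$, uniformly in $(\epsilon,\zz)$: indeed $\widetilde{P}_\B^\epsilon$ has a regular $\epsilon A^\bullet$-symbol in $S^0(\Xi)$ (see \eqref{F-PepsB} and the surrounding discussion), hence by the magnetic pseudodifferential calculus its resolvent — which exists at distance bounded below from the spectrum — has an $\epsilon A^\bullet$-symbol in $S^0(\Xi)$ as well, uniformly, and such operators map every weighted $L^2$ space continuously; alternatively one invokes the stability of the class of operators with rapid off-diagonal kernel decay under taking bounded inverses. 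Combining these with the contour bound yields $\big\|<\cdot>^N(\Theta^\epsilon_\B-\bb1)\psi_p\big\|_{L^2(\X)}\leq C_N\epsilon$, and the Lemma follows.

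The main obstacle is the uniform weighted-resolvent bound for $<\cdot>^N(\widetilde{P}_\B^\epsilon-\zz\bb1)^{-1}<\cdot>^{-N}$; everything else is either already recorded in the excerpt (the kernel formula \eqref{F-ker-tPBeps}, the spectral localization \eqref{F-sp-tPepsB}, the commutation with Zak translations) or is the routine flux–cocycle computation of \eqref{F-4_12}. One must also check that all constants coming from the symbol classes and from the Schwartz seminorms of $\psi_p$ can be chosen uniformly in $\epsilon\in[0,\epsilon_0]$, which is where the uniformity in $\epsilon$ of the magnetic calculus (Proposition \ref{P-replP3_5} and its corollaries, suitably adapted) enters. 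I note finally that if one prefers not to use the commutation with the Zak translations, one applies the contour-integral representation directly to $\mathfrak{T}^\epsilon_\alpha\psi_p$; the flux–cocycle computation then produces precisely one factor $<\cdot-\alpha>$ matching the left-hand-side weight and one extra power of $<\cdot-\alpha>$ acting on $\psi_p$, which is the reason the statement is phrased with $<\cdot-\alpha>^{N+1}\psi_p$ on the right.
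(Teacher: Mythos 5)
Your proposal is correct and follows essentially the same route as the paper: reduction to $\alpha=0$ via the commutation of $\Theta^\epsilon_\B$ with the Zak translations, the contour/resolvent-identity representation comparing $\widetilde{P}^\epsilon_\B$ with $P_\B$ (using $P_\B\psi_p=\psi_p$), the kernel estimate $|\Lambda^\epsilon(x,y)-1|\leq C\epsilon|x||x-y|$ absorbed by rapid off-diagonal decay, and a uniform weighted bound for $<\cdot>^N(\widetilde{P}^\epsilon_\B-\zz\bb1)^{-1}<\cdot>^{-N}$. The only (cosmetic) differences are that you justify the weighted resolvent bound via the magnetic symbol calculus rather than via boundedness of multiple commutators with the position operator, and you handle the passage from $\alpha=0$ to general $\alpha$ slightly more explicitly.
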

\begin{proof}
Because $\Theta^\epsilon_\B$ commutes with the Zak translations, it is enough to prove this for $\alpha=0$. Since $P_\B \psi_p=\psi_p$, using \eqref{dhc21} we may write 
\beq\label{dhc20}
\begin{aligned}
\Theta^\epsilon_\B\,\psi_p-\psi_p&=-(2\pi i)^{-1}\oint_{\mathscr{C}}d\zz\,\zz^{-1/2}\,\Big (\big(\widetilde{P}_\B^\epsilon-\zz\bb1\big)^{-1} -\big(P_\B-\zz\bb1\big)^{-1}\Big )\, \psi_p\\
\\&=(2\pi i)^{-1}\oint_{\mathscr{C}}d\zz\,\zz^{-1/2}\,\big(\widetilde{P}_\B^\epsilon-\zz\bb1\big)^{-1} \Big (\widetilde{P}_\B^\epsilon-P_\B\Big )\, \big( P_\B-\zz\bb1\big)^{-1}\, \psi_p.
\end{aligned}
\eeq
From \eqref{F-ker-tPBeps} we see that  
$$\big (\widetilde{P}_\B^\epsilon-P_\B\big )f\, (x)\, =\int_\X dy\, \Big (\big (\Lambda^\epsilon(x,y)-1\big )\,\mathfrak{K}[P_\B](x,y)+\epsilon \Lambda^\epsilon(x,y)\mathfrak{K}_1^\epsilon(x,y)\Big )\, f(y)$$
and assuming that $f$ is strongly localized and the fact that $|\Lambda^\epsilon(x,y)-1|\leq C\, \epsilon\, |x|\, |x-y|$ we have
$$\Vert <\cdot>^N\, \big (\widetilde{P}_\B^\epsilon-P_\B\big )f\Vert_{L^2(\X)}\leq C_N\, \epsilon\, \Vert <\cdot>^{N+1} f\Vert_{L^2(\X)}.$$
Also, denoting by $X$ either $P_\B$ or $\widetilde{P}_\B^\epsilon$ we have that for any  $N\geq 1$ we have 
$$\sup_{\zz\in \mathscr{C}}\Big \Vert <\cdot >^N \big (X-\zz\bb1\big)^{-1}\, <\cdot >^{-N}\Big \Vert =C'_N<\infty,$$
which may be proved by showing that all possible multiple commutators between the position operator and the resolvent can be extended to bounded operators. Introducing all this in \eqref{dhc20} the proof is finished.
\end{proof}

\begin{proposition}
The family $\big\{\psi^\epsilon_{\alpha,p}\ \forall(\alpha,p)\in\Gamma\times\underline{n_\B}\big\}$ is a strongly localized Parseval frame for the closed subspace $\mathcal{L}^\epsilon_\B$ that it generates in $L^2(\X)$.
\end{proposition}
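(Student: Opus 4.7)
The plan is to read off all three assertions directly from the decomposition \eqref{F-PepsB-dec}, which states
\[
Q^\epsilon_\B \,=\, \sum_{\alpha\in\Gamma,\,p\in\underline{n_\B}} \psi^\epsilon_{\alpha,p}\bowtie\psi^\epsilon_{\alpha,p},
\]
combined with the fact, already established around \eqref{dhc22}, that $Q^\epsilon_\B$ is an orthogonal projection. By Definition \ref{D-P-fr-eps} and the equality $\Rg\,Q^\epsilon_\B=\mathcal{L}^\epsilon_\B$ noted just before Lemma \ref{L-1}, the closed subspace $\mathcal{L}^\epsilon_\B$ is precisely the range of this projection. The reconstruction identity then follows from $Q^\epsilon_\B f=f$ for any $f\in\mathcal{L}^\epsilon_\B$, namely
\[
f \,=\, \sum_{\alpha,p}\bigl(\psi^\epsilon_{\alpha,p},f\bigr)_{L^2(\X)}\,\psi^\epsilon_{\alpha,p},
\]
while the Parseval identity follows from $(f,Q^\epsilon_\B f)_{L^2(\X)}=\|f\|^2$:
\[
\|f\|^2_{L^2(\X)} \,=\, \sum_{\alpha,p}\bigl|\bigl(\psi^\epsilon_{\alpha,p},f\bigr)_{L^2(\X)}\bigr|^2.
\]
In both cases the convergence of the series is guaranteed by the fact that $Q^\epsilon_\B\in\mathbb{B}\bigl(L^2(\X)\bigr)$ is already known to be the sum of its rank-one components in the strong operator topology.

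The second task is to establish the strong localization of each $\psi^\epsilon_{\alpha,p}$ uniformly in $\alpha\in\Gamma$ and $\epsilon\in[0,\epsilon_0]$. Since $\psi^\epsilon_{\alpha,p}=\mathfrak{T}^\epsilon_\alpha\psi^\epsilon_p$ and the Zak magnetic translation $\mathfrak{T}^\epsilon_\alpha$ differs from ordinary translation by $\alpha$ only through multiplication by the unimodular phase $\Lambda^\epsilon_\alpha$, it is enough to prove that $\psi^\epsilon_p=\Theta^\epsilon_\B\psi_p$ belongs to $\mathscr{S}(\X)$ with Schwartz seminorms bounded uniformly in $\epsilon$. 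The rapid $L^2$ decay is a direct consequence of Lemma \ref{L-1} applied at $\alpha=0$, once we recall that $\psi_p\in\mathscr{S}(\X)$ by Proposition \ref{P-Pfr-Tstar-sect} and formula \eqref{F-PN-1}.

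For the smoothness part, the key observation is that $\widetilde{P}^\epsilon_\B$ is a magnetic pseudo-differential operator whose $\epsilon A^\bullet$-symbol belongs to $S^{-\infty}(\Xi)$, as follows from \eqref{F-PepsB} together with the fact that both $\Op^\epsilon(p_\B)$ and $\Int\bigl(\Lambda^\epsilon\mathfrak{K}^\epsilon_1\bigr)$ are smoothing operators of that class. By the functional calculus \eqref{dhc21} and the holomorphicity of $z\mapsto z^{-1/2}$ on a neighborhood of the disk $\mathscr{C}$, one obtains that $\Theta^\epsilon_\B$ itself has an $\epsilon A^\bullet$-symbol in $S^{-\infty}(\Xi)$, so it maps $\mathscr{S}(\X)$ continuously into itself with seminorms uniformly bounded in $\epsilon$. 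This combined with the $L^2$-decay control yields $\psi^\epsilon_p\in\mathscr{S}(\X)$ with Schwartz seminorms uniformly bounded in $\epsilon\in[0,\epsilon_0]$, and the translation invariance of the seminorms of $\mathfrak{T}^\epsilon_\alpha\psi^\epsilon_p$ around the centre $\alpha$ finishes the strong localization step.

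The main obstacle is the uniform-in-$\epsilon$ control of Schwartz seminorms of $\psi^\epsilon_p$: the $L^2$-decay estimate of Lemma \ref{L-1} must be supplemented by analogous estimates for derivatives, which amount to verifying that all multiple commutators of position and momentum operators with the resolvent $(\widetilde{P}^\epsilon_\B-\zz\bb1)^{-1}$ extend to bounded operators with norms controlled uniformly for $\zz\in\mathscr{C}$ and $\epsilon\in[0,\epsilon_0]$. This is a routine but nontrivial application of Beals-type commutator criteria for the magnetic pseudo-differential calculus, completely analogous to the boot-strap argument carried out in the proof of Proposition \ref{R-p-symb}.
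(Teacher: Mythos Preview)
Your proof is correct and follows essentially the same approach as the paper: both rely on the decomposition \eqref{F-PepsB-dec} expressing $Q^\epsilon_\B$ as $\sum_{\alpha,p}\psi^\epsilon_{\alpha,p}\bowtie\psi^\epsilon_{\alpha,p}$, together with the fact that $Q^\epsilon_\B$ is an orthogonal projection with range $\mathcal{L}^\epsilon_\B$, to read off the reconstruction and Parseval identities. The paper's proof is considerably terser than yours --- it simply asserts the rapid decay of the $\psi^\epsilon_{\gamma,q}$ without further comment --- whereas you spell out the strong localization via Lemma~\ref{L-1} for the $L^2$-decay and via the $S^{-\infty}$ symbol class of $\Theta^\epsilon_\B$ for smoothness; this extra detail is correct and useful, though your justification that $\Theta^\epsilon_\B$ is smoothing could be sharpened (the point is that in the Riesz integral \eqref{dhc21} the constant-in-$\widetilde{P}^\epsilon_\B$ part of the resolvent, namely $-\zz^{-1}\bb1$, integrates to zero over $\mathscr{C}$ since $\zz^{-3/2}$ is holomorphic inside, leaving only terms containing the smoothing factor $\widetilde{P}^\epsilon_\B$).
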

\begin{proof}
From \eqref{F-PepsB-dec} we have that $Q^\epsilon_\B$ is an orthogonal projection with a strongly localized integral kernel:
\beq\nonumber 
\mathfrak{K}[Q^\epsilon_\B](x,y)\,=\,\hspace*{-0.3cm}\underset{(\gamma,q)\in\Gamma\times\underline{n_\B}}{\sum}\hspace*{-0.3cm}\psi^\epsilon_{\gamma,q}(x)\,\overline{\psi^\epsilon_{\gamma,q}}(y),
\eeq
with the series controlled by the rapid decay of the functions $\psi^\epsilon_{\gamma,q}$. Thus, for any $f\in\mathcal{L}^\epsilon_\B$ we notice that:
\beq\nonumber 
f=Q^\epsilon_\B\,f=\hspace*{-0.3cm}\underset{(\gamma,q)\in\Gamma\times\underline{n_\B}}{\sum}\hspace*{-0.3cm}\psi^\epsilon_{\gamma,q}\,\big(\psi^\epsilon_{\gamma,q}\,,\,f\big)_{L^2(\X)}.
\eeq
\end{proof}
\begin{proposition}\label{P-est-tPepsB-QepsB}
There exists $\epsilon_0>0$ and $C>0$ such that:
$$
\big\|\widetilde{P}^\epsilon_\B\,-\,Q^\epsilon_\B\big\|_{\mathbb{B}(L^2(\X))}\,\leq\,C\, \epsilon,\quad\forall\epsilon\in[0,\epsilon_0].
$$
\end{proposition}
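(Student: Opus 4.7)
The plan is to exploit the fact that $Q^\epsilon_\B$ is obtained from $\widetilde{P}^\epsilon_\B$ through the Riesz formula \eqref{dhc22}, so both operators commute and admit a joint spectral decomposition. Combined with the spectral localization \eqref{F-sp-tPepsB} and the choice of the contour $\mathscr{C}$ (centred at $1$ with radius $1/2$, hence separating the two spectral clusters of $\widetilde{P}^\epsilon_\B$), the bound will essentially drop out of the spectral theorem.

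Concretely, I would first decompose, using $Q^\epsilon_\B=(Q^\epsilon_\B)^2$ and the commutation of $Q^\epsilon_\B$ with $\widetilde{P}^\epsilon_\B$,
\[
\widetilde{P}^\epsilon_\B - Q^\epsilon_\B \,=\, (\widetilde{P}^\epsilon_\B - \bb1)\,Q^\epsilon_\B \,+\, \widetilde{P}^\epsilon_\B\,(\bb1-Q^\epsilon_\B).
\]
Because $Q^\epsilon_\B$ is the spectral projector of $\widetilde{P}^\epsilon_\B$ associated with the cluster near $1$, the restrictions of $\widetilde{P}^\epsilon_\B$ to $\Rg\,Q^\epsilon_\B$ and to $\Rg(\bb1-Q^\epsilon_\B)$ are self-adjoint bounded operators whose spectra lie, respectively, in $(1-C\epsilon,1+C\epsilon)$ and in $(-C\epsilon,C\epsilon)$ by \eqref{F-sp-tPepsB}. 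Hence the spectral theorem yields
\[
\big\|(\widetilde{P}^\epsilon_\B-\bb1)\,Q^\epsilon_\B\big\|_{\mathbb{B}(L^2(\X))}\,\leq\, C\epsilon,\qquad \big\|\widetilde{P}^\epsilon_\B\,(\bb1-Q^\epsilon_\B)\big\|_{\mathbb{B}(L^2(\X))}\,\leq\, C\epsilon,
\]
and the triangle inequality finishes the proof.

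An alternative route, more in the spirit of the preceding construction, would be to combine \eqref{dhc22} with the resolvent identity: writing
\[
Q^\epsilon_\B-\widetilde{P}^\epsilon_\B\,Q^\epsilon_\B \,=\, -(2\pi i)^{-1}\oint_{\mathscr{C}}d\zz\,(1-\zz)\big(\widetilde{P}^\epsilon_\B-\zz\bb1\big)^{-1}
\]
and noting that on $\mathscr{C}$ the resolvent norm is uniformly bounded while $(1-Q^\epsilon_\B)\widetilde{P}^\epsilon_\B$ is of size $\epsilon$ by the same spectral argument, leads to the same estimate. Either way, there is no genuine obstacle here: the only non-trivial input is the spectral localization \eqref{F-sp-tPepsB}, which has already been established from $\bigl\|(\widetilde{P}^\epsilon_\B)^2-\widetilde{P}^\epsilon_\B\bigr\|=\mathscr{O}(\epsilon)$, and everything else is abstract functional calculus for a bounded self-adjoint operator.
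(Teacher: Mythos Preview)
Your proof is correct and follows essentially the same approach as the paper: both rely on the fact that $Q^\epsilon_\B$ is the spectral projection of the bounded self-adjoint operator $\widetilde{P}^\epsilon_\B$ onto the cluster near $1$, together with the spectral localization \eqref{F-sp-tPepsB}, so that the spectral theorem immediately gives the bound. You have simply spelled out the functional-calculus step (via the decomposition $\widetilde{P}^\epsilon_\B - Q^\epsilon_\B = (\widetilde{P}^\epsilon_\B - \bb1)Q^\epsilon_\B + \widetilde{P}^\epsilon_\B(\bb1-Q^\epsilon_\B)$) that the paper leaves implicit.
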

\begin{proof}
We notice that $Q^\epsilon_\B$ as defined in \eqref{dhc22} is a spectral projection of the bounded self-adjoint operator $\widetilde{P}^\epsilon_\B$ and taking into account also \eqref{F-sp-tPepsB}, the spectral theorem implies  the conclusion of the proposition.
\end{proof}

\subsubsection{A Parseval frame for $\mathbf{P^{\epsilon,0}_\B\,L^2(\X)}$}\label{SSS-PB-eps0-Wframe}

Let us denote by $P^\epsilon_\B:=P^{\epsilon,0}_\B$ and recall from the last paragraph that $Q^\epsilon_\B$ is the orthogonal projection on the closed subspace $\mathcal{L}^\epsilon_\B\subset L^2(\X)$. From \eqref{DF-p-epsilonc-B} we get, with $H^\epsilon_\bot=\overline{\Op^\epsilon(h_\bot)}$,  the formula:
\beq\nonumber 
P^\epsilon_\B\,=-\,(2\pi i)^{-1}\int_{\mathscr{C}_\bot}\,d\zz\,\big(H^\epsilon_\bot-\zz\bb1\big)^{-1}\,.
\eeq

Using the above formula, the second statement of Proposition \ref{Zak-transl} and the functional calculus, we prove that the orthogonal projection $P^\epsilon_\B$ commutes with the Zak translations $\mathfrak{T}^\epsilon_\gamma$ with $\gamma\in\Gamma$.

\begin{proposition}
	There exists $C>0$ {and $\epsilon_0 >0$ such that, for any $\epsilon \in [0,\epsilon_0]$,}
	$$\big\|P^\epsilon_\B\,-\,Q^\epsilon_\B\big\|_{\mathbb{B}(L^2(\X))}\leq\,C\epsilon\,.$$
\end{proposition}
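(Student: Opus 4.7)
The natural strategy is to insert the already-constructed self-adjoint operator $\widetilde{P}^\epsilon_\B$ as an intermediate object and show that both $P^\epsilon_\B$ and $Q^\epsilon_\B$ lie within distance $O(\epsilon)$ of it in operator norm. The triangle inequality then yields the desired estimate. The $\widetilde{P}^\epsilon_\B$-to-$Q^\epsilon_\B$ estimate has just been established in Proposition~\ref{P-est-tPepsB-QepsB}, so the remaining task is to control $\|\widetilde{P}^\epsilon_\B - P^\epsilon_\B\|_{\mathbb{B}(L^2(\X))}$.

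For this remaining step, first I would interpret the notation $\Op^\epsilon$ appearing in \eqref{F-PepsB}: using the cocycle identity $\Lambda^\epsilon(x,\alpha)\overline{\Lambda^\epsilon(y,\alpha)} = \Lambda^\epsilon(x,y)\exp(-i\epsilon\,\Phi(x,\alpha,y))$ for a triangular flux $\Phi$ of a constant magnetic field, together with the reproducing identity \eqref{dhc11}, one verifies (as in the derivation of \eqref{F-ker-tPBeps}) that the integral kernel $\Lambda^\epsilon(x,y)\mathfrak{K}[P_\B](x,y)$ is precisely the kernel of $\Op^{A(\epsilon,0)}(p_\B)$, because multiplying by the phase $\Lambda^\epsilon$ corresponds to passing from the vector potential $A^\circ$ to $A^\circ + \epsilon A^\bullet = A(\epsilon,0)$. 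Consequently formula \eqref{F-PepsB} reads
\[
\widetilde{P}^\epsilon_\B = \Op^{A(\epsilon,0)}(p_\B) + \epsilon\,\Int\!\bigl(\Lambda^\epsilon \mathfrak{K}^\epsilon_1\bigr),
\]
where the remainder is a bounded operator whose norm is uniformly controlled in $\epsilon$ thanks to the rapid off-diagonal decay of $\mathfrak{K}^\epsilon_1$ (Schur test).

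Next I would invoke Corollary~\ref{C-est-p-pert} specialized to $a = \B$ and $\cc = 0$, which gives
\[
\bigl\|P^\epsilon_\B - \Op^{A(\epsilon,0)}(p_\B)\bigr\|_{\mathbb{B}(L^2(\X))} \leq C\epsilon.
\]
Subtracting this from the display above produces $\|\widetilde{P}^\epsilon_\B - P^\epsilon_\B\|_{\mathbb{B}(L^2(\X))} \leq C\epsilon$. Combining with Proposition~\ref{P-est-tPepsB-QepsB} via the triangle inequality closes the argument. I do not anticipate a genuine obstacle here: the whole point of the construction of Subsection~\ref{SSS-constmf-Wframe}, together with the already-proved estimate on the symbol-level perturbation of $p_\B$, is to make both $\widetilde{P}^\epsilon_\B$ and $P^\epsilon_\B$ agree with the common magnetic pseudodifferential operator $\Op^{A(\epsilon,0)}(p_\B)$ modulo $\epsilon$-small bounded perturbations. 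The only slightly delicate point is to make sure the identification of $\Op^\epsilon$ with $\Op^{A(\epsilon,0)}$ is done rigorously using the integral-kernel formulation of the magnetic Weyl calculus recalled in Appendix~\ref{A-m-PsiDO}.
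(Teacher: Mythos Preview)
Your proposal is correct and follows essentially the same route as the paper: both insert $\Op^{A(\epsilon,0)}(p_\B)$ (the paper calls it $\check{P}^\epsilon_\B=\Op^\epsilon(p_\B)$) as the common intermediate operator, compare it to $\widetilde{P}^\epsilon_\B$ via \eqref{F-PepsB} and to $P^\epsilon_\B$ via the symbol-level perturbation estimate (the paper cites Proposition~\ref{P-1_8-CIP} directly, you cite its consequence Corollary~\ref{C-est-p-pert}), and then invoke Proposition~\ref{P-est-tPepsB-QepsB} and the triangle inequality. Your identification $\Op^\epsilon=\Op^{A(\epsilon,0)}$ is exactly the paper's convention, so the ``delicate point'' you flag is purely notational.
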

\begin{proof}
	Proposition \ref{P-1_8-CIP} with $\cc=0$ implies that:
	\beq\nonumber 
	\Big\|P^\epsilon_\B\,-\,\Op^\epsilon\Big(-(2\pi i)^{-1}\int_{\mathscr{C}_\bot}d\z\,(h_\bot-\z)^-_{B^\circ}\Big)\Big\|_{\mathbb{B}(L^2(\X))}\leq C\epsilon.
	\eeq
Due to Proposition \ref{P-band-frame} the integral kernel of the isolated band projection:
	$$
	P_\B\,=\,\Op\Big(-(2\pi i)^{-1}\int_{\mathscr{C}_\bot}d\z\,\big(h_\bot-\z\big)^-_{B^\circ}\Big)
	$$
	has the following expression in the unperturbed Wannier frame of the isolated band:
	$$
	\mathfrak{K}[P_\B](x,y)\,=\,\underset{\alpha\in\Gamma}{\sum}\,\underset{p\in\underline{n_\B}}{\sum}\,\psi_{p}(x-\alpha)\overline{\psi_{p}(y-\alpha)}.
	$$
Let us consider the following magnetic pseudo-differential operator
	\beq\nonumber 
	\check{P}^\epsilon_\B\,:=\,\Op^\epsilon\Big(-(2\pi i)^{-1}\int_{\mathscr{C}_\bot}d\z\,\big(h_\bot-\z\big)^-_{B^\circ}\Big)\,=\,\Op^\epsilon(p_\B)
	\eeq
and notice that \eqref{F-PepsB} implies that there exists some $C>0$ and $\epsilon_0>0$ such that:
	\beq\nonumber 
 \big\|\widetilde{P}^\epsilon_\B\,-\,\check{P}^\epsilon_\B\big\|_{\mathbb{B}(L^2(\X))}\,\leq\,C\,\epsilon,\quad\forall\epsilon\in[0,\epsilon_0].
	\eeq
 These estimates and Proposition \ref{P-est-tPepsB-QepsB} imply the desired conclusion.
\end{proof}

\begin{remark}\label{R-100}
By their definition, both projections $P^\epsilon_\B$ and $Q^\epsilon_\B$ have symbols of class $S^{-\infty}(\Xi)_\Gamma$. 
\end{remark}

Let us denote by
\beq\label{F-P-Q}
\mathfrak{X}^\epsilon:=\epsilon^{-1}\big[P^\epsilon_\B-Q^\epsilon_\B\big]
\eeq
and consider the Sz.~Nagy unitary intertwining operator associated with the pair of orthogonal projections $P^\epsilon_\B$ and $Q^\epsilon_\B$ having the norm of the difference less then 1 (for $\epsilon$ smaller then some $\epsilon_1>0$): 
\beq\begin{aligned}\label{DF-U-eps-B}
 P^\epsilon_\B\, U^\epsilon_\B&= U^\epsilon_\B \,Q^\epsilon_\B,\\
	U^\epsilon_\B:&=\big(\bb1-(P^\epsilon_\B-Q^\epsilon_\B)^2\big)^{-1/2}\Big[P^\epsilon_\B\,Q^\epsilon_\B\,+\,(\bb1-P^\epsilon_\B)(\bb1-Q^\epsilon_\B)\Big]\\
	&=\big(\bb1-\epsilon^2(\mathfrak{X}^\epsilon)^2\big)^{-1/2}\big[\bb1+\epsilon\mathfrak{X}^\epsilon(2Q^\epsilon_\B-\bb1)\big]=\bb1\,+\,\epsilon\tilde{\mathfrak{X}}^\epsilon, 
\end{aligned}\eeq
with $\big\{\tilde{\mathfrak{X}}^\epsilon,\ 0\leq\epsilon\leq\epsilon_0\big\}$ in a bounded subset of $\mathbb{B}\big(L^2(\X)\big)$. 

\begin{lemma}\label{P-UepsB}
There exists $\epsilon_0>0$ such that $U^\epsilon_\B=\Op^\epsilon(\mathfrak{u}^\epsilon_\B)$ with $\mathfrak{u}^\epsilon_\B\in S^0_1(\X\times\X^*)$, remaining in a bounded subset of $S^0_1(\X\times\X^*)$ for $\epsilon\in[0,\epsilon_0]$ and with an integral kernel that has rapid off-diagonal decay.
\end{lemma}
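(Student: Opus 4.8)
The plan is to show that $U^\epsilon_\B$ is a magnetic pseudodifferential operator with symbol in a bounded subset of $S^0_1$ by tracking how the various functional-calculus operations preserve the symbol classes. The starting point is the observation, recorded in Remark \ref{R-100}, that both $P^\epsilon_\B$ and $Q^\epsilon_\B$ are magnetic pseudodifferential operators for the vector potential $\epsilon A^\bullet$ with symbols $p^\epsilon_\B$ and $q^\epsilon_\B$ of class $S^{-\infty}(\Xi)_\Gamma$; moreover, by the proof of Proposition \ref{P-1_8-CIP} (with $\cc=0$) and the preceding estimates, one can check that these symbols stay in a bounded subset of $S^{-\infty}(\Xi)$ as $\epsilon$ ranges over $[0,\epsilon_0]$. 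Consequently the difference $\epsilon\,\mathfrak{X}^\epsilon=P^\epsilon_\B-Q^\epsilon_\B$ has a symbol which is $\mathscr{O}(\epsilon)$ in every seminorm of $S^{-\infty}(\Xi)$, by Corollary \ref{C-est-p-symb-pert}-type arguments; in particular $\|(\mathfrak{X}^\epsilon)^2\|_{\mathbb{B}(L^2(\X))}$ is uniformly bounded, so $\bb1-\epsilon^2(\mathfrak{X}^\epsilon)^2$ is invertible for $\epsilon$ small, and $\mathfrak{X}^\epsilon$ itself is a bounded magnetic $\Psi$DO with symbol in a bounded subset of $S^{-\infty}(\Xi)$.

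First I would establish that $\big(\bb1-\epsilon^2(\mathfrak{X}^\epsilon)^2\big)^{-1/2}$ is a magnetic $\Psi$DO of class $S^0_1$. Since $(\mathfrak{X}^\epsilon)^2=\Op^\epsilon(g^\epsilon)$ with $g^\epsilon$ in a bounded subset of $S^{-\infty}(\Xi)$, the symbol of $\bb1-\epsilon^2(\mathfrak{X}^\epsilon)^2$ is $1-\epsilon^2 g^\epsilon$, which lies in a bounded subset of $S^0_1(\Xi)$ (indeed of $1+S^{-\infty}$) and whose operator is uniformly bounded below by $\tfrac12$ say, for $\epsilon$ small. The stability of the H\"ormander classes under the magnetic Moyal product and under taking the Moyal inverse of a symbol that is elliptic of order $0$ (as used throughout \cite{IMP-1}--\cite{IMP-3}, e.g. in the very construction of the resolvent symbol $\mathfrak{r}^\circ_\zz$), together with holomorphic functional calculus applied to $z\mapsto z^{-1/2}$ on a disk around $1$ avoiding $0$ and represented by a Cauchy integral, yields a symbol of $\big(\bb1-\epsilon^2(\mathfrak{X}^\epsilon)^2\big)^{-1/2}$ in a bounded subset of $S^0_1(\Xi)$; in fact it equals $1$ modulo $S^{-\infty}$. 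The second factor $\bb1+\epsilon\mathfrak{X}^\epsilon(2Q^\epsilon_\B-\bb1)$ is $\bb1$ plus an $\mathscr{O}(\epsilon)$-bounded magnetic $\Psi$DO with symbol in a bounded subset of $S^{-\infty}(\Xi)$, again by the Moyal-product stability. Multiplying the two, using once more that $S^0_1\,\sharp^{\epsilon B^\bullet}\,(1+S^{-\infty})\subset S^0_1$ boundedly, produces $\mathfrak{u}^\epsilon_\B\in S^0_1(\X\times\X^*)$ staying in a bounded subset of that space as $\epsilon\in[0,\epsilon_0]$. (Equivalently $\mathfrak{u}^\epsilon_\B=1+S^{-\infty}$, which is an even sharper statement.)

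For the rapid off-diagonal decay of the integral kernel I would argue directly via Beals-type commutator estimates, exactly in the spirit of the proof of Proposition \ref{R-p-symb}: the iterated commutators of $U^\epsilon_\B$ with the position operators $Q_j$ and with the magnetic momenta $\Pi^{\epsilon A^\bullet}_k$ can all be extended to bounded operators on $L^2(\X)$, uniformly in $\epsilon$, because this property holds for $P^\epsilon_\B$ and $Q^\epsilon_\B$ (their symbols being in $S^{-\infty}$) and is preserved under the algebraic operations entering \eqref{DF-U-eps-B} — products, bounded functions, and Cauchy integrals of resolvents of operators enjoying the same commutator bounds. By the magnetic Beals criterion (as in \cite{IMP-2,CHP-3}) this is equivalent to $\mathfrak{u}^\epsilon_\B$ having all the required symbol seminorms finite, and a boot-strap argument identical to the one ending the proof of Proposition \ref{R-p-symb} — writing $U^\epsilon_\B=(H^\epsilon_\bot+\bb1)^{-n}(H^\epsilon_\bot+\bb1)^n U^\epsilon_\B$ and noting that $(H^\epsilon_\bot+\bb1)^n U^\epsilon_\B$ still has a bounded $S^0_1$ symbol because $U^\epsilon_\B$ maps into the range of $Q^\epsilon_\B$ up to $S^{-\infty}$ corrections — improves the decay to rapid off-diagonal decay. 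The main obstacle, and the point requiring the most care, is the uniformity in $\epsilon$: one must verify that every constant produced by the Moyal calculus, the ellipticity of $1-\epsilon^2 g^\epsilon$, and the Cauchy-integral functional calculus is controlled by the \emph{bounded} subsets in which $p^\epsilon_\B$, $q^\epsilon_\B$ live, rather than depending on $\epsilon$ through the individual operators; this is handled by invoking the uniform versions of the magnetic-calculus estimates already used in Corollary \ref{P-1_8-CIP} and Corollary \ref{C-est-p-symb-pert}.
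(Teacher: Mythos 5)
Your proposal is correct and follows essentially the same route as the paper's (brief) proof: start from the explicit Sz.~Nagy formula \eqref{DF-U-eps-B}, use that $P^\epsilon_\B$ and $Q^\epsilon_\B$ have $S^{-\infty}$ symbols together with the composition, inversion and holomorphic (Dunford/Cauchy-integral) functional-calculus stability of the magnetic Moyal calculus to place $\mathfrak{u}^\epsilon_\B$ in a bounded subset of $S^0_1$, and obtain the rapid off-diagonal decay from uniform boundedness of the iterated commutators with the position operators, exactly as in the Beals-criterion argument of Proposition \ref{R-p-symb}. The extra detail you supply (the $\mathscr{O}(\epsilon)$ symbol estimates for $\mathfrak{X}^\epsilon$ and the $(H^\epsilon_\bot+\bb1)^{-n}$ bootstrap) goes beyond what the lemma needs but is consistent with the paper's argument.
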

\begin{proof}
We start from the defining Formula \eqref{DF-U-eps-B} and take into consideration that both $P^\epsilon_\B$ and $Q^\epsilon_\B$ belong to $S^{-\infty}(\X\times\X^*)$ and the properties of the magnetic pseudo-differential calculus (composition and inversion) in order to deduce that $U^\epsilon_\B=\Op^\epsilon(\mathfrak{u}^\epsilon_\B)$ has a symbol $\mathfrak{u}^\epsilon_\B$ of class $S^0_1(\X\times\X^*)$; the boundedness with respect to $\epsilon\in[0,\epsilon_0]$ writing  Formula \eqref{DF-U-eps-B} in terms of symbols and using the continuity properties of the magnetic Moyal calculus.

The off-diagonal rapid decay is equivalent to the boundedness of the multiple commutators of the form:
\[
[Q_{j_1},[Q_{j_2},[\dots, Q_{j_n},U^\epsilon_\B]\dots ]]
\]
for any $M\in\Nb$ and $j_k\in\underline{d}$ for all $k\in\underline{M}$ and this follows from the commutation properties of the two operators $P^\epsilon_\B$ and $Q^\epsilon_\B$ that have regular symbols; one has to use the holomorphic functional calculus of Dunford and Formula 6.72 in \cite{IMP-2} for the commutators with the resolvent.
\end{proof}

\begin{lemma}
$U^\epsilon_\B$ commutes with the Zak translations $\big\{\mathfrak{T}^\epsilon_\alpha\big\}_{\alpha\in\Gamma}$.
\end{lemma}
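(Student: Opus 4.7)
The strategy is simply to observe that $U^\epsilon_\B$ is built by holomorphic/algebraic operations out of $P^\epsilon_\B$ and $Q^\epsilon_\B$, and that both of these operators commute with every Zak translation $\mathfrak{T}^\epsilon_\alpha$.

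For $P^\epsilon_\B$, recall from \eqref{DF-p-epsilonc-B} that it is given as a contour integral of the resolvent of $H^{\epsilon,0}_\bot=\overline{\Op^{\epsilon}(h_\bot)}$, and $h_\bot\in S^p_1(\Xi)_\Gamma$ is $\Gamma$-periodic in $x$; hence by part 2 of Proposition \ref{Zak-transl} the operator $H^\epsilon_\bot$, and therefore its resolvent and the contour integral defining $P^\epsilon_\B$, commute with all $\mathfrak{T}^\epsilon_\alpha$. For $Q^\epsilon_\B$, the commutation was already observed in the discussion following \eqref{dhc22}: $\widetilde P^\epsilon_\B$ commutes with the Zak translations by its very construction in \eqref{dhc23} together with the cocycle identity in part 1 of Proposition \ref{Zak-transl}, and thus so does its spectral projection $Q^\epsilon_\B$.

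With both building blocks commuting with every $\mathfrak{T}^\epsilon_\alpha$, we finish by invoking standard stability of the commutation relation under algebraic and functional-calculus operations. The difference $P^\epsilon_\B-Q^\epsilon_\B$ and its square commute with $\mathfrak{T}^\epsilon_\alpha$; since the spectrum of $(P^\epsilon_\B-Q^\epsilon_\B)^2$ lies in $[0,1)$ for $\epsilon$ small enough (see \eqref{DF-U-eps-B}), the bounded self-adjoint operator $\bigl(\bb1-(P^\epsilon_\B-Q^\epsilon_\B)^2\bigr)^{-1/2}$ can be defined by a norm-convergent power series (or, equivalently, by the Dunford functional calculus), and consequently also commutes with each $\mathfrak{T}^\epsilon_\alpha$. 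Likewise the bracket $P^\epsilon_\B Q^\epsilon_\B+(\bb1-P^\epsilon_\B)(\bb1-Q^\epsilon_\B)$ is a polynomial expression in $P^\epsilon_\B$ and $Q^\epsilon_\B$, hence commutes as well. Taking the product of these two factors yields $U^\epsilon_\B\,\mathfrak{T}^\epsilon_\alpha=\mathfrak{T}^\epsilon_\alpha\,U^\epsilon_\B$ for every $\alpha\in\Gamma$, as claimed.

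There is no real obstacle: the lemma is purely a transportation of commutation along the Sz.-Nagy formula \eqref{DF-U-eps-B}, and the only point one has to be careful about is that the inverse square root in \eqref{DF-U-eps-B} is well defined (which is guaranteed by the norm bound $\|P^\epsilon_\B-Q^\epsilon_\B\|<1$ for $\epsilon\in[0,\epsilon_0]$ ensured by the previous proposition).
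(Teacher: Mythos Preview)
Your proof is correct and follows essentially the same approach as the paper: the paper simply observes that the explicit Sz.-Nagy formula \eqref{DF-U-eps-B} expresses $U^\epsilon_\B$ in terms of $P^\epsilon_\B$ and $Q^\epsilon_\B$, both of which were already shown to commute with the Zak translations, and concludes immediately. Your version spells out the functional-calculus step for the inverse square root more carefully, but the underlying argument is identical.
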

\begin{proof}
	View its explicit Formula \eqref{DF-U-eps-B} and the corresponding commutation properties of $P^\epsilon_\B$ and $Q^\epsilon_\B$ put into evidence above, the Nagy intertwining unitary $U^\epsilon_\B$ also commutes with the Zak translations.
\end{proof}

The previous two lemmas prove the following result:
\begin{proposition}\label{P-const-mf-Pfr}
	The  family $\big\{\mathfrak{T}^\epsilon_\alpha\, U^\epsilon_\B\, \psi^\epsilon_{p},\ \alpha\in\Gamma,\,p\in\underline{n_\B}\big\}$ is a strongly localized Parseval frame for $P^\epsilon_\B\,L^2(\X)$.
\end{proposition}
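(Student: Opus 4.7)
The plan is to combine the two preceding lemmas directly, with essentially no new input. First I would observe that since $U^\epsilon_\B$ commutes with the Zak translations $\mathfrak{T}^\epsilon_\alpha$, one has
\[
\mathfrak{T}^\epsilon_\alpha U^\epsilon_\B \psi^\epsilon_p \,=\, U^\epsilon_\B \mathfrak{T}^\epsilon_\alpha \psi^\epsilon_p \,=\, U^\epsilon_\B \psi^\epsilon_{\alpha,p},
\]
so the proposed family is exactly the image under $U^\epsilon_\B$ of the Parseval frame $\{\psi^\epsilon_{\alpha,p}\}_{(\alpha,p)\in\Gamma\times\underline{n_\B}}$ for $\mathcal{L}^\epsilon_\B=\Rg Q^\epsilon_\B$ obtained in Definition~\ref{D-P-fr-eps}.

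Next I would invoke the intertwining relation $P^\epsilon_\B U^\epsilon_\B = U^\epsilon_\B Q^\epsilon_\B$: since $U^\epsilon_\B$ is unitary on $L^2(\X)$, its restriction to $\Rg Q^\epsilon_\B$ is a unitary isomorphism onto $\Rg P^\epsilon_\B$, and the Parseval frame property transfers automatically. Concretely, for any $f\in P^\epsilon_\B\,L^2(\X)$ I would set $g:=(U^\epsilon_\B)^*f\in \Rg Q^\epsilon_\B$, apply the Parseval expansion of $g$ from the previous proposition, and then act with $U^\epsilon_\B$; using $(U^\epsilon_\B\psi^\epsilon_{\alpha,p},f)_{L^2(\X)}=(\psi^\epsilon_{\alpha,p},g)_{L^2(\X)}$ together with $\|f\|_{L^2(\X)}=\|g\|_{L^2(\X)}$ yields both the reconstruction formula and the Parseval identity for $f$ in terms of $\{U^\epsilon_\B\psi^\epsilon_{\alpha,p}\}$.

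The only substantive point is the strong localization, namely the uniform estimate
\[
\sup_{(\alpha,\epsilon)\in\Gamma\times[0,\epsilon_0]}\,\big\|<\cdot-\alpha>^N\, U^\epsilon_\B\, \psi^\epsilon_{\alpha,p}\big\|_{L^2(\X)} \,<\,\infty,\quad \forall N\in\mathbb{N},\ \forall p\in\underline{n_\B}.
\]
For this I would combine Lemma~\ref{L-1}, which together with the Schwartz character of $\psi_p$ yields the uniform rapid decay of $\psi^\epsilon_{\alpha,p}=\mathfrak{T}^\epsilon_\alpha\psi^\epsilon_p$ around $\alpha$, with the rapid off-diagonal decay of the integral kernel $\mathfrak{K}[U^\epsilon_\B]$ given by Lemma~\ref{P-UepsB}. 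Writing $U^\epsilon_\B\psi^\epsilon_{\alpha,p}$ as the integral against $\mathfrak{K}[U^\epsilon_\B]$ and using the elementary inequality $<x-\alpha>^N\leq C_N<x-y>^N<y-\alpha>^N$ reduces the required bound to the uniform control of weighted $L^2$-norms of $\psi^\epsilon_{\alpha,p}$, which is precisely what Lemma~\ref{L-1} provides.

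The main (and quite mild) obstacle is the bookkeeping of uniformity in $\alpha\in\Gamma$ and $\epsilon\in[0,\epsilon_0]$; once Lemmas~\ref{L-1} and~\ref{P-UepsB} are in hand, no further pseudodifferential or spectral argument is needed.
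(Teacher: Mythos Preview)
Your proposal is correct and matches the paper's approach exactly: the paper states that the proposition follows directly from the two preceding lemmas (commutation of $U^\epsilon_\B$ with Zak translations, and the pseudodifferential/off-diagonal-decay properties of $U^\epsilon_\B$), and you have simply spelled out the natural details of that deduction. No additional ingredient is needed beyond what you describe.
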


\subsubsection{The magnetic infinite matrices with a constant weak magnetic field}\label{SSS-const-mf-Pfr}

For a linear operator $T\in\,\mathbb{B}\big(P^\epsilon_\B\,L^2(\X)\big)$ we use the notation 
 \beq\label{DF-cmf-Bmatrix}
\mathfrak{M}^\epsilon_\B[T]\,:=\,\big(\mathfrak{T}^\epsilon_\alpha\,U^\epsilon_\B\,\psi^\epsilon_{p}\,,\,T\,\mathfrak{T}^\epsilon_\beta\,U^\epsilon_\B\,\psi^\epsilon_{q}\big)_{L^2(\X)}
\eeq
for its infinite matrix defined by the Parseval frame in Proposition \ref{P-const-mf-Pfr}. In order to simplify the formulas we shall use the notation
\beq\label{dhc24}
\tpsi^\epsilon_p\,:=\,U^\epsilon_\B\,\psi^\epsilon_{p}.
\eeq

\begin{proposition}\label{P-B2}
Given a magnetic operator $\Op^\epsilon(F)$ with $F\in\mathscr{S}^\prime(\Xi)$ a $\Gamma$-periodic distribution, there exists $\mathring{\mathfrak{M}}^\epsilon_\B[\Op^\epsilon(F)]\in{\cal{s}}(\Gamma;\MmN)$ such that:
\beq\nonumber 
\mathfrak{M}^\epsilon_\B[\Op^\epsilon(F)]_{\alpha,\beta}\,=\,\Lambda^\epsilon(\alpha,\beta)\,\mathring{\mathfrak{M}}^\epsilon_\B[\Op^\epsilon(F)]_{\alpha-\beta}.
\eeq
\end{proposition}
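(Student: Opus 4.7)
The plan is to reduce the matrix element $\mathfrak{M}^\epsilon_\B[\Op^\epsilon(F)]_{\alpha,\beta}$ to a form depending only on the difference $\alpha-\beta$, with all the $(\alpha,\beta)$-dependence carried by the cocycle $\Lambda^\epsilon(\alpha,\beta)$. The two key facts I would lean on are: (i) since $F$ is $\Gamma$-periodic in $x$, the operator $\Op^\epsilon(F)$ commutes with every Zak translation $\mathfrak{T}^\epsilon_\gamma$ (point 2 of Proposition \ref{Zak-transl}); and (ii) the intertwining unitary $U^\epsilon_\B$ also commutes with the Zak translations (see the lemma preceding Proposition \ref{P-const-mf-Pfr}), so the frame vectors $\mathfrak{T}^\epsilon_\alpha \tpsi^\epsilon_p$ (with $\tpsi^\epsilon_p = U^\epsilon_\B \psi^\epsilon_p$) are related to each other by the projective representation $\mathfrak{T}^\epsilon$ of $\Gamma$.

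The computation is then purely algebraic. First I would note that $(\mathfrak{T}^\epsilon_\gamma)^*=\mathfrak{T}^\epsilon_{-\gamma}$ since $\Lambda^\epsilon(-\gamma,\gamma)=1$ by antisymmetry of $B^\bullet$, so that $\mathfrak{T}^\epsilon_0=\bb1$ and inversion is clean. Starting from
\[
\mathfrak{M}^\epsilon_\B[\Op^\epsilon(F)]_{\alpha,\beta}=\big(\mathfrak{T}^\epsilon_\alpha\tpsi^\epsilon_p,\Op^\epsilon(F)\,\mathfrak{T}^\epsilon_\beta\tpsi^\epsilon_q\big)_{L^2(\X)},
\]
the commutation $\Op^\epsilon(F)\mathfrak{T}^\epsilon_\beta=\mathfrak{T}^\epsilon_\beta\Op^\epsilon(F)$ and the transfer of $\mathfrak{T}^\epsilon_\beta$ to the left slot give
\[
\mathfrak{M}^\epsilon_\B[\Op^\epsilon(F)]_{\alpha,\beta}=\big(\mathfrak{T}^\epsilon_{-\beta}\mathfrak{T}^\epsilon_\alpha\tpsi^\epsilon_p,\Op^\epsilon(F)\tpsi^\epsilon_q\big)_{L^2(\X)}.
\]
Applying the cocycle identity yields $\mathfrak{T}^\epsilon_{-\beta}\mathfrak{T}^\epsilon_\alpha=\Lambda^\epsilon(\alpha,-\beta)\,\mathfrak{T}^\epsilon_{\alpha-\beta}$, and a short direct computation using $B^\bullet_{k,j}=-B^\bullet_{j,k}$ shows that $\Lambda^\epsilon(\alpha,-\beta)=\overline{\Lambda^\epsilon(\alpha,\beta)}$. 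Because the scalar product is antilinear in the first slot, this complex conjugate comes out as $\Lambda^\epsilon(\alpha,\beta)$ in front, producing exactly
\[
\mathfrak{M}^\epsilon_\B[\Op^\epsilon(F)]_{\alpha,\beta}=\Lambda^\epsilon(\alpha,\beta)\,\big(\mathfrak{T}^\epsilon_{\alpha-\beta}\tpsi^\epsilon_p,\Op^\epsilon(F)\tpsi^\epsilon_q\big)_{L^2(\X)}.
\]
The identification is then $\big[\mathring{\mathfrak{M}}^\epsilon_\B[\Op^\epsilon(F)]_\gamma\big]_{p,q}:=\big(\mathfrak{T}^\epsilon_{\gamma}\tpsi^\epsilon_p,\Op^\epsilon(F)\tpsi^\epsilon_q\big)_{L^2(\X)}$, which visibly depends only on $\gamma=\alpha-\beta$ and takes values in $\mathscr{M}_{n_\B}$.

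The remaining point is rapid decay of $\mathring{\mathfrak{M}}^\epsilon_\B[\Op^\epsilon(F)]_\gamma$ in $\gamma$, and this is where the argument becomes non-algebraic; it is the step I expect to be the real obstacle. I would use that $\tpsi^\epsilon_p=U^\epsilon_\B\Theta^\epsilon_\B\psi_p$ belongs to $\mathscr{S}(\X)$ (by Lemma \ref{L-1} for $\Theta^\epsilon_\B\psi_p$ and by Lemma \ref{P-UepsB}, which exhibits $U^\epsilon_\B$ as a magnetic pseudo-differential operator with $S^0_1$ symbol and rapidly off-diagonally decaying kernel), so that $\mathfrak{T}^\epsilon_\gamma\tpsi^\epsilon_p$ is a Schwartz function sharply localized near $\gamma\in\Gamma$. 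For the magnetic operators of interest in the sequel, namely those with a regular symbol $F\in S^{-\infty}(\Xi)_\Gamma$ (as will be the case for the effective Bloch family Hamiltonian $H_\B$ or the projector $P_\B$), the kernel of $\Op^\epsilon(F)$ has rapid off-diagonal decay, and the integral pairing against two Schwartz functions, one of which is translated by $\gamma$, therefore decays faster than any polynomial in $|\gamma|$, uniformly in $\epsilon\in[0,\epsilon_0]$. This gives $\mathring{\mathfrak{M}}^\epsilon_\B[\Op^\epsilon(F)]\in\mathcal{s}(\Gamma;\mathscr{M}_{n_\B})$. The bulk of the technical work is verifying this decay through commutator estimates with the position operators, analogous to those already exploited in Proposition \ref{R-p-symb} and in the proof of the $(\mathfrak{T}^\epsilon_\alpha\psi_p,[\Int\Lambda^{\epsilon,\cc}\mathfrak{K}]\mathfrak{T}^\epsilon_\beta\psi_q)$ expansion \eqref{F-4_12}, where the interplay between the phase $\Lambda^\epsilon$, the rapid decay of the frame elements and the regularity of the kernel was already controlled.
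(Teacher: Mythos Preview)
Your proof is correct and follows essentially the same route as the paper: both use the commutation of $\Op^\epsilon(F)$ with the Zak translations $\mathfrak{T}^\epsilon_\gamma$ (Proposition \ref{Zak-transl}) together with the cocycle identity $\mathfrak{T}^\epsilon_\alpha\mathfrak{T}^\epsilon_\beta=\Lambda^\epsilon(\beta,\alpha)\mathfrak{T}^\epsilon_{\alpha+\beta}$ to extract the factor $\Lambda^\epsilon(\alpha,\beta)$ and reduce the matrix element to a function of $\alpha-\beta$ alone, arriving at the identical definition $\big[\mathring{\mathfrak{M}}^\epsilon_\B[\Op^\epsilon(F)]_\gamma\big]_{p,q}=\big(\mathfrak{T}^\epsilon_{\gamma}\tpsi^\epsilon_p,\Op^\epsilon(F)\tpsi^\epsilon_q\big)_{L^2(\X)}$. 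Your discussion of the rapid decay via the Schwartz regularity of $\tpsi^\epsilon_p$ and the off-diagonal decay of the kernel is in fact more explicit than what the paper's proof writes out.
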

\begin{proof}
Taking into account Proposition \ref{Zak-transl} let us compute:
\begin{align*}
\big[\mathfrak{M}^\epsilon_\B[\Op^\epsilon(F)]_{\alpha,\beta}\big]_{p,q}&=\big(\mathfrak{T}^\epsilon_\alpha\,\tpsi^\epsilon_p\,,\,[\Int\,\Lambda^\epsilon\,\mathfrak{K}[F]]\,\mathfrak{T}^\epsilon_\beta\,\tpsi^\epsilon_q\big)_{L^2(\X)}\\
&=\Lambda^\epsilon(\alpha,\beta)\big(\mathfrak{T}^\epsilon_{\alpha-\beta}\,\tpsi^\epsilon_p\,,\,[\Int\,\Lambda^\epsilon\,\mathfrak{K}[F]]\,\tpsi^\epsilon_q\big)_{L^2(\X)}
\end{align*}
and identify:
\beq\nonumber 
\big [\mathring{\mathfrak{M}}^\epsilon_\B[\Op^\epsilon(F)]_\gamma \big ]_{p,q}\,:=\,\big(\mathfrak{T}^\epsilon_{\gamma}\,\tpsi^\epsilon_p\,,\,[\Int\,\Lambda^\epsilon\,\mathfrak{K}[F]]\,\tpsi^\epsilon_q\big)_{L^2(\X)}.
\eeq
\end{proof}

\begin{definition}\label{D-m-eps-B}
We define the matrix valued $\Gamma$-sequence announced in Subsection \ref{SS-MRes} and appearing in Theorem \ref{T-III}:
$$
\mathfrak{m}^\epsilon_\B\,:=\,\mathring{\mathfrak{M}}[\ham^{\epsilon,0}_\B].
$$
\end{definition}

	We denote by $\vec{\mathscr{M}}^\epsilon_\Gamma[\mathscr{M}_{n_\B}]$ the complex linear space of the matrices in $\mathscr{M}^\circ_\Gamma[\mathscr{M}_{n_\B}]$ satisfying the property in  Proposition \ref{P-B2}.

\begin{proposition}\label{P-matr}
The space $\vec{\mathscr{M}}^\epsilon_{\Gamma}[\MmN]$ is a normed subalgebra of $\mathring{\mathscr{M}}_{\Gamma}[\MmN]$ with involution and unity.
\end{proposition}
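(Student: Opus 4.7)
The plan is to verify each structural property — closure under product, closure under involution, existence of a unit, and compatibility with the inherited norm — by translating them into operations on the sequence part $\mathring{M} \in \mathcal{s}(\Gamma; \MmN)$. The heart of the argument is a flux computation for the phase $\Lambda^\epsilon$, which shows that the cocycle defect depends only on $\alpha-\beta$ and $\mu$, not on $\alpha,\beta$ separately.

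First I would handle the product. Given $M,N \in \vec{\mathscr{M}}^\epsilon_\Gamma[\MmN]$ with $M_{\alpha,\beta}=\Lambda^\epsilon(\alpha,\beta)\mathring{M}_{\alpha-\beta}$ and similarly for $N$, change variable $\mu = \alpha-\gamma$ in $(MN)_{\alpha,\beta} = \sum_\gamma M_{\alpha,\gamma} N_{\gamma,\beta}$. Using the antisymmetry of the bilinear form $\langle B^\bullet, \cdot\wedge\cdot\rangle$ from \eqref{R-wedge}, a direct computation yields
\[
\Lambda^\epsilon(\alpha,\alpha-\mu)\,\Lambda^\epsilon(\alpha-\mu,\beta) = \Lambda^\epsilon(\alpha,\beta)\,\exp\bigl(-i\epsilon\langle B^\bullet,\mu\wedge(\alpha-\beta)\rangle\bigr),
\]
since $\alpha\wedge(\alpha-\mu) + (\alpha-\mu)\wedge\beta - \alpha\wedge\beta = \mu\wedge(\alpha-\beta)$. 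Setting $\delta := \alpha-\beta$, this gives
\[
(MN)_{\alpha,\beta} = \Lambda^\epsilon(\alpha,\beta)\underset{\mu\in\Gamma}{\sum}\,e^{-i\epsilon\langle B^\bullet,\mu\wedge\delta\rangle}\,\mathring{M}_\mu\,\mathring{N}_{\delta-\mu},
\]
which is of the required form, and the bracket between the sum signs is a twisted convolution of two rapidly decaying $\mathscr{M}_{n_\B}$-valued sequences; standard estimates show it remains in $\mathcal{s}(\Gamma;\MmN)$.

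Next I would handle the involution: for $M\in\vec{\mathscr{M}}^\epsilon_\Gamma[\MmN]$, using $\Lambda^\epsilon(\beta,\alpha)=\overline{\Lambda^\epsilon(\alpha,\beta)}$ (again by antisymmetry of the wedge) one gets
\[
(M^*)_{\alpha,\beta} = \overline{M_{\beta,\alpha}} = \Lambda^\epsilon(\alpha,\beta)\,\bigl(\mathring{M}_{-(\alpha-\beta)}\bigr)^*,
\]
so $(\mathring{M^*})_\gamma = (\mathring{M}_{-\gamma})^*$, which still lies in $\mathcal{s}(\Gamma;\MmN)$. For the unity, observe that $\Lambda^\epsilon(\alpha,\alpha)=1$, so taking $\mathring{\Id}_\gamma := \delta_{\gamma,0}\,\bb1_{n_\B}$ yields $\Lambda^\epsilon(\alpha,\beta)\mathring{\Id}_{\alpha-\beta} = \delta_{\alpha,\beta}\bb1_{n_\B}$, which is the identity element of $\mathscr{M}^\circ_\Gamma[\MmN]$ acting on $\mathcal{K}_\B$. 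Finally, the submultiplicativity $\norm{MN}\leq\norm{M}\norm{N}$ and the $C^*$-identity for the involution are inherited from the ambient $C^*$-algebra $\mathring{\mathscr{M}}_\Gamma[\MmN]\subset\mathbb{B}(\mathcal{K}_\B)$, so $\vec{\mathscr{M}}^\epsilon_\Gamma[\MmN]$ is a normed $*$-subalgebra with unity.

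The only nontrivial step is the cocycle identity for $\Lambda^\epsilon$; everything else is a routine translation between matrix operations and operations on the sequence $\mathring{M}$. I would present the flux computation carefully (citing \eqref{R-wedge} and the antisymmetry $B^\bullet_{k,j}=-B^\bullet_{j,k}$), then list the remaining verifications as short consequences.
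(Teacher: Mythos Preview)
Your argument is correct and follows essentially the same route as the paper: both compute the product $(MN)_{\alpha,\beta}$, change the summation variable, and use the bilinearity and antisymmetry of $\langle B^\bullet,\cdot\wedge\cdot\rangle$ to factor out $\Lambda^\epsilon(\alpha,\beta)$ and show the remaining twisted convolution depends only on $\alpha-\beta$. Your treatment is in fact more complete than the paper's, which only writes out the product computation and leaves the involution, unity, and rapid-decay verification implicit.
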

\begin{proof}
We reproduce here the proof of (3.28) in \cite{CHN}.
	Given $(S,T)\in\big[\mathscr{M}^\circ_{\Gamma}[\MmN]_\epsilon\big]^2$ we know that $S_{\alpha,\beta}=\Lambda^\epsilon(\alpha,\beta)\mathring{S}_{\alpha-\beta}$ and $T_{\alpha,\beta}=\Lambda^\epsilon(\alpha,\beta)\mathring{T}_{\alpha-\beta}$ with $(\mathring{S},\mathring{T})\in\big[{\cal{s}}\big(\Gamma;\MmN\big)\big]^2$ so that:
	\begin{align*}
		(ST)_{\alpha,\beta}:&=\underset{\gamma\in\Gamma}{\sum}S_{\alpha,\gamma}\,T_{\gamma,\beta}=\underset{\gamma\in\Gamma}{\sum}\Lambda^\epsilon(\alpha,\gamma)\mathring{S}_{\gamma-\alpha}\Lambda^\epsilon(\gamma,\beta)\mathring{T}_{\beta-\gamma}=\underset{\gamma^\prime\in\Gamma}{\sum}\Lambda^\epsilon(\alpha,\gamma^\prime)\Lambda^\epsilon(\gamma^\prime+\alpha,\beta)\mathring{S}_{\gamma^\prime}\mathring{T}_{\beta-\alpha-\gamma^\prime}\\
&=\Lambda^\epsilon(\alpha,\beta)\underset{\gamma\in\Gamma}{\sum}\Lambda^\epsilon(\alpha,\gamma)\Lambda^\epsilon(\gamma,\beta)\mathring{S}_{\gamma}\mathring{T}_{\beta-\alpha-\gamma}=\Lambda^\epsilon(\alpha,\beta)\underset{\gamma\in\Gamma}{\sum}\Lambda^\epsilon(\gamma,\beta-\alpha)\mathring{S}_{\gamma}\mathring{T}_{\beta-\alpha-\gamma}
\end{align*}
and we can define  $\mathring{ST}\in{\cal{s}}\big(\Gamma;\MmN\big)$ by the formula:
	\beq\nonumber 
[\mathring{ST}]_\alpha:=\underset{\gamma\in\Gamma}{\sum}\Lambda^\epsilon(\gamma,\alpha)\,\mathring{S}_{\gamma}\cdot \mathring{T}_{\alpha-\gamma}.
	\eeq
\end{proof}

\begin{proposition}\label{P-4.30}
There exists some $\epsilon_0>0$ and for any $n\in\mathbb{N}$ there exists some $C_n>0$ such that:
$$
<\gamma>^n\big|\big[[\mathfrak{m}^\epsilon_\B]_\gamma-[\mathfrak{m}^\circ_\B]_\gamma\big]_{p,q}\big|\,\leq\,C_n\epsilon,\quad\forall\epsilon\in[0,\epsilon_0].
$$
\end{proposition}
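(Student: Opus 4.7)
The plan is to reduce the claim to formula \eqref{F-4_12} applied to $H^{\epsilon,0}_\B$, after controlling the $\mathcal{O}(\epsilon)$-errors produced when one replaces the magnetic frame vectors $\mathfrak{T}^\epsilon_\gamma \tpsi^\epsilon_p$ by the unperturbed $\psi_p$ translated by the Zak rule, and $\ham^{\epsilon,0}_\B$ by $\Op^{A^\circ + \epsilon A^\bullet}(h_\B)$. The main point throughout is to track not only the $\mathcal{O}(\epsilon)$ smallness but also the rapid decay in $\gamma$, both inherited from the off-diagonal rapid decay of the integral kernels involved and from the Schwartz localization of $\psi_p$.

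First I would express $\ham^{\epsilon,0}_\B = \Op^A\big(p^\epsilon_\B\sharp^B h \sharp^B p^\epsilon_\B\big)$ with $A = A^\circ + \epsilon A^\bullet$, and use Proposition \ref{P-replP3_5} together with Corollary \ref{C-est-p-symb-pert} to obtain the $\Gamma$-periodic expansion $p^\epsilon_\B\sharp^B h \sharp^B p^\epsilon_\B = h_\B + \epsilon r^\epsilon$ with $r^\epsilon$ bounded in $S^{-\infty}(\Xi)_\Gamma$. Factoring $\Lambda^A = \Lambda^{A^\circ}\Lambda^\epsilon$, the integral kernel of $\ham^{\epsilon,0}_\B$ then takes the form
\[
\mathfrak{K}[\ham^{\epsilon,0}_\B](x,y) \,=\, \Lambda^\epsilon(x,y)\,\big(\mathfrak{K}[H_\B](x,y) + \epsilon\,\mathfrak{K}^\epsilon_r(x,y)\big),
\]
with $\mathfrak{K}^\epsilon_r$ lying in a bounded subset of $\mathring{\mathscr{S}}(\X\times\X)_\Gamma$ uniformly in $\epsilon$. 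Next I would control the frame deformation and show that $\tpsi^\epsilon_p = \psi_p + \epsilon\chi^\epsilon_p$ with $\chi^\epsilon_p$ bounded in every weighted space $\langle\cdot\rangle^N L^2(\X)$ uniformly in $\epsilon$. Using $\tpsi^\epsilon_p = U^\epsilon_\B\Theta^\epsilon_\B\psi_p$, Lemma \ref{L-1} (applied with $\alpha=0$) provides the expansion $\Theta^\epsilon_\B\psi_p = \psi_p + \mathcal{O}(\epsilon)$ in each weighted norm; the remaining factor $U^\epsilon_\B - \bb1$ is controlled using Lemma \ref{P-UepsB} (whose symbol $\mathfrak{u}^\epsilon_\B\in S^0_1$ is uniformly bounded and reduces to $1$ at $\epsilon=0$), so that the multi-commutators of $U^\epsilon_\B$ with position operators are uniformly bounded and the image of a strongly localized function remains strongly localized.

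I would then insert both expansions into $[\mathfrak{m}^\epsilon_\B]_\gamma^{(p,q)} = \big(\mathfrak{T}^\epsilon_\gamma\tpsi^\epsilon_p\,,\, \ham^{\epsilon,0}_\B\,\tpsi^\epsilon_q\big)_{L^2(\X)}$. The cross-terms involving $\chi^\epsilon_p$ or $\chi^\epsilon_q$ and the $\epsilon\mathfrak{K}^\epsilon_r$ contribution of the kernel all produce $\mathcal{O}(\epsilon\,\langle\gamma\rangle^{-n})$-bounds for every $n$, because $\ham^{\epsilon,0}_\B$ has a kernel with uniform rapid off-diagonal decay and $\mathfrak{T}^\epsilon_\gamma\chi^\epsilon_p$ stays strongly localized around $\gamma$. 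The remaining leading term $\big(\mathfrak{T}^\epsilon_\gamma\psi_p\,,\,\Op^A(h_\B)\psi_q\big)_{L^2(\X)}$ is exactly of the type covered by formula \eqref{F-4_12} with $\cc=0$, $\mathfrak{K}=\mathfrak{K}[H_\B]$, $\alpha=\gamma$ and $\beta=0$, which gives $\Lambda^\epsilon(\gamma,0)[\mathfrak{m}^\circ_\B]_\gamma^{(p,q)} + \epsilon\,\mathfrak{m}^\epsilon[H_\B]_{\gamma,0}^{(p,q)}$; since $\Lambda^\epsilon(\gamma,0) = 1$ and $\mathfrak{m}^\epsilon[H_\B]$ belongs to a bounded subset of $\mathscr{M}^\circ_\Gamma[\MmN]$, the $(\gamma,0)$-entries of the remainder decay faster than any polynomial in $\langle\gamma\rangle$, and the asserted estimate follows.

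The main obstacle is the second step: converting the operator-norm smallness $U^\epsilon_\B - \bb1 = \mathcal{O}(\epsilon)$ into a pointwise, weighted estimate on $\tpsi^\epsilon_p - \psi_p$. Concretely, one has to prove that $\tilde{\mathfrak{X}}^\epsilon\psi_p$ stays in every $\langle\cdot\rangle^N L^2(\X)$ uniformly in $\epsilon$, which requires exploiting that $\mathfrak{u}^\epsilon_\B - 1 = \mathcal{O}(\epsilon)$ in every seminorm of $S^0_1$ (deduced from Corollary \ref{C-est-p-symb-pert} via the holomorphic functional calculus used to build $U^\epsilon_\B$) and combining this with the Beals-type commutator argument of Lemma \ref{P-UepsB}. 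Once this weighted control is in place, every remainder entering the matrix element automatically inherits both the $\epsilon$-smallness and the rapid decay in $\gamma$.
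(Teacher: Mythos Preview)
Your proposal is correct and follows essentially the same route as the paper: expand $p^\epsilon_\B\sharp^B h\sharp^B p^\epsilon_\B = h_\B + \mathcal{O}(\epsilon)$ via Proposition~\ref{P-replP3_5} and Corollary~\ref{C-est-p-symb-pert}, strip off $U^\epsilon_\B-\bb1=\epsilon\tilde{\mathfrak{X}}^\epsilon$ and $\Theta^\epsilon_\B-\bb1$ using \eqref{DF-U-eps-B} and Lemma~\ref{L-1} (with the weighted localization supplied by Lemma~\ref{P-UepsB}), and then handle the remaining leading term $\big(\mathfrak{T}^\epsilon_\gamma\psi_p,\Int[\Lambda^\epsilon\mathfrak{K}[H_\B]]\psi_q\big)$. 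The only cosmetic difference is that for this last term the paper redoes the phase estimate $\Lambda^\epsilon_\gamma\Lambda^\epsilon-1=\mathcal{O}(\epsilon\,|x-\gamma|\,|x-y|)$ by hand, whereas you invoke \eqref{F-4_12} with $\cc=0$, $\alpha=\gamma$, $\beta=0$ and $\Lambda^\epsilon(\gamma,0)=1$; both arguments yield the same bound.
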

\begin{proof}
Starting from Definition \ref{D-m-eps-B} and \eqref{DF-m-circ-B} we can write:
\[
[\mathfrak{m}^\epsilon_\B]_{\gamma,(p,q)}=\mathring{\mathfrak{M}}[\ham^{\epsilon,0}_\B]_{(\gamma,p),(0,q)}=\big(\mathfrak{T}^\epsilon_{\gamma}\,\tpsi^\epsilon_p\,,\,\ham^{\epsilon,0}_\B\,\tpsi^\epsilon_q\big)_{L^2(\X)}=\big(\mathfrak{T}^\epsilon_{\gamma}\,\tpsi^\epsilon_p\,,\,H^{\epsilon,0}\,\tpsi^\epsilon_q\big)_{L^2(\X)},
\]
and 
\[
[\mathfrak{m}^\circ_\B]_{\gamma,(p,q)}=\mathfrak{M}_\B[H_\B]_{(\gamma,p),(0,q)}=\big(\tau_{-\gamma}\,\psi_p\,,\,H_\B\,\psi_q\big)_{L^2(X)},
\]
so that we have to estimate:
\begin{align*}
&<\gamma>^n\big|\big[[\mathfrak{m}^\epsilon_\B]_\gamma-[\mathfrak{m}^\circ_\B]_\gamma\big]_{p,q}\big|=<\gamma>^n\big|\big(\mathfrak{T}^\epsilon_{\gamma}\,\tpsi_p^\epsilon\,,\,H^{\epsilon,0}\,\tpsi^\epsilon_q\big)_{L^2(\X)}-\big(\tau_{-\gamma}\psi_p\,,\,H_\B\,\psi_q\big)_{L^2(X)}\big|\\ \nonumber
&\hspace*{12pt}=<\gamma>^n\big|\big(U^\epsilon_\B\,\Theta^\epsilon_\B\,\Lambda^\epsilon_{\gamma}\,\tau_{-\gamma}\psi_p\,,\,P^{\epsilon,0}_\B\,H^{\epsilon,0}\,P^{\epsilon,0}_\B\,U^\epsilon_\B\,\Theta^\epsilon_\B\,\psi_q\big)_{L^2(\X)}-\big(\tau_{-\gamma}\psi_p\,,\,H_\B\,\psi_q\big)_{L^2(X)}\big|\\ \nonumber
&\hspace*{12pt}\leq<\gamma>^n\big|\big(U^\epsilon_\B\,\Theta^\epsilon_\B\,\Lambda^\epsilon_{\gamma}\,\tau_{-\gamma}\psi_p\,,\,\Op^{\epsilon,0}(h_\B)\,U^\epsilon_\B\,\Theta^\epsilon_\B\,\psi_q\big)_{L^2(\X)}-\big(\tau_{-\gamma}\psi_p\,,\,\Op^{A^\circ}(h_\B)\,\psi_q\big)_{L^2(X)}\big|+\\
&\hspace*{24pt}+<\gamma>^n\big|\big(U^\epsilon_\B\,\Theta^\epsilon_\B\,\Lambda^\epsilon_{\gamma}\,\tau_{-\gamma}\psi_p\,,\,\Op^{\epsilon,0}(p^\epsilon_\B\sharp^\epsilon\,h\sharp^\epsilon\,p^\epsilon_\B-h_\B)\,U^\epsilon_\B\,\Theta^\epsilon_\B\,\psi_q\big)_{L^2(\X)}\big|.
\end{align*}
First, we notice that Proposition \ref{P-replP3_5}  and Corollary \ref{C-est-p-symb-pert} imply that $p^\epsilon_\B\sharp^\epsilon\,h\sharp^\epsilon\,p^\epsilon_\B-h_\B=\mathscr{O}(\epsilon)$ as symbols of bounded operators and the rapid decay of the functions $\psi_p$ for $p\in\underline{n_\B}$ allows to control the factor $<\gamma>^n$ by the decay of $\tau_{-\gamma}\psi_p$.
We recall \eqref{DF-U-eps-B} and write $U^\epsilon_\B=\bb1+\epsilon\tilde{\mathfrak{X}}^\epsilon$. In order to estimate $\Theta^\epsilon_\B\,\Lambda^\epsilon_{\gamma}\,\tau_{-\gamma}\psi_p$ we use Lemma \ref{L-1}. Thus we obtain the estimate:
\begin{align*}
<\gamma>^n&\big[[\mathfrak{m}^\epsilon_\B]_\gamma-[\mathfrak{m}^\circ_\B]_\gamma\big]_{p,q}=<\gamma>^n\big(\tau_{-\gamma}\psi_p\,,\,\Int\big[\big(\Lambda^\epsilon_{\gamma}\,\Lambda^\epsilon-1\big)\mathfrak{K}[h_\B]\big]\,\psi_q\big)_{L^2(\X)}\,+\,\mathscr{O}(\epsilon).
\end{align*}
In order to finish the proof we notice that:
\begin{align*}
<\gamma>^n&\big(\tau_{-\gamma}\psi_p\,,\,\Int\big[\big(\Lambda^\epsilon_{\gamma}\,\Lambda^\epsilon-1\big)\mathfrak{K}[h_\B]\big]\,\psi_q\big)_{L^2(\X)}=\\
&=<\gamma>^n\int_{\X}dx\int_{\X}dy\,\overline{\psi_p(x-\gamma)}\,\big(i\epsilon\langle\,B^\bullet,(\gamma\wedge x+y\wedge x)\rangle\big)\,\times\\
&\hspace{3.5cm}\times\,\Big(\int_0^1ds\,\exp\big(-is\epsilon\langle\,B^\bullet,x\wedge\gamma+x\wedge y\rangle\big)\Big)\,\mathfrak{K}[h_\B](x,y)\,\psi_q(y).
\end{align*}
As $\gamma\wedge x=(\gamma-x)\wedge x$ and $y\wedge x=y\wedge(x-y)$ we obtain that for any three numbers $M_1, M_2$ and $M_3$ in $\mathbb{N}$ we can write that:
\beq\begin{split}\nonumber
\Big|<\gamma>^n&\big(\tau_{-\gamma}\psi_p\,,\,\Int\big[\big(\Lambda^\epsilon_{\gamma}\,\Lambda^\epsilon-1\big)\mathfrak{K}[h_\B]\big]\,\psi_q\big)_{L^2(\X)}\Big|\leq\,C\,\epsilon\int_{\X}dx\int_{\X}dy\,\times\\
&\hspace*{12pt}\times\,<x-\gamma>^{M_1}|\psi_p(x-\gamma)|\,<x-y>^{M_2}|\mathfrak{K}[h_\B](x,y)|\,<y>^{M_3}|\psi_q(x)|\,\times\\
&\hspace*{12pt}\times\,<\gamma>^n<x-\gamma>^{-M_1}\big(<x-\gamma><x>+<x-y><y>\big)<x-y>^{-M_2}<y>^{-M_3}.
\end{split}\eeq
Finally we notice that $<\gamma>^n\leq\,C_n<\x-\gamma>^n<x>^n$ and $<x>^{n+1}\leq\,C^\prime_{n}<x-y>^{n+1}<y>^{n+1}$ and we only have to choose $M_1\geq n+1$, $M_2\geq n+2$ and $M_3\geq n+2$ in order to obtain a bound of order $\epsilon$ for the above scalar product and finish the proof.
\end{proof}

\subsubsection{The strongly localized Parseval frame for $P^{\epsilon,\cc}_\B\,L^2(\X)$}\label{SSS-nonconst-mf-Pfr}

In order to obtain a localized Parseval frame for the subspace $P^{\epsilon,\cc}_\B\,L^2(\X)$ let us consider the following system of functions in $L^2(\X)$: \beq\label{D-eps-c-fr}
 \big\{\widetilde{\Lambda}^{\epsilon,\cc}_\alpha\,\mathfrak{T}^\epsilon_\alpha\,U^\epsilon_\B\,\psi^\epsilon_{p},\,\alpha\in\Gamma,\,p\in\underline{n_\B}\big\}
 \eeq

We repeat the arguments and constructions in Paragraph \ref{SSS-constmf-Wframe}, starting with an operator $\widetilde{P}^{\epsilon,\cc}_\B$ defined by the regular integral kernel (with rapid off-diagonal decay):
\[
\mathfrak{K}[\widetilde{P}^{\epsilon,\cc}_\B](x,y):=\underset{(\alpha,p)\in\Gamma\times\underline{n_\B}}{\sum}\,\widetilde{\Lambda}^{\epsilon,\cc}_\alpha\,\mathfrak{T}^\epsilon_\alpha\,U^\epsilon_\B\,\psi^\epsilon_{p}(x)\,\overline{\widetilde{\Lambda}^{\epsilon,\cc}_\alpha\,\mathfrak{T}^\epsilon_\alpha\,U^\epsilon_\B\,\psi^\epsilon_{p}(y)}.
\]
We notice that:
\begin{align*}
\widetilde{\Lambda}^{\epsilon,\cc}_\alpha(x)\,\overline{\widetilde{\Lambda}^{\epsilon,\cc}_\alpha(y)}&=\widetilde{\Lambda}^{\epsilon,\cc}(x,\alpha)\widetilde{\Lambda}^{\epsilon,\cc}(\alpha,y)=\widetilde{\Lambda}^{\epsilon,\cc}(x,y)\big[\exp\big(-i\int_{<x,\alpha,y>}B^\epsilon\big)\big]\\
&=\widetilde{\Lambda}^{\epsilon,\cc}(x,y)\Big[1\,-\,i\epsilon\cc\,\Big(\int_{<x,\alpha,y>}B^\epsilon\Big)\Big(\int_0^1ds\,\exp\big(-is\epsilon\cc\int_{<x,\alpha,y>}B^\epsilon\big)\Big)\Big],\\
\Big|\int_{<x,\alpha,y>}B^\epsilon\Big|&\leq\,C<x-\alpha><y-\alpha>,\qquad\forall\epsilon\in[0,\epsilon_0].
\end{align*}
Thus:
\[
\mathfrak{K}[\widetilde{P}^{\epsilon,\cc}_\B]=\widetilde{\Lambda}^{\epsilon,\cc}\,\big(\mathfrak{K}[P^\epsilon_\B]\,+\,\epsilon\cc\mathfrak{K}^{\epsilon,\cc}_2\big)
\]
with $\mathfrak{K}^{\epsilon,\cc}_2$ a $BC^\infty(\X\times\X)$  integral kernel having rapid off-diagonal decay uniformly in $(\cc,\epsilon)\in[0,1]\times[0,\epsilon_0]$ and we may conclude also that $[\widetilde{P}^{\epsilon,\cc}_\B]^2=\widetilde{P}^{\epsilon,\cc}_\B\,+\,\mathscr{O}(\epsilon\cc)$ (as bounded operators on $L^2(\X)$). Thus, for some $\epsilon_0>0$ and any $(\cc,\epsilon)\in[0,1]\times[0,\epsilon_0]$
$$
\sigma(\widetilde{P}^{\epsilon,\cc}_\B)\subset(-\cc\epsilon,\cc\epsilon)\bigcup(1-\cc\epsilon,1+\cc\epsilon)
$$
and we may repeat the arguments given  in Paragraph \ref{SSS-constmf-Wframe} (see \eqref{F-PepsB} - \eqref{F-PepsB-dec}) and define by similar formulas:
\[\begin{split}
&Q^{\epsilon,\cc}_\B\,:=\,-(2\pi i)^{-1}\oint_{\mathscr{C}}d\zz\,\big(\widetilde{P}^{\epsilon,\cc}_\B-\zz\bb1\big)^{-1},\\
&\Theta^{\epsilon,\cc}_\B\,:=\,-(2\pi i)^{-1}\oint_{\mathscr{C}}d\zz\,\sqrt{\zz^{-1}}\,\big(\widetilde{P}^{\epsilon,\cc}_\B-\zz\bb1\big)^{-1},
\end{split}\]
so that we have the following formula:
\[\begin{split}
\mathfrak{K}[Q^{\epsilon,\cc}_\B](x,y)&=\hspace*{-0.3cm}\underset{(\alpha,p)\in\Gamma\times\underline{n_\B}}{\sum}\hspace*{-0.3cm}\Theta^{\epsilon,\cc}_\B\big[\widetilde{\Lambda}^{\epsilon,\cc}_\alpha\,\big(\mathfrak{T}^\epsilon_\alpha\,U^\epsilon_\B\,\psi^\epsilon_{p}\big)\big](x)\,\overline{\Theta^{\epsilon,\cc}_\B\big[\widetilde{\Lambda}^{\epsilon,\cc}_\alpha\,\big(\mathfrak{T}^\epsilon_\alpha\,U^\epsilon_\B\,\psi^\epsilon_{p}\big)\big](y)}.
\end{split}\]
Repeating the arguments  given in Paragraph \ref{SSS-constmf-Wframe} we obtain the following Parseval frame of the Hilbert subspace $Q^{\epsilon,\cc}_\B\,L^2(\X)$:
\beq\label{DF-Pfr-B-epsc}
\tpsi^{\epsilon,\cc}_{\alpha,p}:=\Theta^{\epsilon,\cc}_\B\big[\widetilde{\Lambda}^{\epsilon,\cc}_\alpha\,\big(\mathfrak{T}^\epsilon_\alpha\,U^\epsilon_\B\,\psi^\epsilon_{p}\big)\big],\quad\forall(\alpha,p)\in\Gamma\times\underline{n_\B}.
\eeq

We also recall the following identity, similar to \eqref{F-PepsB-dec}:
\beq\label{F-QThetaP-eps-c}
Q^{\epsilon,\cc}_\B\,=\,\Theta^{\epsilon,\cc}_\B\,P^{\epsilon,\cc}_\B\,\Theta^{\epsilon,\cc}_\B\,=\, [\Theta^{\epsilon,\cc}_\B]^2\,\widetilde{P}_\B^{\epsilon,\cc}\,=\, \widetilde{P}_\B^{\epsilon,\cc}\,[\Theta^{\epsilon,\cc}_\B]^2.
\eeq

\begin{proposition}\label{P-dif-eps-cc-proj} {There exist positive  $C$ and $\epsilon_0$ such that for $(\epsilon,c)\in [0,\epsilon_0]\times [0,1]$, we have: }

	\beq\nonumber 
	\big\|P^{\epsilon,\cc}_\B\,-\,Q^{\epsilon,\cc}_\B\big\|_{\mathbb{B}(L^2(\X))}\leq\,C\,\cc\epsilon.
	\eeq
\end{proposition}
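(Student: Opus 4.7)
The plan is as follows. First I observe that at $\cc=0$ the phase $\widetilde{\Lambda}^{\epsilon,0}\equiv 1$, so $\widetilde{P}^{\epsilon,0}_\B$ reduces to $\sum_{(\alpha,p)}|\mathfrak{T}^\epsilon_\alpha U^\epsilon_\B\psi^\epsilon_p\rangle\langle\mathfrak{T}^\epsilon_\alpha U^\epsilon_\B\psi^\epsilon_p|$, which by Proposition \ref{P-const-mf-Pfr} is precisely the orthogonal projection onto $P^\epsilon_\B L^2(\X)$, hence equals $P^{\epsilon,0}_\B = P^\epsilon_\B$, and consequently $Q^{\epsilon,0}_\B=P^\epsilon_\B$ too. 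Therefore $P^{\epsilon,\cc}_\B$ and $Q^{\epsilon,\cc}_\B$ coincide at $\cc=0$, and the stated bound of order $\cc\epsilon$ is the natural first-order estimate in $\cc$ around that base point.

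The main step will be to prove the operator-norm estimate
\[
\big\|\widetilde{P}^{\epsilon,\cc}_\B-P^{\epsilon,\cc}_\B\big\|_{\mathbb{B}(L^2(\X))}\leq C\cc\epsilon. \qquad (\ast)
\]
Once $(\ast)$ is secured, the conclusion follows quickly: $P^{\epsilon,\cc}_\B$ is an orthogonal projection and $\widetilde{P}^{\epsilon,\cc}_\B$ has its spectrum localized in $(-\cc\epsilon,\cc\epsilon)\cup(1-\cc\epsilon,1+\cc\epsilon)$, so for small $\epsilon$ both can be written as contour integrals over the same circle $\mathscr{C}$ around $1\in\Co$. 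The resolvent identity
\[
Q^{\epsilon,\cc}_\B-P^{\epsilon,\cc}_\B = -(2\pi i)^{-1}\oint_{\mathscr{C}}d\zz\,\big(\widetilde{P}^{\epsilon,\cc}_\B-\zz\bb1\big)^{-1}\big(P^{\epsilon,\cc}_\B-\widetilde{P}^{\epsilon,\cc}_\B\big)\big(P^{\epsilon,\cc}_\B-\zz\bb1\big)^{-1}
\]
combined with the uniform boundedness of both resolvents on $\mathscr{C}$ then yields the desired estimate.

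To prove $(\ast)$ I will work at the level of integral kernels. Using the factorization $\Lambda^{A}=\Lambda^{A^\circ}\Lambda^\epsilon\widetilde{\Lambda}^{\epsilon,\cc}$ and letting $\mathcal{G}$ denote the kernel-from-symbol map associated with the partial magnetic potential $A^\circ+\epsilon A^\bullet$, the magnetic Weyl quantization gives
\[
\mathfrak{K}[P^{\epsilon,\cc}_\B]=\widetilde{\Lambda}^{\epsilon,\cc}\cdot\mathcal{G}[p^{\epsilon,\cc}_\B],\qquad \mathfrak{K}[P^{\epsilon}_\B]=\mathcal{G}[p^{\epsilon,0}_\B].
\]
A refined variant of Proposition \ref{P-replP3_5}, obtained by repeating its proof but with $B^\circ+\epsilon B^\bullet$ as reference field rather than $B^\circ$, shows that $\sharp^{B}-\sharp^{B^\circ+\epsilon B^\bullet}$ is a bounded bilinear remainder carrying an extra factor $\cc\epsilon$ (the size of the residual magnetic perturbation $\cc\epsilon B^\epsilon$). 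Inserting this into the Riesz formulas for $p^{\epsilon,\cc}_\B$ and $p^{\epsilon,0}_\B$ yields $p^{\epsilon,\cc}_\B-p^{\epsilon,0}_\B=\mathscr{O}(\cc\epsilon)$ in all Fr\'echet seminorms of $S^{-\infty}(\Xi)$. Consequently
\[
\mathfrak{K}[P^{\epsilon,\cc}_\B]-\widetilde{\Lambda}^{\epsilon,\cc}\mathfrak{K}[P^\epsilon_\B] = \widetilde{\Lambda}^{\epsilon,\cc}\cdot\mathcal{G}[p^{\epsilon,\cc}_\B-p^{\epsilon,0}_\B]
\]
is a continuous, rapidly off-diagonal decaying kernel of uniform size $\mathscr{O}(\cc\epsilon)$. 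Combining this with the identity $\mathfrak{K}[\widetilde{P}^{\epsilon,\cc}_\B]=\widetilde{\Lambda}^{\epsilon,\cc}(\mathfrak{K}[P^\epsilon_\B]+\cc\epsilon\mathfrak{K}^{\epsilon,\cc}_2)$ already derived just before the proposition (with $\mathfrak{K}^{\epsilon,\cc}_2$ uniformly bounded and rapidly decaying off-diagonal), the Schur test delivers $(\ast)$.

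The main obstacle I anticipate is the refined Moyal-calculus estimate $p^{\epsilon,\cc}_\B-p^{\epsilon,0}_\B=\mathscr{O}(\cc\epsilon)$: one has to track the linear behavior in $\cc\epsilon$ through the oscillatory integrals defining the magnetic Moyal product, which in \cite{IMP-2,CIP} is only carried out with $B^\circ$ as reference. The needed variant is a direct extension of that argument, but requires the careful bookkeeping that the constant magnetic part $\epsilon B^\bullet$ be absorbed into the reference field while only the genuinely non-constant perturbation $\cc\epsilon B^\epsilon$ plays the role of the small deformation whose size controls the remainder.
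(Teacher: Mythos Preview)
Your proposal is correct and follows essentially the same route as the paper. Both arguments hinge on the two estimates $\big\|P^{\epsilon,\cc}_\B-\Int\,\widetilde{\Lambda}^{\epsilon,\cc}\mathfrak{K}[P^\epsilon_\B]\big\|=\mathscr{O}(\cc\epsilon)$ (coming from the symbol estimate $p^{\epsilon,\cc}_\B-p^{\epsilon,0}_\B=\mathscr{O}(\cc\epsilon)$, obtained by running Proposition~\ref{P-1_8-CIP} with $B^\circ+\epsilon B^\bullet$ as reference field) and $\big\|\widetilde{P}^{\epsilon,\cc}_\B-\Int\,\widetilde{\Lambda}^{\epsilon,\cc}\mathfrak{K}[P^\epsilon_\B]\big\|=\mathscr{O}(\cc\epsilon)$ (from the kernel identity preceding the proposition), which together give your estimate~$(\ast)$. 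The paper then concludes via the spectral-theorem bound $\|\widetilde{P}^{\epsilon,\cc}_\B-Q^{\epsilon,\cc}_\B\|\leq C\cc\epsilon$ and the triangle inequality, whereas you use the resolvent identity on the common contour $\mathscr{C}$; both finishing steps are equivalent.
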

\begin{proof}
	Using \eqref{DF-p-epsilonc-B} together with the statements of Proposition \ref{P-1_8-CIP}, we see that, for all  $(\epsilon,\cc) \in[0,\epsilon_0]\times[0,1]$, 
	
 \begin{align*}
		P^{\epsilon,\cc}_\B=&\Op^{\epsilon,\cc}\Big(-(2\pi i)^{-1}\int_{\mathscr{C}_\bot}d\zz\,(h_\bot-\zz)^-_{\epsilon,0}\Big)\,+\,\cc\epsilon\,\Op^{\epsilon,\cc}\Big(-(2\pi i)^{-1}\int_{\mathscr{C}_\bot}d\zz\,\mathcal{r}^{\epsilon,\cc}_{\zz}(h_\bot)\Big),\\
		&\Big\|\Op^{\epsilon,\cc}\Big((2\pi i)^{-1}\int_{\mathscr{C}_\bot}d\zz\,\mathcal{r}^{\epsilon,\cc}_{\zz}(h_\bot)\Big)\Big\|_{\mathbb{B}(L^2(\X))}\,\leq\,C\,.
	\end{align*}
	Moreover, we know that:
	\beq\nonumber 
	P^\epsilon_\B\,=\,\Op^{\epsilon,0}\Big(-(2\pi i)^{-1}\int_{\mathscr{C}_\bot}d\zz\,(h_\bot-\zz)^-_{\epsilon,0}\Big).
	\eeq
	Thus we have the equality:
	\begin{align*}
		&P^{\epsilon,\cc}_\B\,-\,Q^{\epsilon,\cc}_\B\\
		&\hspace*{12pt}=\Op^{\epsilon,\cc}\Big(-(2\pi i)^{-1}\int_{\mathscr{C}_\bot}d\z\,(h_\bot-\z)^-_{\epsilon,\cc}\Big)-\Op^{\epsilon,\cc}\Big(-(2\pi i)^{-1}\int_{\mathscr{C}_\bot}d\z\,(h_\bot-\z)^-_{\epsilon,0}\Big)\\
    &\hspace*{1cm}+\Int\,\widetilde{\Lambda}^{\epsilon,\cc}\cdot\mathfrak{K}[P^\epsilon_\B]-\Int\,\mathfrak{K}[Q^{\epsilon,\cc}_\B]\\
		&\hspace*{12pt}=\cc\epsilon\,\Big[\Op^{\epsilon,\cc}\Big((2\pi i)^{-1}\int_{\mathscr{C}_\bot}d\zz\,\mathfrak{x}^{\epsilon,\cc}_{\zz}(h_\bot)\Big)-\Int\,\mathfrak{K}^{\epsilon,\cc}_2\Big]\,+\,\widetilde{P}^{\epsilon,\cc}_\B-Q^{\epsilon,\cc}_\B
  \end{align*}
	and the spectral Theorem implies, as in the proof of Proposition \ref{P-est-tPepsB-QepsB}, that:
\[
\big\|\widetilde{P}^{\epsilon,\cc}_\B-Q^{\epsilon,\cc}_\B\big\|_{\mathbb{B}(L^2(\X))}\leq\,\cc\epsilon.
\]
\end{proof}

\begin{corollary}\label{C-W-PQ}
Using the Sz.~Nagy construction we obtain a unitary $W^{\epsilon,\cc}$ {such that $W^{\epsilon,\cc}-\bb1$ is a pseudodifferential operator} with an $A$-symbol of class $S^{-\infty}(\X\times\X^*)$ {with all its seminorms uniformly bounded by $\epsilon \cc$}, such that {$W^{\epsilon,\cc}\,Q^{\epsilon,\cc}=P^{\epsilon,\cc} \, W^{\epsilon,\cc}$}.
\end{corollary}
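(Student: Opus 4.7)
The plan is to apply the Sz.-Nagy intertwining construction exactly as in \eqref{DF-U-eps-B}, but this time to the pair of orthogonal projections $(P^{\epsilon,\cc}_\B,Q^{\epsilon,\cc}_\B)$, and then to track the magnetic pseudodifferential calculus carefully to upgrade the operator-norm bound of Proposition~\ref{P-dif-eps-cc-proj} to a quantitative symbol estimate.

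First, by Proposition \ref{P-dif-eps-cc-proj}, there exists $\epsilon_0>0$ such that $\|P^{\epsilon,\cc}_\B-Q^{\epsilon,\cc}_\B\|_{\mathbb{B}(L^2(\X))}\leq C\cc\epsilon<1$ for every $(\epsilon,\cc)\in[0,\epsilon_0]\times[0,1]$, so the operator
\[
W^{\epsilon,\cc}:=\big(\bb1-(P^{\epsilon,\cc}_\B-Q^{\epsilon,\cc}_\B)^2\big)^{-1/2}\Big[P^{\epsilon,\cc}_\B Q^{\epsilon,\cc}_\B+(\bb1-P^{\epsilon,\cc}_\B)(\bb1-Q^{\epsilon,\cc}_\B)\Big]
\]
is well-defined, unitary and satisfies $W^{\epsilon,\cc}Q^{\epsilon,\cc}_\B=P^{\epsilon,\cc}_\B W^{\epsilon,\cc}$ by the classical algebraic identity used already in \eqref{DF-U-eps-B}.

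Second, I would establish that both projections are magnetic pseudodifferential operators with $A$-symbol in $S^{-\infty}(\Xi)$. For $P^{\epsilon,\cc}_\B$ this is \eqref{F-p-epsilon-c} combined with Corollary~\ref{C-est-p-symb-pert}. For $Q^{\epsilon,\cc}_\B$ I would use the integral-kernel representation derived in Paragraph~\ref{SSS-nonconst-mf-Pfr}, namely $\mathfrak{K}[\widetilde{P}^{\epsilon,\cc}_\B]=\widetilde{\Lambda}^{\epsilon,\cc}\cdot\big(\mathfrak{K}[P^\epsilon_\B]+\epsilon\cc\,\mathfrak{K}^{\epsilon,\cc}_2\big)$, interpret $\widetilde{P}^{\epsilon,\cc}_\B$ as a magnetic pseudodifferential operator of class $S^{-\infty}(\Xi)$ (using that $\widetilde{\Lambda}^{\epsilon,\cc}$ is the phase associated with the vector potential $\cc\epsilon A^\epsilon$, hence can be absorbed into the magnetic quantization map), and finally apply the Dunford integral
$Q^{\epsilon,\cc}_\B=-(2\pi i)^{-1}\oint_{\mathscr{C}}d\zz\,(\widetilde{P}^{\epsilon,\cc}_\B-\zz\bb1)^{-1}$, which preserves the class $S^{-\infty}$ by the resolvent estimates of the magnetic pseudodifferential calculus (see \cite{IMP-2}).

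Third, and this is the critical quantitative step, I would prove that the $A$-symbol of $P^{\epsilon,\cc}_\B-Q^{\epsilon,\cc}_\B$ lies in $S^{-\infty}(\Xi)$ with every seminorm bounded by a constant times $\cc\epsilon$. The argument from the proof of Proposition~\ref{P-dif-eps-cc-proj} gives the decomposition
\[
P^{\epsilon,\cc}_\B-Q^{\epsilon,\cc}_\B=\cc\epsilon\,\Op^{A}\Big((2\pi i)^{-1}\int_{\mathscr{C}_\bot}d\zz\,\mathcal{r}^{\epsilon,\cc}_{\zz}(h_\bot)-\mathfrak{K}^{\epsilon,\cc}_2\Big)+(\widetilde{P}^{\epsilon,\cc}_\B-Q^{\epsilon,\cc}_\B),
\]
with the first summand manifestly $\cc\epsilon$ times a symbol which remains in a bounded set of $S^{-\infty}(\Xi)$ (by Proposition~\ref{P-replP3_5} applied to the resolvent identity). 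For the second summand, one writes $\widetilde{P}^{\epsilon,\cc}_\B-Q^{\epsilon,\cc}_\B$ as a Dunford integral $(2\pi i)^{-1}\oint(\zz^{-1/2}-1)\,d\zz$ type combination around $\mathscr{C}$ and observes that $(\widetilde{P}^{\epsilon,\cc}_\B)^2-\widetilde{P}^{\epsilon,\cc}_\B$ has $A$-symbol of order $\cc\epsilon$ (its integral kernel satisfies this thanks to the explicit phase factor computation for $\widetilde{\Lambda}^{\epsilon,\cc}$ in Paragraph~\ref{SSS-nonconst-mf-Pfr}), so the Moyal inverse in the resolvent formula propagates this small factor.

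Finally, writing $A:=P^{\epsilon,\cc}_\B-Q^{\epsilon,\cc}_\B$ and $Z:=P^{\epsilon,\cc}_\B Q^{\epsilon,\cc}_\B+(\bb1-P^{\epsilon,\cc}_\B)(\bb1-Q^{\epsilon,\cc}_\B)$, the algebraic identity
\[
W^{\epsilon,\cc}-\bb1=\big[(\bb1-A^2)^{-1/2}-\bb1\big]Z+(Z-\bb1),
\]
together with $Z-\bb1=(\bb1-2P^{\epsilon,\cc}_\B)A$ and a Taylor (equivalently, Dunford) expansion $(1-t)^{-1/2}-1=(t/2)+O(t^2)$ applied at the operator level, shows that both summands inherit an $A$-symbol in $S^{-\infty}(\Xi)$ with all seminorms of order $\cc\epsilon$ (indeed the first summand is even $O((\cc\epsilon)^2)$). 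The main obstacle I anticipate is the third step: one must verify that the Dunford integral producing $Q^{\epsilon,\cc}_\B$ does not amplify the symbol seminorms when applied to $\widetilde{P}^{\epsilon,\cc}_\B$, which requires uniform-in-$(\epsilon,\cc)$ resolvent estimates in the Fréchet topology of $S^{-\infty}(\Xi)$ along a fixed contour $\mathscr{C}$ chosen by the spectral localization of $\widetilde{P}^{\epsilon,\cc}_\B$ near $\{0,1\}$ for small $\cc\epsilon$.
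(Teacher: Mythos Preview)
Your proposal is correct and follows exactly the approach the paper intends: the paper's proof consists of a single sentence pointing to the analogue of Lemma~\ref{P-UepsB} with the Sz.-Nagy formula \eqref{DF-U-eps-B} applied to the pair $(P^{\epsilon,\cc}_\B,Q^{\epsilon,\cc}_\B)$ in place of $(P^\epsilon_\B,Q^\epsilon_\B)$. Your write-up is in fact considerably more detailed than the paper's, in particular your explicit algebraic decomposition $W^{\epsilon,\cc}-\bb1=\big[(\bb1-A^2)^{-1/2}-\bb1\big]Z+(Z-\bb1)$ with $Z-\bb1=(\bb1-2P^{\epsilon,\cc}_\B)A$ makes transparent why the $\cc\epsilon$ bound on the symbol seminorms of $A=P^{\epsilon,\cc}_\B-Q^{\epsilon,\cc}_\B$ propagates to $W^{\epsilon,\cc}-\bb1$.
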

\begin{proof}
    The proof is analogue to the one of Lemma \ref{P-UepsB}, where one starts from the analogue of \eqref{DF-U-eps-B} when we insert the projections with $\cc\neq 0$. 
\end{proof}

We are now ready to construct the localized Parseval frame when $\cc\neq 0$. We start from the family $\big\{\tpsi^{\epsilon,\cc}_{\alpha,p}\big\}_{(\alpha,p)\in\Gamma\times\underline{n_\B}}$ in \eqref{DF-Pfr-B-epsc}
that is a Parseval frame for the subspace $\mathcal{L}^{\epsilon,\cc}_\B=Q_\B^{\epsilon,\cc}L^2(\X)$ with $\big\|P^{\epsilon,\cc}_\B\,-\,Q^{\epsilon,\cc}_\B\big\|_{\mathbb{B}(L^2(\X))}\leq\,C\,\cc\epsilon$. 
With $W^{\epsilon,\cc}$ being the unitary from Corollary \ref{C-W-PQ}, the family: 
\beq\label{dhc25}
\psi^{\epsilon,\cc}_{\alpha,p}\,:=\,{W^{\epsilon,\cc}}\, \tpsi^{\epsilon,\cc}_{\alpha,p},\quad\forall(\alpha,p)\in\Gamma\times\underline{n_\B}
\eeq
will define a {strongly localized} Parseval frame for the subspace $P^{\epsilon,\cc}_\B\,L^2(\X)$ to which we can associate 
the \textit{coordinate map}
\beq\label{DF-C-eps-c-B}
\mathfrak{C}^{\epsilon,\cc}_\B:\,P^{\epsilon,\cc}_\B\,L^2(\X)\ni\,f\,\mapsto\,\big\{\big(\psi^{\epsilon,\cc}_{\alpha,p}\,,\,f\big)_{L^2(\X)}\}_{(\alpha,p)\in\Gamma\times\underline{n_\B}}\in\mathcal{K}_\B.
\eeq
\subsubsection{The effective magnetic matrix when $\cc\neq 0$}\label{SS-EndProof-T-III}
Given any bounded linear operator $T\in\mathbb{B}\big(P^{\epsilon,\cc}_\B\,L^2(\X)\big)$ one may define its \textit{magnetic $\B$-matrix} as the matrix $\mathfrak{M}^{\epsilon,\cc}_\B[T]\in\mathscr{M}_\Gamma[\MmN]$ given by:
	\[
\big[\mathfrak{M}^{\epsilon,\cc}_\B[T]_{\alpha,\beta}\big]_{p,q}\,:=\,\big(\psi^{\epsilon,\cc}_{\alpha,p}\,,\,T\,\psi^{\epsilon,\cc}_{\beta,q}\big)_{L^2(\X)}.
	\]
We can write the following equality:
\begin{align*}
\big[\mathfrak{M}^{\epsilon,\cc}_\B[\ham^{\epsilon,\cc}_\B]_{\alpha,\beta}\big]_{p,q}&=\,\Big({W^{\epsilon,\cc}\, }\Theta^{\epsilon,\cc}_\B\big[\widetilde{\Lambda}^{\epsilon,\cc}_\alpha\,\big(\mathfrak{T}^\epsilon_\alpha\,U^\epsilon_\B\,\psi^\epsilon_{p}\big)\big]\,,\,H^{\epsilon,\cc}\,{W^{\epsilon,\cc}\, }\Theta^{\epsilon,\cc}_\B\big[\widetilde{\Lambda}^{\epsilon,\cc}_\beta\,\big(\mathfrak{T}^\epsilon_\beta\,U^\epsilon_\B\,\psi^\epsilon_{q}\big)\big]\Big)_{L^2(\X)}.
\end{align*}

\begin{proposition}\label{Prop-HC}
There exists some $\epsilon_0>0$ and for any $n\in\mathbb{N}$ there exists some $C_n>0$ such that:
\beq \label{F-m-matrix}
<\gamma>^n\Big|\big[\mathfrak{M}^{\epsilon,\cc}_\B[\ham^{\epsilon,\cc}_\B]_{\alpha,\beta}-\Lambda^{\epsilon,\cc}(\alpha,\beta)\, [\mathfrak{m}^\epsilon_\B]_{\alpha-\beta}\big]_{p,q}\Big|\,\leq\,C_n\epsilon\, \cc,\quad\forall (\epsilon,\cc)\in \in[0,\epsilon_0]\times [0,1].
\eeq
\end{proposition}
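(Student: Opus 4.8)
The strategy is to unfold both sides of \eqref{F-m-matrix} into scalar products involving the unperturbed Wannier-type frame functions $\psi_p$ and the various auxiliary operators ($\Theta^{\epsilon,\cc}_\B$, $U^\epsilon_\B$, $W^{\epsilon,\cc}$, $\mathfrak{T}^\epsilon_\alpha$, multiplication by $\widetilde{\Lambda}^{\epsilon,\cc}_\alpha$), and to absorb all the error terms of order $\cc\epsilon$ into a single remainder by systematic use of the estimates already established in Section \ref{S-Parseval}. Concretely, recall from \eqref{dhc25} that $\psi^{\epsilon,\cc}_{\alpha,p}=W^{\epsilon,\cc}\,\Theta^{\epsilon,\cc}_\B\big[\widetilde{\Lambda}^{\epsilon,\cc}_\alpha(\mathfrak{T}^\epsilon_\alpha U^\epsilon_\B\psi^\epsilon_p)\big]$, so that
\[
\big[\mathfrak{M}^{\epsilon,\cc}_\B[\ham^{\epsilon,\cc}_\B]_{\alpha,\beta}\big]_{p,q}=\Big(W^{\epsilon,\cc}\Theta^{\epsilon,\cc}_\B\widetilde{\Lambda}^{\epsilon,\cc}_\alpha\mathfrak{T}^\epsilon_\alpha U^\epsilon_\B\psi^\epsilon_p\,,\,H^{\epsilon,\cc}W^{\epsilon,\cc}\Theta^{\epsilon,\cc}_\B\widetilde{\Lambda}^{\epsilon,\cc}_\beta\mathfrak{T}^\epsilon_\beta U^\epsilon_\B\psi^\epsilon_q\Big)_{L^2(\X)}.
\]
The first step is to discard $W^{\epsilon,\cc}$: by Corollary \ref{C-W-PQ} it is a pseudodifferential operator of the form $\bb1+\mathscr{O}(\cc\epsilon)$ in $S^{-\infty}$, and since $\ham^{\epsilon,\cc}_\B=P^{\epsilon,\cc}_\B H^{\epsilon,\cc}P^{\epsilon,\cc}_\B$ is a regularizing operator with rapidly decaying kernel (as noted after Definition \ref{D-m-vers-B}), replacing $W^{\epsilon,\cc}$ by $\bb1$ on both sides introduces an error whose off-diagonal decay is controlled by the Schwartz regularity of $\psi_p$ and whose size is $\mathscr{O}(\cc\epsilon)$ uniformly. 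Likewise one replaces $H^{\epsilon,\cc}$ sandwiched by $Q^{\epsilon,\cc}_\B$-adapted frame vectors by $\ham^{\epsilon,0}_\B$ up to $\mathscr{O}(\cc\epsilon)$, using Proposition \ref{P-replP3_5}, Corollary \ref{C-est-p-symb-pert} and the fact that $p^{\epsilon,\cc}_\B-p^{\epsilon,0}_\B=\mathscr{O}(\cc\epsilon)$ in $S^{-\infty}$.

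The second step is to handle $\Theta^{\epsilon,\cc}_\B$ and the multiplication operators $\widetilde{\Lambda}^{\epsilon,\cc}_\alpha$. The key algebraic observation, already used in Paragraph \ref{SSS-nonconst-mf-Pfr} to derive $\mathfrak{K}[\widetilde{P}^{\epsilon,\cc}_\B]=\widetilde{\Lambda}^{\epsilon,\cc}(\mathfrak{K}[P^\epsilon_\B]+\epsilon\cc\mathfrak{K}^{\epsilon,\cc}_2)$, is that the cocycle identity gives $\widetilde{\Lambda}^{\epsilon,\cc}_\alpha(x)\overline{\widetilde{\Lambda}^{\epsilon,\cc}_\alpha(y)}=\widetilde{\Lambda}^{\epsilon,\cc}(x,y)\big(1-i\epsilon\cc\int_{\langle x,\alpha,y\rangle}B^\epsilon(\cdots)\big)$, with the flux term bounded by $C\langle x-\alpha\rangle\langle y-\alpha\rangle$. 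One conjugates through: since $\Theta^{\epsilon,\cc}_\B=\bb1+\mathscr{O}(\cc\epsilon)$ (because $\widetilde{P}^{\epsilon,\cc}_\B=P^\epsilon_\B+\mathscr{O}(\cc\epsilon)$ forces its square-root-resolvent deformation to be within $\mathscr{O}(\cc\epsilon)$ of the identity, via the same holomorphic functional calculus argument as in Lemma \ref{L-1}), one is left at main order with
\[
\Big(\widetilde{\Lambda}^{\epsilon,\cc}_\alpha\mathfrak{T}^\epsilon_\alpha U^\epsilon_\B\psi^\epsilon_p\,,\,H^{\epsilon,0}\,\widetilde{\Lambda}^{\epsilon,\cc}_\beta\mathfrak{T}^\epsilon_\beta U^\epsilon_\B\psi^\epsilon_q\Big)_{L^2(\X)}+\mathscr{O}(\cc\epsilon),
\]
where the remainder retains rapid off-diagonal decay in $\gamma=\alpha-\beta$ because the flux correction is a polynomial in $\langle x-\alpha\rangle,\langle y-\beta\rangle$ times the Schwartz kernel of $\ham^{\epsilon,0}_\B$, and the analogue of the splitting $\langle\gamma\rangle^n\le C_n\langle x-\gamma\rangle^n\langle x\rangle^n$ together with $\langle x\rangle^{n+1}\le C'_n\langle x-y\rangle^{n+1}\langle y\rangle^{n+1}$ (exactly as in the proof of Proposition \ref{P-4.30}) reduces everything to an absolutely convergent integral of Schwartz functions.

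The third step is to identify the remaining main term with $\Lambda^{\epsilon,\cc}(\alpha,\beta)[\mathfrak{m}^\epsilon_\B]_{\alpha-\beta}$. Here one uses the cocycle/flux factorization once more, now for the full phase $\Lambda^{\epsilon,\cc}=\widetilde{\Lambda}^{\epsilon,\cc}\Lambda^\epsilon$: writing $\widetilde{\Lambda}^{\epsilon,\cc}_\alpha\mathfrak{T}^\epsilon_\alpha=\widetilde{\Lambda}^{\epsilon,\cc}_\alpha\Lambda^\epsilon_\alpha\tau_{-\alpha}=\Lambda^{\epsilon,\cc}_\alpha\tau_{-\alpha}$ and combining the two such factors in the scalar product, the product $\Lambda^{\epsilon,\cc}_\alpha(x)\overline{\Lambda^{\epsilon,\cc}_\beta(y)}$ acting on the kernel of $H^{\epsilon,0}$ produces, via \eqref{R-wedge} and the same geometric identities as in \eqref{F-B-3}, a factor $\Lambda^{\epsilon,\cc}(\alpha,\beta)$ times a quantity depending only on $\alpha-\beta$ plus an $\mathscr{O}(\epsilon)$ flux-curvature correction — but crucially this last correction only involves $\Lambda^\epsilon$ (constant field) and the non-constant piece $\widetilde{\Lambda}^{\epsilon,\cc}$ contributes at order $\cc\epsilon$. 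The $\mathscr{O}(\epsilon)$-but-not-$\mathscr{O}(\cc\epsilon)$ part of the flux correction is precisely what is already incorporated into the "dressed" symbol $\mathfrak{m}^\epsilon_\B=\mathring{\mathfrak{M}}[\ham^{\epsilon,0}_\B]$ by Definition \ref{D-m-eps-B} and Proposition \ref{P-B2}, so matching gives exactly $\Lambda^{\epsilon,\cc}(\alpha,\beta)[\mathfrak{m}^\epsilon_\B]_{\alpha-\beta}$ with the claimed $\mathscr{O}(\cc\epsilon)$ error. The main obstacle, and the part requiring the most care, is bookkeeping the rapid off-diagonal decay uniformly through the three conjugation steps: each operator ($W^{\epsilon,\cc}$, $\Theta^{\epsilon,\cc}_\B$, $U^\epsilon_\B$) is only known to have a rapidly decaying kernel, not compact support, so one must repeatedly invoke the stability of $\mathscr{M}^\circ_\Gamma[\MmN]$ and ${\cal{s}}(\Gamma;\MmN)$ under products (Proposition \ref{P-matr} and the Remark following Definition of $\mathfrak{M}[T]$) and verify that the weight $\langle\gamma\rangle^n$ can always be traded against decay of the frame functions via the elementary $\langle\cdot\rangle$-submultiplicativity inequalities, exactly as in Lemma \ref{L-1} and Proposition \ref{P-4.30}.
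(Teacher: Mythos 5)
Your overall strategy (adapt the proof of Proposition \ref{P-4.30}, remove $W^{\epsilon,\cc}$ via Corollary \ref{C-W-PQ}, and control off-diagonal decay by the weight-splitting inequalities) is the right family of ideas and is what the paper's terse proof intends, but two of your intermediate reductions are not valid under the paper's standing hypotheses, which allow the vector potential $A^\epsilon$ of the non-constant field to be unbounded. First, the assertion $\Theta^{\epsilon,\cc}_\B=\bb1+\mathscr{O}(\cc\epsilon)$ is false: the contour in its definition encircles only the spectral cluster of $\widetilde{P}^{\epsilon,\cc}_\B$ near $1$, so the functional-calculus comparison you invoke would at best show that $\Theta^{\epsilon,\cc}_\B$ is close to a \emph{projection} (of the type $Q^{\epsilon,\cc}_\B$), never to the identity; moreover your premise $\widetilde{P}^{\epsilon,\cc}_\B=P^\epsilon_\B+\mathscr{O}(\cc\epsilon)$ fails in operator norm, since the kernels differ by the factor $\widetilde{\Lambda}^{\epsilon,\cc}(x,y)-1$, whose modulus is only bounded by $\min\big(2,\cc\epsilon\big|\int_{[x,y]}A^\epsilon\big|\big)$ and hence is not uniformly small when $A^\epsilon$ grows. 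What is true, and what the proof needs, is the analogue of Lemma \ref{L-1}: $\Theta^{\epsilon,\cc}_\B$ acts as the identity up to $\mathscr{O}(\cc\epsilon)$ in weighted norms on the specific vectors $\widetilde{\Lambda}^{\epsilon,\cc}_\alpha\big(\mathfrak{T}^\epsilon_\alpha U^\epsilon_\B\psi^\epsilon_p\big)$, obtained by comparing $\widetilde{P}^{\epsilon,\cc}_\B$ with the \emph{phase-dressed} version of $P^\epsilon_\B$, so that the phases close via Stokes into fluxes of the bounded field $B^\epsilon$.

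The more serious gap is your first-step replacement of $H^{\epsilon,\cc}$ (resp. $\ham^{\epsilon,\cc}_\B$) by $H^{\epsilon,0}$ (resp. $\ham^{\epsilon,0}_\B$) ``up to $\mathscr{O}(\cc\epsilon)$'' \emph{before} recombining the phases. At kernel level these operators differ by the same factor $\widetilde{\Lambda}^{\epsilon,\cc}(x,y)$, which is not uniformly close to $1$; and once it is removed from the kernel, the remaining phases $\overline{\widetilde{\Lambda}^{\epsilon,\cc}_\alpha(x)}\,\widetilde{\Lambda}^{\epsilon,\cc}_\beta(y)$ can no longer be closed into triangle fluxes of $B^\epsilon$: compared with $\widetilde{\Lambda}^{\epsilon,\cc}(\alpha,\beta)$ the leftover is essentially $\exp\big(\pm i\cc\epsilon\int_{[x,y]}A^\epsilon\big)$, whose deviation from $1$ grows with $|\alpha|$ even for fixed $\alpha-\beta$, so your intermediate ``main term'' is not within $C_n\cc\epsilon<\alpha-\beta>^{-n}$ of the true matrix element uniformly in $(\alpha,\beta)$. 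The correct order of operations -- which is what the paper means by ``replace the symbols having $\cc\neq0$ with those with $\cc=0$'' -- is to keep the full phase $\Lambda^{\epsilon,\cc}=\widetilde{\Lambda}^{\epsilon,\cc}\Lambda^\epsilon$ attached to the kernel throughout, and to replace only the gauge-invariant symbol (using Proposition \ref{P-replP3_5} and Corollary \ref{C-est-p-symb-pert}); then the frame phases and the kernel phase combine, exactly as in \eqref{F-B-3} and in the proof of Proposition \ref{P-4.30}, into fluxes of $B^\epsilon$ and $B^\bullet$ through triangles with vertices among $\alpha,\beta,x,y$, bounded by quantities like $C<x-\alpha><x-y>$ that the localization of the frame absorbs. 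Relatedly, your third-step claim that a residual ``$\mathscr{O}(\epsilon)$ flux correction'' from the constant field is ``absorbed into $\mathfrak{m}^\epsilon_\B$'' is off: after the symbol replacement the constant-field phases are handled \emph{exactly} through the projective (Zak) representation property (Propositions \ref{Zak-transl} and \ref{P-B2}), producing the factor $\Lambda^\epsilon(\alpha,\beta)$ with no additive $\mathscr{O}(\epsilon)$ remainder; any genuine leftover of order $\epsilon$ (rather than $\cc\epsilon$) would destroy the bound \eqref{F-m-matrix} you are asked to prove.
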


\begin{proof}
    The proof is analogue to the one of Proposition \ref{P-4.30}. Here we need to replace the symbols having $\cc\neq 0$ with those with $\cc=0$, and all the seminorms of the  errors will be of order $\epsilon \, \cc$. We note that when $\cc\neq 0$, the magnetic translations no longer form a projective representation of the group of translations and we have to work with two indices $\alpha$ and $\beta$. 
\end{proof}

We conclude that $\sigma\big(\ham^{\epsilon,\cc}_\B\big)$ as operator in $P^{\epsilon,\cc}_\B\,L^2(\X)$ is equal to $\sigma\big(\mathfrak{M}^{\epsilon,\cc}[\ham^{\epsilon,\cc}_\B]\big)\setminus\{0\}$ and using  Theorem \ref{T-I} we obtain the proof of the first statement in Theorem \ref{T-III}.

\section{The effective time evolution}\label{S-ev}

The proof of the second statement in Theorem \ref{T-III} will follow from the following theorem that we prove in this section.
\begin{theorem}\label{C-T-I} 
Under the hypotheses of Theorem \ref{T-I}, let us denote by $E_K( H^{\epsilon,\cc})$ the spectral projection on $K\subset \R$ for the self-adjoint operator $H^{\epsilon,\cc}$. Then for any  compact interval $J\subset J^\delta_\B$ (with $J^\delta_\B$ as in the Theorem \ref{T-I}) there exist $C>0$ {and $\epsilon_0 >0$ } such that for any $(\epsilon,c)\in [0,\epsilon_0]\times [0,1]$ and for all $v$ in the range of $E_J( H^{\epsilon,\cc})$, we have the estimation: 
\beq\label{F-ev-est}
\big\|e^{-itH^{\epsilon,\cc}}v - e^{-it\ham^{\epsilon,\cc}_{\B}}v\big\|_{L^2(\X)}\,\leq\,C\,\Big[\epsilon\,+\,{(1+|t|) ^3}\,\epsilon^2\Big]\,\|v\|_{L^2(\X)},\ \forall t\in\R.
\eeq
\end{theorem}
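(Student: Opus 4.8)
The plan is to reduce the evolution estimate \eqref{F-ev-est} to a Duhamel argument combined with the Feshbach--Schur block structure from Theorem \ref{T-I}. First I would fix a compact interval $J\subset J^\delta_\B$ and a vector $v\in\Rg\,E_J(H^{\epsilon,\cc})$, so that $v$ is spectrally localized in the energy window where Theorem \ref{T-I} applies. The first step is to replace $v$ by $P^{\epsilon,\cc}_\B v$ at the cost of an error of order $\epsilon$: indeed Proposition \ref{P-est-Jepsilon} gives $\|P^{\epsilon,\cc}_\B\varphi(H^{\epsilon,\cc})-\varphi(H^{\epsilon,\cc})\|\leq C\epsilon$ for test functions supported in $J^\delta_\B$, and since $v=\varphi(H^{\epsilon,\cc})v$ for a suitable $\varphi\in C^\infty_0(J^\delta_\B)$ equal to $1$ on $J$, we get $\|(\bb1-P^{\epsilon,\cc}_\B)v\|\leq C\epsilon\|v\|$. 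This also controls $\|e^{-itH^{\epsilon,\cc}}v-e^{-itH^{\epsilon,\cc}}P^{\epsilon,\cc}_\B v\|$ and $\|e^{-it\ham^{\epsilon,\cc}_\B}v-e^{-it\ham^{\epsilon,\cc}_\B}P^{\epsilon,\cc}_\B v\|$ uniformly in $t$, so it suffices to prove \eqref{F-ev-est} with $v$ replaced by $w:=P^{\epsilon,\cc}_\B v\in P^{\epsilon,\cc}_\B L^2(\X)$.

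The second step is the Duhamel comparison on the subspace $P^{\epsilon,\cc}_\B L^2(\X)$. Write
\[
e^{-itH^{\epsilon,\cc}}w-e^{-it\ham^{\epsilon,\cc}_\B}w=\int_0^t \frac{d}{ds}\Big(e^{-i(t-s)H^{\epsilon,\cc}}e^{-is\ham^{\epsilon,\cc}_\B}w\Big)\,ds=-i\int_0^t e^{-i(t-s)H^{\epsilon,\cc}}\big(H^{\epsilon,\cc}-\ham^{\epsilon,\cc}_\B\big)e^{-is\ham^{\epsilon,\cc}_\B}w\,ds.
\]
Since $\ham^{\epsilon,\cc}_\B=P^{\epsilon,\cc}_\B H^{\epsilon,\cc}P^{\epsilon,\cc}_\B$ leaves $P^{\epsilon,\cc}_\B L^2(\X)$ invariant, $e^{-is\ham^{\epsilon,\cc}_\B}w\in P^{\epsilon,\cc}_\B L^2(\X)$ for all $s$, hence $(H^{\epsilon,\cc}-\ham^{\epsilon,\cc}_\B)e^{-is\ham^{\epsilon,\cc}_\B}w=(\bb1-P^{\epsilon,\cc}_\B)H^{\epsilon,\cc}P^{\epsilon,\cc}_\B e^{-is\ham^{\epsilon,\cc}_\B}w$, which by \eqref{F-PH-epsc} has norm $\leq C\epsilon\|w\|$. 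Naively this yields a bound $C|t|\epsilon\|w\|$, not good enough for small $\epsilon$ and moderate $t$ — the point is that one must extract an extra factor of $\epsilon$. The mechanism is exactly the Feshbach--Schur improvement \eqref{F-est-FS-epsilon}: the "bad" part $[P^{\epsilon,\cc}_\B]^\bot H^{\epsilon,\cc}P^{\epsilon,\cc}_\B$ only couples back to the dynamics through $P^{\epsilon,\cc}_\B H^{\epsilon,\cc}[R^\bot_{\epsilon,\cc}]H^{\epsilon,\cc}P^{\epsilon,\cc}_\B$, which is of order $\epsilon^2$. I would make this precise either by iterating Duhamel once more (a second Duhamel expansion of the $e^{-i(t-s)H^{\epsilon,\cc}}$ factor against $e^{-i(t-s)\ham^{\epsilon,\cc}_\B}$, producing a double time integral with a factor $\|[P^{\epsilon,\cc}_\B]^\bot H^{\epsilon,\cc}P^{\epsilon,\cc}_\B\|^2\leq C\epsilon^2$, giving the $(1+|t|)^2$ or $(1+|t|)^3$ polynomial weight), or more cleanly by using the adiabatic-type identity: $e^{-itH^{\epsilon,\cc}}P^{\epsilon,\cc}_\B=P^{\epsilon,\cc}_\B e^{-itH^{\epsilon,\cc}}P^{\epsilon,\cc}_\B+\mathcal{O}(\epsilon(1+|t|))$ on $\Rg P^{\epsilon,\cc}_\B$, and then refining this to order $\epsilon^2(1+|t|)^k$ by resolvent (Feshbach) arguments à la Nenciu--Sj\"ostrand, using that $\ham^{\epsilon,\cc}_\B-P^{\epsilon,\cc}_\B H^{\epsilon,\cc}[R^\bot_{\epsilon,\cc}]H^{\epsilon,\cc}P^{\epsilon,\cc}_\B$ differs from $\ham^{\epsilon,\cc}_\B$ by a term of order $\epsilon^2$ (this is Remark \ref{R-ext-T-I}).

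The cleanest route I would actually carry out: define the "improved" effective operator $\check{\ham}^{\epsilon,\cc}_\B:=\ham^{\epsilon,\cc}_\B-P^{\epsilon,\cc}_\B H^{\epsilon,\cc}[R^\bot_{\epsilon,\cc}]H^{\epsilon,\cc}P^{\epsilon,\cc}_\B$ (needs a fixed $\lambda$ or a Helffer--Sj\"ostrand averaging to make it $\lambda$-independent — here one uses that $w$ is spectrally supported in $J$ so one can freeze the resolvent parameter inside $J^\delta_\B$ modulo $\mathcal{O}(\epsilon^2)$), show via \eqref{F-est-FS-epsilon} and Remark \ref{R-ext-T-I} that $\|\check{\ham}^{\epsilon,\cc}_\B-\ham^{\epsilon,\cc}_\B\|\leq C\epsilon^2$, and that $\check{\ham}^{\epsilon,\cc}_\B$ generates the \emph{exact} reduced dynamics up to errors controlled by the off-diagonal blocks of the Feshbach resolvent identity in Proposition \ref{P-SchurC}. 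Then a single Duhamel between $e^{-itH^{\epsilon,\cc}}w$ and $e^{-it\check{\ham}^{\epsilon,\cc}_\B}w$ carries the factor $\epsilon^2$ and the polynomial weight $(1+|t|)^3$ coming from differentiating the finitely many resolvent/commutator factors in the intertwiner and from the nested time integrals; a final bounded perturbation estimate $\|e^{-it\check{\ham}^{\epsilon,\cc}_\B}w-e^{-it\ham^{\epsilon,\cc}_\B}w\|\leq C|t|\epsilon^2\|w\|$ closes the argument. The main obstacle is precisely the bookkeeping that turns the naive $\epsilon|t|$ into $\epsilon^2(1+|t|)^3$: one must genuinely exploit that the coupling between $\Rg P^{\epsilon,\cc}_\B$ and its complement is quadratically small (the content of \eqref{F-est-FS-epsilon}), and handle the $\lambda$-dependence of $R^\bot_{\epsilon,\cc}(\lambda)$ and $R^\sim_{\epsilon,\cc}(\lambda)$ either by a Helffer--Sj\"ostrand formula against $\varphi(H^{\epsilon,\cc})$ or by the spectral localization of $w$; the polynomial degree in $|t|$ is not optimized and simply records how many times the slowly-varying resolvent factors are differentiated along the flow. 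Finally, transporting \eqref{F-ev-est} to the matricial statement \eqref{F-II} of Theorem \ref{T-III} is immediate: $\mathfrak{C}^{\epsilon,\cc}_\B$ is an isometry intertwining $\ham^{\epsilon,\cc}_\B$ with $\widetilde{\Op}^{\epsilon,\cc}(\mathfrak{m}^\epsilon_\B)$ up to an error of order $\epsilon\cc$ in operator norm by Proposition \ref{Prop-HC}, so $\|e^{-it\ham^{\epsilon,\cc}_\B}w-[\mathfrak{C}^{\epsilon,\cc}_\B]^*e^{-it\widetilde{\Op}^{\epsilon,\cc}(\mathfrak{m}^\epsilon_\B)}\mathfrak{C}^{\epsilon,\cc}_\B w\|\leq C|t|\epsilon\cc\|w\|$ by one more Duhamel, and combining with \eqref{F-ev-est} and the step-one reduction gives \eqref{F-II} with the stated $(1+|t|)^4\epsilon(\cc+\epsilon)$ error.
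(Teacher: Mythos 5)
Your first reduction (replacing $v$ by $P^{\epsilon,\cc}_\B v$ at cost $\mathscr{O}(\epsilon)$ via Proposition \ref{P-est-Jepsilon}) is correct and coincides with the paper's first step, but the core quantitative step — extracting the factor $\epsilon^2$ — does not work in either of the two forms you propose. Concerning the iterated Duhamel: expanding $e^{-i(t-s)H^{\epsilon,\cc}}$ once more does \emph{not} produce the product $\big(P^{\epsilon,\cc}_\B H^{\epsilon,\cc}[P^{\epsilon,\cc}_\B]^\bot\big)\big([P^{\epsilon,\cc}_\B]^\bot H^{\epsilon,\cc}P^{\epsilon,\cc}_\B\big)$. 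The vector $[P^{\epsilon,\cc}_\B]^\bot H^{\epsilon,\cc}P^{\epsilon,\cc}_\B e^{-is\ham^{\epsilon,\cc}_\B}w$ lies in $\Rg [P^{\epsilon,\cc}_\B]^\bot$, on which $\ham^{\epsilon,\cc}_\B$ acts as zero; hence (i) the "free" term of the second expansion equals $[P^{\epsilon,\cc}_\B]^\bot H^{\epsilon,\cc}P^{\epsilon,\cc}_\B\int_0^t e^{-is\ham^{\epsilon,\cc}_\B}w\,ds$, which is generically of size $\epsilon|t|$ unless one adds an oscillation/spectral-localization argument for $w$ with respect to $\ham^{\epsilon,\cc}_\B$ (itself requiring a functional-calculus comparison of $H^{\epsilon,\cc}$ and $\ham^{\epsilon,\cc}_\B$, i.e.\ the heart of the matter), and (ii) in the double-integral remainder the second coupling factor appears as $H^{\epsilon,\cc}[P^{\epsilon,\cc}_\B]^\bot H^{\epsilon,\cc}P^{\epsilon,\cc}_\B$ \emph{without} a left projection onto $\Rg P^{\epsilon,\cc}_\B$; only its $P^{\epsilon,\cc}_\B$-projected part is $\mathscr{O}(\epsilon^2)$, while the part $[P^{\epsilon,\cc}_\B]^\bot H^{\epsilon,\cc}[P^{\epsilon,\cc}_\B]^\bot\,[P^{\epsilon,\cc}_\B]^\bot H^{\epsilon,\cc}P^{\epsilon,\cc}_\B$ is only $\mathscr{O}(\epsilon)$, leaving an $\mathscr{O}(\epsilon t^2)$ term that is worse than the target bound. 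Concerning your "cleanest route": subtracting $P^{\epsilon,\cc}_\B H^{\epsilon,\cc}[R^\bot_{\epsilon,\cc}]H^{\epsilon,\cc}P^{\epsilon,\cc}_\B$ from $\ham^{\epsilon,\cc}_\B$ changes the generator only inside $\Rg P^{\epsilon,\cc}_\B$ by $\mathscr{O}(\epsilon^2)$, but it does nothing to the off-diagonal coupling $[P^{\epsilon,\cc}_\B]^\bot H^{\epsilon,\cc}P^{\epsilon,\cc}_\B$, which is still only $\mathscr{O}(\epsilon)$; so the claim that a single Duhamel between $e^{-itH^{\epsilon,\cc}}w$ and $e^{-it\check{\ham}^{\epsilon,\cc}_\B}w$ "carries the factor $\epsilon^2$" is unjustified and, as stated, false. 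The Feshbach gain \eqref{F-est-FS-epsilon} only materializes when the coupling is sandwiched between the reduced resolvent and projections on both sides, i.e.\ at the level of the compressed resolvent, not at the level of the generator difference.

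This is exactly why the paper does not use Duhamel at all: it writes $e^{-itH^{\epsilon,\cc}}v=\widetilde{\varphi}(H^{\epsilon,\cc})\varphi_t(H^{\epsilon,\cc})\widetilde{\varphi}(H^{\epsilon,\cc})v$ with $\varphi_t(\lambda)=e^{-it\lambda}\varphi(\lambda)$, inserts $P^{\epsilon,\cc}_\B$ via Proposition \ref{P-est-Jepsilon}, and then compares $P^{\epsilon,\cc}_\B R^{\epsilon,\cc}(\zz)P^{\epsilon,\cc}_\B=P^{\epsilon,\cc}_\B R^\sim_{\epsilon,\cc}(\zz)P^{\epsilon,\cc}_\B$ with $\big(\ham^{\epsilon,\cc}_\B-\zz P^{\epsilon,\cc}_\B\big)^{-1}$ through the Feshbach identity of Theorem \ref{T-I}(3) and Remark \ref{R-ext-T-I}, which yields an error $C(\Im\hspace*{-1pt}{\cal{m}}\,\zz)^{-2}\epsilon^2\|R^\bot_{\epsilon,\cc}(\zz)\|$; this is transferred to the evolution by the Helffer--Sj\"ostrand formula for $\varphi_t$, and the weight $(1+|t|)^3$ simply records that the almost-analytic extension with $N=2$ involves $\partial^k\varphi_t$ up to $k=3$, each growing like $|t|^k$. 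If you want to salvage a time-domain argument, the correct mechanism is different from what you wrote: compare $e^{is\ham^{\epsilon,\cc}_\B}P^{\epsilon,\cc}_\B e^{-isH^{\epsilon,\cc}}$ and use that the \emph{true} evolution preserves the spectral localization of $v$, so that $\big\|[P^{\epsilon,\cc}_\B]^\bot e^{-isH^{\epsilon,\cc}}v\big\|\leq C\epsilon\|v\|$ uniformly in $s$ by Proposition \ref{P-est-Jepsilon}, and then the derivative $P^{\epsilon,\cc}_\B H^{\epsilon,\cc}[P^{\epsilon,\cc}_\B]^\bot\cdot[P^{\epsilon,\cc}_\B]^\bot e^{-isH^{\epsilon,\cc}}P^{\epsilon,\cc}_\B v$ is genuinely $\mathscr{O}(\epsilon^2)$ pointwise in $s$; but that is not the argument you gave, and without some such fix your proposal does not reach the stated $\epsilon+(1+|t|)^3\epsilon^2$ bound.
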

\begin{proof}
	In order to prove \eqref{F-ev-est}, we consider states $v$ with energies in a compact interval $J\subset J_\B$, and we fix two cut-off functions $\varphi$ and $\widetilde{\varphi}$ in $C^\infty_0(\mathbb{R})$ that are equal to $1$ on $J$, have their support included in $J^\delta_\B$ and verify the equality $\varphi=\widetilde{\varphi}\,\varphi$. 
	For any $v\in E^{\epsilon,\cc}_h(J)\mathcal{H}$ we may write:
	$$
	e^{-itH^{\epsilon,\cc}}v=E^{\epsilon,\cc}_h(J) e^{-itH^{\epsilon,\cc}}{\varphi}\big(H^{\epsilon,\cc}\big)E^{\epsilon,\cc}_h(J)v=\widetilde{\varphi}(H^{\epsilon,\cc})e^{-itH^{\epsilon,\cc}}{\varphi}\big(H^{\epsilon,\cc}\big)\widetilde{\varphi}(H^{\epsilon,\cc})v.
	$$
	For any $t\in\mathbb{R}$ let us define $\varphi_t\in C^\infty_0(\R)$ by $\varphi_t(s):=e^{-its}\varphi(s)$, so that the above equality becomes:
	\beq\label{F-t-evol}
	e^{-itH^{\epsilon,\cc}}v=\widetilde{\varphi}(H^{\epsilon,\cc}) \varphi_t\big(H^{\epsilon,\cc}\big)\widetilde{\varphi}(H^{\epsilon,\cc})v.
	\eeq

	We come again to the use of the Helffer-Sj\"{o}strand formula (see \cite{HS,D-95}), {as in the proof of Proposition \ref{R-p-symb}}. 
 We fix an auxiliary cut-off function $\chi\in C^\infty_0(\R;[0,1])$ having support in $\{|t|\leq 2\}$ and being equal to 1 on $\{|t|\leq1\}$ and define (notice a slight difference with \cite{D-95} and \eqref{hcd1} in the choice of such an extension):
	\beq\nonumber 
\overset{\frown}{\varphi}_{t,N}:\Co\rightarrow\R,\quad\overset{\frown}{\varphi}_{t,N}(x+iy):=\underset{0\leq k\leq N}{\sum}(\partial^k\varphi_t)(x)(iy)^k(k!)^{-1}\chi(y).
	\eeq
	Then for any $N\in\mathbb{N}$, the support of $\overset{\frown}{\varphi}_{t,N}$ is a compact set contained in $\supp\varphi_t\times[-2,2]$ and $\overset{\frown}{\varphi}_{t,N}$ is smooth. Moreover:
	\beq\label{dhc26}\begin{split}
		\frac{\partial\overset{\frown}{\varphi}_{t,N}}{\partial\overline{\z}}(\z)&=\frac{1}{2}\left(\frac{\partial\overset{\frown}{\varphi}_{t,N}}{\partial x}+i\frac{\partial\overset{\frown}{\varphi}_{t,N}}{\partial y}\right)\\
		&\hspace*{-2cm}=\frac{1}{2}\Big[i\underset{0\leq k\leq N}{\sum}(\partial^k\varphi_t)(x)(iy)^k(k!)^{-1}[(\partial\chi)(y)]+(\partial^{N+1}\varphi_t)(x)(iy)^N(N!)^{-1}\chi(y)\Big],
	\end{split}\eeq
and we see that for any $x\in\R$:
$$
\underset{y\rightarrow0}{\lim}\,\, \, |y|^{-N}\, \big | (\partial_{\overline{\zz}}\overset{\frown}{\varphi}_{t,N})(x+iy)\big | = \big |(\partial^{N+1}\varphi_t)(x)(N!)^{-1}\big |<\infty.
$$

 We notice the important fact that $(\partial^k\varphi_t)(x)$ is a polynomial of degree $k$ in $t\in\R$ whose coefficients are smooth complex functions of $x\in\R$ having all their support contained in $\supp(\varphi)$. As functions of $t$, these terms can grow at most like $<t>^N$.
 
Let us use \eqref{F-t-evol} and Proposition \ref{P-est-Jepsilon} (applied with $\varphi$ replaced by $\widetilde{\varphi}$), in order to get:
	\begin{equation}\label{jan1}
		\begin{array}{ll}
			e^{-itH^{\epsilon,\cc}}v&=\widetilde{\varphi}(H^{\epsilon,\cc}) \varphi_t\big(H^{\epsilon,\cc}\big)\widetilde{\varphi}(H^{\epsilon,\cc})v=P^{\epsilon,\cc}_\B\,\widetilde{\varphi}(H^{\epsilon,\cc}) \varphi_t\big(H^{\epsilon,\cc}\big)\widetilde{\varphi}(H^{\epsilon,\cc})\,P^{\epsilon,\cc}_\B\,v\,+\,\epsilon \,X_{\epsilon,\cc}v
		\end{array}
	\end{equation}
with $\|X_{\epsilon,\cc}\|_{\mathbb{B}(L^2(\X))}\leq 1$ uniformly for $(\epsilon,\cc)\in[0,\epsilon_0]\times[0,1]$.

 Using Theorem \ref{T-I} and Remark \ref{R-ext-T-I} we have the identity $P^{\epsilon,\cc}_{\B}R^{\epsilon,\cc}(\zz)P^{\epsilon,\cc}_{\B}=P^{\epsilon,\cc}_{\B}R^\sim_{\epsilon,\cc}(\zz)P^{\epsilon,\cc}_{\B}$ and the estimate:
	\begin{align*}
		&\Big\|P^{\epsilon,\cc}_{\B}\Big[R^\sim_{\epsilon,\cc}(\zz)\,-\, \big(\ham^{\epsilon,\cc}_{\B}-\zz P^{\epsilon,\cc}_{\B}\big)^{-1}\Big]\,P^{\epsilon,\cc}_\B \Big\|_{L^2(\X)}\\
		&\leq\Big\|P^{\epsilon,\cc}_\B H^{\epsilon,\cc}[R^\bot_{\epsilon,\cc}(\zz)]H^{\epsilon,\cc}P^{\epsilon,\cc}_\B\Big\|_{L^2(\X)}\leq C(\delta)\,(\Im\hspace*{-1pt}{\cal{m}}\zz)^{-2}\, \epsilon^2\,\big\|R^\bot_{\epsilon,\cc}(\zz)\big\|_{\mathbb{B}(L^2(\X))},
	\end{align*}
    which leads to:
\begin{align*}
		&\big\|P^{\epsilon,\cc}_\B {\varphi}\big(H^{\epsilon,\cc}\big)P^{\epsilon,\cc}_\B v\,-\, \varphi\big(\ham^{\epsilon,\cc}_{\B}\big)P^{\epsilon,\cc}_\B v\big\|_{L^2(\X)}\\
		&\leq \iint\big(\frac{d\zz d\bar \zz}{2\pi}  \big)\,\big|\big(\partial_{\overline{\zz}}\overset{\frown}{\varphi}_{N}\big)(\zz,\overline{\zz})\big|\,\Big\|P^{\epsilon,\cc}_{\B}\Big[R^{\epsilon,\cc}(\zz)\,-\, \big(\ham^{\epsilon,\cc}_{\B}-\zz P^{\epsilon,\cc}_{\B}\big)^{-1}\Big]\,P^{\epsilon,\cc}_\B v\Big\|_{L^2(\X)}=\mathscr{O}(\epsilon^2).
	\end{align*}

 Using the above estimate we also have
\begin{align*}
e^{-it\ham^{\epsilon,\cc}_{\B}}v&=e^{-it\ham^{\epsilon,\cc}_{\B}}\varphi(H^{\epsilon,\cc})\,v=e^{-it\ham^{\epsilon,\cc}_{\B}}P^{\epsilon,\cc}_\B\,\varphi(H^{\epsilon,\cc})\,P^{\epsilon,\cc}_\B\,v\,+\,\mathscr{O}(\epsilon)\,v\\
&=e^{-it\ham^{\epsilon,\cc}_{\B}}\varphi\big(\ham^{\epsilon,\cc}_{\B}\big)P^{\epsilon,\cc}_\B\,v\,+\,\mathscr{O}(\epsilon)\,v=\varphi_t\big(\ham^{\epsilon,\cc}_{\B}\big)P^{\epsilon,\cc}_\B\,v\,+\,\mathscr{O}(\epsilon)\,v.
\end{align*}

	The above identity and \eqref{jan1} lead us to:
	\begin{align*}
		\big\|e^{-itH^{\epsilon,\cc}}v\,-\,e^{-it\ham^{\epsilon,\cc}_{\B}}v\big\|_{L^2(\X)}\,&{=}\,\big\|P^{\epsilon,\cc}_\B {\varphi}_t\big(H^{\epsilon,\cc}\big)P^{\epsilon,\cc}_\B v\,-\,P^{\epsilon,\cc}_\B \varphi_t\big(\ham^{\epsilon,\cc}_{\B}\big)P^{\epsilon,\cc}_\B v\big\|_{L^2(\X)}\,+\,\mathscr{O}(\epsilon)\|v\|_{L^2(\X)}\\
		&\hspace*{-5cm}\leq \iint\big(\frac{d\zz d\bar \zz}{2\pi}  \big)\,\big|\big(\partial_{\overline{\zz}}\overset{\frown}{\varphi}_{t,N}\big)(\zz,\overline{\zz})\big|\,\Big\|P^{\epsilon,\cc}_{\B}\Big[R^{\epsilon,\cc}(\zz)\,-\, \big(\ham^{\epsilon,\cc}_{\B}-\zz P^{\epsilon,\cc}_{\B}\big)^{-1}\Big]\,P^{\epsilon,\cc}_\B v\Big\|_{L^2(\X)}+\,\mathscr{O}(\epsilon)\|v\|_{L^2(\X)}.
	\end{align*}

	Finally, by taking $N=2$ in the definition of $\overset{\frown}{\varphi}_{t,N}$ and using \eqref{dhc26}, we have the bound:
	\begin{align*}
		&\iint\,\big(-\frac i2  d\zz d\bar \zz\big)\,\big|\big(\partial_{\overline{\zz}}\overset{\frown}{\varphi}_{t,2}\big)(\zz,\overline{\zz})\big|\,|\Im\hspace*{-1pt}{\cal{m}}\zz|^{-2}=\iint_{\supp(\widetilde{\varphi}_{t,2})}\,dx\,dy\,\big|\big(\partial_{\overline{\zz}}\overset{\frown}{\varphi}_{t,2}\big)(x+iy)\big|\,|y|^{-2}\\
		&\hspace*{0.5cm}\leq\int_{\supp(\varphi)}\,dx\,\left[\underset{0\leq k\leq 2}{\sum}\big[\underset{x\in\R}{\sup}\big|\big(\partial^{k}\varphi_t\big)(x)\big|\big]\int_{1}^2\,dy\,y^{k-2}\right]\\
		&\hspace*{0.5cm}+\int_{\supp(\varphi)}\,dx\,\big[\underset{x\in\R}{\sup}\big|\big(\partial^{3}\varphi_t\big)(x)\big|\big]\int_{0}^2\,dy.
	\end{align*}
	
	One concludes that there exist  {$C>0$ and $\epsilon_0 >0$ } such that for any $t\in\R$ and for any $(\epsilon,c)\in[0,\epsilon_0]\times[0,1]$ we have the estimate:
	\begin{align*}
		\big\|e^{-itH^{\epsilon,\cc}}v\,-\,e^{-it\ham^{\epsilon,\cc}_{\B}}v\big\|_{L^2(\X)}\,&\leq\\
		&\hspace*{-4cm}\leq\,\Big[C\,\epsilon^2\, <t>^3 \big(\hspace*{-0.3cm}\underset{ {\scriptsize \begin{array}{c} \Re\hspace*{-1pt}\mathcal{e}\zz\in \supp(\varphi),\\|\Im\hspace*{-1pt}{\cal{m}}\zz|\leq2\end{array}}}{\sup}\hspace*{-0.3cm}\big\|[R^\bot_{\epsilon,\cc}(\zz)]\big\|_{\mathbb{B}(L^2(\X))}\big)\,+\,\mathscr{O}(\epsilon)\Big]\,\|v\|_{L^2(\X)}\, ,\quad\forall v\in E^{\epsilon,\cc}_h(J)\mathcal{H}.
	\end{align*}
 \end{proof}

Let  $\tilde{v}:=\mathfrak{C}^{\epsilon,\cc}_\B\,P_\B^{\epsilon,\cc} \, v\in\, \mathcal{K}_\B$ so that \eqref{F-ev-est} is equivalent to the following estimate on matrices:
\[
\big\|e^{-itH^{\epsilon,\cc}}v - [\mathfrak{C}^{\epsilon,\cc}_\B]^*e^{-it\ham^{\epsilon,\cc}_{\B}}\tilde{v}\big\|_{L^2(\X)}\leq\,C\big[\epsilon\,+\,(1+|t|) ^3\epsilon^2\big]\|v\|_{L^2(\X)}.
\]

But, as $\ham^{\epsilon,\cc}_{\B}$ leaves invariant the subspace $P^{\epsilon,\cc}_\B\,L^2(\X)$, we deduce the equality
\[
\mathfrak{M}^{\epsilon,\cc}[e^{-it\ham^{\epsilon,\cc}_{\B}}]\,=\,e^{-it\mathfrak{M}^{\epsilon,\cc}[\ham^{\epsilon,\cc}_{\B}]}
\]
From the results of Paragraph \ref{SS-EndProof-T-III} we know that:
\[
\mathfrak{M}^{\epsilon,\cc}[\ham^{\epsilon,\cc}_{\B}]_{\alpha,\beta}=\Lambda^{\epsilon,\cc}(\alpha,\beta)[\mathfrak{m}^\epsilon_\B]_{\alpha-\beta}+\mathscr{O}(\cc\epsilon)
\]
and thus, recalling \eqref{DF-Op-m-eps} we also have an estimate:
\[
\mathfrak{M}^{\epsilon,\cc}[e^{-it\ham^{\epsilon,\cc}_{\B}}]=e^{-it\widetilde{\Op}^{\epsilon,\cc}(\mathfrak{m}^\epsilon_\B)}\,+\,\mathscr{O}(<t>\cc\epsilon).
\]
Finally:
\[
\big\|e^{-itH^{\epsilon,\cc}}v - [\mathfrak{C}^{\epsilon,\cc}_\B]^*e^{-it\widetilde{\Op}^{\epsilon,\cc}(\mathfrak{m}^\epsilon_\B)}\tilde{v}\big\|_{L^2(\X)}\leq\,C\big[\epsilon\,+\,(1+|t|) ^4\epsilon(\cc+\epsilon)\big]\|v\|_{L^2(\X)},
\]
so that we finish the proof of the second point \eqref{F-II} of  Theorem \ref{T-III}.

\appendix
\section{Some properties of the magnetic pseudo-differential calculus}\label{A-m-PsiDO}

The results in \cite{MP-1} imply that the magnetic Moyal product is a  jointly continuous bilinear map $\mathscr{S}(\Xi)\times\mathscr{S}(\Xi)\rightarrow\mathscr{S}(\Xi)$ that has the following explicit formula:
\beq\nonumber 
\big(\Phi\sharp^B\Psi\big)(X)\,=\,4^{-d}\int_{\Xi}dZ\int_{\Xi}dZ^\prime\,e^{-2i(<\xi-\zeta,x-z^\prime>-<\xi-\zeta^\prime,x-z>)}\,e^{-i\Theta^B(x;z,z^\prime)}\,\Phi(Z)\,\Psi(Z^\prime)
\eeq
where $\Theta^B(x;z,z^\prime)$ is the flux of the $2$-form $B$ through the following triangle in $\X$:
\beq\nonumber 
\mathcal{T}_{x;z,z^\prime}:=\big\{P_{x;z,z^\prime}(s,s^\prime):=z+z^\prime-x+2s(x-z)+2s^\prime(z-z^\prime)\,,\,s\in[0,1],\,s^\prime\in[0,s]\big\}.
\eeq

The folllowing theorem is proved in  \cite{IMP-3}(Theorem 2.1)(see also Theorem 2.6 in \cite{IMP-1}):
\begin{theorem}
For $B\in\Fb^2(\X)$ the Moyal product extends to a bilinear, continuous map:
	$
	S^{p_1}_\rho(\Xi)\times S^{p_2}_\rho(\Xi)\ni(F,G)\,\mapsto\,F\sharp^BG\in S^{p_1+p_2}_\rho(\Xi)$. If moreover the magnetic field $B$ has $\Gamma$-periodic components, then the magnetic Moyal product restricted to $\Gamma$-periodic symbols, defines a bilinear continuous map :
	\beq
	S^{p_1}_\rho(\Xi)_\Gamma\times S^{p_2}_\rho(\Xi)_\Gamma\ni(F,G)\,\mapsto\,F\sharp^BG\in S^{p_1+p_2}_\rho(\Xi)_\Gamma.
	\eeq
	\end{theorem}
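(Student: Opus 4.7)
The starting point is the explicit oscillatory integral formula for the magnetic Moyal product displayed just above the theorem statement. My plan is to establish symbol bounds for $F\sharp^{B}G$ directly from this integral, treating the magnetic phase $e^{-i\Theta^B(x;z,z')}$ as a perturbation of the standard (Weyl) symplectic phase. The key observation, which I would record first, is that for $B\in\Fb^2(\X)$ the magnetic flux $\Theta^B(x;z,z')$ through the triangle $\mathcal{T}_{x;z,z'}$ is a $C^\infty$ function of $(x,z,z')$ whose partial derivatives of any order are bounded by polynomials in $|x-z|,|x-z'|,|z-z'|$, uniformly in $x$. In particular, $e^{-i\Theta^B}$ and each of its derivatives are tempered in $(z,z')$, uniformly in $x$.

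Next I would regularize the oscillatory integral by the usual integration-by-parts procedure. Writing $L_{z'}:=(1+|\xi-\zeta|^2)^{-1}(1-\Delta_{z'}/4)$ and its analogues $L_{\zeta'},L_z,L_\zeta$, one gets for each $N$ an identity $e^{-2i\langle\xi-\zeta,x-z'\rangle}=L_{z'}^N(\cdot)$ and similarly for the other Fourier-type factors; inserting these and integrating by parts converts the integrand into an absolutely convergent expression in which one has arbitrary negative powers of $\langle\xi-\zeta\rangle$, $\langle\xi-\zeta'\rangle$ and arbitrary smoothness producing powers of $\langle x-z\rangle^{-1}$, $\langle x-z'\rangle^{-1}$ coming from the symbol derivatives. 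Each time a derivative lands on $e^{-i\Theta^B}$ one collects a factor polynomial in the displacements (by the first paragraph), which is harmless: one simply chooses $N$ large enough that after the integration by parts the net decay in every variable beats these polynomial losses as well as the polynomial growth of $F$ and $G$ in $\xi$, $\zeta$, $\zeta'$ dictated by the classes $S^{p_j}_\rho$.

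The symbol estimates $\nu^{p_1+p_2,\rho}_{n,m}(F\sharp^BG)\leq C_{n,m}\,\nu^{p_1,\rho}_{n_1,m_1}(F)\,\nu^{p_2,\rho}_{n_2,m_2}(G)$ are then obtained by differentiating under the integral in $(x,\xi)$: each $\partial_x$ or $\partial_\xi$ either falls on $F,G$ (producing a factor compatible with their symbol seminorms and the $\rho$-factor $\langle\xi\rangle^{\rho m}$), or on the symplectic phase (producing polynomial factors in the differences absorbed by further integration by parts), or on the magnetic phase $e^{-i\Theta^B}$ (producing bounded polynomial-in-displacement factors as above). A careful but routine bookkeeping shows bilinear continuity with the number of seminorms needed on $F$ and $G$ depending only on $n,m,p_1,p_2,\rho,d$ and on a finite number of seminorms of $B$ in $\Fb^2(\X)$.

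For the second statement, the change of variables $x\mapsto x+\gamma$, $z\mapsto z+\gamma$, $z'\mapsto z'+\gamma$ leaves $dZ\,dZ'$ invariant and preserves the symplectic phase. The triangle $\mathcal{T}_{x+\gamma;z+\gamma,z'+\gamma}$ is the $\gamma$-translate of $\mathcal{T}_{x;z,z'}$, and since $B$ is $\Gamma$-periodic its flux through the translated triangle coincides with $\Theta^B(x;z,z')$; meanwhile $F(x+\gamma,\cdot)=F(x,\cdot)$ and $G(x+\gamma,\cdot)=G(x,\cdot)$. Combining these identities with the first part, the restriction to $S^{p_1}_\rho(\Xi)_\Gamma\times S^{p_2}_\rho(\Xi)_\Gamma$ maps into $S^{p_1+p_2}_\rho(\Xi)_\Gamma$, and continuity is inherited from the non-periodic case. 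The main obstacle throughout is the polynomial growth of the derivatives of $\Theta^B$ in the integration variables: this is precisely what prevents a naive convolution-type estimate and forces the interleaved use of integration by parts in all four variables $z,z',\zeta,\zeta'$ before the integrals become absolutely convergent; once this is set up carefully the rest of the argument is standard.
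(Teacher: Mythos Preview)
The paper does not give its own proof of this theorem: it is stated in Appendix~\ref{A-m-PsiDO} with the remark that it ``is proved in \cite{IMP-3} (Theorem~2.1) (see also Theorem~2.6 in \cite{IMP-1})''. Your proof plan is precisely the standard oscillatory-integral argument used in those references---regularization by integration by parts in all four variables $z,z',\zeta,\zeta'$, together with the observation that derivatives of the magnetic phase $e^{-i\Theta^B}$ grow only polynomially in the displacements when $B\in\Fb^2(\X)$---and the periodicity statement via the change of variables $x,z,z'\mapsto x+\gamma,z+\gamma,z'+\gamma$ is exactly right. So there is nothing to compare against in this paper, but your sketch matches the approach of the cited sources.
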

The next theorem is proved in \cite{IMP-3} (Theorem 2.6)  (see also Theorem 3.1 in \cite{IMP-1}):
\begin{theorem}
 Given $\rho \in (0,1)$, a magnetic field $B\in\Fb^2(\X)$ and an associated vector potential of class $\Fp^1(\X)$, if $f\in S^0_\rho(\Xi)$,  then $\Op^A(f)\in\mathbb{B}\big(L^2(\X)\big)$.
 \end{theorem}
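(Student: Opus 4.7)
The statement is the magnetic analog of the Calderón--Vaillancourt theorem for the Hörmander class $S^0_\rho$ with $\rho \in (0,1)$. The plan is to follow the classical strategy based on Cotlar--Stein almost orthogonality, adapted to the magnetic Weyl calculus, exploiting the boundedness of $B \in \Fb^2(\X)$ and its derivatives to control the magnetic flux factor that appears in the Moyal product.

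First, I would fix a lattice $\Lambda \subset \Xi$ and a partition of unity $\{\chi_\gamma\}_{\gamma \in \Lambda}$ with $\chi_\gamma(X) = \chi(X - \gamma)$ for some $\chi \in C^\infty_c(\Xi)$, and decompose $f = \sum_{\gamma \in \Lambda} f_\gamma$ with $f_\gamma := \chi_\gamma \, f$. Since $f \in S^0_\rho(\Xi)$, the family $\{f_\gamma\}_{\gamma \in \Lambda}$ stays in a bounded subset of $S^0_\rho(\Xi)$, each element supported in a fixed translate of a compact set, and the seminorms $\nu^{0,\rho}_{n,m}(f_\gamma)$ are uniformly bounded in $\gamma$. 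For each individual $\gamma$, standard estimates on oscillatory integrals (or directly the Schur test on the integral kernel of $\Op^A(f_\gamma)$, whose magnetic phase $\Lambda^A$ has modulus one) give a uniform bound $\|\Op^A(f_\gamma)\|_{\mathbb{B}(L^2(\X))} \le C_0$.

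Second, the key step is to estimate, for $(\alpha, \beta) \in \Lambda \times \Lambda$, the operator norms
\[
\|\Op^A(f_\alpha)\,\Op^A(f_\beta)^*\|_{\mathbb{B}(L^2(\X))}
\quad \text{and} \quad
\|\Op^A(f_\alpha)^*\,\Op^A(f_\beta)\|_{\mathbb{B}(L^2(\X))},
\]
and to prove bounds of the form $C_N\langle \alpha - \beta \rangle^{-N}$ for every $N \in \mathbb{N}$. Using Definition~\ref{D-MoyalPr}, the $A$-symbol of $\Op^A(f_\alpha)\,\Op^A(f_\beta)^*$ is $f_\alpha \sharp^B g_\beta$, where $g_\beta$ is the symbol of $\Op^A(f_\beta)^*$ (which equals $\overline{f_\beta}$ for the magnetic Weyl quantization up to a standard involution). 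Writing the oscillatory integral expression for the Moyal product recalled at the start of Appendix~\ref{A-m-PsiDO}, one has
\[
(f_\alpha \sharp^B g_\beta)(X) = 4^{-d}\!\iint dZ\,dZ'\, e^{-2i(\langle \xi - \zeta, x - z'\rangle - \langle \xi - \zeta', x - z\rangle)}\,e^{-i\Theta^B(x;z,z')}\,f_\alpha(Z)\,g_\beta(Z').
\]
Since $\supp f_\alpha$ and $\supp g_\beta$ are contained in translates by $\alpha$, $\beta$ respectively of a fixed compact set in $\Xi$, when $|\alpha - \beta|$ is large at least one of the differences $|z - z'|$, $|\zeta - \zeta'|$ must be large on the integration support. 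Integration by parts against the non-stationary phase in $\zeta$ or $z'$ (with the usual first-order operators $\langle x - z'\rangle^{-2}(1 - \Delta_\zeta)$ and $\langle \xi - \zeta\rangle^{-2}(1 - \Delta_{z'})$) produces negative powers of $\langle \alpha - \beta\rangle$ to any order. The derivatives falling on $e^{-i\Theta^B(x;z,z')}$ produce only polynomial factors in the differences $|x - z|, |x - z'|$ times bounded derivatives of $B$, which are harmless because $B \in \Fb^2(\X)$ has all derivatives uniformly bounded; the derivatives falling on $f_\alpha$ and $g_\beta$ are controlled by the $S^0_\rho$ seminorms, which are $\alpha,\beta$-uniform. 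The same analysis applies symmetrically to $\Op^A(f_\alpha)^*\,\Op^A(f_\beta)$.

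Third, once the almost orthogonality estimates
\[
\|\Op^A(f_\alpha)\,\Op^A(f_\beta)^*\| + \|\Op^A(f_\alpha)^*\,\Op^A(f_\beta)\| \le C_N\langle \alpha - \beta\rangle^{-N}
\]
are established (with $N$ chosen larger than $2\dim \Xi$), the Cotlar--Stein lemma yields
\[
\big\|\Op^A(f)\big\|_{\mathbb{B}(L^2(\X))} \le \sup_\alpha \sum_\beta \sqrt{C_N\langle \alpha - \beta\rangle^{-N}} \;<\;\infty,
\]
which is the desired conclusion. The main obstacle is the control of the magnetic flux factor $e^{-i\Theta^B(x;z,z')}$ under the integrations by parts: unlike the non-magnetic case where the phase is quadratic, $\Theta^B$ is a non-polynomial function defined as the flux of $B$ through a triangle in $\X$, so one must check carefully that all its derivatives in $(x,z,z')$ grow at most polynomially in the relative variables $|x - z|, |x - z'|, |z - z'|$, with constants controlled by seminorms of $B$ in $\Fb^2(\X)$. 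This is the only place where the hypothesis $B \in \Fb^2(\X)$ (rather than merely $B \in \Fp^2(\X)$) is essential, and it is the one step requiring the careful geometric computation analogous to the one performed in the proof of \eqref{F-4_12}.
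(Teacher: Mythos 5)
This statement is not proved in the paper at all: it is imported verbatim from the magnetic pseudodifferential literature (Theorem 2.6 in \cite{IMP-3}, Theorem 3.1 in \cite{IMP-1}), so there is no internal proof to compare against. Your plan is the standard Calder\'on--Vaillancourt route, which is also the spirit of the cited proofs: since $S^0_\rho(\Xi)\subset S^0_0(\Xi)$ for $\rho\in(0,1)$, a unit-scale phase-space partition, uniform bounds on the pieces, almost-orthogonality with rapid decay in $\langle\alpha-\beta\rangle$, and Cotlar--Stein is a legitimate and essentially complete strategy, granted the magnetic Weyl identity $\Op^A(F)^*=\Op^A(\overline{F})$ and the convergence of $\sum_\gamma \Op^A(f_\gamma)$ to $\Op^A(f)$ on $\mathscr{S}(\X)$, both standard.

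The one place where your sketch is too quick is exactly the step you flag at the end, and it deserves to be stated as more than a caveat. To gain decay in the momentum separation you integrate by parts in $z$ and $z'$, and these derivatives do hit $e^{-i\Theta^B(x;z,z')}$; the resulting factors are bounded by seminorms of $B$ in $\Fb^2(\X)$ times polynomials in $|x-z|$, $|x-z'|$, $|z-z'|$, and since $x$ is the free variable of the symbol these factors are \emph{not} bounded on the domain of integration, so calling them ``harmless'' is not yet an argument. One must organize the proof as an interleaved induction: each integration by parts in $z,z'$ (gaining $\langle\xi-\zeta'\rangle^{-1}$, $\langle\xi-\zeta\rangle^{-1}$) must be compensated by sufficiently many integrations by parts in $\zeta,\zeta'$ (gaining $\langle x-z\rangle^{-1}$, $\langle x-z'\rangle^{-1}$), with the bookkeeping showing that the total polynomial loss from $\partial\Theta^B$ is always dominated by the decay already secured, uniformly in $\alpha,\beta$ and in the seminorms of $f$. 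This bookkeeping is the actual content of the magnetic Calder\'on--Vaillancourt theorem in \cite{IMP-1}; once it is written out, your almost-orthogonality bound $C_N\langle\alpha-\beta\rangle^{-N}$ follows and the rest of your argument is fine.
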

We also need the following results proven in \cite{IMP-3} (Theorem 2.7 and Proposition 2.4):
\begin{theorem}
 For $F\in S^p_1(\Xi)$ positive and elliptic, with $p>0$, the operator $\Op^A(F):\mathscr{S}(\X)\rightarrow L^2(\X)$ is essentially self-adjoint and its closure has the domain:
	\beq
	\mathscr{H}^p(\X)\,:=\,\big\{f\in L^2(\X)\,,\,\Op(\mathfrak{m}_p)f\in L^2(\X)\big\}.
	\eeq
	Moreover, for any $\zz\in\mathbb{C}\setminus\sigma\big(\overline{\Op^A(F)}\big)$ there exists $\mathfrak{r}^B_\zz\in S^{-p}_1(\Xi)$ such that: 
	\beq\label{F-simb-rez}
	\big(\overline{\Op^A(F)}-\zz\bb1\big)^{-1}=\Op^A(\mathfrak{r}^B_\zz).
	\eeq
\end{theorem}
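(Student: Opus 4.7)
The plan is to combine an ellipticity–based parametrix construction in the magnetic Moyal algebra with the a priori symmetry of $\Op^A(F)$ on $\mathscr{S}(\X)$ to obtain all three assertions simultaneously. First, since $F$ is real, positive and elliptic of order $p>0$, for $\zz_0=-t$ with $t$ large the pointwise reciprocal $(F-\zz_0)^{-1}$ satisfies $|(F-\zz_0)^{-1}(X)|\leq C(t)<\xi>^{-p}$ together with the corresponding estimates on derivatives (ellipticity is preserved by differentiation in the $S^p_1$ scale), so $(F-\zz_0)^{-1}\in S^{-p}_1(\Xi)$. Using the continuity of the magnetic Moyal product $S^p_1(\Xi)\times S^{-p}_1(\Xi)\to S^0_1(\Xi)$ recalled in the first appendix theorem, and the fact that the principal term in the asymptotic expansion of $\sharp^B$ is the pointwise product, one obtains $(F-\zz_0)\sharp^B(F-\zz_0)^{-1}=1+s_0$ with $s_0\in S^{-1}_1(\Xi)$.

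Second, I would iterate this zeroth order inversion in the Moyal algebra: a Borel-type summation of the formal Neumann series $\sum_{k\geq 0}(-1)^k s_0^{\sharp^B k}$ in the $S_1$ scale produces $t_0\in S^0_1(\Xi)$ with $(1+s_0)\sharp^B(1+t_0)=1+r_\infty$, $r_\infty\in S^{-\infty}(\Xi)$. Setting $\mathfrak{r}^{(1)}_{\zz_0}:=(F-\zz_0)^{-1}\sharp^B(1+t_0)\in S^{-p}_1(\Xi)$ gives an approximate right inverse; a symmetric construction yields an approximate left inverse agreeing modulo $S^{-\infty}$. Since $\Op^A(r_\infty)$ sends $L^2(\X)$ into $\mathscr{S}(\X)$ (symbols in $S^{-\infty}$ are quantized into operators with integral kernels of Schwartz class), the operator $1+\Op^A(r_\infty)$ is invertible on $L^2(\X)$ after choosing $t$ large enough. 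Writing $(1+\Op^A(r_\infty))^{-1}=1-\Op^A(r_\infty)+\Op^A(r_\infty)(1+\Op^A(r_\infty))^{-1}\Op^A(r_\infty)$ and using that the Weyl kernel of the last term is still Schwartz, one identifies the bounded correction itself as $\Op^A(\tilde r)$ for some $\tilde r\in S^{-\infty}(\Xi)$, so the true inverse is $\Op^A(\mathfrak{r}^B_{\zz_0})$ for $\mathfrak{r}^B_{\zz_0}:=\mathfrak{r}^{(1)}_{\zz_0}\sharp^B(1-\tilde r)\in S^{-p}_1(\Xi)$.

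Third, I would deduce essential self-adjointness and the domain characterization: the existence of $\mathfrak{r}^B_{\zz_0}$ with $\zz_0\in\mathbb{R}$ shows that $\text{Range}(\Op^A(F)-\zz_0\bb1)$ is dense in $L^2(\X)$, and by symmetry the same holds for $\zz_0\pm i\vep$; the basic von Neumann criterion gives essential self-adjointness of $\Op^A(F)$ on $\mathscr{S}(\X)$. For $u\in L^2(\X)$, the equivalence $u\in\mathcal{D}(\overline{\Op^A(F)})\Leftrightarrow u\in\mathscr{H}^p(\X)$ follows from writing $u=\Op^A(\mathfrak{r}^B_{\zz_0})v$ with $v\in L^2(\X)$ and composing with $\Op(\mathfrak{m}_p)$: the product $\mathfrak{m}_p\sharp^B\mathfrak{r}^B_{\zz_0}$ lies in $S^0_1(\Xi)$, hence defines a bounded operator, giving one inclusion; the converse uses ellipticity of the Moyal symbol to recover $F\sharp^B\cdot$ in terms of $\Op(\mathfrak{m}_p)\cdot$. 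Finally, once the resolvent is identified for one $\zz_0$, the first resolvent identity propagates the $\Op^A(S^{-p}_1)$-structure to all $\zz\in\mathbb{C}\setminus\sigma(\overline{\Op^A(F)})$, alternatively by rerunning the parametrix argument verbatim for any such $\zz$.

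The main obstacle is the Moyal-Neumann iteration: controlling $s_0^{\sharp^B k}$ uniformly in $k$ in the $S^{-k}_1$ scale and performing a Borel summation consistent with the Fr\'echet topology of each $S^m_1(\Xi)$, so as to produce an exact parametrix modulo $S^{-\infty}$ rather than merely modulo $S^{-N}$ for finite $N$. A clean way to bypass the summation is to apply the magnetic Beals-type characterization (as recalled in the proof of Proposition~\ref{R-p-symb}), which reduces the identification $\mathfrak{r}^B_{\zz_0}\in S^{-p}_1$ to uniform estimates on iterated commutators of the already-bounded operator $\Op^A(\mathfrak{r}^B_{\zz_0})$ with $Q_j$ and $\Pi^A_j$, estimates which are inherited from the parametrix through standard commutator expansions in the Moyal algebra.
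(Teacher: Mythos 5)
You should note at the outset that the paper itself contains no proof of this theorem: it is quoted from \cite{IMP-3} (Theorem 2.7 and Proposition 2.4), so your proposal can only be compared with the strategy of that reference, which is indeed the parametrix--plus--commutator-criterion route you sketch. Your overall plan (pointwise inverse $(F-\zz_0)^{-1}\in S^{-p}_1(\Xi)$, Moyal--Neumann improvement to a parametrix modulo $S^{-\infty}$, removal of the remainder, deficiency-index argument, identification of the domain by composing with an elliptic weight, propagation to all $\zz$ by the resolvent identity) is the standard and essentially correct one.

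Two steps, however, fail as written. First, symbols in $S^{-\infty}(\Xi)$ do \emph{not} quantize into operators mapping $L^2(\X)$ into $\mathscr{S}(\X)$, and their kernels are not Schwartz on $\X\times\X$: by Proposition \ref{P-ker-OpA} they are only smooth with rapid decay in the directions transverse to the diagonal (an $x$-independent symbol gives a convolution operator, which smooths nothing in the $x+y$ direction). Consequently your identification of $(1+\Op^A(r_\infty))^{-1}-\bb1$ as $\Op^A(\tilde r)$ with $\tilde r\in S^{-\infty}(\Xi)$ cannot be read off a ``Schwartz kernel'' argument; it is exactly the spectral-invariance property of the magnetic Moyal algebra, and must be obtained from the magnetic Beals criterion together with the elliptic-weight bootstrap used in the proof of Proposition \ref{R-p-symb} (plain Beals only yields a symbol of class $S^0_0$, not $S^{-p}_1$). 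The remark you relegate to your final paragraph is therefore the load-bearing step, not an optional shortcut. Relatedly, the invertibility of $1+\Op^A(r_\infty)$ ``for $t$ large'' needs uniform-in-$t$ control of the seminorms produced by the Borel summation; this can be done, but it is asserted rather than proved, and it can be bypassed entirely by the usual regularity argument on the deficiency spaces (a left parametrix modulo $S^{-\infty}$ shows that any $u$ with $\Op^A(F)^*u=\mp i\,u$ lies in the operator domain, and symmetry then forces $u=0$), which requires no smallness at all.

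Second, in the domain characterization you control $\Op(\mathfrak{m}_p)\,\Op^A(\mathfrak{r}^B_{\zz_0})$ by the symbol $\mathfrak{m}_p\sharp^B\mathfrak{r}^B_{\zz_0}\in S^0_1(\Xi)$. But $\sharp^B$ governs the composition of two operators quantized with the \emph{same} vector potential, so this bounds $\Op^A(\mathfrak{m}_p)\,\Op^A(\mathfrak{r}^B_{\zz_0})$, not the mixed product with the non-magnetic quantization $\Op(\mathfrak{m}_p)$ appearing in the statement. Since $A\in\Fp^1(\X)$ may be unbounded, the discrepancy is not cosmetic: for a linear vector potential the closure of the magnetic Laplacian does not have the plain Sobolev space as its domain. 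What your composition argument actually proves is that the domain equals the \emph{magnetic} Sobolev space $\{f\in L^2(\X):\ \Op^A(\mathfrak{m}_p)f\in L^2(\X)\}$, which is the form in which the result should be (and in \cite{IMP-3} is) understood; you should either prove it in that form or supply the missing comparison between the two spaces in the regime where they coincide.
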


The explicit formula of $\Op^A(\Phi)\in\mathcal{L}\big(\mathscr{S}(\X);\mathscr{S}(\X)\big)$ in Definition \ref{D-OpA} allows us to obtain a simple formula for the integral kernel $\mathfrak{K}^A[\Phi]\in\mathscr{S}(\X\times\X)$ of this operator:
\beq\label{F-KerOpA}
\mathfrak{K}^A[\Phi]\,=\,\Lambda^A\,\big[\big((\bb1\otimes\mathcal{F}_{\X^*})\Phi\big)\circ\Upsilon\big],\quad\Upsilon:\X\times\X\ni(x,y)\mapsto\big((x+y)/2,x-y\big)\in\X\times\X.
\eeq

Using Propositions 1.3.3 and 1.3.6 in \cite{ABG} and Lemma A.4 in \cite{MPR1}, one obtains easily the following result:
\begin{proposition}\label{P-ker-OpA} Suppose we have a magnetic field $B\in\Fb^2(\X)$ with a vector potential $A\in\Fp^1(\X)$.
	\begin{enumerate}
		\item If $F\in S^p_1(\Xi)$ for any $(p,\rho)\in\R\times\{0,1\}$, then $\mathfrak{K}^A[F]\in\mathscr{S}^\prime(\X\times\X)$ as given in \eqref{F-KerOpA} is a smooth function on $\X\times\X\setminus\big\{(x,x)\in\X\times\X,\,x\in\X\big\}$ having rapid decay in the variable $x-y\in\X$.
		\item If $F\in S^p_1(\Xi)$ with $p<0$, then $\mathfrak{K}^A[F]\in L^1(\X\times\X)$.
		\item $F\in S^{-\infty}(\Xi)$ if and only if $\mathfrak{K}^A[F]\in C^\infty(\X\times\X)$ with rapid decay in the directions orthogonal to the diagonal of $\X\times\X$, together with all its derivatives.
	\end{enumerate}
\end{proposition}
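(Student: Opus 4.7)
The plan is to reduce the three statements to well-known facts about partial Fourier transforms of H\"ormander symbols, using the explicit formula \eqref{F-KerOpA}. First I would observe that multiplication by $\Lambda^A$ is essentially harmless for the analysis: $\Lambda^A$ is a smooth function of modulus one on $\X\times\X$ with all derivatives of polynomial growth, so it preserves smoothness and the singularity locus, while its polynomial growth does not obstruct the rapid decay in $x-y$ we aim to establish (since rapid decay beats polynomial growth). The map $\Upsilon$ is a linear diffeomorphism sending the diagonal of $\X\times\X$ to $\{v=0\}$, so it preserves all regularity and decay statements up to obvious identifications. It therefore suffices to analyse $G:=(\bb1\otimes\mathcal{F}_{\X^*})F$ on $\X\times\X$ as a function of $(u,v)$.

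For point (1) I would argue away from $v=0$ using the standard oscillatory-integral integration-by-parts trick: writing
\[
G(u,v)=\int_{\X^*}e^{i\langle\xi,v\rangle}\,F(u,\xi)\,d\xi
\]
and using $e^{i\langle\xi,v\rangle}=|v|^{-2N}(-\Delta_\xi)^N e^{i\langle\xi,v\rangle}$, integration by parts yields $G(u,v)=|v|^{-2N}\int e^{i\langle\xi,v\rangle}(-\Delta_\xi)^N F(u,\xi)\,d\xi$. Since $(-\Delta_\xi)^N F\in S^{p-2N}_1$, for $N>(p+d)/2$ the integral converges absolutely, and taking $N$ arbitrarily large gives rapid decay of $G$ in $v$ on $\{|v|\geq\delta\}$ for every $\delta>0$. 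Differentiating under the integral in $u$ and $v$ and iterating the same device (the extra powers of $\xi$ produced by $v$-derivatives are absorbed by taking $N$ larger) produces smoothness of $G$ on $\X\times(\X\setminus\{0\})$ with rapid decay of every derivative in $v$. Multiplying by $\Lambda^A$ and composing with $\Upsilon^{-1}$ then yields (1).

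For point (2), when $p<0$ the only obstruction to integrability of $\mathfrak{K}^A[F]$ near the diagonal is the possible singularity at $v=0$, which is classically controlled: the analysis in Lemma~A.4 of \cite{MPR1} and Proposition~1.3.3 of \cite{ABG} shows that the kernel of a symbol of class $S^p_1$ with $p<0$ has only an integrable singularity on the diagonal; combined with the rapid off-diagonal decay from~(1), this delivers the required $L^1$ statement. For point (3), the direct implication follows by strengthening the argument of (1): if $F\in S^{-\infty}(\Xi)$ the integral defining $G$ converges absolutely without any integration by parts, and the same device now yields $G\in C^\infty(\X\times\X)$ with rapid decay in $v$ together with all derivatives. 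Conversely, given a smooth kernel with such decay, a partial inverse Fourier transform recovers $F$, and the rapid decay of the kernel in $v$ together with the boundedness of all its $u$-derivatives translates into $\langle\xi\rangle^n |\partial_u^\alpha\partial_\xi^\beta F|$ being bounded for every $n$ and every $(\alpha,\beta)$, i.e.\ $F\in S^{-\infty}(\Xi)$.

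The main obstacle is bookkeeping: one must keep the integration-by-parts estimates uniform in $u$, so that the bounds on $G$ and its derivatives depend only on the $S^p_1$-seminorms of $F$ and not merely on pointwise values of $F(u,\cdot)$. Once this uniformity is set up correctly, the three conclusions become direct consequences of the cited results from \cite{ABG} and \cite{MPR1}, with the only genuinely new element being the passage through the phase factor $\Lambda^A$, which is handled by the polynomial-growth estimates on its derivatives recalled just after \eqref{F-UAx}.
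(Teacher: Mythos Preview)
Your proposal is correct and matches the paper's approach: the paper does not give a detailed proof but simply states that the proposition follows from Propositions~1.3.3 and~1.3.6 in \cite{ABG} and Lemma~A.4 in \cite{MPR1}. Your sketch unpacks precisely these standard results, handling the extra phase factor $\Lambda^A$ and the linear change of variables $\Upsilon$ in the expected way.
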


\begin{notation}\label{N-Scirc}
We denote by $\mathring{\mathscr{S}}(\X\times\X)$ the space of tempered distributions on $\X\times\X$ that are smooth and with rapid decay outside the diagonal $\Delta_\X:=\{(x,x)\in\X\times\X,\ x\in\X\}$ and by $\mathring{\mathscr{S}}(\X\times\X)_\Gamma$ the subspace of $\Gamma$-periodic distribution kernels in $\mathring{\mathscr{S}}(\X\times\X)$, i.e. $\mathfrak{K}\in\mathring{\mathscr{S}}(\X\times\X)$ such that $(\tau_\gamma\otimes\tau_\gamma)\mathfrak{K}=\mathfrak{K}$ for any $\gamma\in\Gamma$.
\end{notation}

We use the following phase function:
\beq\label{DF-Omega}
\Omega^B(x,y,z):=\exp\Big(-i\int_{<x,y,z>}\hspace*{-4pt}B\hspace*{2pt}\Big)
\eeq
that due to Stokes Formula satisfies the following equality:
\beq\label{F-Omega-Stokes}
\Omega^B(x,y,z)=\Lambda^A(x,y)\,\Lambda^A(y,z)\,\Lambda^A(z,x).
\eeq

\section{Some basic properties of frames on  Hilbert spaces}\label{Ap-B}
In the whole section,  $\H$ denotes  a complex Hilbert space.

\begin{definition}\label{D-frame}~
 	\begin{enumerate}
		\item We call  frame of $\H$, an at most countable family  of vectors $\big\{\psi_p\,,\,p\in\mathbb{N}_\bullet\big\}\subset\H$ such that there exists two positive constants $0<A\leq B$ satisfying:
		\beq\label{DF-fr}
		A\big\|f\|_{\H}^2\,\leq\,\underset{M\nearrow\infty}{\lim}\underset{1\leq p\leq M}{\sum}\big|\big(\psi_p\,,\,f\big)_{\H}\big|^2\,\leq\,B\big\|f\|_{\H}^2,\quad\forall f\in\H.
		\eeq
		\item A  frame  of $\H$  is called a Parseval frame, when the following equality is verified for any $f\in\H$:
		\beq\label{DF-Pfr}
		\big\|f\big\|^2_{\H}=\underset{p\in\mathbb{N}_\bullet}{\sum}\,\big|\big(\psi_p\,,\,f\big)_{\H}\big|^2\,,
		\eeq
		with the series converging in $\ell^2(\mathbb{N}_\bullet)$.
		\item Given a frame $\big\{\psi_p\,,\,p\in\mathbb{N}_\bullet\big\}\subset\H$ we introduce its  coordinate map $\mathfrak{C}$ as: 
		\beq\label{DF-fr-1}
		\H\ni f\mapsto\big (\mathfrak{C} f ) :=\{\big(\psi_p\,,\,f\big)_{\H}\big\}_{p\in \Nb} \in\ell^2(\Nb).
		\eeq
	\end{enumerate}
\end{definition}
\begin{lemma}\label{L-Parseval}
	Given a Parseval frame in $\H$, the following identities are valid, for $(f,g)\in\H\times\H$:
	\beq\label{F-P-1}
	\big(f\,,\,g\big)_{\H}=\underset{p\in\mathbb{N}_\bullet}{\sum}\,\big(f,\psi_p\big)_{\H}\big(\psi_p,g\big)_{\H},\qquad
	f=\underset{p\in\mathbb{N}_\bullet}{\sum}\,\big(\psi_p\,,\,f\big)_{\H}\,\psi_p\,,
	\eeq
	with the series converging in $\ell^2(\mathbb{N}_\bullet)$ and resp. in the strong topology of $\H$ and the coordinate map $\mathfrak{C}:\H\rightarrow\ell^2(\Nb)$ is an isometry that may not be surjective.
\end{lemma}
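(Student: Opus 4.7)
The plan is to observe that the entire lemma is a repackaging of the single fact that the coordinate map $\mathfrak{C}$ is an isometric embedding of $\H$ into $\ell^2(\Nb)$. First I would read off directly from the Parseval defining property \eqref{DF-Pfr} that $\|\mathfrak{C}f\|_{\ell^2(\Nb)}^2=\sum_{p\in\Nb}|(\psi_p,f)_{\H}|^2=\|f\|_{\H}^2$ for every $f\in\H$, which is exactly the definition of a linear isometry $\H\hookrightarrow\ell^2(\Nb)$. Since an isometric linear map between Hilbert spaces need not be surjective (take for instance $\H$ of finite dimension $N$ together with a Parseval frame of cardinality strictly greater than $N$, obtained by orthogonal projection of an orthonormal basis of a larger Hilbert space onto $\H$), the final clause of the statement follows.

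Next I would identify the adjoint operator $\mathfrak{C}^*:\ell^2(\Nb)\to\H$ explicitly. Testing the duality relation $(\mathfrak{C}^*\vec{c},f)_{\H}=(\vec{c},\mathfrak{C}f)_{\ell^2(\Nb)}=\sum_{p\in\Nb}\overline{c_p}\,(\psi_p,f)_{\H}$ against an arbitrary $f\in\H$, one recognises the candidate $\mathfrak{C}^*\vec{c}=\sum_{p\in\Nb}c_p\,\psi_p$, where the series converges strongly in $\H$ because the partial sums $\vec{c}^{(M)}:=(c_1,\ldots,c_M,0,0,\ldots)$ converge to $\vec{c}$ in $\ell^2(\Nb)$ and $\mathfrak{C}^*$ is bounded of norm one.

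The key algebraic step is then that any isometry between Hilbert spaces satisfies $\mathfrak{C}^*\mathfrak{C}=\bb1_{\H}$, a fact which one obtains by polarizing the scalar identity $(\mathfrak{C}f,\mathfrak{C}f)_{\ell^2(\Nb)}=(f,f)_{\H}$. Applying this identity to an arbitrary $f\in\H$ yields the reconstruction formula $f=\sum_{p\in\Nb}(\psi_p,f)_{\H}\psi_p$ with strong convergence in $\H$ guaranteed by the previous paragraph, and pairing with any $g\in\H$ together with continuity of the scalar product gives the sesquilinear identity; the absolute $\ell^1$-summability of $\{(f,\psi_p)_{\H}(\psi_p,g)_{\H}\}_{p\in\Nb}$ follows from the Cauchy-Schwarz inequality in $\ell^2(\Nb)$. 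I do not anticipate any serious obstacle: the whole statement is the standard characterization of Parseval frames through their analysis operator, and the only point requiring minor care is the strong convergence of the reconstruction series, which is automatic as soon as $\mathfrak{C}^*$ is recognised as a continuous linear operator on $\ell^2(\Nb)$.
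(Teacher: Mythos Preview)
Your proof is correct. The paper's argument also pivots on polarization, but in the reverse order: it first applies the polarization identity directly to $\|f+(-i)^k g\|_{\H}^2$ and expands each term via the Parseval condition \eqref{DF-Pfr} to obtain the sesquilinear identity $(f,g)_{\H}=\sum_p (f,\psi_p)_{\H}(\psi_p,g)_{\H}$, and only then deduces the reconstruction formula by arguing that the partial sums converge weakly (from the sesquilinear identity) and are Cauchy in norm (from the Parseval condition). Your route packages the same ingredients more structurally, deriving the reconstruction formula first from $\mathfrak{C}^*\mathfrak{C}=\bb1_{\H}$ and the explicit identification of $\mathfrak{C}^*$, then reading off the sesquilinear identity by pairing with $g$; this has the mild advantage that strong convergence of the reconstruction series comes for free from the boundedness of $\mathfrak{C}^*$ rather than requiring a separate Cauchy argument.
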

\begin{proof}
	Let us write the polarization identity in $\H$:
	\beq\nonumber 
	\big(f\,,\,g\big)_{\H}=(1/4)\underset{0\leq k\leq 3}{\sum}i^k\big\|f+(-i)^kg\big\|^2_{\H}\,,
	\eeq
	and use \eqref{DF-Pfr} to obtain:
	\begin{align*}
		\big(f\,,\,g\big)_{\H}&=(1/4)\underset{0\leq k\leq 3}{\sum}i^k\big\|f+(-i)^kg\big\|^2_{\H}=(1/4)\underset{0\leq k\leq 3}{\sum}i^k\underset{p\in\mathbb{N}_\bullet}{\sum}\,\big|\big(\psi_p\,,\,f+(-i)^kg\big)_{\H}\big|^2\\
		&=\underset{p\in\mathbb{N}_\bullet}{\sum}\,(2/4)\Big(2\Re\big(\psi_p\,,\,f\big)_{\H}\overline{\big(\psi_p\,,\,g\big)}_{\H}-2\Im\big(\psi_p\,,\,f\big)_{\H}\overline{\big(\psi_p\,,\,g\big)}_{\H}\Big)\\
		&=\underset{p\in\mathbb{N}_\bullet}{\sum}\big(\psi_p\,,\,g\big)_{\H}\overline{\big(\psi_p\,,\,f\big)}_{\H}.
	\end{align*}
	For the second identity one proves using the first identity that $\sum\limits_{p=1}^M\,\big(\psi_p\,,\,f\big)_{\H}\psi_p$ is a sequence converging weakly to $f$ and being Cauchy for the strong topology, due to the Parseval condition.
\end{proof}
We also recall the following theorem in \cite{Chris} (Theorem 5.5.1.)
\begin{theorem} A sequence $\{f_k\}_{k=1}^\infty$ in $\H$ is a frame for $\H$ if and only if the application:
		\beq \nonumber 
	\{c_k\}_{k=1}^\infty \mapsto{\text{\tt{\bf F}}}\big(\{c_k\}_{k=1}^\infty\big)\,:=\,\underset{k\in\mathbb{N}_\bullet}{\sum}c_\ell \,f_\ell 
		\eeq
		is a well defined surjective map from $\ell^2(\mathbb{N}_\bullet)$ onto $\H$.
\end{theorem}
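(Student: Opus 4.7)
The plan is to identify $\mathbf{F}$ as the formal adjoint of the analysis operator associated with the family $\{f_k\}_{k\in\Nb}$ and then translate the two frame bounds into, respectively, boundedness and surjectivity of $\mathbf{F}$, using only standard Banach/Hilbert space functional analysis (uniform boundedness, closed graph, and the closed range/open mapping theorems).

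First I would fix notation by introducing the analysis operator $U:\mathcal{D}(U)\subset\H\to\ell^2(\Nb)$, $Uf:=\{(f_k,f)_{\H}\}_{k\in\Nb}$, and observing that, whenever $\mathbf{F}$ is everywhere defined on $\ell^2(\Nb)$, one has the algebraic duality $(\mathbf{F}(\{c_k\}),f)_{\H}=\sum_k c_k\,\overline{(f_k,f)_{\H}}$, so that $\mathbf{F}$ and $U$ are formal adjoints on the domain where both make sense.

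For the direct implication, assume $\{f_k\}$ is a frame with bounds $A\leq B$. The upper bound \eqref{DF-fr} shows that $U$ is defined and bounded on all of $\H$ with $\|U\|^2\leq B$; hence its Hilbert adjoint $U^*:\ell^2(\Nb)\to\H$ is everywhere defined and bounded, and the identity $U^*(\{c_k\})=\sum_k c_k f_k$ follows by computing scalar products against an arbitrary vector of $\H$ and using Lemma \ref{L-Parseval}-type polarisation. Thus $\mathbf{F}=U^*$ is well defined on $\ell^2(\Nb)$. Surjectivity is then obtained by considering the frame operator $S:=U^*U=\mathbf{F}U\in\mathbb{B}(\H)$: the two-sided bound \eqref{DF-fr} reads $A\,\bb1\leq S\leq B\,\bb1$ in the sense of positive operators, so $S$ is boundedly invertible; consequently $\Rg(\mathbf{F})\supset\Rg(S)=\H$, proving that $\mathbf{F}$ maps $\ell^2(\Nb)$ onto $\H$.

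For the converse, assume $\mathbf{F}$ is a well-defined map from $\ell^2(\Nb)$ onto $\H$. The first step is to upgrade ``well defined'' to ``bounded'': this follows either from the closed graph theorem (the graph of $\mathbf{F}$ is easily seen to be closed by taking coordinatewise limits) or from applying the Banach--Steinhaus theorem to the truncations $\mathbf{F}_N(\{c_k\}):=\sum_{k=1}^N c_k f_k$. Once $\mathbf{F}\in\mathbb{B}(\ell^2(\Nb),\H)$, duality gives $\mathbf{F}^*=U\in\mathbb{B}(\H,\ell^2(\Nb))$, and $\|Uf\|^2_{\ell^2}=\sum_k|(f_k,f)_{\H}|^2\leq\|\mathbf{F}\|^2\|f\|^2_{\H}$ yields the upper frame bound with $B=\|\mathbf{F}\|^2$. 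For the lower bound I would invoke the closed range theorem (equivalently, the open mapping theorem applied to $\mathbf{F}$ viewed as a surjection onto $\H$): surjectivity of $\mathbf{F}$ is equivalent to $\mathbf{F}^*=U$ being bounded from below on its range, that is, $\|Uf\|_{\ell^2}\geq c\|f\|_{\H}$ for some $c>0$ and all $f\in\H$, which is exactly the lower frame bound with $A=c^2$.

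The main obstacle I anticipate is the converse direction, specifically the passage from pointwise convergence of $\sum_k c_k f_k$ on all of $\ell^2(\Nb)$ to the uniform bound needed to make $\mathbf{F}$ a bounded operator: it requires the standard but slightly delicate application of the Banach--Steinhaus/closed graph theorem to the truncations $\mathbf{F}_N$, together with a careful argument that the adjointness relation $\mathbf{F}^*=U$ does hold globally so that the closed range theorem can be applied to transfer surjectivity of $\mathbf{F}$ into the lower frame bound.
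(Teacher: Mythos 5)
Your proposal is correct, but note that the paper itself gives no proof of this statement: it is recalled verbatim from the cited reference (Christensen, Theorem 5.5.1), and your argument is essentially the standard proof found there — the synthesis map $\mathbf{F}$ as the adjoint of the bounded analysis operator $U$, invertibility of the frame operator $S=\mathbf{F}U$ for surjectivity, Banach--Steinhaus on the truncations $\mathbf{F}_N$ for boundedness in the converse, and the equivalence ``$\mathbf{F}$ surjective $\Leftrightarrow$ $\mathbf{F}^*=U$ bounded below'' for the lower frame bound. The only fragile spot is your parenthetical closed-graph alternative (closedness of the graph is not obvious before boundedness of $U$ is known), but since you primarily rely on the Banach--Steinhaus route, the proof stands as written.
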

\begin{remark}
	If we denote by $\{\mathcal{e}_\gamma\}_{\gamma\in\Gamma}$ the canonical orthonormal basis of $\ell^2(\Nb)$ and use the notation introduced in the begining of section \ref{S-Parseval} we have the identities:
	\beq\nonumber 
	\mathfrak{C}=\underset{p\in\Nb}{\sum}\mathcal{e}_\p\bowtie\psi_p,\qquad\Id_{\H}=\mathfrak{C}^*\mathfrak{C}.
	\eeq
\end{remark}

Let us consider the map:
\beq\label{DF-coord-op-hom}
\mathbb{B}\big(\H\big)\ni\,T\,\mapsto\,\widetilde{\mathfrak{C}}[T]:=\mathfrak{C}\,T\,\mathfrak{C}^*\in\mathbb{B}\big(\ell^2(\Nb)\big)
\eeq
and for any linear bounded operator $T\in\mathbb{B}\big(\H\big)$ let $\mathfrak{M}[T]$ denote the infinite matrix of the operator $\widetilde{\mathfrak{C}}[T]\in\mathbb{B}\big(\ell^2(\Nb)\big)$ with respect to the canonical orthonormal basis $\{\mathcal{e}_p\}_{p\in\Nb}$.

\vspace{0.5cm}

\noindent{\bf Acknowledgements.} HC acknowledges support from Grant DFF–10.46540/2032-00005B of the Independent Research Fund Denmark $|$ Natural Sciences. RP acknowledges support from a grant of the Romanian Ministry of Research, Innovation and Digitization, CNCS -UEFISCDI, project number PN-IV-P1-PCE-2023-0264, within PNCDI IV. BH and RP acknowledge support from the CNRS International Research Network ECO-Math. We also thank our home institutions for hosting our reciprocal visits.

\end{document}